\documentclass[acmsmall,screen,nonacm]{acmart}

\AtBeginDocument{%
  }

\setcopyright{none}
\usepackage{amsthm, amsmath, stmaryrd, mathtools, url}
\usepackage{proof}
\usepackage{hyperref}
\usepackage{cleveref}
\usepackage[appendix=append]{apxproof}
\usepackage{tikz}
\usepackage{subfiles}
\usetikzlibrary{cd,decorations.pathreplacing}
\tikzset{purecomp/.style={rectangle, fill=green!20, rounded corners}}
\tikzset{comp/.style={rectangle, fill=magenta!10}}
\tikzset{del/.style={fill=blue}}

\allowdisplaybreaks[4]

\ifdefined\ACM@origsection
    \AtBeginDocument{\let\ACM@origsection\section}
\fi
\newtheorem{theorem}{Theorem}[section]
\newtheorem{lemma}[theorem]{Lemma}
\newtheorem{proposition}[theorem]{Proposition}
\newtheorem{corollary}[theorem]{Corollary}
\theoremstyle{acmdefinition}
\newtheorem{definition}[theorem]{Definition}
\newtheorem{notation}[theorem]{Notation}
\theoremstyle{acmplain}
\newtheoremrep{theoremapxproof}[theorem]{Theorem}
\newtheoremrep{lemmaapxproof}[theorem]{Lemma}
\newtheoremrep{propositionapxproof}[theorem]{Proposition}
\newcommand{\itemref}[1]{(\ref{#1})}

\definecolor{typeCbgColor}{rgb}{0.94,0.945,0.955}
\definecolor{typeDbgColor}{rgb}{0.955,0.945,0.94}
\newcommand{\typeAbgMath}[1]{\colorbox{orange!12}{$#1$}}
\newcommand{\typeAbg}[1]{\colorbox{orange!12}{#1}}
\newcommand{\typeBbgMath}[1]{\colorbox{cyan!6}{$#1$}}
\newcommand{\typeBbg}[1]{\colorbox{cyan!6}{#1}}
\newcommand{\typeCbgMath}[1]{\colorbox{typeCbgColor}{$#1$}}
\newcommand{\typeCbg}[1]{\colorbox{typeCbgColor}{#1}}
\newcommand{\typeDbgMath}[1]{\colorbox{typeDbgColor}{$#1$}}
\newcommand{\typeDbg}[1]{\colorbox{typeDbgColor}{#1}}

\newcommand{\blank}{{-}}

\newcommand{\defeq}{\mathrel{\coloneq}}

\newcommand{\Real}{\mathbb{R}}
\newcommand{\Nat}{\mathbb{N}}

\newcommand{\diag}{\delta}
\newcommand{\sym}{\mathrm{sym}}
\newcommand{\terminal}{{!}}
\newcommand{\assoc}{\mathrm{assoc}}

\newcommand{\expto}{\Rightarrow}

\newcommand{\bnfeq}{::=}
\newcommand{\BType}{\mathrm{BType}}
\newcommand{\Type}{\mathrm{Type}}
\newcommand{\typprod}{{\textstyle\prod}}
\newcommand{\typreal}{\mathtt{Real}}
\newcommand{\fromvec}[1]{\underline{#1}}
\newcommand{\tuple}[1]{\langle #1 \rangle}
\newcommand{\tproj}{\mathtt{proj}}
\newcommand{\tlet}{\mathtt{let}}
\newcommand{\tbind}{\leftarrow}
\newcommand{\tin}{\mathtt{in}}
\newcommand{\trd}{\mathtt{rd}}

\newcommand{\pret}[1]{\lfloor #1 \rfloor}
\newcommand{\plet}{\mathtt{let}}
\newcommand{\pbind}{\Leftarrow}
\newcommand{\pin}{\mathtt{in}}

\newcommand{\prevhandle}{\mathtt{rev\ handle}}
\newcommand{\phandle}{\mathtt{handle}}
\newcommand{\pwith}{\mathtt{with}}

\newcommand{\oprcolor}[1]{\textcolor{magenta}{#1}}
\newcommand{\funccolor}[1]{\textcolor{blue}{#1}}
\newcommand{\func}{\funccolor{\mathsf{f}}}
\newcommand{\const}{\funccolor{\mathsf{c}}}
\newcommand{\rd}{\mathsf{rd}}
\newcommand{\signatfunc}{\Sigma_{\mathrm{fun}}}
\newcommand{\signatopr}{\Sigma_{\mathrm{op}}}
\newcommand{\opr}{\oprcolor{\mathsf{Op}}}
\newcommand{\fwd}{\mathrm{f}}
\newcommand{\bwd}{\mathrm{b}}

\newcommand{\emptyenv}{\diamond}
\newcommand{\opto}{\rightarrowtriangle}
\newcommand{\asemicolon}{\mathrel{\fatsemi}}
\newcommand{\revhandlerjudgment}{\vdash^{\mathrm{r}}}

\newcommand{\tytvar}{\textsc{T-Var}}
\newcommand{\tytconst}{\textsc{T-Const}}
\newcommand{\tytfunc}{\textsc{T-Func}}

\newcommand{\tytplus}{\textsc{T-Plus}}
\newcommand{\tyttuple}{\textsc{T-Tuple}}
\newcommand{\tytproj}{\textsc{T-Proj}}
\newcommand{\tytlet}{\textsc{T-Let}}
\newcommand{\tytrd}{\textsc{T-RD}}
\newcommand{\tycpure}{\textsc{T-Pure}}
\newcommand{\tyclet}{\textsc{T-CLet}}
\newcommand{\tycop}{\textsc{T-Op}}
\newcommand{\tychandle}{\textsc{T-RHandle}}
\newcommand{\tyhandler}{\textsc{T-RHandler}}

\newcommand{\dsize}[1]{\left| {#1} \right|}
\newcommand{\rhdepth}[1]{\mathrm{depth}(#1)}
\newcommand{\opnum}[1]{\#_{\signatopr}(#1)}

\newcommand{\Val}{\mathrm{Val}}
\newcommand{\Eval}{\mathrm{Ev}}
\newcommand{\redto}{\rightarrow}

\newcommand{\rewriterd}[2]{{#1}.\mathcal{R}_{#2}}
\newcommand{\rewriterevh}[2]{\mathcal{RH}^{#1}_{#2}}
\newcommand{\ctxe}{\mathcal{E}}
\newcommand{\ctxfc}{\mathcal{F}^{c}}

\newcommand{\ctxft}{\mathcal{F}^{t}}

\newcommand{\catc}{\mathbb{C}}
\newcommand{\catd}{\mathbb{D}}
\newcommand{\cate}{\mathbb{E}}
\newcommand{\Sets}{\mathbf{Set}}
\newcommand{\Prof}{\mathbf{Prof}}
\newcommand{\Smooth}{\mathbf{Smooth}}

\newcommand{\obj}{\mathop{\mathrm{Ob}}}
\newcommand{\opposite}{\mathrm{op}}
\newcommand{\idmor}{\mathrm{id}}
\newcommand{\idprof}{\mathrm{I}}
\def\profto{\mathrel{\mkern3mu\vcenter{\hbox{$\scriptscriptstyle+$}}\mkern-12mu{\to}}}
\newcommand{\semicolon}{\mathrel{;}}
\newcommand{\natto}{\Rightarrow}

\newcommand{\catx}{\mathbb{X}}
\newcommand{\RD}{\mathsf{R}}
\newcommand{\axiomRD}[1]{[\textbf{RD.{#1}}]}

\newcommand{\dif}{\mathrm{d}}

\newcommand{\monad}{\mathcal{T}}
\newcommand{\promonad}{\mathcal{A}}
\newcommand{\strength}{\mathrm{st}}
\newcommand{\musemicolon}{\mathrel{;}^{\mu}}
\newcommand{\ATerm}{\mathrm{Arr}}
\newcommand{\apure}{\mathrm{arr}}
\newcommand{\acomp}{\mathrel{>\!\!>\!\!>}}
\newcommand{\astr}{\mathrm{first}}

\newcommand{\ralg}[1]{\langle #1 \rangle}

\newcommand{\itp}[1]{\llbracket #1 \rrbracket}
\newcommand{\bigitp}[1]{\left\llbracket #1 \right\rrbracket}

\newcommand{\logrelt}{\triangleleft}
\newcommand{\logrelc}{\blacktriangleleft}

\newcommand{\transpose}{{\top}}
\newcommand{\Jacobian}{\mathsf{D}}

\newcommand{\learningrate}{\alpha}
\newcommand{\Loss}{\mathcal{L}}
\newcommand{\sigmoid}{\mathrm{sigm}}
\newcommand{\Locations}{\mathit{Loc}}
\newcommand{\loctyp}[2]{#1 :: #2}
\newcommand{\heap}{\mathit{hp}}

\newcommand{\inp}{\mathrm{in}}
\newcommand{\hid}{\mathrm{hid}}
\newcommand{\out}{\mathrm{out}}
\newcommand{\oprlayer}       {\oprcolor{\mathsf{Layer}}}
\newcommand{\oprlinear}      {\oprcolor{\mathsf{FC}}}
\newcommand{\oprencodedecode}{\oprcolor{\mathsf{EnDe}}}
\newcommand{\oprconv}        {\oprcolor{\mathsf{Conv1D}}}
\newcommand{\oprpool}        {\oprcolor{\mathsf{MaxPool1D}}}
\newcommand{\oprget}         {\oprcolor{\mathsf{Get}}}
\newcommand{\oprput}         {\oprcolor{\mathsf{Put}}}
\newcommand{\opru}           {\oprcolor{\mathsf{U}}}
\newcommand{\opFC}[1]{\oprlinear\oprcolor{[}#1\oprcolor{]}}
\newcommand{\opGet}[1]{\oprget\oprcolor{[}#1\oprcolor{]}}
\newcommand{\opPut}[1]{\oprput\oprcolor{[}#1\oprcolor{]}}
\newcommand{\opEnDe}[1]{\oprencodedecode\oprcolor{[}#1\oprcolor{]}}
\newcommand{\opConv}[1]{\oprconv\oprcolor{[}#1\oprcolor{]}}
\newcommand{\opPool}[1]{\oprpool\oprcolor{[}#1\oprcolor{]}}
\newcommand{\opU}[1]{\opru\oprcolor{[}#1\oprcolor{]}}
\newcommand{\fnSwish}    {\funccolor{\mathsf{swish}}}
\newcommand{\fnscalarmul}{\mathbin{\funccolor{\cdot}}}
\newcommand{\fnminus}    {\mathbin{\funccolor{-}}}
\newcommand{\fnmatmul}   {\mathbin{\funccolor{\ast}}}
\newcommand{\fntranspose}{{\funccolor{\top}}}
\newcommand{\fnconcat}   {\funccolor{\mathsf{concat}}}
\newcommand{\fnupscale}  {\funccolor{\mathsf{upscale}}}
\newcommand{\fnpadding}  {\funccolor{\mathsf{padding}}}
\newcommand{\fnconv}     {\funccolor{\mathsf{conv1d}}}
\newcommand{\fnpool}     {\funccolor{\mathsf{maxpool1d}}}
\newcommand{\fnround}    {\funccolor{\mathsf{round}}}

\newcommand{\MLP}{\mathrm{MLP}}
\newcommand{\ResNet}{\mathrm{ResNet}}
\newcommand{\autoencoder}{\mathrm{AE}}
\newcommand{\CNN}{\mathrm{CNN}}
\newcommand{\Unet}{\mathrm{Unet}}

\newcommand{\restr}[1]{\overline{#1}}
\newcommand{\PartSmooth}{\mathbf{PSmooth}}

\newcommand{\partto}{\rightharpoonup}
\begin{document}

%%
%% The "title" command has an optional parameter,
%% allowing the author to define a "short title" to be used in page headers.
\title[Programming Backpropagation with Reverse Handlers for Arrows]{Programming Backpropagation with \\ Reverse Handlers for Arrows}

%%
%% The "author" command and its associated commands are used to define
%% the authors and their affiliations.
%% Of note is the shared affiliation of the first two authors, and the
%% "authornote" and "authornotemark" commands
%% used to denote shared contribution to the research.
\author{Takahiro Sanada}
\orcid{0000-0003-3409-6963}
\affiliation{%
  \institution{Fukui Prefectural University}
  \city{Eiheiji}
  \country{Japan}
}
\email{tsanada@fpu.ac.jp}

\author{Keisuke Hoshino}
\orcid{0009-0007-7287-7572}
\affiliation{%
  \institution{Kyoto University}
  \city{Kyoto}
  \country{Japan}
}
\email{hoshinok@kurims.kyoto-u.ac.jp}

\author{Kenshin Hirai}
\orcid{0009-0008-1272-7684}
\affiliation{%
  \institution{Kyoto University}
  \city{Kyoto}
  \country{Japan}
}
\email{iyflken@kurims.kyoto-u.ac.jp}

\author{Shin-ya Katsumata}
\orcid{0000-0001-7529-5489}
\affiliation{%
  \institution{Kyoto Sangyo University}
  \city{Kyoto}
  \country{Japan}
}
\email{s.katsumata@cc.kyoto-su.ac.jp}

%\author{Anonymous Author(s)}

%%
%% By default, the full list of authors will be used in the page
%% headers. Often, this list is too long, and will overlap
%% other information printed in the page headers. This command allows
%% the author to define a more concise list
%% of authors' names for this purpose.
% \renewcommand{\shortauthors}{Trovato et al.}

%%
%% The abstract is a short summary of the work to be presented in the
%% article.
\begin{abstract}
    %We introduce novel handlers for arrows, called \emph{reverse handlers}, which enable us to implement reverse-mode automatic differentiation.
    %The denotational semantics of reverse handlers is given by homomorphisms induced by algebras, constructed by a new method, of strong promonads on Cartesian reverse differential categories.
    %Our reverse handlers provide a good separation between abstract design and implementation of neural networks.
    %We present numerous examples of applying reverse handlers to various neural networks, including multilayer perceptrons, residual networks, autoencoders, convolutional neural networks, U-nets and straight-through estimator in quantization-aware training.
    %Furthermore, we observe that string diagrams of strong promonads on a Cartesian reverse differential category is suitable as a formal graphical language for neural networks.
    %Hence, our framework provides a graphical syntax and categorical semantics of neural networks.

    % new introduction
    We introduce a new programming language and its categorical semantics in order to design and implement neural networks within the framework of algebraic effects and handlers for arrows.
    Our language enables us to construct neural networks symbolically, in the same manner as algebraic effects, and to assign implementations---such as backpropagation computations---to them via handlers.
    The advantage of this language design is that network descriptions become abstract and high-level, while implementations can be flexibly assigned to networks.
    We establish a rigorous foundation for our language by developing a type system, an operational semantics, a categorical semantics, and soundness and adequacy theorems.
%    The technical core of our development is the introduction of \emph{reverse handlers}, which are novel handler mechanisms for arrows, and
%    a new construction method of algebras of strong promonads on reverse differential restriction categories (RDRC),
%    which provide the semantical foundation of reverse handlers and the formal graphical syntax and semantics of neural networks.
    The technical core is the introduction of \emph{reverse handlers}, a novel handler mechanism for arrows for implementing backpropagation,
    together with new algebras of strong promonads on reverse differential restriction categories (RDRCs),
    whose string diagrams provide a formal graphical syntax and semantics for neural networks.
\end{abstract}

%%
%% The code below is generated by the tool at http://dl.acm.org/ccs.cfm.
%% Please copy and paste the code instead of the example below.
%%
\begin{CCSXML}
    <ccs2012>
        <concept>
           <concept_id>10003752.10010124.10010131.10010137</concept_id>
           <concept_desc>Theory of computation~Categorical semantics</concept_desc>
           <concept_significance>500</concept_significance>
           </concept>
        <concept>
           <concept_id>10002950.10003714.10003715.10003748</concept_id>
           <concept_desc>Mathematics of computing~Automatic differentiation</concept_desc>
           <concept_significance>500</concept_significance>
           </concept>
        <concept>
           <concept_id>10011007.10011006.10011008.10011009.10011012</concept_id>
           <concept_desc>Software and its engineering~Functional languages</concept_desc>
           <concept_significance>500</concept_significance>
           </concept>
    </ccs2012>
\end{CCSXML}

\ccsdesc[500]{Theory of computation~Categorical semantics}
\ccsdesc[500]{Mathematics of computing~Automatic differentiation}
\ccsdesc[500]{Software and its engineering~Functional languages}
  
%%
%% Keywords. The author(s) should pick words that accurately describe
%% the work being presented. Separate the keywords with commas.
\keywords{%
Arrow,
Promonad,
Algebraic effect,
Handler,
Reverse differential category,
Neural network
}
  
%\received{20 February 2007}
%\received[revised]{12 March 2009}
%\received[accepted]{5 June 2009}
  
%%
%% This command processes the author and affiliation and title
%% information and builds the first part of the formatted document.
\maketitle
\section{Introduction}

One of the central tasks in developing machine learning systems
is implementing neural networks.
%\inred{is the implementation of a neural network → is implementing neural networks}.
Such implementations construct a
large network of parameterized, differentiable functions, either
declaratively or imperatively. The resulting network is used for
inference, while its gradient is employed in various gradient-based
optimization methods to search for parameters that fit real data.
Programming libraries and environments that support this
workflow typically provide mechanisms for (1) constructing such
networks and (2) computing their gradients (i.e., reverse-mode
automatic differentiation of networks) for backpropagation \cite{Linnainmaa1976,RumelhartHintonWilliams1986}.

In this paper, we introduce a new programming language that provides
facilities (1) and (2) within the framework of algebraic effects
\cite{PlotkinPower2001} and handlers \cite{PlotkinPretnar2013}, which
we briefly recall here. Handlers for algebraic effects provide
mechanisms for (A) constructing abstract computational effects and (B)
executing abstract effects by handlers. This framework revolutionizes
the way effectful programs are written: within a single language, one
can abstractly describe computational effects, and flexibly program
their concrete execution behavior by handlers.

%The main observation of this paper is that there is a striking analogy
%between mechanisms (1) and (2) provided by neural network libraries
%and environments, and mechanisms (A) and (B) provided by handlers for
%algebraic effects.
The main observation of this paper is that there is a striking analogy
between the mechanisms (1) and (2) in neural network libraries and environments,
and the mechanisms (A) and (B) in handlers for algebraic effects.
Building on this observation, we design our
programming language so that neural networks can be constructed
symbolically, analogously to algebraic effects, and
backpropagation computations can be assigned to them via handlers.
The advantage of this design is that network descriptions become abstract
and high-level, while backpropagation computations can be flexibly
assigned to these networks, including symbolic ones.
This stands in
sharp contrast to existing neural network libraries and environments,
in which networks and their backpropagations have fixed semantics.
For
instance, we exploit this flexibility to implement the
straight-through estimator (STE) \cite{BengioLeonardCourville2013} in
quantization-aware training (QAT), where a non-standard
backpropagation computation is assigned to the forward computation.

Designing the programming language poses a few technical challenges.
\paragraph{Challenge 1} One
essential feature of neural network programming is the ability to
connect multiple layers of networks.
However, the standard calculus with handlers and its monadic semantics do not naturally support this style of
programming, because the computational effect constructed by a program
$P\colon X \to \monad Y$ is a term tree over $Y$, whose structure differs
from that of neural networks. To overcome this mismatch, we employ the
arrow calculus \cite{LindleyWadlerYallop2010} extended with algebraic
operations and handlers \cite{Sanada2024}. %, together with its semantic interpretation via promonads.
Arrows are explicitly parameterized by
input and output types, like neural networks, and a program
$P\in\promonad(X,Y)$ modeled in an arrow represents computational
effects as chains of primitive arrows, closely resembling the layered
structure of neural networks.  Furthermore, employing arrows creates
room for naturally representing bidirectional computations, which
helps resolve the problem described below.

\paragraph{Challenge 2} The main aim of our language is to assign backpropagation
  computations to networks via handlers. Unfortunately, the notion of
  handlers in \cite{Sanada2024} cannot serve this purpose, as it
  allows only unidirectional computations, whereas backpropagation is
  inherently bidirectional, similar to the computation of lenses
  \cite{FongJohnson2019}. To address this issue, we introduce a novel
  handler mechanism, called the \emph{reverse handler}, into the arrow
  calculus. The bidirectional computation of backpropagation can be
  naturally modeled thanks to the generality of arrows.
  
\paragraph{Challenge 3} We also require that the language be able to express standard
  backpropagation as reverse‑mode differentiation of programs. For
  this purpose, we adopt the reverse-mode derivative operator
  introduced by Abadi and Plotkin \cite{AbadiPlotkin2019}.

  The resulting language is therefore a combination of the arrow
  calculus equipped with reverse handlers and the reverse-mode
  derivative operator. We establish a rigorous foundation for this
  language by developing a type system, an operational semantics, a
  categorical semantics, and soundness and adequacy theorems. For interpreting the
  reverse-mode derivative operator, we employ Cockett et al.'s
  \emph{Cartesian reverse differential categories} (CRDC) and Cruttwell et al.'s \emph{reverse differential restriction categories} (RDRC), which
  provide a well-established semantic framework for reverse
  differentiation
  \cite{Cockett+2020,Cruttwell+2021,Cruttwell+2022}. We interpret
  arrows by \emph{strong promonads}
  \cite{HeunenJacobs06,JacobsHeunenHasuo2009,Asada2010,DBLP:journals/entcs/Atkey11,Sanada2024},
  and handlers for arrows by {\em promonad algebras}
  \cite{Sanada2024}. In this paper we interpret reverse handlers using
  a specialized promonad algebra that resides on RDRCs.

  We end this
  section by listing the highlights of our development.
\begin{enumerate}
\item We formally design the language for reverse handlers: 
  syntax, type system, operational semantics, and denotational semantics. Soundness and
  adequacy of the denotational semantics with respect to the operational semantics are proved.
\item We give a new method for constructing algebras of strong promonads on an RDRC. The
  construction provides the semantical foundation of reverse handlers.
\item We present numerous examples of applying reverse handlers to neural networks, including
  multilayer perceptrons, residual networks, autoencoders, convolutional neural networks
  and U-nets (Table~\ref{tab:examples-summary}). Using reverse handlers, we can separate abstract designs of neural networks
  from implementations of detailed behavior. Furthermore, using reverse handlers, we can
  implement quantization-aware training via the straight-through estimator \cite{BengioLeonardCourville2013}, which cannot be
  formalized within RDRCs alone.
\item Our examples of neural networks suggest that string diagrams of strong promonads on a
    RDRC provide a formal graphical language for neural networks.
  Hence, our framework provides a graphical syntax and categorical semantics of neural networks.
\end{enumerate}

% Our contributions are summarized as follows:
% \begin{enumerate}
%     \item We introduce a programming language with reverse handlers for arrows. We define its syntax, type system, operational semantics, and denotational semantics. Soundness and adequacy of the denotational semantics with respect to the operational semantics are proved.
%     \item We give a new construction method of algebras of strong promonads on a CRDC.
%     The construction provides the semantical foundation of reverse handlers.
%     \item We present numerous examples of applying reverse handlers to neural networks, including multilayer perceptrons, residual networks, autoencoders, convolutional neural networks and U-nets.
%     Using reverse handlers, we can separate abstract designs of neural networks from implementations of detailed behavior.
%     Furthermore, using reverse handlers, we can implement quantization-aware training via straight-through estimator \cite{BengioLeonardCourville2013}, which cannot be formalized within CRDCs alone.
%     \item Our examples of neural networks suggests that string diagrams of strong promonads on a CRDC is a formal graphical language for neural networks. Hence, our framework provides a graphical syntax and categorical semantics of neural networks.
% \end{enumerate}

\section{Overview}

%We briefly explain the ideas of our approach.
%Our starting point is a parallel between effect handlers and backpropagation.
%In an effect-handler setting, one specifies local programmable data (= handler) for basic operations, and the effect carried by a computation is realized as the continuation transformation induced compositionally from that local data.
%We view backpropagation in the same way:
%a neural network architecture is an abstract composition of layers, while its implementation is determined by specifying the backpropagation behavior of each layer.
%This suggests understanding layers as operations and layer-wise backpropagation data as handler data; in our setting, such data is specified by \emph{reverse handlers}.
%Categorically, reverse handlers provide the local data from which one constructs a global continuation transformation, captured by a promonad algebra structure and the associated homomorphism from the free algebra.

We briefly explain the ideas underlying our approach.
Our starting point is a parallel between effect handlers and backpropagation.

In the setting of effect handlers, a handler specifies concrete local behavior for each algebraic operation. The overall effect of a program is then realized as a continuation transformation compositionally induced from those local behaviors.
We view backpropagation in the same way. A neural network architecture is given as an abstract composition of layers, while its concrete implementation is determined by specifying the backpropagation behavior of each layer—for example, which memory locations parameters are retrieved from, or customized differentiation rules such as the straight-through estimator (STE).

From this perspective, we understand layers as algebraic operations, and regard layer-wise backpropagation data as corresponding to handler data. In our setting, such data is specified by \emph{reverse handlers}.

Categorically, reverse handlers provide local behaviors from which a global continuation transformation is constructed. This structure is captured by a promonad algebra structure together with the corresponding homomorphism induced from the free algebra.

\subsection{Abstract Description of an MLP by Algebraic Effects for Arrows}
Let us consider implementing a multilayer perceptron (MLP, Fig~\ref{fig:mlp}) with three layers, an input layer, a hidden layer and an output layer.
The size of the input layer, hidden layer and output layer are denoted by $n_{\inp}$, $n_{\hid}$ and $n_{\out}$, respectively.
\begin{figure}
\centering
\[
% an illustration of three-layer multilayer perceptron
\begin{tikzpicture}[scale=0.8, transform shape]
    \draw[color=magenta, thick, fill=magenta!10] (0+0.3, 1.7) rectangle (2-0.3, -0.2);
    \draw[color=magenta, thick, fill=magenta!10] (2+0.3, 1.7) rectangle (4-0.3, -0.2);
    \draw[color=gray!60,fill=gray!20] (-0.2, 1.7) rectangle (0.2, -0.2);
    \draw[color=gray!60,fill=gray!20] (2-0.2, 1.7) rectangle (2+0.2, -0.2);
    \draw[color=gray!60,fill=gray!20] (4-0.2, 1.7) rectangle (4+0.2, -0.2);
    \draw (0, 1.5) -- (2, 1.5);
    \draw (0, 1)   -- (2, 1.5);
    \draw (0, 0)   -- (2, 1.5);
    \draw (0, 1.5) -- (2, 1);
    \draw (0, 1)   -- (2, 1);
    \draw (0, 0)   -- (2, 1);
    \draw (0, 1.5) -- (2, 0);
    \draw (0, 1)   -- (2, 0);
    \draw (0, 0)   -- (2, 0);
    \draw (2, 1.5) -- (4, 1.5);
    \draw (2, 1)   -- (4, 1.5);
    \draw (2, 0)   -- (4, 1.5);
    \draw (2, 1.5) -- (4, 1);
    \draw (2, 1)   -- (4, 1);
    \draw (2, 0)   -- (4, 1);
    \draw (2, 1.5) -- (4, 0);
    \draw (2, 1)   -- (4, 0);
    \draw (2, 0)   -- (4, 0);
    \node at (0, 2) {$n_{\inp}$};
    \node at (0, -0.4) {$x_0$};
    \draw[fill=white] (0, 1.5) circle (0.1);
    \draw[fill=white] (0,   1) circle (0.1);
    \node at (0, 0.6) {$\vdots$};
    \draw[fill=white] (0,   0) circle (0.1);
    \node at (2, 2) {$n_{\hid}$};
    \node at (2, -0.4) {$x_1$};
    \draw[fill=white] (2, 1.5) circle (0.1);
    \draw[fill=white] (2,   1) circle (0.1);
    \node at (2, 0.6) {$\vdots$};
    \draw[fill=white] (2,   0) circle (0.1);
    \node at (4, 2) {$n_{\out}$};
    \node at (4, -0.4) {$x_2$};
    \draw[fill=white] (4, 1.5) circle (0.1);
    \draw[fill=white] (4,   1) circle (0.1);
    \node at (4, 0.6) {$\vdots$};
    \draw[fill=white] (4,   0) circle (0.1);
    % input and output vectors
    \path ( -0.2,-0.2) -- node[sloped,above] {an input vector}  ( -0.2,1.7);
    \path (4+0.2,-0.2) -- node[sloped,below] {an output vector} (4+0.2,1.7);
    % fully connected layers
    \path (0+0.3,1.7) -- node[sloped,above] {\tiny fully connected}  (2-0.3,1.7);
    \node at (1, -0.4) {$w_0$};
    \path (2+0.3,1.7) -- node[sloped,above] {\tiny fully connected}  (4-0.3,1.7);
    \node at (3, -0.4) {$w_1$};
\end{tikzpicture}
\]
\Description{An illustration of an MLP with three layers.}
\caption{An illustration of an MLP with three layers.}
\label{fig:mlp}
\end{figure}
An input vector $x_0 \in \Real^{n_{\inp}}$ is fed into the input
layer, and then the hidden layer produces a vector
$x_1 = w_0 x_0 \in \Real^{\hid}$ where $w_0$ is an
$n_{\hid} \times n_{\inp}$ matrix, and finally the output layer
produces an output vector $x_2 = w_1 x_1 \in \Real^{n_{\out}}$ where
$w_1$ is an $n_{\out} \times n_{\hid}$ matrix.  We omit bias vectors
and activation functions for simplicity.

% We represent a connection of layers in a network as an algebraic operation for arrows.
% We give a high-level description of
% in the arrow calculus with operations \cite{Sanada2024}.
Before presenting a program example, let us briefly explain the data storage
model of our programming language. Matrices are stored in the heap,
like many machine learning libraries. They are retrieved and updated
through memory locations, ranged over by
$\ell,\ell_0,\ell_1,\cdots$. We also assume that a pair of natural
numbers $(n, m)$ is assigned to each location $\ell$.  We write
$\loctyp{\ell}{(n, m)}$ to indicate such an assignment.  The statement
$\loctyp{\ell}{(n,m)}$ means that the memory location $\ell$ contains
an $m \times n$ matrix.

\newcommand{\sanadacalc}{arrow calculus with operations and handlers}

In our programming language, which is based on the \sanadacalc{}
\cite{Sanada2024}, the MLP in Fig.~\ref{fig:mlp} is described by the
following program:
\begin{equation}
  \label{eq:abstract-mlp}
  P_{\MLP} \defeq
  \plet \; x_1 \pbind
  \opFC{\ell_0}(x_0) \; \pin \;
  \opFC{\ell_1}(x_1)
  \quad
  \quad
  (\loctyp{\ell_0}{(n_{\inp}, n_{\hid})},~
  \loctyp{\ell_1}{(n_{\hid}, n_{\out})})
\end{equation}
The constructor
$\opFC{\ell} : \typeAbgMath{\typreal(n)} \opto
\typeBbgMath{\typreal(m)}$ for $\loctyp{\ell}{(n,m)}$ is an {\em
  algebraic operation}---hereafter we call it \emph{operation} and write
it in red. It symbolically represents a fully connected layer referring
to the matrix stored at $\ell$. We note that it has
no concrete behavior yet.  The variable $x_0 : \typreal(n_{\inp})$ keeps
an input vector, % of dimension $n_{\inp}$,
and the whole expression (of type $\typreal(n_{\out})$) represents
the output vector of the MLP.  The above program thus abstractly
describes the structure of the MLP with three layers.
At this stage, the program specifies only the architecture through operations.
What remains is to supply, for each operation, concrete local behavior from which the behavior of the whole computation is induced.

\subsection{Handlers vs Reverse Handlers}

Our programming language is designed so that users can flexibly define
backpropagation algorithms by our novel mechanism of reverse handlers.
Before introducing reverse handlers, we first revisit the handlers in
the \sanadacalc{} \cite{Sanada2024}, and
discuss that there is a problem when directly employing it for
programming backpropagation.

% To implement the behavior of inference and backpropagation of the MLP, we use handlers for arrows.
% We can implement inference of the MLP by a usual handler for arrows \cite{Sanada2024} that implements each operation $\oprlinear^{\ell}_{nm}$ as a computation that performs matrix multiplications using the $n \times m$ parameter matrix stored in the location $\ell$.
% However, usual handlers for arrows are not sufficient to implement backpropagation of the MLP.
% Hence, we introduce a novel handler for arrows, called \emph{reverse handler}, to implement backpropagation of the MLP.

\subsubsection{Implementation of Inference}

\emph{Handlers} are a mechanism to give concrete implementations to
operations.
They are first introduced to flexibly program computational effects caused in programs \cite{PlotkinPretnar2013}.
Later they are reformulated in the context of arrows by Sanada \cite{Sanada2024}. 
In his \sanadacalc{}, the following handler syntax is provided to give a concrete implementation to operations invoked during the execution of the program surrounded by $\phandle$ and $\pwith$:
\[
\phandle \; \left(
    \plet \; x_1 \pbind \opFC{\ell_0}(x_0) \; \pin \;
    \opFC{\ell_1}(x_1)
\right) \; \pwith \; H.
\]
We write an implementation of each operation in the clause $H$.
Thus, a handler specifies locally how each operation acts on its continuation, and the overall inference behavior is induced compositionally from these clauses.
For
example, the clause implementing the operation $\opFC{\ell}$ (for
each $\loctyp{\ell}{(m,n)}$) representing a fully-connected layer with
the matrix multiplication is
\begin{equation*}
  \begin{aligned}
    H_{\mathrm{infer}} \defeq
    \{ x \mapsto \pret{x} \} \cup
    \{
    \opFC{\ell}, k \asemicolon x \mapsto
    \plet \; w \pbind \opGet{\ell}() \; \pin \; k \bullet (w \fnmatmul x)
    \}_{m,n\in\Nat,\loctyp{\ell}{(n,m)}}.
  \end{aligned}
\end{equation*}
The clause $H_{\mathrm{infer}}$ consists of two parts: the first part specifies the input continuation by simply returning the inner result, and the
second part assigns implementations to operations symbols.  What is
assigned to $\opFC{\ell}$ above is the program that retrieves the matrix stored in
the location $\ell$, multiplies it with the input vector $x$ and passes it
to the continuation.  In this program, $k$ holds the
continuation when an operation $\opFC{\ell}$ is invoked, and the
variable $x$ holds the argument passed to the operation $\opFC{\ell}$.
The operation $\opGet{\ell}()$ retrieves the parameter matrix stored
at $\ell$.  The term $k \bullet (w \fnmatmul x)$ represents the
application of the continuation $k$ to the matrix multiplication of
$w$ and $x$.  Thus the program
\[
\phandle \; \left(
    \plet \; x_1 \pbind \opFC{\ell_0}(x_0) \; \pin \;
    \opFC{\ell_1}(x_1)
\right) \; \pwith \; H_{\mathrm{infer}}
\]
returns a
vector $w_1 \fnmatmul (w_0 \fnmatmul x_{\inp})$ where $w_0$ and $w_1$
are the parameter matrices stored in locations $\ell_0$ and $\ell_1$,
respectively.
In this sense, $H_{\mathrm{infer}}$ supplies the local behavior of each layer, from which the inference behavior of the whole network is obtained via the constructor $\phandle$.

In the \sanadacalc{} \cite{Sanada2024}, the typing rules are similar to those of ordinary calculus with handlers (e.g.\ \cite{PlotkinPretnar2013}),
but differ in that the typing environment of effectful computation has two parts: one for pure context and the other for effectful context.
A typing judgment $\Gamma \asemicolon \Delta \vdash P : A$ means that an effectful program $P$ has type $A$ under the pure typing environment $\Gamma$ and the effectful typing environment $\Delta$.

The typing rule for handlers requires that the result type of the continuation $k$ is the same as the type of the entire clause.
\begin{equation}\label{eq:typing-rule-direct-handler}
    \infer{
    \vdash \{ x \mapsto P \} \cup \{ \opr, k \asemicolon x \mapsto Q_{\opr} \}_{\opr} : C \Rightarrow \typeDbg{$D$}
}{
    \emptyenv \asemicolon x : C \vdash P : \typeDbg{$D$}
    &
    \bigl(
        k : \typeBbg{$B$} \rightsquigarrow \typeDbg{$D$} \asemicolon x : \typeAbg{$A$} \vdash Q_{\opr} : \typeDbg{$D$}
    \bigr)_{\opr : \typeAbg{$\scriptstyle A$} \opto \typeBbg{$\scriptstyle B$}}
}
\end{equation}
The above typing rule claims that, for each operation $\opr : \typeAbgMath{A} \opto \typeBbgMath{B}$, the continuation $k$ has type $\typeBbgMath{B} \rightsquigarrow \typeDbgMath{D}$, which means a type of computation with input of type $\typeBbgMath{B}$ and output of type $\typeDbgMath{D}$, and the entire clause $Q_{\opr}$ takes an input of type $\typeAbgMath{A}$ and returns a result of type $\typeDbgMath{D}$.
The handler $H_{\mathrm{infer}}$ is well-typed in the existing framework of effect handlers for arrows.

\subsubsection{Implementation of Backpropagation}
Backpropagation is a training algorithm for neural networks.
To implement backpropagation, each layer $\opFC{\ell}$ is associated with the following computations:
\begin{enumerate}
\item Forward computation. Given an input vector $x$ from the previous layer, the parameter matrix $w$ is retrieved from the location $\ell$, and the matrix multiplication $w \fnmatmul x$ is computed. The result is then passed to the next layer.
\item Backward computation. Given a gradient vector $y$ from the next layer, the gradient $y \fnmatmul (x^{\fntranspose})$ with respect to the parameter matrix $w$ is computed. The parameter matrix $w$ is updated as $w \fnminus y \fnmatmul (x^{\fntranspose})$.
Finally, the gradient $w^{\fntranspose} \fnmatmul y$ is passed to the previous layer.
\end{enumerate}

To implement backpropagation with squared loss function $\Loss(x) = \sum_{i} (x_i - t_i)^2 / 2$, we would like to write a handler for the operation
$\opFC{\loctyp{\ell}{(n,m)}} : \typeAbgMath{\typreal(n)} \opto \typeBbgMath{\typreal(m)}$ like:
\begin{align*}
    H_{\mathrm{train}} & = \{ x \mapsto \pret{\fromvec{\learningrate} \fnscalarmul (x \fnminus \fromvec{t})} \} \cup
    \left\{
        \opFC{\ell}, k \asemicolon x \mapsto
        Q_{\opFC{\ell}}
    \right\},
    \\
    Q_{\opFC{\ell}} & =
    \left(
    \begin{aligned}
        & \plet \; w \pbind \opGet{\ell}() \; \pin \; \plet \; y \pbind k \bullet (w \fnmatmul x) \; \pin \; \\
        & \plet \; \_ \pbind \opPut{\ell}(w \fnminus y \fnmatmul (x^{\fntranspose})) \; \pin \; \pret{w^{\fntranspose} \fnmatmul y}
    \end{aligned}
    \right)
    \quad
    \begin{aligned}
        & \text{\textcolor{gray}{forward computation}} \\
        & \text{\textcolor{gray}{backward computation}}
    \end{aligned}
\end{align*}
where $\fromvec{\learningrate}$ is a constant symbol of type $\typreal(1)$ representing the learning rate $\learningrate \in \Real$, and $\fromvec{t}$ is a constant symbol of type $\typreal(n_{\out})$ representing the target output vector $t \in \Real^{n_{\out}}$.

The return clause $P \defeq \pret{\fromvec{\learningrate} \fnscalarmul (x \fnminus \fromvec{t})}$ in $H_{\mathrm{train}}$ has type $\typeCbgMath{\typreal(n_{\out})}$ and computes the gradient $\learningrate \frac{\partial \Loss}{\partial x} \in \Real^{n_{\out}}$ of the loss function $\Loss$ with respect to the output $x$ of the final layer of the MLP.
Each result $y$ of the continuation $k \bullet (w \fnmatmul x)$ is expected to have type $\typeBbgMath{\typreal(m)}$ and be the gradient $\learningrate \frac{\partial \Loss}{\partial x'} \in \Real^{m}$ of the loss function $\Loss$ with respect to the output $x' = w \fnmatmul x$ of forward propagation of the layer $\opFC{\ell}$.
The operation $\opPut{\ell}(w \fnminus y \fnmatmul (x^{\fntranspose}))$ updates the parameter matrix stored in the location $\ell$ to $w - \learningrate \frac{\partial \Loss}{\partial w}$ by gradient descent.
The term $\pret{w^{\fntranspose} \fnmatmul y}$ has type $\typeAbgMath{\typreal(n)}$ and returns the gradient $\learningrate \frac{\partial \Loss}{\partial x} \in \Real^n$ of the loss function $\Loss$ with respect to the input $x$ of the layer $\opFC{\ell}$.

However, the handler $H_{\mathrm{train}}$ is not well-typed under the typing rule \eqref{eq:typing-rule-direct-handler} because the result type $\typeBbgMath{\typreal(m)}$ of the continuation $k$ is different from the type $\typeAbgMath{\typreal(n)}$ of the entire handle clause $Q_{\opFC{\ell}}$.
The desired typing of the handler $H_{\mathrm{train}}$ is as follows:
\[
\footnotesize
\infer{
    \revhandlerjudgment \{ x \mapsto P \} \cup \left\{ \opFC{\ell}, k \asemicolon x \mapsto Q_{\opFC{\ell}} \right\}_{\opFC{\loctyp{\ell}{(n,m)}}} : \typeCbg{$\typreal(n_{\out})$}
}{
    \emptyenv \asemicolon x : \typeCbg{$\typreal(n_{\out})$} \vdash P : \typeCbg{$\typreal(n_{\out})$}
    &
    \left(
        k : \typeBbg{$\typreal(m)$} \rightsquigarrow \typeBbg{$\typreal(m)$} \asemicolon x : \typeAbg{$\typreal(n)$} \vdash Q_{\opFC{\ell}} : \typeAbg{$\typreal(n)$}
    \right)_{\opFC{\ell}}
    %\begin{aligned}
    %    & \emptyenv \asemicolon x : \typeCbg{$\typreal(n_{\out})$} \vdash P : \typeCbg{$\typreal(n_{\out})$}
    %    \\
    %    & \left(
    %        k : \typeBbg{$\typreal(m)$} \rightsquigarrow \typeBbg{$\typreal(m)$} \asemicolon x : \typeAbg{$\typreal(n)$} \vdash Q_{\opFC{\ell}} : \typeAbg{$\typreal(n)$}
    %    \right)_{\opFC{\loctyp{\ell}{(n,m)}}}
    %\end{aligned}
}
\]
Hence, in general, the typing rule for desired handlers for backpropagation should be as follows:
\[
\footnotesize
\infer{
    \revhandlerjudgment \{ x \mapsto P \} \cup \left\{ \opr, k \asemicolon x \mapsto Q_{\opr} \right\}_{\opr} : \typeCbg{$C$}
}{
    \emptyenv \asemicolon x : \typeCbg{$C$} \vdash P : \typeCbg{$C$}
    &
    \bigl(
        k : \typeBbg{$B$} \rightsquigarrow \typeBbg{$B$} \asemicolon x : \typeAbg{$A$} \vdash Q_{\opr} : \typeAbg{$A$}
    \bigr)_{\opr : \typeAbg{$\scriptstyle A$} \opto \typeBbg{$\scriptstyle B$}}
}
\]
To justify the above typing rule, we introduce a novel variant of handlers, \emph{reverse handlers}, for arrows.
The reverse handlers are based on a new construction method of algebras of strong promonads $\promonad$ \cite{HeunenJacobs06,Asada2010,Sanada2024} on
a reverse differential restriction category (RDRC) \cite{Cruttwell+2021}, which is a generalization of
a Cartesian reverse differential category (CRDC) \cite{Cockett+2020}.

Typical examples of RDRCs, e.g.\ the categories $\PartSmooth,\Smooth$ (see \S\ref{subsec:RDRC}) are not closed.
Hence, the type of continuations $B \rightsquigarrow B$ in the above typing rule for reverse handlers cannot be represented as an exponential $\itp{B} \expto \itp{B}$ nor $\promonad(\itp{B}, \itp{B})$, which is interpretation of effectful computation, in a CRDC $\catx$.
To solve this problem, we divide $Q_{\opr}$ into two parts: a forward computation part $Q^{\fwd}_{\opr}$ and a backward computation part $Q^{\bwd}_{\opr}$.
The modified typing rule is as follows:
\[
%\scalebox{0.9}{$
\footnotesize
\infer{
    \revhandlerjudgment \{ x \mapsto P \} \cup \{ (x \mapsto Q_{\opr}^{\fwd} \mid y, z \mapsto Q_{\opr}^{\bwd}) \} : C
}{
    \emptyenv \asemicolon x : \typeCbg{$C$} \vdash P : \typeCbg{$C$}
    &
    \emptyenv \asemicolon x : \typeAbg{$A$} \vdash Q_{\opr}^{\fwd} : \typeBbg{$B$} \times \typeDbg{$D_{\opr}$}
    &
    \emptyenv \asemicolon y : \typeBbg{$B$}, z : \typeDbg{$D_{\opr}$} \vdash Q_{\opr}^{\bwd} : \typeAbg{$A$}
    &
    \forall \opr : \typeAbg{$A$} \opto \typeBbg{$B$} \in \signatopr
}
%$}
\]
where the type $\typeDbgMath{D_{\opr}}$ is chosen by the programmer and is a type of auxiliary data for the operation $\opr$.
When an operation $\opr : A \opto B$ is handled, the forward computation part $Q_{\opr}^{\fwd}$ takes an input of type $A$ and returns a pair of the output of type $B$, which will be passed to the continuation, and some auxiliary data of type $D_{\opr}$.
The backward computation part $Q_{\opr}^{\bwd}$ takes a value of type $B$ from the continuation and the auxiliary data of type $D_{\opr}$ from the forward computation and returns the result of type $A$.
The informal diagram of the behavior is as follows:
\[
    \begin{tikzpicture}[baseline=(current bounding box.center), scale=0.7, transform shape]
        \pgfmathsetmacro{\yw}{0.1}
        \pgfmathsetmacro{\yd}{0.5}
        \pgfmathsetmacro{\xd}{0.6}
        \pgfmathsetmacro{\xw}{1}
        \pgfmathsetmacro{\xwDiag}{0.2}
        \pgfmathsetmacro{\xStart}{0}
        \pgfmathsetmacro{\xQfAL}{\xStart + 0.3}
        \pgfmathsetmacro{\xQfAR}{\xQfAL + \xw}
        \pgfmathsetmacro{\xKL}{\xQfAR + 2 * \xd}
        \pgfmathsetmacro{\xKR}{\xKL + 4 * \xw}
        \pgfmathsetmacro{\xQbAL}{\xKR + 2 * \xd}
        \pgfmathsetmacro{\xQbAR}{\xQbAL + \xw}
        \pgfmathsetmacro{\xEnd}{\xQbAR + 0.3}
        \pgfmathsetmacro{\xPL}{0.5 * \xKL + 0.5 * \xKR - 0.4 * \xw}
        \pgfmathsetmacro{\xPR}{0.5 * \xKL + 0.5 * \xKR + 0.4 * \xw}
        % Q^fwd A
        \draw (\xStart, 0) node[left] {\typeAbg{$\itp{A}$}} -- (\xQfAR, 0);
        \draw[comp] (\xQfAL, \yd+\yw) rectangle node {$\itp{Q^{\fwd}_{\opr}}$} (\xQfAR, -\yw);
        % continuation k
        \draw (\xQfAR,     \yd) node[above right] {\typeBbg{$\itp{B}$}} -- (\xQbAL,     \yd) node[above left] {\typeBbg{$\itp{B}$}};
        \draw (\xQfAR,       0) to node[below] {\typeDbg{$\itp{D_{\opr}}$}} (\xQbAL,       0);
        \draw[comp] (\xKL, 2.5 * \yd + \yw) rectangle  node[above=0.5cm] {a continuation $k$} (\xKR, 1 * \yd - \yw);
        % Q^bwd A
        \draw (\xQbAR,       0) -- (\xEnd, 0) node[right] {\typeAbg{$\itp{A}$}};
        \draw[comp] (\xQbAL, \yd+\yw) rectangle node {$\itp{Q^{\bwd}_{\opr}}$} (\xQbAR, -\yw);
        % P
        \draw (\xPL - 0.2, 1.7 * \yd) node[left] {$\cdots$ \typeCbg{\tiny$\itp{C}$}} -- (\xPR + 0.2, 1.7 * \yd) node[right] {\typeCbg{\tiny$\itp{C}$} $\cdots$};
        \draw[comp] (\xPL, 2.2 * \yd + \yw) rectangle node {$\itp{P}$} (\xPR, 1.2 * \yd - \yw);
    \end{tikzpicture}.
\]
Dividing $Q_{\opr}$ into the forward part $Q^{\fwd}_{\opr}$ and the backward part $Q^{\bwd}_{\opr}$ introduces a restriction that a continuation $k$ must be used exactly once.
The restriction is natural because, in the implementation of backpropagation, continuations are used exactly once to propagate gradients backward.

Under the above typing rule for reverse handlers, a handler for $\opFC{\ell}$ can be written as follows (see Example~\ref{ex:mlp-backpropagation} for details):
\begin{align*}
    \vspace{-0.5em}
    H_{\mathrm{train}} &=
    \{ x \mapsto \fromvec{\alpha} \fnscalarmul (x \fnminus \fromvec{t}) \} \cup
    \bigl\{
        (x \mapsto Q_{\opFC{\ell}}^{\fwd} \mid y, z \mapsto Q_{\opFC{\ell}}^{\bwd})
    \bigr\},
    \\
    Q^{\fwd}_{\opFC{\ell}}
    &=
    \plet \; w \pbind \opGet{\ell}() \; \pin \;
    \pret{\tuple{w \fnmatmul x, \tuple{w, x}}},
    \\
    Q^{\bwd}_{\opFC{\ell}}
    &=
    \plet \; \tuple{w, x} \pbind z \; \pin \;
    \plet \; \_ \pbind \opPut{\ell}(w - y \fnmatmul (x^{\fntranspose})) \; \pin \;
    \pret{w^{\fntranspose} \fnmatmul y}.
\end{align*}

\subsection{Monads vs Arrows}
We now explain a basic intuition behind the notion of effects for monads and arrows categorically.
Our guiding idea is that an effect carried by a computation is realized as a continuation transformation.
We show that this continuation-transformational viewpoint is described more naturally in the setting of arrows.
Moreover, arrows admit a greater flexibility, and this is what makes reverse handlers possible.
%This viewpoint is available both for monads and for arrows, but arrows admit a more flexible notion of algebra carrier, which is makes reverse handlers possible.

\subsubsection{Monads}
For a monad $\monad$, a computation with effect is understood as a Kleisli morphism $X \to \monad(Y)$,
where $\monad(Y)$ is the type representing computations with effects that produce outputs of type $Y$ \cite{Moggi1991}.
In the semantics of effect handlers for monads \cite{PlotkinPretnar2013}, the effect carried by such a computation is understood in terms of continuation transformation,
and a $\monad$-algebra homomorphism can be regarded as a representation of such continuation-transforming behavior.
A $\monad$-algebra is an object $Z$ together with a morphism $\alpha \colon \monad(Z) \to Z$ satisfying certain axioms,
and there is a free $\monad$-algebra $\mu_Y \colon \monad(\monad(Y)) \to \monad(Y)$.
%Thus, in the monadic setting, the carrier of the algebra is an object.
Fixing an algebra $(Z,\alpha)$, we may regard an element of $\catc(X,Z)$ as a continuation at $X$.
Then a computation $R \in \catc(X,\monad(Y))$ induces a continuation transformation
$\catc(Y,Z) \to \catc(X,Z)$
as follows.
Given a continuation $\phi \in \catc(Y,Z)$, the algebra $(Z,\alpha)$ and $\phi$ uniquely induce a homomorphism $h \colon \monad(Y) \to Z$ from the free algebra.
%\[
%\begin{tikzcd}[row sep=small]
%    Y
%    \ar[r, "\eta_Y"]
%    \ar[rd, "\phi"']
%    &
%    \monad(Y)
%    \ar[dashed, d, "h"]
%    \\
%    &
%    Z
%\end{tikzcd}
%\qquad
%\begin{tikzcd}[row sep=small]
%    \monad(\monad(Y))
%    \ar[r, "\mu_Y"]
%    \ar[d, dashed, "\monad(h)"']
%    &
%    \monad(Y)
%    \ar[d, dashed, "h"]
%    \\
%    \monad(Z)
%    \ar[r, "\alpha"']
%    &
%    Z
%\end{tikzcd}
%\]
Then the above continuation transformation sends $\phi$ to the composite $R \semicolon h \in \catc(X,Z)$.
This gives the monadic picture of effectful computation as continuation transformation.

\subsubsection{Promonads}
\label{subsubsec:monads-vs-arrows:promonads}
We use promonads $\promonad$ on a category $\catc$ as a categorical foundation for arrows \cite{JacobsHeunenHasuo2009,Asada2010}.
Intuitively, a promonad is a family $\{\promonad(X, Y)\}_{X,Y}$,
where each $\promonad(X,Y)$ is a set of computations with effects that take an input of type $X$ and produce an output of type $Y$.
If we have a monad $\monad$ on $\catc$, we can construct a promonad $\promonad_\monad$ on $\catc$ by $\promonad_\monad(X, Y) = \catc(X, \monad(Y))$.

One point of passing from monads to promonads is that the continuation transformational viewpoint can be described more directly.
In contrast to the monad case, the carrier \emph{object} is replaced by a carrier \emph{presheaf}.
An $\promonad$-algebra is such a carrier presheaf $G$ equipped with a family of morphisms
$\{ \alpha_{X,Y} \colon \promonad(X, Y) \times GY \to GX \}_{X,Y}$ that satisfies certain axioms.
This is the categorical nature of the notion of continuation transformation: an element of $GY$ is regarded as a continuation at $Y$, and $\alpha$ specifies how a computation in $\promonad(X,Y)$ transforms a continuation at $Y$ into one at $X$.

This viewpoint subsumes the monadic one.
In fact, the continuation-transformational reading of $\monad$-algebras just explained can be recasted as a procedure to see a $\monad$-algebra
as an $\promonad_\monad$-algebra on the representable carrier presheaf $G=\catc(\blank,Z)$ via the Yoneda lemma \cite[pp.\ 7--8]{Wood1985}:
What we did was equivalent to defining
$\alpha_{X,Y} \colon \catc(X, \monad Y) \times \catc(Y, Z) \to \catc(X, Z)$
by the action of $\monad$ on hom-sets and post-composition with $\alpha$.
Thus, the usual Eilenberg--Moore viewpoint is precisely the representable case of the promonad-algebraic view of continuation transformation.
On the other hand, for arbitrary promonad-algebras, the notion of continuation no longer needs to be representable:
One may take an arbitrary carrier presheaf $G$.
An important instance is the one used by Sanada~\cite{Sanada2024} for usual arrow handlers, where the carrier presheaf is $\promonad(\blank,D)$.

This presheaf-valued carrier is the additional flexibility provided by arrows, and it is exactly what we use for reverse handlers.
For a promonad $\promonad$ on an RDRC $\catx$, we take as carrier presheaf the reverse algebra $\ralg{\promonad}$ (Definition~\ref{def:reverse-algebra}), defined by $\ralg{\promonad}(X) \defeq \promonad(X,X)$.
Thus the carrier consists of endomorphic continuations.
The reverse differential combinator of the RDRC makes this assignment into a presheaf.
%The crucial difference from the monadic setting is therefore not merely that arrows record both input and output types, but that promonad algebras allow such presheaf-valued carriers, and in particular the carrier presheaf of endomorphic continuations $\ralg{\promonad}$.
Given a reverse handler, we can construct an algebra structure $\alpha$ on the reverse algebra $\ralg{\promonad}$, so that
any computation $R\in\promonad(X,Y)$ induces a continuation transformation of the form $\promonad(Y,Y) \to \promonad(X,X)$.
%In particular, any operation symbol (i.e. a layer)
%$\opr :\typeAbg{$A$} \opto \typeBbg{$B$}$ induces a continuation transformation
%$\promonad(\typeBbg{$\itp{B}$}, \typeBbg{$\itp{B}$})  \to \promonad(\typeAbg{$\itp{A}$}, \typeAbg{$\itp{A}$})$

For each fixed $Z$, the carrier presheaf $\promonad(\blank, Z)$ is equipped with a free $\promonad$-algebra structure.
Thus, just as in the monadic case, an algebra structure on the chosen carrier presheaf induces homomorphisms from this free algebra.
%The interpretation is then presented by a homomorphism from the free algebra.
%\[
%\begin{tikzcd}[row sep=small]
%    \catc(X,Z)
%    \ar[r, "\eta_{X,Z}"]
%    \ar[rd, "\phi_{X}"']
%    &
%    \promonad(X,Z)
%    \ar[dashed, d, "h_{X}"]
%    \\
%    &
%    \promonad(X,W)
%\end{tikzcd}
%\qquad
%\begin{tikzcd}[row sep=small]
%    \promonad(X,Y) \times \promonad(Y,Z)
%    \ar[r, "\mu_{X,Y,Z}"]
%    \ar[d, dashed, "\idmor \times h_{Y}"']
%    &
%    \promonad(X,Z)
%    \ar[d, dashed, "h_{X}"]
%    \\
%    \promonad(X,Y) \times \promonad(Y,W)
%    \ar[r, "\alpha_{X,Y}"']
%    &
%    \promonad(X,W)
%\end{tikzcd}
%\]
\begin{equation}
\label{eq:reverse-handler-algebra}
\begin{tikzcd}[row sep=small]
    \catx(X,Z)
    \ar[r, "\eta_{X,Z}"]
    \ar[rd, "\phi_{X}"']
    &
    \promonad(X,Z)
    \ar[dashed, d, "h_{X}"]
    \\
    &
    \ralg{\promonad}(X)
\end{tikzcd}
\qquad
\begin{tikzcd}[row sep=small]
    \promonad(X,Y) \times \promonad(Y,Z)
    \ar[r, "\mu_{X,Y,Z}"]
    \ar[d, dashed, "\idmor \times h_{Y}"']
    &
    \promonad(X,Z)
    \ar[d, dashed, "h_{X}"]
    \\
    \promonad(X,Y) \times \ralg{\promonad}(Y)
    \ar[r, "\alpha_{X,Y}"']
    &
    \ralg{\promonad}(X)
\end{tikzcd}
\end{equation}
In our case, a reverse handler $H$ gives rise to an algebra structure on $\ralg{\promonad}$
and a morphism $\phi$, and the unique homomorphism $h$ is the interpretation of $H$.
The detailed construction is given in Section~\ref{sec:denotational-semantics}.

\section{Syntax}
\label{sec:syntax}
We introduce the syntax and typing rules of our language.
The language is an extension of the arrow calculus with operations and handlers \cite{Sanada2024}.
The pure part of the language is a simplified version of a simple differentiable programming language \cite{AbadiPlotkin2019}.
Let $\BType$ be a set of \emph{base types}.
A type $1 \in \BType$ is specified to represent the unit type.
\begin{definition}
    A \emph{type} $A$ and a \emph{typing environment} $\Gamma$ are defined by the following BNF:
    \begin{equation*}
        \Type \ni A, B, C, D \bnfeq \beta \mid \typprod_{i=1}^n A_i
        \ \text{where $\beta$ ranges over base types $\BType$,}
        \quad
        \Gamma, \Delta \bnfeq \emptyenv \mid x : A, \Gamma.
    \end{equation*}
\end{definition}
An example of $\BType$ is $\{ \typreal(n) \mid n \in \Nat \}$ with $\typreal(0)$ as the unit type.
The base type $\typreal(n)$ represents the type of $n$-dimensional real vectors.

\begin{definition}[signatures of function symbols]
    A \emph{signature of function symbols} $\signatfunc$ is a set of function symbols $\func$.
    For each function symbol $\func \in \signatfunc$, a type $A$ called \emph{domain} and a type $B$ called \emph{codomain} are assigned.
    We write $\func : A \to B$ to indicate that the domain and codomain of $\func$ are $A$ and $B$, respectively.
    For a base type $\beta$, we call a function symbol $\const : 1 \to \beta$ a \emph{constant symbol} of type $\beta$, and write $\const : \beta$ for it.
    We assume that $\signatfunc$ contains distinguished constants $0_{\beta} : \beta$ for each $\beta \in \BType$.
    A signature of function symbols $\signatfunc$ is \emph{reverse derivative closed} if for any function symbols $\func : A \to B \in \signatfunc$, there exists a function symbol $\rd[\func] : A \times B \to A \in \signatfunc$.
\end{definition}

Each function symbol $\func : A \to B$ represents a pure primitive function from type $A$ to type $B$.
For a function $f \colon \Real^n \to \Real^m$, its reverse derivative is the following function:
\begin{equation}
\label{eq:reverse-derivative}
    \Real^{n} \times \Real^{m} \ni (x, y)
    \mapsto
    \left(
        \sum_{j=1}^m y_j \frac{\partial f_j}{\partial x_1}(x), \dots, \sum_{j=1}^m y_j \frac{\partial f_j}{\partial x_n}(x)
    \right) \in \Real^n. 
\end{equation}
Intuitively, $\rd[\func]$ is the symbolic reverse derivative of the pure primitive function $\func$.

\begin{definition}[signatures of operation symbols]
    A \emph{signature of operation symbols} $\signatopr$ is a set of operation symbols $\opr$.
    For each operation symbol $\opr \in \signatopr$, a type $A$ called \emph{coarity} and a type $B$ called \emph{arity} are assigned.
    We write $\opr : A \opto B$ to indicate that the coarity and arity of $\opr$ are $A$ and $B$, respectively.
\end{definition}

An operation symbol $\opr : A \opto B$ represents an effectful computation whose input and output types are $A$ and $B$, respectively.
Programmers can implement its behavior by defining handlers.

\begin{definition}[terms and commands]
    \label{def:syntax:terms-commands}
    Terms $L, M, N$, commands $P, Q, R$ and handlers $H$ are defined by the following BNF:
    \begin{equation*}
        \begin{aligned}
            L, M, N
            & \bnfeq x \mid \const \mid \func(M) \mid M_1 + M_2 \mid \tuple{M_1, \dots, M_n}
            \mid \tproj_i(M)
            \mid \tlet \; x \tbind M \; \tin \; N
            \mid M.\trd(x.N)(L)
            \\
            U, V, W
            & \bnfeq x \mid \const \mid \tuple{V_1, \dots, V_n} \\
            & \text{where $x$ ranges over variables, $\const$ over constant symbols and $\func$ over function symbols.}
            \\
    %    \end{aligned}
    %\end{equation*}
    %\begin{equation*}
    %    \begin{aligned}
            P, Q, R
            & \bnfeq \pret{M} \mid \opr(M) \mid \plet \; x \pbind P \; \pin \; Q
            \mid \prevhandle(M) \tuple{x_1, \dots, x_n}.R \; \pwith \; H \\
            & \text{where $\opr$ ranges over operation symbols in $\signatopr$.}
            \\
            H
            & \bnfeq \{ x \mapsto P \} \cup \{ (x \mapsto Q_{\opr}^{\fwd} \mid y, z \mapsto Q_{\opr}^{\bwd}) \}_{\opr \in \signatopr}.
        \end{aligned}
    \end{equation*}
    We write $\tlet \; \tuple{x_i}_{i=1}^n \tbind M \; \tin \; N$ 
    and $\plet \; \tuple{x_i}_{i=1}^n \pbind P \; \pin \; Q$ for
    \[
    \begin{aligned}
        & \tlet \; x \tbind M \; \tin \; \tlet \; x_1 \tbind \tproj_1(x) \; \pin \; \dots \; \tlet \; x_n \tbind \tproj_n(x) \; \pin \; N,
        \\
        & \plet \; x \pbind P \; \pin \; \plet \; x_1 \pbind \pret{\tproj_1(x)} \; \pin \; \dots \; \plet \; x_n \pbind \pret{\tproj_n(x)} \; \pin \; Q,
    \end{aligned}
    \] respectively, where $x$ is a fresh variable.
\end{definition}

Terms represent pure computations, while commands represent effectful computations.
The syntax of terms is a simplified version of that of a simple differentiable programming language introduced by Abadi and Plotkin \cite{AbadiPlotkin2019}.
Compared to the language by Abadi and Plotkin, terms in our language do not have conditional expressions and recursion constructs for simplicity.
The syntax of commands is a variant of that of an arrow calculus with operations and handlers by Sanada \cite{Sanada2024}.
The difference is that our commands have a construct for reverse handling $\prevhandle(M) \tuple{x_i}_{i=1}^n.R \; \pwith \; H$ instead of the usual handling construct.

We give intuitive explanations for the constructs.
A term $\func(M)$ is application of a pure primitive function $\func$ to an argument $M$.
A term $M_1 + M_2$ represents addition of two terms $M_1$ and $M_2$ of base type.
A term $\tuple{M_i}_{i=1}^n$ is an $n$-tuple of terms $M_1, \dots, M_n$.
A term $\tproj_i(M)$ is the $i$-th projection of a term $M$ of product type.
A term $\tlet \; x \tbind M \; \tin \; N$ is a sequential composition of two terms $M$ and $N$, where the result of $M$ is bound to a variable $x$ in $N$.

A term $M.\trd(x.N)(L)$ represents the reverse derivative of a term $N$ by a variable $x$.
Intuitively, the terms $L$, $M$ and $N$ correspond to $x$, $y$ and $f$ in \eqref{eq:reverse-derivative}, respectively.
For example, let $\func : \typreal(1) \to \typreal(1)$, $L = \fromvec{2}$, $M = \fromvec{3}$ and $N = \func(x)$. Then, the term $\fromvec{3}.\trd(x.\func(x))(\fromvec{2})$ represents the value
\[3 \cdot \frac{\dif \itp{\func}}{\dif x}(2) \quad \in \quad \Real\]
where $\itp{\func}$ is the interpretation of $\func$ as a function $\Real \to \Real$.
Since our language does not include function types, the reverse derivative must be expressed in the application form $M.\trd(x.N)(L)$.

A command $\pret{M}$ represents a pure computation $M$ lifted to a command.
A command $\opr(M)$ represents an effectful operation $\opr$ applied to an argument $M$.
A command $\plet \; x \pbind P \; \pin \; Q$ is a sequential composition of  $P$ and $Q$, where the result of $P$ is bound to a variable $x$ in $Q$.

A command $P = \prevhandle(M) \tuple{x_i}_{i=1}^n.R \; \pwith \; H$ represents reverse handling of a command $R$ with a handler $H$, where the result of a pure computation $M$ is bound to variables $x_1, \dots, x_n$ in $R$.
The term $M$ is used to pass the input of the computation to be handled, and has a similar role to $L$ in the term $M.\trd(x.N)(L)$.
In the reduction of the command $P$, if $R$ invokes an operation $\opr$, then the command $Q^{\fwd}_{\opr}$ defined in the handler $H$ is executed, and after the continuation of the computation of $P$ returns a result, the command $Q^{\bwd}_{\opr}$ defined in the handler $H$ is executed.

\begin{definition}[typing]
    The derivations of typing judgments $\Gamma \vdash M : A$, $\Gamma \asemicolon \Delta \vdash P : A$ and $\revhandlerjudgment H : C$ for terms, commands and handlers are defined by the rules in Fig.~\ref{fig:typing-terms} and \ref{fig:typing-commands-handlers}.
    \begin{figure}
        \footnotesize
        \centering
        \begin{gather*}
            \infer[\tytvar]{
                \Gamma \vdash x : A
            }{x : A \in \Gamma}
            \quad
            \infer[\tytconst]{
                \Gamma \vdash \const : \beta
            }{\const : \beta \in \signatfunc}
            \quad
            \infer[\tytfunc]{
                \Gamma \vdash \func(M) : B
            }{
                \Gamma \vdash M : A
                &
                \func : A \to B \in \signatfunc
            }
            \quad
            \infer[\tytlet]{
                \Gamma \vdash \tlet \; x \tbind M \; \tin \; N : B
            }{
                \Gamma \vdash M : A
                &
                \Gamma, x : A \vdash N : B
            }
            \\
            \infer[\tytplus]{
                \Gamma \vdash M_1 + M_2 : \beta
            }{
                \Gamma \vdash M_1 : \beta
                &
                \Gamma \vdash M_2 : \beta
            }
            \quad
            \infer[\tyttuple]{
                \Gamma \vdash \tuple{M_1, \dots, M_n} : \typprod_{i=1}^n A_i
            }{
                \Gamma \vdash M_i : A_i \quad (1 \leq i \leq n)
            }
            \quad
            \infer[\tytproj]{
                \Gamma \vdash \tproj_i(M) : A_i
            }{
                \Gamma \vdash M : \typprod_{i=1}^n A_i
                &
                1 \leq i \leq n
            }
            \\
            \infer[\tytrd]{
                \Gamma \vdash M.\trd(x.N)(L) : A
            }{
                \Gamma \vdash M : B
                &
                \Gamma, x : A \vdash N : B
                &
                \Gamma \vdash L : A
            }
        \end{gather*}
        \Description{Typing rules for terms}
        \caption{Typing rules for terms}\label{fig:typing-terms}
    \end{figure}
    \begin{figure}
        \footnotesize
        \centering
        \begin{gather*}
            \infer[\tycpure]{
                \Gamma \asemicolon \Delta \vdash \pret{M} : A
            }{
                \Gamma, \Delta \vdash M : A
            }
            \quad
            \infer[\tycop]{
                \Gamma \asemicolon \Delta \vdash \opr(M) : B
            }{
                \Gamma, \Delta \vdash M : A
                &
                \opr : A \opto B \in \signatopr
            }
            \quad
            \infer[\tyclet]{
                \Gamma \asemicolon \Delta \vdash \plet \; x \pbind P \; \pin \; Q : C
            }{
                \Gamma \asemicolon \Delta \vdash P : A
                &
                \Gamma \asemicolon x : A, \Delta \vdash Q : C
            }
            \\
            \infer[\tychandle]{
                \Gamma \asemicolon \Delta \vdash \prevhandle(M) \tuple{x_1, \dots, x_n}.R \; \pwith \; H : \typprod_{i=1}^n A_i
            }{
                \Gamma, \Delta \vdash M : \typprod_{i=1}^n A_i
                &
                \Gamma \asemicolon x_1 : A_1, \dots, x_n : A_n \vdash R : C
                &
                \revhandlerjudgment H : C
            }
        \end{gather*}
        \[
            \scalebox{0.9}{$
            \infer[\tyhandler]{
                \revhandlerjudgment \{ x \mapsto P \} \cup \{ (x \mapsto Q_{\opr}^{\fwd} \mid y, z \mapsto Q_{\opr}^{\bwd}) \} : C
            }{
                \emptyenv \asemicolon x : C \vdash P : C
                &
                \emptyenv \asemicolon x : A \vdash Q_{\opr}^{\fwd} : B \times D_{\opr}
                &
                \emptyenv \asemicolon y : B, z : D_{\opr} \vdash Q_{\opr}^{\bwd} : A
                &
                \forall \opr : A \opto B \in \signatopr
            }
            $}
        \]
        \Description{Typing rules for commands and handlers}
        \caption{Typing rules for commands and handlers}\label{fig:typing-commands-handlers}
    \end{figure}
\end{definition}
The typing judgment $\Gamma \vdash M : A$ means that the pure term $M$ has type $A$ under the typing environment $\Gamma$.
The typing judgment $\Gamma \asemicolon \Delta \vdash P : A$ means that the command, that is, the effectful computation, $P$ has type $A$ under the typing environments $\Gamma$ and $\Delta$.
Intuitively, the environment $\Gamma$ is a pure typing environment and $\Delta$ represents inputs of the effectful computation $P$.
For handlers, the typing judgment $\revhandlerjudgment H : C$ means that the handler $H$ is well typed for handling computations of type $C$;
here, for simplicity of the semantics given later, we consider only handlers with an empty pure typing environment.
%Semantically, we will define the interpretation of the handler $H$ as a homomorphism $\{ \promonad(X, \itp{C}) \to \ralg{\promonad}(X)\}_X$ between $\promonad$-algebras.

\begin{definition}[depth and size]
    Let $M$ be a term, $P$ a command, $H$ a reverse handler.
    We define numbers $\rhdepth{R} \in \Nat$ and $\rhdepth{H} \in \Nat$ as the depth of reverse handler constructs, and numbers $\dsize{\Gamma \vdash M : A} \in \Nat$, $\dsize{\Gamma \asemicolon \Delta \vdash R : C} \in \Nat$ and $\dsize{\revhandlerjudgment H : C} \in \Nat$ as the size of derivation tree.
    The precise definitions are explained in Appendix~\ref{sec:definition-depth-size}.
\end{definition}
\begin{toappendix}
\section{Definition of depth and size}
\label{sec:definition-depth-size}
In order to define operational semantics and prove its properties, we introduce two measures: depth of reverse handlers and size of derivation trees.
We will use the lexicographic order on the set of pairs of these two measures in recursive definitions and inductive proofs of several lemmas in \S\ref{sec:operational-semantics} and \ref{sec:soundness-adequacy}.
\begin{definition}[depth of reverse handler]\label{def:depth-of-reverse-handler}
    For a well-typed command $\Gamma \asemicolon \Delta \vdash P : A$ and a well-typed handler $\revhandlerjudgment H : C$,
    the \emph{depth} $\rhdepth{P}$ and $\rhdepth{H}$ is defined in Fig.~\ref{fig:depth-of-reverse-handlers}.
    \begin{figure}
    \footnotesize
    \begin{align*}
        & \rhdepth{\pret{M}}
        = 0
        \qquad\qquad
        \rhdepth{\opr(M)}
        = 0
        \qquad\qquad
        \rhdepth{\plet \; x \pbind P \; \pin \; Q}
        = \max\{\rhdepth{P}, \rhdepth{Q}\}
        \\
        & \rhdepth{\prevhandle(M) \tuple{x_i}_{i=1}^n.R \; \pwith \; H}
        = 1 + \max\left\{\rhdepth{R}, \rhdepth{H}\right\}
        \\
        & \rhdepth{\{ x \mapsto P \} \cup \{ (x \mapsto Q^{\fwd}_{\opr} \mid y, z \mapsto Q^{\bwd}_{\opr})\}_{\opr \in \signatopr}}
        = \max\left\{\rhdepth{P}, \max_{\opr \in \signatopr}\{\rhdepth{Q^{\fwd}_{\opr}}, \rhdepth{Q^{\bwd}_{\opr}}\}\right\}
    \end{align*}
    \caption{Depth of reverse handlers}
    \label{fig:depth-of-reverse-handlers}
    \end{figure}
\end{definition}
\begin{definition}[size of derivation trees]\label{def:size-of-derivation-tree}
    For a well-typed term $\Gamma \vdash M : A$, a well-typed command $\Gamma \asemicolon \Delta \vdash P : A$ and a well-typed handler $\revhandlerjudgment H : C$, we define the \emph{size of the derivation tree} $\dsize{\Gamma \vdash M : A}$, $\dsize{\Gamma \asemicolon \Delta \vdash P : A}$ and $\dsize{\revhandlerjudgment H : C}$ are shown in Fig.~\ref{fig:size-of-derivation-trees}.
    \begin{figure}
        \footnotesize
        \begin{gather*}
            \dsize{\Gamma \vdash x : A}
            = 1
            \qquad
            \dsize{\Gamma \vdash \const : \beta}
            = 1
            \qquad
            \dsize{\Gamma \vdash \func(M) : B}
            = 1 + \dsize{\Gamma \vdash M : A}
            \\
            \dsize{\Gamma \vdash \tproj_i(M) : A_i}
            = 1 + \dsize{\Gamma \vdash M : \typprod_{i=1}^n A_i}
            \qquad
            \dsize{\Gamma \vdash \tuple{M_i}_{i=1}^n : \typprod_{i=1}^n A_i}
            = 1 + \textstyle\sum_{i=1}^{n} \dsize{\Gamma \vdash M_i : A_i}
            \\
            \dsize{\Gamma \vdash M_1 + M_2 : \beta}
            = 1 + \dsize{\Gamma \vdash M_1 : \beta} + \dsize{\Gamma \vdash M_2 : \beta}
            \\
            \dsize{\Gamma \vdash \tlet \; x \tbind M \; \tin \; N : B}
            = 1 + \dsize{\Gamma \vdash M : A} + \dsize{\Gamma, x : A \vdash N : B}
            \\
            \dsize{\Gamma \vdash M.\trd(x.N)(L) : A}
            = 1 + \dsize{\Gamma \vdash M : B} + \dsize{\Gamma, x : A \vdash N : B} + \dsize{\Gamma \vdash L : A}
        \end{gather*}
        \begin{gather*}
            \dsize{\Gamma \asemicolon \Delta \vdash \pret{M} : A}
            = \dsize{\Gamma, \Delta \vdash M : A}
            \quad
            \dsize{\Gamma \asemicolon \Delta \vdash \opr(M) : B}
            = 2 + \dsize{\Gamma, \Delta \vdash M : A}
            \\
            \dsize{\Gamma \asemicolon \Delta \vdash \plet \; x \pbind P \; \pin \; Q : C} = 1 + \dsize{\Gamma \asemicolon \Delta \vdash P : A} + \dsize{\Gamma \asemicolon x : A, \Delta \vdash Q : C}
            \\
            \dsize{\Gamma \asemicolon \Delta' \vdash \prevhandle(M) \tuple{x_i}_{i=1}^n.R \; \pwith \; H : \typprod_{i = 1}^n A_i}
            = 1 + \dsize{\Gamma, \Delta' \vdash M : \typprod_{i = 1}^n A_i} + \dsize{\Gamma \asemicolon \Delta \vdash R : C} + \dsize{\revhandlerjudgment H : C}
        \end{gather*}
        \begin{equation*}
            \begin{aligned}
                & \dsize{\revhandlerjudgment \{ x \mapsto P \} \cup \{ (x \mapsto Q_{\opr}^{\fwd} \mid y, z \mapsto Q_{\opr}^{\bwd}) \}_{\opr \in \signatopr} : C} \\
                & = 1 + \dsize{\emptyenv \asemicolon x : C \vdash P : C} + \sum_{\opr : A \opto B \in \signatopr} \left( \dsize{\emptyenv \asemicolon x : A \vdash Q_{\opr}^{\fwd} : B \times D_{\opr}} + \dsize{\emptyenv \asemicolon y : B, z : D_{\opr} \vdash Q_{\opr}^{\bwd} : A} \right)
            \end{aligned}
        \end{equation*}
        \caption{Size of derivation trees}\label{fig:size-of-derivation-trees}
        \Description{Size of derivation trees}
    \end{figure}
\end{definition}
\end{toappendix}

\section{Examples of Neural Networks}\label{sec:examples-neural-network}
\begin{table*}[t]
  \centering
  \caption{Examples of Neural Networks in our language. Each example shows different aspect of reverse handlers: ordinary backpropagation, structural composition, bidirectional behavior, and non-standard gradients.}
  \label{tab:examples-summary}
  \footnotesize
  \begin{tabular}{@{}p{0.14\textwidth}p{0.34\textwidth}
    %p{0.3\textwidth}
    p{0.47\textwidth}@{}}
    \toprule
    Example
      & Network signatures
      %& Reverse-handler clauses
      & What the example demonstrates \\
    \midrule
    MLP. \ref{ex:mlp-backpropagation}, \ref{ex:mlp-reduction}, \ref{ex:mlp-denotation}.
      & Fully connected layers $\opFC{\ell}$ with an activation function $\fnSwish$.
      %& $Q^{\fwd}_{\opFC{l}}$ reads the parameter matrix from the location $\ell$ and performs the matrix multiplication. $Q^{\bwd}_{\opFC{l}}$ updates the parameter and propagates the gradient backwards.
      & A symbolic network description with standard backpropagation by a reverse handler. \\
    \addlinespace
    ResNet. \ref{ex:residual-neural-network}, \ref{ex:resnet-denotation}.
      & An MLP that has skip connections.
      %& The same handler as the MLP example.
      & Skip connections are realized by the strength of promonads \\
    \addlinespace
    Autoencoder. \ref{ex:autoencoder}, \ref{ex:autoencoder-denotation}.
      & A network with an encoder part and a decoder part. A pair of an encoder and a decoder is represented by $\opEnDe{\ell,\ell'}$.
      %& The forward clause $Q^{\fwd}_{\opEnDe{\ell,\ell'}}$ implements the encoder; the backward $Q^{\bwd}_{\opEnDe{\ell,\ell'}}$ clause implements the decoder.
      & The backward clause $Q^{\bwd}_{\opEnDe{\ell, \ell'}}$ need not be the derivative of the forward
        clause $Q^{\fwd}_{\opEnDe{\ell, \ell'}}$; reverse handlers can express more general bidirectional
        computations. \\
    \addlinespace
    CNN. \ref{ex:convolutional-neural-network}.
      & Convolution layers $\opConv{n, \ell}$ and pooling layers $\opPool{n, m, c}$.
      %& Adds clauses for $\opConv{n, l}$ and $\opPool{n, m, c}$ to the fully connected layer clauses.
      & The language can express common neural-network layers, not just fully
        connected networks. \\
    \addlinespace
    U-net. \ref{ex:U-net}, \ref{ex:U-net-denotation}.
      & Nested encoder blocks with decoder-like structure and skip connections.
      %& The forward clauses build the encoder side; the backward clauses reconstruct the decoder side.
      & Our language can describe practical architectures. The string diagram obtained through the interpretation accurately represents the network structure. \\
    \addlinespace
    STE in QAT. \ref{ex:ste}.
      & A layer whose forward pass applies a quantization function such as
        $\fnround$.
      %& The forward clause uses the round function $\fnround$, while the backward clause propagates gradients as if the layer were the identity.
      & Reverse handlers can assign non-standard backward behavior, such as STE. \\
    \bottomrule
  \end{tabular}
\end{table*}
We show that numerous practical neural network architectures can be written in our language.
The examples are summarized in Table~\ref{tab:examples-summary}.

Let $\BType = \{ \typreal(n) \mid n \in \Nat \}$ be a set of base types with $1 = \typreal(0)$ as the unit type.
Let $\Locations = \{ \ell_0 , \ell_1, \dots \}$ be a set of \emph{locations}.
Assume that for each location $\ell \in \Locations$, a sequence of natural numbers $(n_1, n_2, \dots, n_t)$ is assigned.
We write $\loctyp{\ell}{(n_1, n_2, \dots, n_t)}$ to indicate that $(n_1, n_2, \dots, n_t)$ is assigned to the location $\ell$.
Intuitively, $\loctyp{\ell}{(n_1, n_2, \dots, n_t)}$ means that the memory location $\ell$ stores a tensor of shape $n_t \times \dots \times n_2 \times n_1$.

Let $\signatfunc$ be a reverse derivative closed signature of function symbols containing
\begin{gather*}
    \fnSwish_n : \typreal(n) \to \typreal(n),
    \quad
    (\fnscalarmul_n) : \typreal(1) \times \typreal(n) \to \typreal(n),
    \\
    (\fnminus_n) : \typreal(n) \times \typreal(n) \to \typreal(n),
    \quad
    (\fnmatmul_{n,m}) : \typreal(n m) \times \typreal(n) \to \typreal(m),
    \\
    (\blank)^\fntranspose_{n, m} : \typreal(m n) \to \typreal(m n),
    \quad
    \fromvec{v} : \typreal(0) \to \typreal(n)
\end{gather*}
for each natural numbers $n, m \in \Nat$ and each vector $v \in \Real^n$;
in particular, we prepare constant symbols $\fromvec{v}$ for all vectors $v$ of any dimension.
The symbol $\fnSwish_n$ is a symbol for the Swish activation function \cite{Ramachandran+2018} $\itp{\fnSwish_n} \colon \Real^n \to \Real^n$ defined by $\itp{\fnSwish_n}(x) = (x_i \cdot \sigmoid(x_i))_{i=1}^n$ where $\sigmoid$ is the sigmoid function defined by $\sigmoid(x) = 1 / (1 + \exp(-x))$.
The symbol $(\fnscalarmul_n)$ is a symbol for the scalar multiplication of a real vector.
The symbol $(\fnminus_n)$ is a symbol for the entry-wise subtraction of real vectors.
The symbol $(\fnmatmul_{n,m})$ is a symbol for the matrix-vector multiplication.
The symbol $(\blank)^\fntranspose_{n, m}$ is a symbol for the transpose of a matrix.

In the following examples, we may omit non-essential clauses in the construction of handlers.
Any omitted clause is assumed to be defined as
$(x \mapsto \plet \; y \pbind \opr(x) \; \pin \; \pret{\tuple{y, \fromvec{0}}} \mid y, z \mapsto \pret{\fromvec{0}})$
\begin{example}[multilayer perceptron]\label{ex:mlp-backpropagation}
    A multilayer perceptron (MLP) is the simplest type of neural networks.
    Let $\signatopr$ be a signature of operation symbols containing the following symbols for each location $\ell \in \Locations$:
    \begin{gather*}
        \begin{aligned}
            \opFC{\ell} & : \typreal(n) \opto \typreal(m)
            && \text{where $\loctyp{\ell}{(n,m)}$,}
            \\
            \opGet{\ell} & : \typreal(0) \opto \typreal(n_t \dots n_2 n_1)
            && \text{where $\loctyp{\ell}{(n_1, \dots, n_t)}$,}
            \\
            \opPut{\ell} & : \typreal(n_t \dots n_2 n_1) \opto \typreal(0)
            && \text{where $\loctyp{\ell}{(n_1, \dots, n_t)}$.}    
        \end{aligned}
    \end{gather*}
    Let $n_{\inp}, n_{\hid}, n_{\out} \in \Nat$ be natural numbers,
    $\loctyp{\ell_0}{(n_{\inp}, n_{\hid})}$ and $\loctyp{\ell_1}{(n_{\hid}, n_{\out})}$.
    We describe a simple MLP with one hidden layer of $n_{\hid}$ neurons, an input vector of size $n_{\inp}$ and an output vector of size $n_{\out}$, as follows:
    \[
        R_{\MLP} =
        \left(\begin{aligned}
            & \plet \; z_{\hid} \pbind \opFC{\ell_0}(x_{\inp}) \; \pin \;
            \plet \; x_{\hid} \pbind \pret{\fnSwish_{n_{\hid}}(z_{\hid})} \; \pin \\
            & \plet \; z_{\out} \pbind \opFC{\ell_1}(x_{\hid}) \; \pin \; \pret{z_{\out}}
        \end{aligned}\right).
    \]
    We implement the backpropagation of the mean squared error for $R_{\MLP}$ using a reverse handler $H_{\MLP} = \{ x \mapsto P \} \cup \{(x \mapsto Q^{\fwd}_{\opr} \mid y, z \mapsto Q^{\bwd}_{\opr})\}_{\opr \in \signatopr}$ where
    \[
        \begin{aligned}
            P
            & =
            \pret{\fromvec{\learningrate} \fnscalarmul (x \fnminus \fromvec{t})}
            \quad
            \text{where $t \in \Real^{n_{\out}}$ is the target vector and $\learningrate \in \Real$ is the learning rate,}
            \\
            Q^{\fwd}_{\opFC{\ell}}
            & =
            \plet \; w \pbind \opGet{\ell}() \; \pin \;
            \pret{\tuple{w \fnmatmul_{n, m} x, \tuple{w, x}}},
            \\
            Q^{\bwd}_{\opFC{\ell}}
            & =
            \plet \; \tuple{w, x} \pbind z \; \pin \;
            \plet \; \_ \pbind \opPut{\ell}(w - y \fnmatmul (x^{\fntranspose})) \; \pin \;
            \pret{w^{\fntranspose} \fnmatmul y}.
        \end{aligned}
    \]
    The commands and handlers have the following types:
    \begin{align*}
        & \emptyenv \asemicolon x_{\inp} : \typreal(n_{\inp})
        \vdash R_{\MLP} : \typreal(n_{\out}),
        \\
        & \emptyenv \asemicolon x : \typreal(n_{\out})
        \vdash P : \typreal(n_{\out}),
        \\
        & \emptyenv \asemicolon x : \typreal(n)
        \vdash Q^{\fwd}_{\opFC{\ell}} : \typreal(m) \times (\typreal(n \times m) \times \typreal(n)),
        \\
        & \emptyenv \asemicolon y : \typreal(m), z : \typreal(n \times m) \times \typreal(n)
        \vdash Q^{\bwd}_{\opFC{\ell}} : \typreal(n),
        \\
        & \revhandlerjudgment H_{\MLP} : \typreal(n_{\out}).
    \end{align*}
    An implementation of the backpropagation of the mean squared error for $R_{\MLP}$ using the reverse handler $H_{\MLP}$ is as follows:
    \[
    \emptyenv \asemicolon \emptyenv \vdash
    \prevhandle(\fromvec{v})\tuple{x_{\inp}}. R_{\MLP} \; \pwith \; H_{\MLP}
    : \typreal(n_{\inp})
    \quad
    \text{where $v \in \Real^{n_{\inp}}$}.
    \]
\end{example}

\begin{example}[residual neural network]\label{ex:residual-neural-network}
    We continue to work under the setting of Example~\ref{ex:mlp-backpropagation}.
    A \emph{residual neural network} \cite{He+2016} is a deep neural network architecture which contains \emph{residual blocks} as its building blocks.
    A residual block adds the input of the block to the output of a neural network inside the block.
    We can implement a residual block simply by
    $R_{\ResNet} = \plet \; z \pbind R_{\MLP} \; \pin \; \pret{x_{\inp} + z}$ where $n_\inp = n_\out$ and $R_{\MLP}$ is the multilayer perceptron in Example~\ref{ex:mlp-backpropagation}.
    The backpropagation for the residual block $R_{\ResNet}$ can be implemented using the same handler: $\prevhandle(\fromvec{v}) \tuple{x_{\inp}}. R_{\ResNet} \; \pwith \; H_{\MLP}$.
    We will use the strength of a strong promonad to interpret $R_{\ResNet}$.
\end{example}

\begin{example}[autoencoder]\label{ex:autoencoder}
    We give an example of a nested use of reverse handlers.
    An autoencoder \cite{HintonSalakhutdinov2006} is a neural network to reduce dimension of input data.
    For each pair of locations $\loctyp{\ell}{(n,m)}$ and $\loctyp{\ell'}{(m, n)}$, we add an operation
    $\opEnDe{\ell, \ell'} : \typreal(n) \opto \typreal(m)$
    to $\signatopr$ in Example~\ref{ex:mlp-backpropagation}.
    Intuitively, $n$ is the dimension of input data and $m$ is the dimension of the latent space where $n > m$.
    A simple autoencoder can be written as $R_{\autoencoder} = \opEnDe{\ell_0, \ell_1}(x_{\inp})$ where $\loctyp{\ell_0}{(n_{\inp}, m)}$ and $\loctyp{\ell_1}{(m, n_{\inp})}$.
    We exploit $Q^{\fwd}_{\opEnDe{\ell, \ell'}}$ to implement encoding layers and $Q^{\bwd}_{\opEnDe{\ell, \ell'}}$ to implement decoding layers.
    We define $H_{\autoencoder} = \{ x \mapsto \pret{x} \} \cup \{ (x \mapsto Q^{\fwd}_{\opEnDe{\ell, \ell'}} \mid y, z \mapsto Q^{\bwd}_{\opEnDe{\ell, \ell'}}) \}$ where
    \[
        Q^{\fwd}_{\opEnDe{\ell, \ell'}}
        =
        \plet \; y \pbind \opFC{\ell}(x) \; \pin \; \pret{\tuple{y, \fromvec{0}}}
        \quad \text{and} \quad
        Q^{\bwd}_{\opEnDe{\ell, \ell'}}
        =
        \opFC{\ell'}(y).
    \]
    We have
    \[
    \infer{
        \emptyenv \asemicolon x : \typreal(n_{\inp})
        \vdash \prevhandle(x)\tuple{x_{\inp}}. \opEnDe{\ell_0,\ell_1}(x_{\inp}) \; \pwith \; H_{\autoencoder}
        : \typreal(n_{\inp}).
    }{
        %\infer{
            \emptyenv \asemicolon x_{\inp} : \typreal(n_{\inp})
            \vdash \opEnDe{\ell_0, \ell_1}(x_{\inp}) : \typreal(m)
        %}{
        %    \emptyenv, x_{\inp} : \typreal(n_{\inp})
        %    \vdash x_{\inp} : \typreal(n_{\inp})
        %}
        &
        %\infer{
            \revhandlerjudgment H_{\autoencoder} : \typreal(m)
        %}{
        %    \begin{aligned}
        %        & x : \typreal(m) \vdash \pret{x} : \typreal(m) \\
        %        & x : \typreal(n) \vdash Q^{\fwd}_{\opEnDe{\ell, \ell'}} : \typreal(m) \times \typreal(0) \\
        %        & y : \typreal(m), z : \typreal(0) \vdash Q^{\bwd}_{\opEnDe{\ell, \ell'}} : \typreal(n)
        %    \end{aligned}
        %}
    }
    \]
    The command $\prevhandle(x)\tuple{x_{\inp}}.R_{\autoencoder} \; \pwith \; H_{\autoencoder}$ is an implementation of a simple autoencoder.
    This use of the reverse handler is interesting because $Q^{\bwd}_{\opEnDe{\ell, \ell'}}$ is not the derivative of $Q^{\fwd}_{\opEnDe{\ell,\ell'}}$ in the usual sense.
    Furthermore, we can implement the backpropagation for the autoencoder using a handler $H = \{ x \mapsto \pret{\fromvec{\learningrate} \cdot (x - \fromvec{v})} \} \cup \{ (x \mapsto Q^{\fwd}_{\opFC{\ell}} \mid y,z \mapsto Q^{\bwd}_{\opFC{\ell}}) \}$ where $v \in \Real^{n_{\inp}}$ is the input data, $\learningrate \in \Real$ is the learning rate and $Q^{\fwd}_{\opFC{\ell}}$ and $Q^{\bwd}_{\opFC{\ell}}$ are as in Example~\ref{ex:mlp-backpropagation}:
    \[
        \prevhandle(\fromvec{v}) \tuple{x_{\inp}}.
        \left(\prevhandle(x)\tuple{x_{\inp}}. R_{\autoencoder} \; \pwith \; H_{\autoencoder} \right)
        \; \pwith \; H_{\MLP}.
    \]
    Deeper autoencoders can be implemented by connecting the operation $\opEnDe{\ell, \ell'}$,
    for example $R'_{\autoencoder} = \plet \; x' \pbind \opEnDe{\ell_0, \ell_3}(x_{\inp})\; \pin \; \opEnDe{\ell_1, \ell_2}(x')$ where
    $\loctyp{\ell_0}{(n_{\inp}, m)}$,
    $\loctyp{\ell_1}{(m, m')}$,
    $\loctyp{\ell_2}{(m', m)}$,
    $\loctyp{\ell_3}{(m, n_{\inp})}$ and
    $n_{\inp} > m > m'$.

    The fact that an autoencoder can be implemented by pairing its encoding and decoding components via $H_{\autoencoder}$ is not merely a consequence of the reverse handler possessing unnecessarily high expressive power.
    There exists a relationship \cite{Cruttwell+2022} between CRDCs and lenses, the latter of which serve as a model of bidirectional computation.
    We believe that the implementation of the autoencoder is supported by constructions arising from this relationship with bidirectional computation.
    A detailed investigation of this relationship is left for future work.
\end{example}

\begin{example}[convolutional neural network]
    \label{ex:convolutional-neural-network}
    A convolutional neural network (CNN) is a neural network architecture based on \emph{convolution} of data and filters.
    A CNN has many practical applications such as image recognition \cite{Krizhevsky+2017}, genome analysis \cite{Zhou+2015}, and natural language processing \cite{Kim2014}.
    Here, we consider one-dimensional convolutions.
    We add the following operations to $\signatopr$ in Example~\ref{ex:mlp-backpropagation}:
    \[
    \begin{aligned}
        \opConv{n, \ell} & : \typreal(c n) \opto \typreal(c' (n-m+1))
        && \text{where $n \in \Nat$ and $\loctyp{\ell}{(m, c, c')} \in \Locations$},
        \\
        \opPool{n, m, c} & : \typreal(c n) \opto \typreal(c \lceil n/m \rceil)
        && \text{where $n, m, c \in \Nat$}.
    \end{aligned}
    \]
    Intuitively, the number $n$ in $\opConv{n, \ell}$ is the size of input data, $m$ is the size of the filter, and $c$ and $c'$ are the numbers of input and output channels, respectively, where a channel is one component of the data representation (e.g., $c=1$ for grayscale data and $c=3$ for RGB data).
    The number $n$ in $\opPool{n, m, c}$ is the size of input data, $m$ is the size of the pooling window, and $c$ is the number of channels. Here, the stride is assumed to be equal to the size of the pooling window.
    For example, a simple CNN consists of a convolution layer, an activation layer using the Swish function, a pooling layer and a fully connected layer can be written as
    \[
    R_{\CNN} =
    \left(
    \begin{aligned}
        & \plet \; y' \pbind \opConv{n_{\inp},\ell_0}(x_{\inp}) \; \pin \; \\
        & \plet \; x' \pbind \pret{\fnSwish_{c'n'}(y')} \; \pin \; \\
        & \plet \; x'' \pbind \opPool{n', m', c'}(x') \; \pin \;
        \opFC{\ell_1}(x'')
    \end{aligned}
    \right) \quad
    \text{where }
    \left\{
    \begin{aligned}
        & \loctyp{\ell_0}{(m,c,c')} \\
        & n' = n - m + 1 \\
        & \loctyp{\ell_1}{(c'\lceil n' / m' \rceil, n'')}.
    \end{aligned}
    \right.
    \]
    We can define $Q^{\fwd}_{\opConv{n, \ell}}$, $Q^{\bwd}_{\opConv{n, \ell}}$, $Q^{\fwd}_{\opPool{n,m,c}}$ and $Q^{\bwd}_{\opPool{n,m,c}}$ appropriately and obtain a handler $H_{\CNN}$ by adding them to the handler $H_{\MLP}$ in Example~\ref{ex:mlp-backpropagation}.
    See Example~\ref{ex:convolutional-neural-network-H-CNN} for the detail of $H_{\CNN}$.
    The following command trains the CNN $R_{\CNN}$ by backpropagation:
    \[
    \prevhandle(\fromvec{v}) \tuple{x_{\inp}}. R_{\CNN} \; \pwith \; H_{\CNN}.
    \]
    The operations $\opConv{n, \ell}$ and $\opPool{n,m,c}$ are implemented by the handler $H_{\CNN}$ using functions $\fnconv$ and $\fnpool$ in $\signatfunc$, respectively.
    See Example~\ref{ex:convolutional-neural-network-H-CNN} for details.
\end{example}

\begin{example}[U-net]
    \label{ex:U-net}
    A U-net \cite{Ronneberger+2015} is a neural network architecture for image segmentation, image denoising and image generation.
    A U-net has an encoder-decoder like structure with skip connections.
    We show that U-net can be implemented in our language using nested reverse handlers.
    We add the following function symbols to $\signatfunc$:
    \begin{gather*}
        \fnpadding_{c, n, m} : \typreal(c n) \to \typreal (c m),
        \quad
        \fnupscale_{c, n, m} : \typreal(c n) \to \typreal (c m),
        \\
        \fnconcat_{c, c', n} : \typreal(c n) \times \typreal(c' n) \to \typreal((c + c') n).
    \end{gather*}
    Intuitively, $\fnpadding_{c, n, m}(x)$ is a one-dimensional image of size $m$ with $c$ channels obtained by padding the image $x$ (of size $n$ with $c$ channels) with zeros.
    An image $\fnupscale_{c, n, m}(x)$ is an image of size $m$ with $c$ channels obtained by upscaling the image $x$ (of size $n$ with $c$ channels).
    An image $\fnconcat_{c, c', n}(\tuple{x, y})$ is an image of size $n$ with $c+c'$ channels obtained by concatenating images $x$ and $y$ of size $n$ with $c$ and $c'$ channels, respectively.

    Let us add operations for U-nets.
    For the sake of simplicity, we only use convolution operation of filter size $3$ and use pooling operation of window size $2$.
    For each $n, c, c' \in \Nat$ and $\loctyp{\ell}{(3, c, c')}, \loctyp{\ell'}{(1, c + c', c)}, \loctyp{\ell''}{(3, c, c)} \in \Locations$,
    we add an operation $\opU{n, \ell, \ell', \ell''} : \typreal(c n) \opto \typreal(c' \lceil (n - 2) / 2 \rceil)$ to $\signatopr$.
    A U-net has a symmetric architecture with encoder-like and decoder-like layers.
    With a reverse handler, it suffices to implement only the encoder; the decoder is generated automatically.
    We define $R_{\Unet}$ as follows:
    \[
    R_{\Unet} =
    \left(
    \begin{aligned}
        & \plet \; x' \pbind \opU{n_0, \loctyp{\ell_0}{(3, 1, 8)}, \loctyp{\ell'_0}{(1, 9, 8)}, \loctyp{\ell''_0}{(3, 8, 8)}}(x_{\inp}) \; \pin \; \\
        & \plet \; x'' \pbind \opU{n_1, \loctyp{\ell_1}{(3, 8, 16)}, \loctyp{\ell'_1}{(1, 24, 8)}, \loctyp{\ell''_1}{(3, 8, 8)}}(x') \; \pin \; \pret{x''}
    \end{aligned}
    \right)
    \]
    where $n_0 = 134$ and $n_1 = \lceil (n_0 - 2) / 2 \rceil$.
    We define a reverse handler $H_{\Unet} = \{ x \mapsto P \} \cup \{ (x \mapsto Q^{\fwd}_{\opU{n, \ell, \ell', \ell''}} \mid y, z \mapsto Q^{\bwd}_{\opU{n, \ell, \ell', \ell''}}) \}_{n,\ell, \ell', \ell''}$ where
    \[
    \begin{aligned}
        P
        & =
        \opConv{n_2, \loctyp{\ell_2}{(3, 16, 16)}}(\fnpadding_{16, 32, 34}(x))
        \quad \text{where} \quad n_2 = \lceil (n_1 - 2) / 2 \rceil,
        \\
        Q^{\fwd}_{\opU{n, \ell, \ell', \ell''}}
        & =
        \plet \; y \pbind \opConv{n, \ell}(x) \; \pin \;
        \plet \; z \pbind \opPool{n-2, 2, c'}(y) \; \pin \;
        \pret{\tuple{z, x}},
        \\
        Q^{\bwd}_{\opU{n, \ell, \ell', \ell''}}
        & =
        \left(
        \begin{aligned}
            & \plet \; x \pbind \opConv{n, \ell'}(\fnconcat_{c,c',n}(\tuple{y, \fnupscale_{c', \lceil (n - 2) / 2 \rceil, n}(z)})) \; \pin \\
            & \plet \; x' \pbind \opConv{n, \ell''}(x) \; \pin \; \pret{\fnpadding_{c, n - 2, n}(\fnSwish(x'))}
        \end{aligned}
        \right).
    \end{aligned}
    \]
    % The following judgments are derivable:
    % \[
    % \begin{aligned}
    %     & \emptyenv \asemicolon x_{\inp} : \typreal(134) \vdash R_{\Unet} : \typreal(16 \times 32),
    %     \\
    %     & \emptyenv \asemicolon x : \typreal(16 \times 32)
    %     \vdash P : \typreal(16 \times 32),
    %     \\
    %     & \emptyenv \asemicolon x : \typreal(c n)
    %     \vdash Q^{\fwd}_{\opU{n, \ell, \ell', \ell''}} : \typreal(c' \lceil (n - 2) / 2 \rceil) \times \typreal(c n),
    %     \\
    %     & \emptyenv \asemicolon y : \typreal(c' \lceil (n - 2) / 2 \rceil), z : \typreal(c n)
    %     \vdash Q^{\bwd}_{\opU{n, \ell, \ell', \ell''}} : \typreal(3 n),
    %     \\
    %     & \revhandlerjudgment H_{\Unet} : \typreal(16 \times 32).
    % \end{aligned}
    % \]
    We obtain a concrete U-net by combining $R_{\Unet}$ with the reverse handler $H_{\Unet}$:
    \[
    \emptyenv \asemicolon x : \typreal(134) \vdash
    \prevhandle(x) \tuple{x_{\inp}}. R_{\Unet} \; \pwith \; H_{\Unet} : \typreal(134).
    \]
    To train the U-net, we can use the reverse handler $H_{\CNN}$ in Example~\ref{ex:convolutional-neural-network}: for $v \in \Real^{134}$,
    \[
    \prevhandle(\fromvec{v}) \tuple{x}.
    \left(
        \prevhandle(x_{\inp}) \tuple{x_{\inp}}.
        R_{\Unet}
        \; \pwith \; H_{\Unet}
    \right)
    \; \pwith \; H_{\CNN} : \typreal(134).
    \]
\end{example}

\begin{example}[quantization aware training via straight-through estimator]
    \label{ex:ste}
    To demonstrate the flexibility of reverse handlers, we implement the straight-through estimator (STE) \cite{BengioLeonardCourville2013} for the backpropagation.
    STE is used as a technique for quantization-aware training (QAT) of neural networks.
    Suppose that a set of operation symbols $\signatopr$ contains an operation symbol $\oprlayer : \typreal(n) \opto \typreal(n)$ representing a programmable layer of a neural network, and
    that we want to use the layer $\oprlayer$ in order to apply a function $\func : \typreal(n) \to \typreal(n)$ to a vector during the forward pass.
    We can implement backpropagation of such behavior of $\oprlayer$ by a handler $H$ with $Q^{\fwd}_{\oprlayer}$ and $Q^{\bwd}_{\oprlayer}$, which are defined as follows:
    \[
    \begin{aligned}
        & \emptyenv \asemicolon x : \typreal(n) \vdash Q^{\fwd}_{\oprlayer} \defeq \pret{\tuple{\func(x), x}} : \typreal(n) \times \typreal(n),
        \\
        & \emptyenv \asemicolon y : \typreal(n), z : \typreal(n) \vdash Q^{\bwd}_{\oprlayer} \defeq \pret{\rd[\func](\tuple{z, y})} : \typreal(n).
    \end{aligned}
    \]

    Quantization is a technique to reduce the precision of weights or input data in neural networks to reduce the memory footprint and computation cost.
    There are several quantization methods and techniques.
    In QAT, one technique is that we apply a quantization function $\func$ to quantize the input vector.
    For example, a round function $\fnround_n : \typreal(n) \to \typreal(n)$ that maps each entry of a real vector to the nearest integer can be used as a quantization function $\func$.
    The problem is that the derivative of the round function is zero almost everywhere, so we cannot use the backpropagation algorithm to train the neural network with the round function.

    STE approximates the round function by the identity function.
    Hence, we can implement the operation $\oprlayer$ using STE by defining $Q^{\fwd}_{\oprlayer}$ and $Q^{\bwd}_{\oprlayer}$ as follows:
    \[
        \begin{aligned}
            & \emptyenv \asemicolon x : \typreal(n) \vdash Q^{\fwd}_{\oprlayer} \defeq \pret{\tuple{\fnround_n(x), \fromvec{0}}} : \typreal(n) \times \typreal(n),
            \\
            & \emptyenv \asemicolon y : \typreal(n), z : \typreal(n) \vdash Q^{\bwd}_{\oprlayer} \defeq \pret{y} : \typreal(n).
        \end{aligned}
    \]
%    The point here is that even if we employ the smooth approximation of the round function as the interpretation of $\fnround$,
%    $Q^{\bwd}_{\oprlayer}$ is not the derivative of the interpretation of $Q^{\fwd}_{\oprlayer}$ in the usual sense: 
%    we can intentionally choose $Q^{\bwd}_{\oprlayer}$ to be the identity function, instead of the derivative of the smooth approximation of the round function.
    % In the reverse differential category approach \cite{AbadiPlotkin2019,Cockett+2020}, we cannot capture functions with vanishing gradients such as the round function.
    This example shows that reverse handlers are flexible enough to implement various behaviors of neural networks, including QAT by the STE.
\end{example}

\section{Operational Semantics}\label{sec:operational-semantics}
We define call-by-value small-step operational semantics for our language.
The definition has two parts: reduction for terms and for commands.
\subsection{Operational Semantics for Terms}
The operational semantics for terms is a simplified version of that of the simple differentiable programming language \cite{AbadiPlotkin2019}.
We write $\Val_A = \{ V \mid {\emptyenv \vdash V : A}\}$ for the set of closed values of type $A$, and $\signatfunc(A, B) = \{ \func \in \signatfunc \mid \func : A \to B\}$ for the set of function symbols in $\signatfunc$ of type $A \to B$.
We assume that there is an \emph{evaluation} partial function
$\Eval_{A, B} \colon \signatfunc(A, B) \times \Val_A \to \Val_B$ ($A \ne 1$)
and that the set of constant symbols of type $\beta$ has a commutative monoid structure $(\signatfunc(1, \beta), +_{\beta}, 0_{\beta})$ for each base type $\beta \in \BType$.

\begin{definition}
    The \emph{evaluation contexts} $\ctxe$ for terms are defined by the following BNF:
    \begin{equation*}
        \begin{aligned}
            \ctxe
            & \bnfeq
            [\blank]
            \mid \func(\ctxe)
            \mid \ctxe + M
            \mid V + \ctxe
            \mid \tproj_i(\ctxe)
            \mid \tlet \; x \tbind \ctxe \; \tin \; M
            \mid \ctxe.\trd(x.N)(L)
            \mid W.\trd(x.N)(\ctxe).
        \end{aligned}
    \end{equation*}
\end{definition}

The most non-trivial part of the operational semantics for terms is term rewriting $\rewriterd{W}{V}(x.N)$, which is used in the reduction rule for reverse derivatives.
The original term rewriting \cite{AbadiPlotkin2019} is defined on \emph{trace terms} to trace conditional branches.
We do not have conditional branches in our language for simplicity.
Hence, we can give a simpler definition of term rewriting without trace terms.
\begin{definition}[\cite{AbadiPlotkin2019}]
    For well-typed values $\Gamma \vdash W : B$ and $\Gamma \vdash V : A$ and a well-typed term $\Gamma, x : A \vdash N : B$,
    we define a term $\rewriterd{W}{V}(x.N)$ in Fig.~\ref{fig:rewriting-rule-term}
    \begin{figure}
    \footnotesize
    \[
    \begin{aligned}
        %\rewriterd{W}{V}(x.y)
        %& =
        %\begin{cases}
        %    W & (x = y) \\
        %    0_\beta & (x \neq y)
        %\end{cases}
        \rewriterd{W}{V}(x.x)
        & = W
        \qquad
        \rewriterd{W}{V}(x.y)
        = 0_\beta \quad (x \neq y)
        %\\
        \qquad
        \rewriterd{W}{V}(x.\const)
        =
        0_{\beta}
        \\
        \rewriterd{W}{V}(x.\func(M))
        & =
        \tlet \; x \tbind V \; \tin \;
        \tlet \; y \tbind \rd[\func](\tuple{M, W}) \; \tin \;
        \rewriterd{y}{V}(x.M)
        \\
        \rewriterd{W}{V}(x. \tlet \; y \tbind M \; \tin \; N)
        & =
        \tlet \; x \tbind V \; \tin \;
        \tlet \; y \tbind M \; \tin \;
        \rewriterd{W}{V}(x.N) +
            \tlet \; y' \tbind \rewriterd{W}{y}(y.N) \; \tin \;
            \rewriterd{y'}{V}(x.M)
        \\
        \rewriterd{W}{V}(x.\tuple{M_1, \dots, M_n})
        & =
        \tlet \; \tuple{y_1, \dots, y_n} \tbind W \; \tin \;
        \rewriterd{y_1}{V}(x.M_1) + \cdots + \rewriterd{y_n}{V}(x.M_n)
        \\
        \rewriterd{W}{V}(x.\tproj_i(M))
        & =
        \tlet \; x \tbind V \; \tin \;
        \tlet \; y \tbind M \; \tin \;
        \rewriterd{\tuple{U_k}_{k=1}^n}{V}(x.M)
        \quad \text{where $U_i = W$ and $U_k = 0_{\beta}$ ($k \ne i$)
        %\begin{cases}
        %    W & (i = k) \\
        %    0_{\beta} & (i \neq k)
        %\end{cases}
        }
        \\
        \rewriterd{W}{V}(x.(M.\trd(y.N)(L)))
        & =
        \rewriterd{W}{V}\left(x.
            \tlet \; z \tbind M \; \tin \;
            \tlet \; w \tbind L \; \tin \;
            \rewriterd{z}{w}(y.N)
        \right)
    \end{aligned}
    \]
    \caption{Rewriting rules for terms, which is a simplified version of Abadi and Plotkin's language \cite{AbadiPlotkin2019}.}
    \label{fig:rewriting-rule-term}
    \Description{Rewriting rules for terms.}
    \end{figure}
\end{definition}
Intuitively, the term $\rewriterd{W}{V}(x.N)$ denote the reverse-mode derivative of the function $x : A \mapsto N : B$, at $V : A$, evaluated at $W : B$.
In particular, the definition for $\tlet$-terms captures the chain rule of differentiation.
\begin{lemma}[{\cite[Proposition~3.3]{AbadiPlotkin2019}}]
    For well-typed values $\Gamma \vdash W : B$ and $\Gamma \vdash V : A$ and a well-typed term $\Gamma, x : A \vdash N : B$,
    the judgment $\Gamma \vdash \rewriterd{W}{V}(x.N) : A$ is derivable.
\end{lemma}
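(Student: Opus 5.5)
We prove the statement by induction on the structure of $N$, or more precisely on the size $\dsize{\Gamma, x : A \vdash N : B}$ of the typing derivation. The induction must be generalised over the values $W$, $V$, the types $A$, $B$ and the context $\Gamma$. The reason is that recursive calls such as $\rewriterd{y}{V}(x.M)$ or $\rewriterd{y'}{V}(x.M)$ use a fresh variable in place of $W$ and an extended context. Strictly speaking these arguments are variables rather than closed values, but variables are values in the grammar of $U, V, W$, so the hypothesis still applies. We also need two standard auxiliary facts, each proved by a routine induction on typing derivations: weakening (if $\Gamma \vdash M : C$ then $\Gamma, y : D \vdash M : C$ for fresh $y$), and the derived rule that $\tlet \; \tuple{y_i}_{i} \tbind W \; \tin \; N$ is typable whenever $W$ has product type and the body is typable in the extended context. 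A small lemma handles addition: if $\Gamma \vdash M_i : A$ then $\Gamma \vdash M_1 + M_2 : A$. Since \tytplus{} only applies at base types, we either read $+$ at product types as componentwise sugar, or assume $A$ is a base type when needed; we note this point explicitly.

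Then we check each clause of Fig.~\ref{fig:rewriting-rule-term}.
\begin{itemize}
\item Variables and constants: $W : B = A$ when $x$ is the head variable; otherwise we get $0 : A$, using a zero constant of type $A$, built as a tuple of $0_\beta$ at product types.
\item $\func(M)$ with $\func : C \to B$: typing $\rd[\func](\tuple{M, W})$ needs $\Gamma, x : A \vdash M : C$ and $W : B$, and reverse-derivative closedness gives type $C$. Binding $y : C$, the induction hypothesis on $M$ (a smaller derivation) gives $\Gamma, x : A, y : C \vdash \rewriterd{y}{V}(x.M) : A$. The outer $\tlet \; x \tbind V$ is typed by \tytlet{} with $V : A$.
\item $\tlet$ terms: the first summand comes from the induction hypothesis on $N$ in the context extended by $y$. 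For the second, the induction hypothesis on $N$ with respect to $y$ gives $y' : C$, and then the induction hypothesis on $M$ gives type $A$; addition combines the two.
\item Tuples: the induction hypothesis on each $M_i$ after destructuring $W$.
\item Projections: the induction hypothesis on $M$ with the tuple $\tuple{U_k}$, which has the product type (zeros in the other components).
\end{itemize}

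The main obstacle is the $\trd$ clause. There $\rewriterd{W}{V}(x.(M.\trd(y.N)(L)))$ is defined as the rewriting of a term that itself contains $\rewriterd{z}{w}(y.N)$, so structural induction does not directly apply. We handle it by a nested argument. First, by the induction hypothesis on the strictly smaller derivation of $N$, the inner term $\rewriterd{z}{w}(y.N)$ is well-typed of type $A'$ (the type of $L$), so the whole $\tlet$ term $N'$ is well-typed of type $A'$. Then we need the outer rewrite of $N'$ to be typed. For this we use a measure that makes the recursion well-founded, lexicographic on the number of nested $\trd$ occurrences and then on size, following Abadi–Plotkin's argument: $N'$ has fewer $\trd$ nestings than the original term, since the $\trd$ has been expanded into $\trd$-free rewriting applied to $N$. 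We therefore do the induction on this lexicographic measure, with the well-definedness of $\mathcal{R}$ and its typability proved simultaneously. Alternatively, we simply cite \cite[Proposition~3.3]{AbadiPlotkin2019}, since our rules are the trace-free specialisation of theirs.
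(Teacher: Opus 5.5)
The paper does not prove this statement itself. It imports it from Abadi and Plotkin (Proposition~3.3), on the grounds that Fig.~\ref{fig:rewriting-rule-term} is a trace-free simplification of their rewriting; that is your fallback. Your main argument is a self-contained alternative, and its structure is sound:
\begin{itemize}
\item you generalise over open values $W,V$ (variables, and tuples of values and zeros, are values), over types and over contexts, and use weakening;
\item you replace plain structural induction by a lexicographic measure ($\trd$-nesting depth, then derivation size) to handle the $\trd$ clause.
\end{itemize}
This mirrors how the paper defines $\rewriterevh{V}{H}$: lexicographic recursion on reverse-handler depth and size, with Lemma~\ref{lem:RH-depth} supplying the needed depth bound. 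What your route buys over the citation is that it checks this paper's simplified rules directly.

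\textbf{The measure step needs repair.} You justify the drop in $\trd$-depth of $N' = \tlet\; z \tbind M \;\tin\; \tlet\; w \tbind L \;\tin\; \rewriterd{z}{w}(y.N)$ by saying the $\trd$ ``has been expanded into $\trd$-free rewriting''. That is false. The output of the rewriting keeps verbatim copies of forward subterms, and these may contain $\trd$:
\begin{itemize}
\item the $M$ inside $\rd[\func](\tuple{M,W})$;
\item the $M$ in $\tlet\; y \tbind M$ in the let and projection clauses.
\end{itemize}
What you actually need is the invariant $d(\rewriterd{W}{V}(x.T)) \le d(T)$, proved in the same simultaneous induction. It holds because every clause only combines values, immediate subterms and recursive rewrites of immediate subterms. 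This is precisely the analogue of Lemma~\ref{lem:RH-depth}.

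Two further conditions on the measure:
\begin{itemize}
\item Define $d(M.\trd(y.N)(L)) = 1 + \max\{d(M), d(N), d(L)\}$, with the $+1$ covering $M$ and $L$ as well. Otherwise $d(N')$ need not drop when $M$ or $L$ is the deepest argument, and size alone cannot finish, since $N'$ may be larger than the original term.
\item Use nesting depth, not the number of $\trd$ occurrences. The let clause duplicates material: $M$ appears verbatim and again inside $\rewriterd{y'}{V}(x.M)$, and $N$ is rewritten twice. So the count of occurrences can grow.
\end{itemize}

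\textbf{Two smaller points.}
\begin{itemize}
\item The fallback ``assume $A$ is a base type'' does not survive recursion. The call $\rewriterd{W}{y}(y.N)$ in the let clause is at the type of $M$, and the projection clause needs zeros at the component types. Both can be products even when the top-level $A$ is a base type. Commit to componentwise $+$ and tuple-of-$0_\beta$ zeros.
\item Fig.~\ref{fig:rewriting-rule-term} has no clause for $M_1 + M_2$, so a case analysis over the figure does not cover every well-typed $N$. Add the clause $\rewriterd{W}{V}(x.M_1+M_2) = \rewriterd{W}{V}(x.M_1) + \rewriterd{W}{V}(x.M_2)$, which types immediately by the induction hypothesis.
\end{itemize}
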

The following definition gives the small-step reduction for terms.
\begin{definition}
    The \emph{small-step reduction} $(\redto) \subseteq \{ M \mid {\emptyenv \vdash M : A} \}^2$ for terms is defined as follows:
    \begin{gather*}
        \func(V) \redto \Eval(\func, V)
        \text{ when $\Eval(\func,V)$ is defined},
        \quad
        \const_1 + \const_2 \redto \const_1 +_{\beta} \const_2,
        \quad
        \tproj_i(\tuple{V_1, \dots, V_n}) \redto V_i,
        \\
        \tlet \; x \tbind V \; \tin \; M \redto M[V / x],
        \quad
        W.\trd(x.N)(V) \redto \rewriterd{W}{V}(x.N),
        \quad
        \ctxe[M] \redto \ctxe[N] \quad \text{if $M \redto N$}.
    \end{gather*}
    %\[
    %    \begin{aligned}
    %        \func(V) & \redto \Eval(\func, V)
    %        &
    %        \text{ (where $\Eval(\func,V)$ is defined.)}
    %        \\
    %        \const_1 + \const_2 & \redto \const_1 +_{\beta} \const_2
    %        \\
    %        \tproj_i(\tuple{V_1, \dots, V_n}) & \redto V_i
    %        &
    %        \tlet \; x \tbind V \; \tin \; M & \redto M[V / x]
    %        \\
    %        W.\trd(x.N)(V) & \redto \rewriterd{W}{V}(x.N)
    %        &
    %        \ctxe[M] & \redto \ctxe[N] \quad \text{if $M \redto N$}
    %    \end{aligned}
    %\]
\end{definition}

\subsection{Operational Semantics for Commands}
We define operational semantics for commands.
The most non-trivial reduction rule is the rule for reverse handlers.
\begin{definition}
    The \emph{evaluation contexts} $\ctxfc$ for commands are defined by the following BNF:
    \begin{equation*}
        \begin{aligned}
            \ctxfc
            & \bnfeq
            [\blank]
            \mid \plet \; x \pbind \ctxfc \; \pin \; Q
            \\
            \ctxft
            & \bnfeq \ctxfc[\pret{[\blank]}]
            \mid \ctxfc[\opr([\blank])]
            \mid \ctxfc[\prevhandle([\blank]) \tuple{x_1, \dots, x_n}.R \; \pwith \; H]
        \end{aligned}
    \end{equation*}
\end{definition}
We put a command into the hole of $\ctxfc$ to obtain a command, and put a term into the hole of $\ctxft$ to obtain a command.
The evaluation context $\ctxfc$ is standard, and the evaluation context $\ctxft$ is used to define the reduction rule for reverse handlers.

\begin{lemma}\label{lem:form-of-commands}
    Let $\Gamma = x_1 : A_1, \dots, x_n : A_n$ and $\Delta = y_1 : B_1, \dots, y_m : B_m$.
    For any well-typed command $\Gamma \asemicolon \Delta \vdash P : A$,
    $P$ is exactly one of the following forms:
    \begin{enumerate}
        \item $P = \ctxfc[\pret{z}]$ where $z \in \{ x_1, \dots, x_n, y_1, \dots, y_m \}$.
        \item $P = \ctxfc[\opr(z)]$ where $z \in \{ x_1, \dots, x_n, y_1, \dots, y_m \}$.
        \item $P = \ctxfc[\prevhandle(y_j) \tuple{y'_i}_{i=1}^{l}. P' \; \pwith \; H]$.
        \item $P = \ctxft[M]$ where $M$ is not a variable.
    \end{enumerate}
\end{lemma}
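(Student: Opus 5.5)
The plan is to prove the classification by first isolating the \emph{head redex position} of $P$ and then sorting on the shape of the command found there. Concretely, I will prove, by induction on the derivation $\Gamma \asemicolon \Delta \vdash P : A$ (equivalently on $\dsize{\Gamma \asemicolon \Delta \vdash P : A}$), that $P$ decomposes uniquely as $P = \ctxfc[P_0]$ where $P_0$ is \emph{not} a $\plet$-command. If $P = \plet\; x \pbind P_1 \;\pin\; Q$ (rule \tyclet{}), the induction hypothesis applied to $\Gamma \asemicolon \Delta \vdash P_1 : A'$ gives $P_1 = \ctxfc_1[P_0]$, and I take $\ctxfc \defeq \plet\; x \pbind \ctxfc_1 \;\pin\; Q$. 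Otherwise $\ctxfc = [\blank]$ and $P_0 = P$. Uniqueness follows because the grammar of $\ctxfc$ only descends into the first component of a $\plet$, so the hole position is forced by the syntax tree.

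I will then inspect $P_0$, whose last typing rule is one of \tycpure{}, \tycop{} or \tychandle{}, so $P_0$ is $\pret{M}$, $\opr(M)$ or $\prevhandle(M)\tuple{y'_i}_{i=1}^{l}.P' \;\pwith\; H$. In each case $M$ is typed under $\Gamma, \Delta$, so if $M$ is a variable $z$ then \tytvar{} forces $z \in \{x_1,\dots,x_n,y_1,\dots,y_m\}$. If $M$ is a variable, the three shapes give cases (1), (2) and (3) respectively. In case (3), the variable supplied to $\prevhandle$ is a variable of the joint context, which I identify with the $y_j$ of the statement. If $M$ is not a variable, I set $\ctxft \defeq \ctxfc[\pret{[\blank]}]$, $\ctxfc[\opr([\blank])]$ or $\ctxfc[\prevhandle([\blank])\tuple{y'_i}.P' \;\pwith\; H]$, which gives case (4).

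Exclusivity comes from the uniqueness of the decomposition $\ctxfc[P_0]$ together with two observations. First, the outermost constructor of $P_0$ distinguishes (1), (2) and (3). Second, a $\ctxft[M]$ decomposition places its hole exactly at the argument of that same $P_0$, so (4) holds precisely when this argument is not a variable, while (1)--(3) hold precisely when it is.

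The step I expect to be delicate is the reading of case (3). The argument variable of the handler is typed only through \tytvar{} in $\Gamma,\Delta$. The statement, however, records it as a $y_j$, and I will read that label as naming a variable of the joint context. Making this identification precise is the main point to check against how the lemma is used later for the reduction rule of reverse handlers, where the argument is substituted as a value. The remaining bookkeeping is the routine verification that the grammar of $\ctxft$ cannot place its hole anywhere other than the argument of $P_0$.
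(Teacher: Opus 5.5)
Your argument is correct and is essentially the paper's proof (``by induction on the structure of $P$''), spelled out in detail: you descend the left spine of $\plet$s, which keeps the context $\Gamma \asemicolon \Delta$ unchanged and admits only one decomposition, and then sort on the head command and on whether its argument is a variable. Your worry about case (3) is justified---if the handled argument is some $x_i$ from $\Gamma$, none of the four forms applies literally---but it is harmless, because the lemma is only used (in defining $\rewriterevh{V}{H}$ and in the results about it) for commands $\emptyenv \asemicolon \Delta \vdash R : C$, where every variable is a $y_j$.
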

\begin{appendixproof}[Proof of Lemma~\ref{lem:form-of-commands}]
    By induction on the structure of $P$.
\end{appendixproof}

We introduce rewriting rules for commands in order to define the reduction rule for reverse handlers.
The rules are inspired by the construction of reverse algebras in \S\ref{sec:denotational-semantics}.
In the rewriting rule for $\ctxfc[\opr(y_i)]$, $Q^{\fwd}_{\opr}$ and $Q^{b}_{\opr}$ sandwich the continuation.
\begin{definition}
    Let $\Gamma$ and $\Delta = y_1 : B_1, \dots, y_m : B_m$ be a typing environment.
    For a command $\emptyenv \asemicolon \Delta \vdash R : C$,
    a value $\Delta' \vdash V : \typprod_{i=1}^{m} B_i$,
    and a handler $\revhandlerjudgment H = \{ x \mapsto P \} \cup \{ (x \mapsto Q^{\fwd}_{\opr} \mid y, z \mapsto Q^{\bwd}_{\opr}) \}_{\opr \in \signatopr} : C$,
    we define a command $\rewriterevh{V}{H}(\tuple{y_i}_{i=1}^m . R)$ by Lemma~\ref{lem:form-of-commands} and the recursion on $(\max\{\rhdepth{R}, \rhdepth{H}\}, \dsize{\Gamma \asemicolon \Delta \vdash R : C}) \in (\Nat^2, \le_{\Nat^2})$ in Fig.~\ref{fig:rewriterevh}.
    \begin{figure}
        \footnotesize
        \begin{align*}
            & \rewriterevh{V}{H}(\tuple{y_i}_{i=1}^n.\pret{y_k})
            = \plet \; y \pbind P[V_k/x] \; \pin \; \pret{\tuple{W_i}_{i=1}^n}
            \quad \text{where $W_k = y$ and $W_i = \fromvec{0}$ $(i \neq k)$}
            \\
            & \rewriterevh{V}{H}(\tuple{y_i}_{i=1}^m.\ctxfc[\plet \; x \pbind \pret{y_j} \; \pin \; R])
            = \rewriterevh{V}{H}(\tuple{y_i}_{i=1}^m.\ctxfc[R[y_j / x]])
            \\
            & \rewriterevh{V}{H}(\tuple{y_i}_{i=1}^m.\ctxfc[\opr(y_i)])
            = \left(
                \begin{aligned}
                    & \plet \; \tuple{y,z'} \pbind Q^{\fwd}_{\opr}[V_i/x] \; \pin \\
                    & \plet \; \tuple{y', y'_1, \dots, y'_m} \pbind \rewriterevh{\tuple{y, V_1, \dots, V_m}}{H}(\tuple{y, y_1, \dots, y_m}.\ctxfc[\pret{y}]) \; \pin\\
                    & \plet \; y''_i \pbind Q^{\bwd}_{\opr}[y'/y, z'/z] \; \pin \\
                    & \pret{\langle y'_1, \dots, y'_{i-1}, y'_i + y''_i, y'_{i+1}, \dots, y'_m \rangle}
                \end{aligned}
            \right)
            \\
            & \rewriterevh{V}{H}(\tuple{y_i}_{i=1}^m. \ctxfc[\prevhandle(W) \tuple{z_j}_{j=1}^l. R \; \pwith \; H'])
            = \rewriterevh{V}{H}(\tuple{y_i}_{i=1}^m. \ctxfc[\rewriterevh{W}{H'}(\tuple{z_j}_{j=1}^l. R]))
            \\
            & \rewriterevh{V}{H}(\tuple{y_i}_{i=1}^m.\ctxft[M])
            = \left(
                \begin{aligned}
                    & \plet \; y \pbind \pret{M[V_1 / y_1, \dots, V_n / y_n]} \; \pin \\
                    & \plet \; (z, y'_1, \dots, y'_n) \pbind \rewriterevh{\tuple{y, V_1 \dots, V_n}}{H}(\tuple{y, y_1, \dots, y_n}.\ctxft[y]) \; \pin \\
                    & \pret{\langle y'_i + z.\trd(y_i.M[V_1/y_1, \dots, V_{i-1}/y_{i-1}, V_{i+1}/y_{i+1}, \dots, V_m/y_m])(V_i) \rangle_{i = 1}^m}
                \end{aligned}
            \right)
            \\
            & \qquad \text{where $M$ is not a variable}
        \end{align*}
    \caption{Rewriting rules for reverse handlers}
    \label{fig:rewriterevh}
    \Description{Rewriting rules for reverse handlers}
    \end{figure}
\end{definition}
Intuitively, $\rewriterevh{V}{H}(\tuple{y_i}_{i=1}^m.R)$ gives an operational account of the interpretation of reverse handlers in~\eqref{eq:reverse-handler-algebra}, instantiated with $X = \itp{\Delta}$ and $Z = \itp{C}$.
It describes how the computation $R$ is executed under the handler $H$, and then applies the resulting transformation on $\Delta$ to the value $V$.

The most interesting rewriting rule is the one for $\ctxfc[\opr(y)]$.
For example, under the setting of Example~\ref{ex:mlp-backpropagation}, we have
\[
\begin{aligned}
    & \phandle(\fromvec{v}) \tuple{x}. \opFC{\ell}(x) \; \pwith \; H_{\MLP}
    \qquad \text{where $v \in \Real^n$ and $\loctyp{\ell}{(n, m)}$}
    \\
    & \redto
    \rewriterevh{\fromvec{v}}{H_{\MLP}}(\opFC{\ell}(x))
    =
    \left(
    \begin{aligned}
        & \plet \; \tuple{y, z'} \pbind Q^{\fwd}_{\opFC{\ell}}[\fromvec{v} / x] \; \pin \; \\
        & \plet \; \tuple{y', y'_1} \pbind
        (\plet \; y \pbind P[y / x] \; \pin \; \pret{\tuple{y, \fromvec{0}}}) \; \pin \; \\
        & \plet \; y''_1 \pbind Q^{\bwd}_{\opFC{\ell}}[y' / y, z' / z] \; \pin \;
        \pret{y'_1 + y''_1}
    \end{aligned}
    \right).
\end{aligned}
\]
In the above reduction, the operation $\opFC{\ell}$ is handled and is implemented by the forward computation $Q^{\fwd}_{\opFC{\ell}}$ and the backward computation $Q^{\bwd}_{\opFC{\ell}}$ defined in the handler $H_{\MLP}$.

The following lemma says that the rewriting $\rewriterevh{V}{H}(\tuple{y_i}_{i=1}^m.R)$ strictly reduces the depth of reverse handlers. It is crucial in the induction for the proof of type preservation.
\begin{lemma}\label{lem:RH-depth}
    Let $\Delta = y_1 : B_1, \dots, y_m : B_m$ be a typing environment.
    For a command $\emptyenv \asemicolon \Delta \vdash R : C$,
    a value $\Delta' \vdash V : \typprod_{i=1}^m B_i$ and
    a handler $\revhandlerjudgment H : C$,
    we have $\rhdepth{R} + \rhdepth{H} \ge \rhdepth{\rewriterevh{V}{H}(\tuple{y_j}_{j=1}^m. R)}$.
    Hence, we have
    \[ \rhdepth{\prevhandle(V) \tuple{y_i}_{i=1}^m.R \; \pwith \; H} > \rhdepth{\rewriterevh{V}{H}(\tuple{y_i}_{i=1}^m.R)}. \]
\end{lemma}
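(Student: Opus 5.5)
We prove the inequality $\rhdepth{\rewriterevh{V}{H}(\tuple{y_j}_{j=1}^m.R)} \le \rhdepth{R} + \rhdepth{H}$ by well-founded induction on the same measure used to define $\rewriterevh{V}{H}$, namely the lexicographically ordered pair $(\max\{\rhdepth{R},\rhdepth{H}\}, \dsize{\emptyenv \asemicolon \Delta \vdash R : C})$. We split into the five cases of Fig.~\ref{fig:rewriterevh}, following Lemma~\ref{lem:form-of-commands}. Two facts are used throughout. First, substituting values for variables does not change depth, since $\rhdepth{\blank}$ ignores terms. Second, $\rhdepth{\ctxfc[Q]} = \max\{\rhdepth{\ctxfc[\pret{y}]}, \rhdepth{Q}\}$; this follows by an easy induction on $\ctxfc$.

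The return case $\pret{y_k}$ produces $\plet\; y \pbind P[V_k/x]\;\pin\;\pret{\cdots}$. Its depth is $\rhdepth{P} \le \rhdepth{H} \le \rhdepth{R}+\rhdepth{H}$. In the $\plet\;x\pbind\pret{y_j}$ case and the $\ctxft[M]$ case, the recursive call is on a command of the same depth (or no larger depth) and strictly smaller derivation size. The surrounding constructs are $\pret{\blank}$ and $\plet$ and add no depth. The induction hypothesis therefore gives the bound directly. In the operation case $\ctxfc[\opr(y_i)]$, the result is a $\plet$-chain of $Q^{\fwd}_{\opr}$, the recursive call on $\ctxfc[\pret{y}]$, $Q^{\bwd}_{\opr}$ and a return. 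The depths of $Q^{\fwd}_{\opr}$ and $Q^{\bwd}_{\opr}$ are at most $\rhdepth{H}$. The recursive call is on a command that is smaller in size and no deeper, so by the induction hypothesis its depth is bounded by $\rhdepth{\ctxfc[\pret{y}]}+\rhdepth{H} \le \rhdepth{R}+\rhdepth{H}$. Taking the maximum closes the case.

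The main obstacle is the nested handler case $R = \ctxfc[\prevhandle(W)\tuple{z_j}.R'\;\pwith\;H']$. Here we first form $S \defeq \rewriterevh{W}{H'}(\tuple{z_j}.R')$. The induction hypothesis applies to $S$ because $\max\{\rhdepth{R'},\rhdepth{H'}\} < \rhdepth{R}$, and the first component of the measure strictly decreases. This yields $\rhdepth{S} \le \rhdepth{R'}+\rhdepth{H'}$. Only a sum bound is available here, but the strictness we need later requires the sharper bound $\rhdepth{S} < 1+\max\{\rhdepth{R'},\rhdepth{H'}\}$. We plan to obtain this by strengthening the statement being proved to $\rhdepth{\rewriterevh{V}{H}(R)} \le \max\{\rhdepth{R},\rhdepth{H}\}$. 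Every other case above already satisfies this stronger bound, since each only takes maxima.

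With the strengthened statement, we get $\rhdepth{S} \le \max\{\rhdepth{R'},\rhdepth{H'}\} < \rhdepth{\prevhandle(W)\ldots}$. Hence $\rhdepth{\ctxfc[S]} \le \rhdepth{R}$. If the outer measure stays equal, the derivation size must be compared, and we check that the size of $\ctxfc[S]$ is admissible for the recursion. This is the delicate bookkeeping point; we resolve it by appealing to the fact that the definition of $\rewriterevh{}{}$ was already justified on this measure. The outer induction hypothesis then gives $\rhdepth{\rewriterevh{V}{H}(\ctxfc[S])} \le \max\{\rhdepth{R},\rhdepth{H}\}$.

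The stated inequality follows because $\max \le$ sum. For the second claim, we get $\rhdepth{\rewriterevh{V}{H}(R)} \le \max\{\rhdepth{R},\rhdepth{H}\} < 1+\max\{\rhdepth{R},\rhdepth{H}\} = \rhdepth{\prevhandle(V)\tuple{y_i}.R\;\pwith\;H}$.
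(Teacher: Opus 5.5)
You use the same induction as the paper's one-line proof, on $(\max\{\rhdepth{R},\rhdepth{H}\},\dsize{R})$, and your case analysis is correct. Your strengthening to $\rhdepth{\rewriterevh{V}{H}(\tuple{y_i}_{i=1}^m.R)} \le \max\{\rhdepth{R},\rhdepth{H}\}$ is genuinely needed. Since $\rhdepth{\prevhandle(V)\tuple{y_i}_{i=1}^m.R\;\pwith\;H} = 1+\max\{\rhdepth{R},\rhdepth{H}\}$, the sum bound neither carries the nested-handler case nor yields the strict inequality. (The paper, in proving Lemma~\ref{lem:depth-reduction}, silently rewrites this depth as $1+\rhdepth{R}+\rhdepth{H}$.)

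The gap is the outer recursive call in the nested-handler case. You discharge it by saying the definition of $\rewriterevh{V}{H}$ was already justified on this measure, but it was not, and the measure actually fails there. Take $\ctxfc = \plet\;x\pbind[\blank]\;\pin\;Q$ with $\rhdepth{Q} \ge \rhdepth{\prevhandle(W)\tuple{z_j}.R'\;\pwith\;H'}$. Then $\rhdepth{\ctxfc[S]} = \rhdepth{Q} = \rhdepth{R}$, so the first component is unchanged and everything rests on $\dsize{\ctxfc[S]} < \dsize{R}$. That is false in general. If $R'$ performs $n$ operation calls, $S$ contains $n$ copies of the relevant clauses $Q^{\fwd}_{\opr}$ and $Q^{\bwd}_{\opr}$, each wrapped in let/projection scaffolding. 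By contrast, $\dsize{H'}$ counts each clause once, and each call adds only a constant to $\dsize{R'}$. So for large $n$, $S$ is bigger than the handler node it replaces. Neither your induction hypothesis nor the paper's recursion reaches $\rewriterevh{V}{H}(\tuple{y_i}_{i=1}^m.\ctxfc[S])$. The paper's definition of $\rewriterevh{V}{H}$, its proof of this lemma, and the corresponding case of Lemma~\ref{lem:type-preservation-RH} all have this hole.

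The repair is an order that ignores this blow-up. First compare the multiset of depths of the outermost $\prevhandle$-nodes of $R$ (those not lying inside another one) under the Dershowitz--Manna multiset ordering, and only then compare $\dsize{R}$.
In the nested case, the outermost node of depth $k = \rhdepth{\prevhandle(W)\tuple{z_j}.R'\;\pwith\;H'}$ is replaced by the outermost nodes of $S$. By your strengthened bound for the inner call, each of these has depth at most $\rhdepth{S} \le \max\{\rhdepth{R'},\rhdepth{H'}\} = k-1$. So the multiset strictly decreases however large $S$ is. The inner call on $R'$ decreases too, and in the remaining cases the multiset is unchanged while the size drops.

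One further slip: in the $\ctxft[M]$ case the size does not drop when $M$ is a constant, since $\dsize{\const} = 1 = \dsize{y}$. A trivial tie-breaker handles it, e.g.\ a flag recording form (4) of Lemma~\ref{lem:form-of-commands}.

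Because the decrease in the nested case uses the depth bound for the inner call, the well-definedness of $\rewriterevh{V}{H}$ and your strengthened bound must be established together, by one well-founded recursion/induction on this order. With that change your case analysis goes through as written.
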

\begin{appendixproof}[Proof of Lemma~\ref{lem:RH-depth}]
    By induction on $(\max\{\rhdepth{P}, \rhdepth{H}\}, \dsize{\emptyenv \asemicolon \Delta \vdash P : A}) \in (\Nat^2, \le_{\Nat^2})$.
\end{appendixproof}

The following lemma states that the rewriting $\rewriterevh{V}{H}(\tuple{y_i}_{i=1}^m.R)$ preserves typing.
\begin{lemma}\label{lem:type-preservation-RH}
    Let $\Delta = y_1 : B_1, \dots, y_m : B_m$ be a typing environment.
    For a command $\emptyenv \asemicolon \Delta \vdash R : C$,
    a value $\Delta' \vdash V : \typprod_{i=1}^m B_i$ and
    a handler $\revhandlerjudgment H : C$,
    we have
    $\emptyenv \asemicolon \Delta' \vdash \rewriterevh{V}{H}(\tuple{y_i}_{i=1}^m.R) : \typprod_{i=1}^m B_i$.
\end{lemma}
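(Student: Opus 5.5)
We prove the statement by well-founded induction on the same measure used to define $\rewriterevh{V}{H}$, namely $(\max\{\rhdepth{R}, \rhdepth{H}\}, \dsize{\emptyenv \asemicolon \Delta \vdash R : C})$ in the lexicographic order on $\Nat^2$. We do case analysis on the form of $R$ given by Lemma~\ref{lem:form-of-commands}, following exactly the clauses of Fig.~\ref{fig:rewriterevh}. Before the induction we record some standard auxiliary facts, proved by routine induction on derivations: weakening and substitution of values for variables, both for terms ($\Gamma, x:A \vdash N : B$ and $\Gamma \vdash V : A$ give $\Gamma \vdash N[V/x] : B$) and for commands (substituting into either the pure or the effectful environment). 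We also need the typing of the projected components $V_i$ of $V$, obtained by inverting \tyttuple{} or using $\tproj_i$. Finally, substitution of a variable for a variable preserves derivation size, and a plugged context $\ctxfc[\pret{y}]$ has derivation no larger than $\ctxfc[\opr(y_i)]$ (the operation clause costs $2$, the pure clause costs $1$ for the variable). These facts supply the needed decreases.

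The cases run as follows. For $R = \pret{y_k}$, inverting \tyhandler{} gives $\emptyenv \asemicolon x : C \vdash P : C$, and $C = B_k$ by \tytvar. Substitution gives $\emptyenv \asemicolon \Delta' \vdash P[V_k/x] : B_k$. The tuple with $\fromvec{0}$ in the other positions (using the distinguished constants $0_\beta$, or tuples of them for product types) is then typed by \tyclet{} and \tycpure. For the $\plet\;x \pbind \pret{y_j}$ case, $R[y_j/x]$ is well typed with the same measure except for a strictly smaller size, so the induction hypothesis applies directly. For $\ctxfc[\opr(y_i)]$ with $\opr : A \opto B$, the forward clause gives $\Delta' \vdash Q^{\fwd}_{\opr}[V_i/x] : B \times D_{\opr}$. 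The command $\ctxfc[\pret{y}]$ is typed in the environment $y : B, \Delta$ and has smaller size with no larger depth. The induction hypothesis with value $\tuple{y, V_1,\dots,V_m}$, typed under the weakened environment $\Delta', y:B, z':D_{\opr}$, then yields type $B \times \typprod B_i$. The backward clause yields $A = B_i$ after renaming $y'/y, z'/z$. Finally \tytplus{} types $y'_i + y''_i$; this requires $B_i$ to be a base type, or else $+$ extended componentwise to products, which I would make explicit as a convention.

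The main obstacles are two cases. The first is the nested-handler case $\ctxfc[\prevhandle(W)\tuple{z_j}.R' \;\pwith\; H']$. Here we first apply the induction hypothesis to the inner rewriting $\rewriterevh{W}{H'}(\tuple{z_j}.R')$: its measure is smaller because its depth is at most $\max\{\rhdepth{R'},\rhdepth{H'}\} < \rhdepth{R}$. This gives that the inner rewriting has type $\typprod A_j$ in the environment typing $W$. We must then re-type $\ctxfc[\ldots]$ and apply the induction hypothesis again to the outer call. This second application is justified by Lemma~\ref{lem:RH-depth}: the new command has strictly smaller reverse-handler depth than $R$, so the first component of the measure drops regardless of any growth in size. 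The second obstacle is the $\ctxft[M]$ case with $M$ not a variable. We type $M[V/\vec y]$ under $\Delta'$ by substitution and type $\ctxft[y]$ in the extended environment. Since $M$ is not a variable, $\dsize{\ctxft[y]}$ is smaller and the depth is unchanged, so the induction hypothesis applies. Each summand $z.\trd(y_i.M[\ldots])(V_i)$ has type $B_i$ by \tytrd, with $z$ of the hole's type and $y_i : B_i$ in context, and $\tplus$ combines the summands. Assembling with \tyclet{} and \tycpure{} completes every case.
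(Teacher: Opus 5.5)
Your plan is the same as the paper's proof: the lexicographic measure $(\max\{\rhdepth{R},\rhdepth{H}\},\dsize{R})$, the five clauses, and Lemma~\ref{lem:RH-depth} for the nested handler. The return, let and operation cases are fine. Your remarks on $+$ and $\fromvec{0}$ at product types, and on reading $V_i$ as $\tproj_i(V)$ when $V$ is a variable, are points the paper leaves implicit.

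\textbf{The gap.} It is in the second induction-hypothesis call of the nested-handler case, where you claim the first component of the measure drops. It need not, for two independent reasons.

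First, Lemma~\ref{lem:RH-depth} bounds only the depth of the rewritten subcommand $\rewriterevh{W}{H'}(\tuple{z_j}_{j=1}^l.R')$. The depth of $\ctxfc[\cdot]$ is a maximum that also ranges over the let-bodies stored in $\ctxfc$. If one of those contains a $\prevhandle$ at least as deep as the one being eliminated, the new command still has depth $\rhdepth{R}$.

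Second, the outer handler $H$ is unchanged. Whenever $\rhdepth{H}\ge\rhdepth{R}$, the first component is $\rhdepth{H}$ before and after, whatever happens to the command.

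In both situations you fall back on the size, which, as you note yourself, can grow. For example, take $R=\prevhandle(y_1)\tuple{z}.R'\;\pwith\;H'$ with $R'$ and $H'$ free of $\prevhandle$ but $H$ not, so $\rhdepth{R}=1\le\rhdepth{H}$, and let $R'$ call an operation twice. Then $\rewriterevh{y_1}{H'}(\tuple{z}.R')$ contains two copies each of that operation's forward and backward clauses, plus the return clause of $H'$ and the let/projection scaffolding. $\dsize{R}$ counts each clause only once, so the rewritten command is larger once those clauses are large. The induction as set up is therefore not well founded at this step. The paper's own proof asserts the same decrease, so you have inherited the problem rather than diverged, but it still needs repair.

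\textbf{A repair.} Drop $H$ from the measure: the recursion never enters the clauses of $H$, it only copies them. Count handler occurrences with multiplicity instead, e.g.\ take the finite multiset of the depths of all $\prevhandle$-subcommands of $R$, ordered by the multiset extension of $<$ on $\Nat$, followed by size. Then:
\begin{itemize}
\item The let, operation and $\ctxft$ clauses leave this multiset unchanged and shrink the command.
\item The inner nested call is on $R'$, all of whose occurrences have depth below $d=\rhdepth{\prevhandle(W)\tuple{z_j}_{j=1}^l.R'\;\pwith\;H'}$.
\item The outer call removes that occurrence of depth $d$, with everything inside it, and inserts only occurrences of depth $<d$.
\end{itemize}
The last point needs a depth bound on the result of the inner call. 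A suitable one is the invariant $\rhdepth{\rewriterevh{V}{H}(\tuple{y_i}_{i=1}^m.R)}\le\rhdepth{H}$, which holds clause by clause. Because the outer argument is built from the inner result, this bound must be proved simultaneously with typing in the same induction.

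\textbf{A smaller slip.} In the $\ctxft[M]$ case, $M$ not being a variable does not make $\dsize{\ctxft[y]}$ smaller when $M$ is a constant $\const$, since both have size $1$. Letting variables count $0$ in the size, or observing that the following step strictly decreases, fixes this.
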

\begin{appendixproof}[Proof of Lemma~\ref{lem:type-preservation-RH}]
    By induction on $(\max\{\rhdepth{R}, \rhdepth{H}\}, \dsize{\emptyenv \asemicolon \Delta \vdash R : C}) \in (\Nat^2, \le_{\Nat^2})$.
    We perform case analysis on the structure of $\emptyenv \asemicolon \Delta \vdash R : C$.
    Let $H = \{ x \mapsto P \} \cup \{ (x \mapsto Q_{\opr}^{\fwd} \mid y, z \mapsto Q_{\opr}^{\bwd}) \}_{\opr \in \signatopr}$.

    \textbf{Case} $R = \pret{y_k}$ for some $1 \le k \le m$.
    We have $\emptyenv \asemicolon \Delta \vdash \pret{y_k} : B_k$ and
    \[
        \rewriterevh{V}{H}(\tuple{y_i}_{i=1}^m.\pret{y_k})
        = \plet \; y \pbind P[V_k/x] \; \pin \; \pret{\tuple{W_i}_{i=1}^m}
        \quad \text{where $W_i = \begin{cases}
            \fromvec{0} & (i \neq k) \\
            y & (i = k)
        \end{cases}$}.
    \]
    We have
    \[
    \infer[\tyclet]{
        \emptyenv \asemicolon \Delta' \vdash \plet \; y \pbind P[V_i/x] \; \pin \; \pret{\tuple{W_i}_{i=1}^m} : \typprod_{i=1}^m B_i.
    }{
        \emptyenv \asemicolon \Delta' \vdash P[V_k/x] : B_k
        &
        \infer[\tycpure]{
            \emptyenv \asemicolon y : B_k, \Delta'  \vdash \pret{\tuple{W_i}_{i=1}^m} : \typprod_{i=1}^m B_i
        }{
            \infer[\tyttuple]{
                y : B_k, \Delta' \vdash \tuple{W_i}_{i=1}^m : \typprod_{i=1}^m B_i
            }{
                y : B_k, \Delta' \vdash W_i : B_i \quad (1 \leq i \leq m)
            }
        }
    }
    \]

    \textbf{Case} $R = \ctxfc[\plet \; x \pbind \pret{y_i} \; \pin \; R']$.
    We have
    \[
        \rewriterevh{V}{H}(\tuple{y_i}_{i=1}^m.\ctxfc[\plet \; x \pbind \pret{y_i} \; \pin \; R'])
        = \rewriterevh{V}{H}(\tuple{y_i}_{i=1}^m.\ctxfc[R'[y_i / x]])
    \]
    and
    $(\rhdepth{R}, \dsize{R} ) > (\rhdepth{\ctxfc[R'[y_i/x]]}, \dsize{\ctxfc[R'[y_i/x]]} )$.
    Thus, by the induction hypothesis, we have
    $\emptyenv \asemicolon \Delta' \vdash \rewriterevh{V}{H}(\tuple{y_i}_{i=1}^m.\ctxfc[R'[y_i / x]]) : \typprod_{i=1}^m B_i$.

    \textbf{Case} $R = \ctxfc[\opr(y_i)]$.
    We have
    \[
    \rewriterevh{V}{H}(\tuple{y_i}_{i=1}^m.\ctxfc[\opr(y_i)])
    =
    \left(\begin{aligned}
        & \plet \; \tuple{y,z'} \pbind Q^{\fwd}_{\opr}[V_i/x] \; \pin \\
        & \plet \; \tuple{y', y'_1, \dots, y'_l} \pbind \rewriterevh{\tuple{y, V_1, \dots, V_m}}{H}(\tuple{y, y_1, \dots, y_m}.\ctxfc[\pret{y}]) \; \pin\\
        & \plet \; y''_i \pbind Q^{\bwd}_{\opr}[y'/y, z'/z] \; \pin \\
        & \pret{\langle y'_1, \dots, y'_{i-1}, y'_i + y''_i, y'_{i+1}, \dots, x'_m \rangle}
    \end{aligned}\right).
    \]
    Let $R' = \rewriterevh{\tuple{y, V_1, \dots, V_m}}{H}(\tuple{y, y_1, \dots, y_m}.\ctxfc[\pret{y}])$,
    $R'' = \pret{\langle y'_1, \dots, y'_{i-1}, y'_i + y''_i, y'_{i+1}, \dots, y'_m \rangle}$,
    $R_1 = \plet \; y''_i \pbind Q^{\bwd}_{\opr}[y'/y, z'/z] \; \pin \; R''$,
    $R_2 = \plet \; \tuple{y', y'_1, \dots, y'_l} \pbind R' \; \pin \; R_1$,
    $R_3 = \plet \; \tuple{y,z'} \pbind Q^{\fwd}_{\opr}[V_i/x] \; \pin \; R_2$ and
    $\Delta'' = y' : B_1, \dots, y'_m : B_m$.
    
    We have $(\rhdepth{R}, \dsize{R}) > (\rhdepth{\ctxfc[\pret{y}]}, \dsize{\ctxfc[\pret{y}]})$.
    By the induction hypothesis, we have
    $ \emptyenv \asemicolon y : B, z : D_{\opr}\vdash R' : C \times \typprod_{i=1}^m B_i$.
    Hence, we have
    \[
    \scalebox{0.9}{$
    \infer{
        \emptyenv \asemicolon y : B, z : D_{\opr}, y' : C, \Delta'' \vdash
        R_1
        : \typprod_{i=1}^m B_i,
    }{
        \emptyenv \asemicolon y : B, z : D_{\opr}, y' : C, \Delta'' \vdash
        Q^{\bwd}_{\opr}[y'/y, z'/z] : B_i
        &
        \emptyenv \asemicolon y''_i : B_i, y : B, z : D_{\opr}, y' : C, \Delta'' \vdash
        R''
        : \typprod_{i=1}^m B_i
    }$}
    \]
    \[
        \infer{
            \emptyenv \asemicolon y : B, z : D_{\opr}, \Delta' \vdash 
            R_2
            : \typprod_{i=1}^m B_i
        }{
            \emptyenv \asemicolon y : B, z : D_{\opr}, \Delta' \vdash R' : C \times \typprod_{i=1}^m B_i
            &
            \emptyenv \asemicolon  y' : C, \Delta'', y : B, z : D_{\opr}, \Delta' \vdash
            R_1
            : \typprod_{i=1}^m B_i
        }
    \]
    and
    \[
    \infer{
        \emptyenv \asemicolon \Delta' \vdash
        R_3
        : \typprod_{i=1}^m B_i.
    }{
        \emptyenv \asemicolon \Delta' \vdash Q^{\fwd}_{\opr}[V_i/x] : B \times D_{\opr}
        &
        \emptyenv \asemicolon y : B, z : D_{\opr}, \Delta' \vdash 
        R_2
        : \typprod_{i=1}^m B_i
    }
    \]

    \textbf{Case} $R = \ctxfc[\prevhandle(W) \tuple{z_j}_{j=1}^l. R' \; \pwith \; H']$.
    We have
    \[
        \rewriterevh{V}{H}(\tuple{y_i}_{i=1}^m.\ctxfc[\prevhandle(W) \tuple{z_j}_{j=1}^l. R' \; \pwith \; H'])
        = \rewriterevh{V}{H}(\tuple{y_i}_{i=1}^m. \ctxfc[\rewriterevh{W}{H'}(\tuple{z_j}_{j=1}^l. R')]).
    \]
    By Lemma~\ref{lem:RH-depth}, we have
    $\rhdepth{R} > \rhdepth{\ctxfc[\rewriterevh{W}{H'}(\tuple{z_j}_{j=1}^l. R')]}$.
    Thus, by the induction hypothesis, we have
    $\emptyenv \asemicolon \Delta' \vdash \rewriterevh{V}{H}(\tuple{y_i}_{i=1}^m. \ctxfc[\rewriterevh{W}{H'}(\tuple{z_j}_{j=1}^l. R')]) : \typprod_{i=1}^m B_i$.

    \textbf{Case} $R = \ctxft[M]$.
    We have
    \[
    \begin{aligned}
        \rewriterevh{V}{H}(\tuple{y_i}_{i=1}^m.R)
        & =
        \left(\begin{aligned}
            & \plet \; y \pbind M[V_1/y_1, \dots, V_m/y_m] \; \pin \\
            & \plet \; (z, y'_1, \dots, y'_m) \pbind \rewriterevh{\tuple{y, V_1, \dots, V_m}}{H}(\tuple{y, y_1, \dots, y_m}.\ctxft[\pret{y}]) \; \pin \\
            & \pret{\langle y'_i + z.\trd(y_i.M[V_1/y_1, \dots, V_{i-1}/y_{i-1}, V_{i+1}/y_{i+1}, \dots, V_m/y_m])(V_i) \rangle_{i = 1}^m}
        \end{aligned}\right).
    \end{aligned}
    \]
    By the induction hypothesis, we have
    $ \emptyenv \asemicolon \Delta' \vdash \rewriterevh{\tuple{y, V_1, \dots, V_m}}{H}(\tuple{y, y_1, \dots, y_m}.\ctxft[\pret{y}]) : C \times \typprod_{i=1}^m B_i$.
    Hence, we can derive $\emptyenv \asemicolon \Delta' \vdash
    \rewriterevh{V}{H}(\tuple{y_i}_{i=1}^m.R)
    : \typprod_{i=1}^m B_i$.
\end{appendixproof}

\begin{definition}\label{def:small-step-reduction-commands}
    The \emph{small-step reduction} $P \redto P'$ for closed commands is defined as follows:
    \begin{align*}
        & \ctxft[M]
        \redto
        \ctxft[M']
        \quad \text{if $M \redto M'$},\quad\quad\quad\quad
      \ctxfc[\plet \; x \pbind \pret{V} \; \pin \; R]
        \redto
        \ctxfc[R[V/x]],
        \\
        & \ctxfc[\prevhandle(V) \tuple{x_i}_{i=1}^n. R \; \pwith \; H]
        \redto
        \ctxfc[\rewriterevh{V}{H}(\tuple{x_i}_{i=1}^n.R)].
    \end{align*}
\end{definition}

\begin{toappendix}
\begin{lemma}\label{lem:depth-reduction}
    If $P \redto P'$, then $\rhdepth{P} \ge \rhdepth{P'}$ holds.
    Furthermore, it holds that
    \[ \rhdepth{\ctxfc[\prevhandle(V) \tuple{x_i}_{i=1}^n. R \; \pwith \; H]} > \rhdepth{\ctxfc[\rewriterevh{V}{H}(\tuple{x_i}_{i=1}^n.R)]}. \]
\end{lemma}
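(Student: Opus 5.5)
The plan is to reduce both claims to Lemma~\ref{lem:RH-depth}, using one auxiliary fact about how $\rhdepth{\blank}$ interacts with evaluation contexts.

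\textbf{Step 1: depth through contexts.} First I would show, by induction on $\ctxfc$, that
\[
\rhdepth{\ctxfc[P]} = \max\{\rhdepth{P}, d(\ctxfc)\},
\]
where $d([\blank]) = 0$ and $d(\plet\; x \pbind \ctxfc \;\pin\; Q) = \max\{d(\ctxfc), \rhdepth{Q}\}$. This follows directly from the clause $\rhdepth{\plet\; x \pbind P \;\pin\; Q} = \max\{\rhdepth{P}, \rhdepth{Q}\}$. A consequence is monotonicity: if $\rhdepth{P} \ge \rhdepth{P'}$, then $\rhdepth{\ctxfc[P]} \ge \rhdepth{\ctxfc[P']}$.

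\textbf{Step 2: the non-strict claim.} I would go through the three rules of Definition~\ref{def:small-step-reduction-commands}.
\begin{itemize}
\item For $\ctxft[M] \redto \ctxft[M']$, the term in the hole contributes nothing to depth. Each of the three shapes of $\ctxft$ has a depth independent of the term in the hole: $\pret{\blank}$ and $\opr(\blank)$ have depth $0$, and a reverse handler frame has depth $1+\max\{\rhdepth{R},\rhdepth{H}\}$. So the depth is unchanged.
\item For the $\plet$-rule, I need $\rhdepth{R[V/x]} = \rhdepth{R} \le \max\{0, \rhdepth{R}\}$. Substituting a value changes only terms, which depth ignores; I would check this by a routine induction on $R$. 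Step 1 then lifts the bound through $\ctxfc$.
\item For the handler rule, the second part of Lemma~\ref{lem:RH-depth} gives strict decrease at the redex, and Step 1 lifts it to a non-strict decrease in context.
\end{itemize}

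\textbf{Step 3: the strict claim.} Write $a = 1+\max\{\rhdepth{R},\rhdepth{H}\}$ for the depth of the redex and $b = \rhdepth{\rewriterevh{V}{H}(\tuple{x_i}_{i=1}^n.R)}$ for the depth of its contractum. Lemma~\ref{lem:RH-depth} gives $a > b$, and Step 1 rewrites the goal as
\[
\max\{a, d(\ctxfc)\} > \max\{b, d(\ctxfc)\}.
\]
This holds exactly when $a > d(\ctxfc)$, that is, when the redex carries the maximal depth in the command.

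\textbf{Main obstacle.} Step 3 is where I expect the difficulty. For an arbitrary $\ctxfc$ whose pending continuations $Q$ contain reverse handlers of depth at least $a$, the two sides coincide, so the displayed strict inequality cannot hold in full generality under this definition of depth. My first attempt would be to establish it for contexts with $d(\ctxfc) < a$, which includes the empty context. For the general wording, I would record the strictness at the level of the redex rather than the whole command: the depth of the subcommand in the hole strictly decreases, and that is the quantity the later inductions actually need.
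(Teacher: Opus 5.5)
Your Steps 1 and 2 are exactly the paper's argument. The paper does a case analysis on the three reduction rules: $\ctxft$-steps and $\plet$-steps leave the depth unchanged, and the handler step follows from the second part of Lemma~\ref{lem:RH-depth}, lifted through $\ctxfc$.

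The paper's proof stops there. It concludes only $\rhdepth{\ctxfc[\prevhandle(V)\tuple{x_i}_{i=1}^n.R\;\pwith\;H]} \ge \rhdepth{\ctxfc[\rewriterevh{V}{H}(\tuple{x_i}_{i=1}^n.R)]}$, so it does not establish the strict ``furthermore'' claim either. Your Step~3 diagnosis is right: strictness holds exactly when the redex is deeper than your $d(\ctxfc)$. Here is a concrete closed counterexample. Take any $\revhandlerjudgment H : \beta$ with $\rhdepth{H} = 0$ and any constant $\const : \beta$. The command
\[
\plet\; x \pbind \bigl(\prevhandle(\tuple{\const})\tuple{y}.\pret{y}\;\pwith\;H\bigr)\;\pin\;\bigl(\prevhandle(x)\tuple{y}.\pret{y}\;\pwith\;H\bigr)
\]
has depth $1$. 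Its reduct also has depth $1$, because the pending continuation keeps depth $1$.

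One correction to your last sentence: the redex-level decrease is not all the later inductions need.
\begin{itemize}
\item It does suffice for the \tychandle{} case of Theorem~\ref{thm:adequacy}.
\item The nested-handler case of Lemma~\ref{lem:type-preservation-RH} explicitly invokes the in-context strict decrease $\rhdepth{R} > \rhdepth{\ctxfc[\rewriterevh{W}{H'}(\tuple{z_j}_{j=1}^l.R')]}$.
\item The same step is what the recursion defining $\rewriterevh{V}{H}$ needs in order to be well-founded.
\end{itemize}
Those arguments need a finer measure. One option is to order lexicographically, taking first the multiset of depths of all reverse-handler occurrences under the multiset ordering. Contracting a handler of depth $a$ deletes one occurrence of depth $a$, together with the occurrences inside it, all of depth $<a$. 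By Lemma~\ref{lem:RH-depth}, it introduces only occurrences of depth $<a$, so the multiset strictly decreases.
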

\begin{proof}[Proof of Lemma~\ref{lem:depth-reduction}]
    By case analysis on the definition of $\redto$.
    It is obvious that
    $\rhdepth{\ctxft[M]} = \rhdepth{\ctxft[M']}$ and
    $\rhdepth{\ctxfc[\plet \; x \pbind \pret{V} \; \pin \; R]} = \rhdepth{\ctxfc[R[V/x]]}$.
    By Lemma~\ref{lem:RH-depth}, we have
    \[
    \rhdepth{\prevhandle(V) \tuple{x_i}_{i=1}^n. R \; \pwith \; H}
    =
    1 + \rhdepth{R} + \rhdepth{H}
    >
    \rhdepth{\rewriterevh{V}{H}(\tuple{x_i}_{i=1}^n.R)}.
    \]
    Hence, we have
    $\rhdepth{\ctxfc[\prevhandle(V) \tuple{x_i}_{i=1}^n. R \; \pwith \; H]}
    \ge
    \rhdepth{\ctxfc[\rewriterevh{V}{H}(\tuple{x_i}_{i=1}^n.R)]}$.
\end{proof}
\end{toappendix}

The following proposition states that the small-step reduction for commands is type-preserving.
\begin{proposition}[Type preservation]\label{prop:type-preservation-commands}
    If $\emptyenv \asemicolon \emptyenv \vdash P : A$ and $P \redto P'$, then $\emptyenv \asemicolon \emptyenv \vdash P' : A$.
\end{proposition}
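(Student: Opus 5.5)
The proof goes by case analysis on the three clauses of Definition~\ref{def:small-step-reduction-commands}. Each case rests on a decomposition lemma for evaluation contexts. If $\emptyenv \asemicolon \emptyenv \vdash \ctxfc[Q] : A$, then there are a type $B$ and an environment $\Delta_0$ (the variables bound by the enclosing $\plet$'s, always with empty pure environment) such that $\emptyenv \asemicolon \Delta_0 \vdash Q : B$. Moreover, any $Q'$ with $\emptyenv \asemicolon \Delta_0 \vdash Q' : B$ satisfies $\emptyenv \asemicolon \emptyenv \vdash \ctxfc[Q'] : A$. The same holds for $\ctxft$ with terms: the hole is typed $\Delta_0 \vdash M : B$, and replacing $M$ by any $M'$ of the same type keeps the command well typed. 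Both are proved by a routine induction on the context, using that \tyclet{} only extends the effectful environment. For a closed outermost command, $\Delta_0$ binds only variables that will be substituted by values before the hole is reached. Since reductions only fire on closed redexes, I will actually only need the case where the relevant subterm is closed under $\Delta_0$.

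\textbf{Case $\ctxft[M] \redto \ctxft[M']$ with $M \redto M'$.} By decomposition, $\Delta_0 \vdash M : B$, and since $M$ reduces, it is closed. Subject reduction for terms gives $\emptyenv \vdash M' : B$. Most term rules are standard; the rule $W.\trd(x.N)(V) \redto \rewriterd{W}{V}(x.N)$ is covered by the cited \cite[Proposition~3.3]{AbadiPlotkin2019} lemma, and $\func(V) \redto \Eval(\func,V)$ is covered by $\Eval_{A,B}$ landing in $\Val_B$. Substituting back into the context then gives the result. If term type preservation is not already available, I will prove it first, with a standard substitution lemma for terms.

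\textbf{Case $\ctxfc[\plet \; x \pbind \pret{V} \; \pin \; R] \redto \ctxfc[R[V/x]]$.} Inverting \tyclet{} and \tycpure{} gives $\Delta_0 \vdash V : B'$ and $\emptyenv \asemicolon x : B', \Delta_0 \vdash R : B$. I need a substitution lemma for commands: if $\Gamma \asemicolon x : B', \Delta \vdash R : B$ and $\Gamma, \Delta \vdash V : B'$, then $\Gamma \asemicolon \Delta \vdash R[V/x] : B$. I prove it by induction on the derivation of $R$. In the \tychandle{} case, substitution enters only $M$, because the body $R$ of a handle is typed with only the pure environment $\Gamma$ plus its own bound variables, and the handler $H$ is closed. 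The only delicate point is that effectful variables migrate into the term premises $\Gamma,\Delta \vdash M$, which is exactly how \tycpure, \tycop{} and \tychandle{} are stated, so the term substitution lemma applies.

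\textbf{Case $\ctxfc[\prevhandle(V) \tuple{x_i}_{i=1}^n. R \; \pwith \; H] \redto \ctxfc[\rewriterevh{V}{H}(\tuple{x_i}_{i=1}^n.R)]$.} Inverting \tychandle{} gives $\Delta_0 \vdash V : \typprod_i A_i$, $\emptyenv \asemicolon x_1 : A_1,\dots,x_n : A_n \vdash R : C$ (the pure environment is empty because the whole command is closed) and $\revhandlerjudgment H : C$. Lemma~\ref{lem:type-preservation-RH}, applied with $\Delta' = \Delta_0$, then gives $\emptyenv \asemicolon \Delta_0 \vdash \rewriterevh{V}{H}(\tuple{x_i}.R) : \typprod_i A_i$, and decomposition finishes the case.

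The main obstacle I expect is the bookkeeping around $\ctxfc$. The hole's environment $\Delta_0$ is nonempty in general, so the decomposition lemma must be stated with an arbitrary effectful environment, and the command substitution lemma must be phrased for mixed $\Gamma \asemicolon \Delta$ environments. The key steps above are Lemma~\ref{lem:type-preservation-RH} and term preservation, and both are already available.
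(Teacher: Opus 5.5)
Your proof is correct and follows the paper's route. You use a context-replacement lemma for $\ctxfc$, then check preservation for each redex, discharging the handler rule by Lemma~\ref{lem:type-preservation-RH}. You also spell out the term rule and the $\plet$-substitution rule, which the paper leaves implicit.

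One simplification: the hole of $\ctxfc$ sits in the bound position of $\plet\; x \pbind [\blank] \;\pin\; Q$, so it is typed in the same environment as the whole command. For closed $P$ your $\Delta_0$ is therefore always $\emptyenv$, which is how the paper states the context lemma, and the bookkeeping obstacle you anticipate does not arise.
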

\begin{appendixproof}[Proof of Proposition~\ref{prop:type-preservation-commands}]
    By the induction of the structure of $\ctxfc$, we can show that
    if $\emptyenv \asemicolon \emptyenv \vdash \ctxfc[Q] : A$ and $\emptyenv \asemicolon \emptyenv \vdash Q : B$,
    then $\emptyenv \asemicolon \emptyenv \vdash \ctxfc[Q'] : A$ holds for any command $Q'$ satisfying $\emptyenv \asemicolon \emptyenv \vdash Q' : B$.
    Hence, it suffices to show the type preservation for each reduction rule in Definition~\ref{def:small-step-reduction-commands} without the contexts.
    It follows from Lemma~\ref{lem:type-preservation-RH}.
\end{appendixproof}

\subsection{Examples}
\label{subsec:examples-operational-semantics}
We show reductions of commands defined in \S\ref{sec:examples-neural-network}.
We assume that the evaluation function $\Eval$ satisfies the following equations:
\begin{gather*}
    \Eval(\fnSwish_n, \fromvec{v}) = \fromvec{(v_i \cdot \sigmoid(v_i))_{i=1}^n},
    \\
    \Eval(\rd[\fnSwish_n], \tuple{\fromvec{v}, \fromvec{u}}) = \fromvec{(u_i (\sigmoid(v_i) + v_i \sigmoid(v_i)(1 - \sigmoid(v_i))))_
    {i=1}^n},
    \\
    \Eval(\fnscalarmul_n, \tuple{\fromvec{a}, \fromvec{v}}) = \fromvec{a v},
    \quad
    \Eval(\fnminus_n, \tuple{\fromvec{u}, \fromvec{v}}) = \fromvec{u - v},
    \quad
    \Eval(\fnmatmul, \tuple{\fromvec{m}, \fromvec{v}}) = \fromvec{m v},
    \quad
    \Eval((\blank)^\fntranspose, \fromvec{m}) = \fromvec{m^\fntranspose}.
\end{gather*}

\begin{example}[MLP, continued from Example~\ref{ex:mlp-backpropagation}]\label{ex:mlp-reduction}
    The signature $\signatopr$ contains $\opGet{\ell}$ and $\opPut{\ell}$, which read and write the memory location $\ell$.
    Hence, we extend the operational semantics as follows.
    A \emph{heap} is a function $\heap \colon \Locations \to \{ \fromvec{v} \mid v \in \Real^n, \ n \in \Nat \}$ satisfying $\heap(\ell) = \fromvec{v}$ for some $v \in \Real^{n_t \cdots n_1}$ for each $\loctyp{\ell}{(n_1, \dots, n_t)} \in \Locations$.
    We write $\heap[\ell \mapsto \fromvec{v}]$ for the standard heap update, % defined by
    % \[
    % \heap[\ell \mapsto \fromvec{v}](\ell') = \begin{cases}
    %     \heap(\ell') & (\ell \ne \ell')
    %     \\
    %     \fromvec{v} & (\ell = \ell')
    % \end{cases}
    % \]
    and $[\ell_i \mapsto \fromvec{v_i}]_{i \in I}$ for the heap defined by
    $[\ell_i \mapsto \fromvec{v_i}]_{i \in I}(\ell_i) = \fromvec{v_i}$ and
    $[\ell_i \mapsto \fromvec{v_i}]_{i \in I}(\ell) = \fromvec{0}$ if $\ell \ne \ell_i$ for every $i \in I$.

    We extend the reduction relation to pairs of a heap and a command:
    $(\heap, P) \redto (\heap, P')$ if $P \redto P'$.
    We also add the following rule to the reduction rules.
    \[
        (\heap, \ctxfc[\opGet{\ell}()])
        \redto
        (\heap, \ctxfc[\pret{\heap(\ell)}]),
        \quad
        (\heap, \ctxfc[\opPut{\ell}](\fromvec{v}))
        \redto
        (\heap[\ell \mapsto \fromvec{v}], \ctxfc[\pret{\fromvec{0}}]).
    \]

    For $m_0 \in \Real^{n_{\inp} n_{\hid}}$ and $m_1 \in \Real^{n_{\hid} n_{\out}}$, we have
    \[
    \begin{aligned}
        & ([\ell_0 \mapsto \fromvec{m_0}, \ell_1 \mapsto \fromvec{m_1}], \prevhandle(\fromvec{v})\tuple{x_{\inp}}. R_{\MLP} \; \pwith \; H_{\MLP}) \\
        & \redto^*
        \bigl(
            \bigl[ \ell_0 \mapsto \fromvec{m_0 - u'_1 (v^{\transpose}) }, \ell_1 \mapsto \fromvec{m_1 - \learningrate(m_1 v_1 - t) (v_1^{\transpose}) } \bigr],
            \pret{V}
        \bigr)
        \quad\text{where $V$ is a value.}
    \end{aligned}
    \]
    The entire reduction sequence is shown in Fig.~\ref{fig:mlp-reduction}.
    The resulting heap
    %$\left[ \ell_0 \mapsto \fromvec{m_0 - u'_1 (v^{\transpose}) }, \ell_1 \mapsto \fromvec{m_1 - \learningrate(m_1 v_1 - t) (v_1^{\transpose}) } \right]$
    after the reduction is the result of one step of gradient descent on the parameters $m_0$ and $m_1$ of the MLP with the learning rate $\learningrate$.
    See Example~\ref{ex:mlp-gradient-calc} for the detailed calculations of the coincidence between the resulting heap and the usual gradient descent.
\end{example}
\section{Categorical Preliminaries}
We review categorical notions, promonads and reverse differential restriction categories.
These notions will be used to define the denotational semantics of our language.
%Readers who are not interested in denotational semantics and are only concerned with type systems or operational semantics may skip this section.
\begin{notation}
    \label{notation:general}
    Let $\Real$ be the set of real numbers and $\Nat$ the set of natural numbers.
    For totally ordered sets $(X, \le_X)$ and $(Y, \le_Y)$, we write $(X \times Y, \le_{X,Y})$ for their Cartesian product as a set equipped with the lexicographic order $\le_{X,Y}$.
    For morphisms $f \colon X \to Y$ and $g \colon Y \to Z$ in a category $\catc$,
    we write their composition as $(f \semicolon g) \colon X \to Z$.
    For a Cartesian category $\catc$, we write
    \begin{itemize}
        \item $\pi_i \colon \prod_{i=1}^{n} X_i \to X_i$ for the $i$-th projection,
        \item $\diag_X : X \to X \times X$ for the diagonal morphism for an object $X$ in $\catc$,
        \item $\sym_{X,Y} \colon X \times Y \to Y \times X$ for the symmetry morphism for objects $X$ and $Y$ in $\catc$,
        \item $\terminal_{X} \colon X \to 1$ for the unique morphism from an object $X$ to the terminal object $1$ in $\catc$,
        \item $\assoc_{X,Y,Z} \colon X \times (Y \times Z) \to (X \times Y) \times Z$ for the associativity for objects $X, Y, Z$ in $\catc$.
    \end{itemize}
\end{notation}

\subsection{Promonads and Algebras}
\label{subsec:promonads-and-algebras}
We review promonads and their algebras.
Strong promonads have been studied as a categorical structure for arrows \cite{JacobsHeunenHasuo2009,Asada2010,Sanada2024}.
Homomorphisms between promonad algebras are used to interpret handlers \cite{Sanada2024}.
We assume familiarity with basic notions of category theory \cite{MacLane1971,Leinster2014}.

A \emph{profunctor} $\mathcal{P} \colon \catc\profto \catd$ is a functor $\mathcal{P} \colon \catc^{\opposite} \times \catd \to \Sets$.\footnote{%
    Note that in literature including \cite{Sanada2024}, such a profunctor is instead denoted by $\catd \profto \catc$.
    One reason for our choice is that it aligns with the hom-functor $\catc(\blank, \blank) \colon \catc^{\opposite} \times \catc \to \Sets$.
}
For fixed categories $\catc$ and $\catd$, profunctors $\catc \profto \catd$ and natural transformations form a category, denoted by $\Prof(\catc, \catd)$.
Moreover, small categories and profunctors form a bicategory $\Prof$, whose hom-category from $\catc$ to $\catd$ is $\Prof(\catc, \catd)$,
whose identity $1$-cell $\idprof_\catc$ on a category $\catc$ is the hom-functor $\catc(\blank, \blank) \colon \catc^{\opposite} \times \catc \to \Sets$.
The horizontal composition of profunctors is defined using \emph{coends} \cite{Loregian2021}.
However, we will not introduce the explicit definition of coends or of the composition of profunctors here; instead,
we describe $2$-cells in $\Prof$ whose source is a composite of profunctors.
Suppose that we have profunctors $\mathcal{F}\colon\catc\profto\catd$, $\mathcal{G}\colon\catd\profto\cate$, and $\mathcal{H}\colon\catc\profto\cate$.
Then, a $2$-cell $\zeta \colon \mathcal{F}\odot\mathcal{G}\natto\mathcal{H}\colon\catc\profto\cate$ in $\Prof$ can be described as
a family of functions $\{ \zeta_{X, Y, Z} \colon  \mathcal{F}(X, Y) \times \mathcal{G}(Y, Z) \to \mathcal{H}(X, Z) \}_{X,Y,Z}$ with the following properties:
\begin{itemize}
    \item %
        it is natural in $X$ and $Z$, and
    \item %
        it is \emph{extranatural} in $Y$; for every morphism $f\colon Y'\to Y$ in $\catd$, the following diagram commutes.
        \[
            \begin{tikzcd}[column sep=50pt, row sep=small]
                \mathcal{F}(X,Y')\times\mathcal{G}(Y,Z)
                \ar[r,"{\mathcal{F}(X,f)\times\idmor_{\mathcal{G}(Y,Z)}}"]
                \ar[d,"{\idmor_{\mathcal{F}(X,Y')}\times\mathcal{G}(f,Z)}"']
                    &
                    \mathcal{F}(X,Y)\times\mathcal{G}(Y,Z)
                    \ar[d,"\zeta_{X,Y,Z}"]
                \\
                \mathcal{F}(X,Y')\times\mathcal{G}(Y',Z)
                \ar[r,"\zeta_{X,Y',Z}"]
                    &
                    \mathcal{H}(X,Z)
            \end{tikzcd}
        \]
\end{itemize}
This characterization indeed defines the composite $\mathcal{F}\odot\mathcal{G}$ as the universal profunctor $\mathcal{H}$ equipped with such a family $\zeta$.
For simplicity, we often refer to $2$-cells in $\Prof$ as \emph{natural transformations}.

A \emph{promonad} $\mathcal{A}$ on a category $\catc$ is a monad on $\catc$ in the bicategory $\Prof$;
it consists of an endo-profunctor $\mathcal{A}\colon\catc\profto\catc$,
the \emph{unit} $\eta\colon\idprof_{\catc}\natto\mathcal{A}$, and the \emph{multiplication} $\mu\colon\mathcal{A}\odot\mathcal{A}\natto\mathcal{A}$,
satisfying the usual monad axioms. By the Yoneda lemma, the unit is equivalently described as a family
$\{ \eta_X (:=\eta(\idmor_X)) \in\mathcal{A}(X,X) \}_{X\in\obj(\catc)}$. Using the above description of $2$-cells in $\Prof$, the multiplication is equivalently described as a family $\{ \mu_{X,Y,Z} \colon \mathcal{A}(X,Y) \times \mathcal{A}(Y,Z) \to \mathcal{A}(X,Z) \}_{X,Y,Z}$.
For objects $X,Y,Z$ in $\catc$ and elements $a\in\mathcal{A}(X,Y)$ and $b\in\mathcal{A}(Y,Z)$, we write $\mu_{X,Y,Z}(a, b) \in \promonad(X, Z)$ as $a \musemicolon b$.
These data define a category whose objects are those of $\catc$, and morphisms from $X$ to $Y$ are elements of $\mathcal{A}(X,Y)$, with identities given by $\eta_X$ and composition given by $\musemicolon$.
This is equipped with an identity-on-objects functor from $\catc$, and indeed the converse is also true:
A promonad is equivalently defined as an identity-on-objects functor from $\catc$.

\begin{definition}[strong promonads]
%    For a small category $\catc$,
%    a \emph{promonad} on $\catc$ is a monad in the bicategory $\mathbf{Prof}$ on the object $\catc$.
%    That is, a promonad $\promonad$ on $\catc$ is a profunctor $\promonad \colon \catc \profto \catc$ equipped with
%    a 2-cell $\eta \colon \idprof_{\catc} \natto \promonad$ called the \emph{unit}, and
%    a 2-cell $\mu \colon \promonad \circ \promonad \natto \promonad$ called the \emph{multiplication},
%    satisfying the usual monad laws.
    A \emph{strong promonad} $\promonad$ on a (small) monoidal category $(\catc,1,\otimes)$ is a promonad $\promonad$ on $\catc$ equipped with
    a family $\{ \strength^{X, Y}_{Z} \colon \promonad(X, Y) \to \promonad(X \otimes Z, Y \otimes Z) \}_{X,Y,Z}$ natural in $X$ and $Y$, and extranatural in $Z$,
    called a \emph{strength}. It satisfies the usual strength laws; see \cite[Fig.~1]{Sanada2024} for the laws.
%    Note that such a family forms a natural transformation
%    $\promonad\times\idprof_\catc\natto\promonad(\blank\otimes\blank,\blank\otimes\blank)\colon\catc\times\catc\profto\catc\times\catc$ by the Yoneda lemma.
\end{definition}
As for promonads, strong promonads are equivalently described as an identity-on-objects functor from $\catc$ equipped with some structure, which is called
a \emph{(monoidal) Freyd category} (see \cite[Theorem~5.4]{HeunenJacobs06} and \cite[§9]{GarnerLopezFranco16}) or an \emph{effectful category} \cite{Roman2023}.
%In particular, for a strong promonad, the collage is a \emph{premonoidal category}, where the tensor product is functorial only in morphisms of $\catc$ but not in morphisms of the collage.

As explained in \S\ref{subsubsec:monads-vs-arrows:promonads},
we think of the family $\{\promonad(X,Y)\}_{X,Y \in \obj(\catc)}$ of a strong promonad as a family of sets of \emph{computations},
and those computations coming from morphisms in $\catc$ via the unit $\eta\colon\idprof_\catc\to\promonad$ as \emph{pure computations}.

We use a string-diagrammatic notation to represent elements in $\promonad(X,Y)$ for a strong promonad $\promonad$ on $\catc$.
The notation is used in \cite{AsadaHasuo2010,Sanada2024} and has a mathematical justification \cite{Roman2023}.
We omit drawing wires whose type is the tensor unit $1$.
Figure~\ref{fig:promonad-string-diagrams} summarizes the notation for a general strong promonad and, in the Cartesian or Cartesian restriction setting of \S\ref{subsec:RDRC}, for the pure structural morphisms induced by products.
\begin{figure}
\footnotesize
\[
\resizebox{\textwidth}{!}{$
\begin{array}{c@{\qquad}c@{\qquad}c@{\qquad}c}
\begin{tikzpicture}[baseline=(current bounding box.center), scale=0.8, transform shape]
    \pgfmathsetmacro{\xd}{0.3}
    \pgfmathsetmacro{\xw}{0.5}
    \pgfmathsetmacro{\xStart}{0}
    \pgfmathsetmacro{\xAL}{\xStart + \xd}
    \pgfmathsetmacro{\xAR}{\xAL + \xw}
    \pgfmathsetmacro{\xEnd}{\xAR + \xd}
    \draw (\xStart,  0.3) node[left] {$X_1$} -- (\xEnd,  0.3) node[right] {$Y_1$};
    \draw (\xStart, -0.3) node[left] {$X_n$} -- (\xEnd, -0.3) node[right] {$Y_m$};
    \node (midL) at (\xStart + \xd/2, 0.1) {$\vdots$};
    \node (midR) at (\xEnd   - \xd/2, 0.1) {$\vdots$};
    \draw[comp] (\xAL, -0.5) rectangle node {$a$} (\xAR, 0.5);
\end{tikzpicture}
\in \promonad(\bigotimes_{i=1}^n X_i, \bigotimes_{j=1}^m Y_j),
&
\eta(f) =
\begin{tikzpicture}[baseline=(current bounding box.center), scale=0.8, transform shape]
    \pgfmathsetmacro{\xd}{0.2}
    \pgfmathsetmacro{\xw}{0.5}
    \pgfmathsetmacro{\xStart}{0}
    \pgfmathsetmacro{\xAL}{\xStart + \xd}
    \pgfmathsetmacro{\xAR}{\xAL + \xw}
    \pgfmathsetmacro{\xEnd}{\xAR + \xd}
    \draw (\xStart, 0) node[left] {$X$} -- (\xEnd, 0) node[right] {$Y$};
    \draw[purecomp] (\xAL, -0.3) rectangle node {$f$} (\xAR, 0.3);
\end{tikzpicture},
&
\eta(\diag_X) =
\begin{tikzpicture}[baseline=(current bounding box.center), scale=0.8, transform shape]
    \pgfmathsetmacro{\xd}{0.2}
    \pgfmathsetmacro{\xw}{0.3}
    \pgfmathsetmacro{\xStart}{0}
    \pgfmathsetmacro{\xDiagL}{\xStart + \xd}
    \pgfmathsetmacro{\xDiagR}{\xDiagL + \xw}
    \pgfmathsetmacro{\xEnd}{\xDiagR + \xd}
    \draw (\xStart, 0) node[left] {$X$} -- (\xDiagL, 0);
    \draw (\xDiagL, 0) -- (\xDiagR,  0.4) -- (\xEnd,  0.4) node[right] {$X$};
    \draw (\xDiagL, 0) -- (\xDiagR, -0.4) -- (\xEnd, -0.4) node[right] {$X$};
\end{tikzpicture},
&
\eta(\sym_{X,Y}) =
\begin{tikzpicture}[baseline=(current bounding box.center), scale=0.8, transform shape]
    \pgfmathsetmacro{\xd}{0.2}
    \pgfmathsetmacro{\xw}{0.3}
    \pgfmathsetmacro{\xStart}{0}
    \pgfmathsetmacro{\xSymL}{\xStart + \xd}
    \pgfmathsetmacro{\xSymR}{\xSymL + \xw}
    \pgfmathsetmacro{\xEnd}{\xSymR + \xd}
    \draw (\xStart,  0.4) node[left] {$X$} -- (\xSymL,  0.4) -- (\xSymR, -0.4) -- (\xEnd, -0.4) node[right] {$X$};
    \draw (\xStart, -0.4) node[left] {$Y$} -- (\xSymL, -0.4) -- (\xSymR,  0.4) -- (\xEnd,  0.4) node[right] {$Y$};
\end{tikzpicture},
\\[1.2ex]
&
a \musemicolon b =
\begin{tikzpicture}[baseline=(current bounding box.center), scale=0.8, transform shape]
    \pgfmathsetmacro{\xd}{0.2}
    \pgfmathsetmacro{\xw}{0.5}
    \pgfmathsetmacro{\xStart}{0}
    \pgfmathsetmacro{\xAL}{\xStart + \xd}
    \pgfmathsetmacro{\xAR}{\xAL + \xw}
    \pgfmathsetmacro{\xBL}{\xAR + \xd}
    \pgfmathsetmacro{\xBR}{\xBL + \xw}
    \pgfmathsetmacro{\xEnd}{\xBR + \xd}
    \draw (\xStart, 0) node[left] {$X$} -- (\xEnd, 0) node[right] {$Z$};
    \draw[comp] (\xAL, -0.3) rectangle node {$a$} (\xAR, 0.3);
    \draw[comp] (\xBL, -0.3) rectangle node {$b$} (\xBR, 0.3);
\end{tikzpicture},
&
\strength^{X,Y}_{Z}(a) =
\begin{tikzpicture}[baseline=(current bounding box.center), scale=0.8, transform shape]
    \pgfmathsetmacro{\xd}{0.2}
    \pgfmathsetmacro{\xw}{0.5}
    \pgfmathsetmacro{\xStart}{0}
    \pgfmathsetmacro{\xAL}{\xStart + \xd}
    \pgfmathsetmacro{\xAR}{\xAL + \xw}
    \pgfmathsetmacro{\xEnd}{\xAR + \xd}
    \draw (\xStart,    0) node[left] {$X$} -- (\xEnd,    0) node[right] {$Y$};
    \draw (\xStart, -0.5) node[left] {$Z$} -- (\xEnd, -0.5) node[right] {$Z$};
    \draw[comp] (\xAL, -0.3) rectangle node {$a$} (\xAR, 0.3);
\end{tikzpicture}
%\eta(\pi_i) =
%\begin{tikzpicture}[baseline=(current bounding box.center), scale=0.8, transform shape]
%    \pgfmathsetmacro{\xd}{0.3}
%    \pgfmathsetmacro{\xw}{0.6}
%    \pgfmathsetmacro{\xStart}{0}
%    \pgfmathsetmacro{\xProj}{\xStart + \xd}
%    \pgfmathsetmacro{\xEnd}{\xProj + \xd}
%    \draw (\xStart,  0.6) node[left] {$X_1$} -- (\xProj,  0.6);
%    \draw (\xStart,  0.2)                    -- (\xProj,  0.2);
%    \draw (\xStart,    0) node[left] {$X_i$} -- (\xEnd,     0) node[right] {$X_i$};
%    \draw (\xStart, -0.2)                    -- (\xProj, -0.2);
%    \draw (\xStart, -0.6) node[left] {$X_n$} -- (\xProj, -0.6);
%    \node (dotsU) at (\xStart + \xd/2,  0.45) {\scalebox{0.5}{$\vdots$}};
%    \node (dotsD) at (\xStart + \xd/2, -0.35) {\scalebox{0.5}{$\vdots$}};
%    \draw[del] (\xProj,  0.6) circle (0.05);
%    \draw[del] (\xProj,  0.2) circle (0.05);
%    \draw[del] (\xProj, -0.2) circle (0.05);
%    \draw[del] (\xProj, -0.6) circle (0.05);
%\end{tikzpicture},
%&
%&
%\eta(\terminal_X) =
%\begin{tikzpicture}[baseline=(current bounding box.center), scale=0.8, transform shape]
%    \pgfmathsetmacro{\xd}{0.3}
%    \pgfmathsetmacro{\xw}{0.6}
%    \pgfmathsetmacro{\xStart}{0}
%    \pgfmathsetmacro{\xEnd}{\xStart + \xd}
%    \draw (\xStart, 0) node[left] {$X$} -- (\xEnd, 0);
%    \draw[del] (\xEnd, 0) circle (0.05);
%\end{tikzpicture}
\end{array}
$}
\]
\caption{String-diagrammatic notation for strong promonads.}% The top row shows a general element, pure morphisms, composition, and strength. The bottom row shows the pure structural morphisms used in the Cartesian or Cartesian restriction setting.}
\label{fig:promonad-string-diagrams}
\Description{String-diagrammatic notation for strong promonads and for the structural pure morphisms induced by products.}
\end{figure}

The rewriting rules for the string diagrams of promonads are derived from the monad laws and the strength laws.
Figure~\ref{fig:promonad-rewriting-contrast} contrasts the non-pure and pure cases: computations do not commute in general, but pure morphisms from $\catc$ do.
For details of the rewriting rules, see \cite{Roman2023}.
\begin{figure}
\footnotesize
\[
\resizebox{\textwidth}{!}{$
\begin{array}{c}
    \begin{tikzpicture}[baseline=(current bounding box.center), scale=0.8, transform shape]
        \pgfmathsetmacro{\xd}{0.2}
        \pgfmathsetmacro{\xw}{0.5}
        \pgfmathsetmacro{\xStart}{0}
        \pgfmathsetmacro{\xAL}{\xStart + \xd}
        \pgfmathsetmacro{\xAR}{\xAL + \xw}
        \pgfmathsetmacro{\xDiagL}{\xAR + \xd}
        \pgfmathsetmacro{\xDiagR}{\xDiagL + \xd}
        \pgfmathsetmacro{\xBL}{\xDiagR + \xd}
        \pgfmathsetmacro{\xBR}{\xBL + \xw}
        \pgfmathsetmacro{\xEnd}{\xBR + \xd}
        % a
        \draw (\xStart,    0) node[left] {$X$} -- (\xAR,    0);
        \draw (\xStart, -0.5) node[left] {$Z$} -- (\xAR, -0.5);
        \draw[comp] (\xAL, -0.3) rectangle node {$a$} (\xAR, 0.3);
        % diag
        \draw (\xAR,    0) -- (\xDiagL,    0) -- (\xDiagR,  -0.5);
        \draw (\xAR, -0.5) -- (\xDiagL, -0.5) -- (\xDiagR,    0);
        % b
        \draw (\xDiagR,    0) -- (\xEnd,    0) node[right] {$Y$};
        \draw (\xDiagR, -0.5) -- (\xEnd, -0.5) node[right] {$W$};
        \draw[comp] (\xBL, -0.3) rectangle node {$b$} (\xBR, 0.3);
    \end{tikzpicture}
    \ne
    \begin{tikzpicture}[baseline=(current bounding box.center), scale=0.8, transform shape]
        \pgfmathsetmacro{\xd}{0.2}
        \pgfmathsetmacro{\xw}{0.5}
        \pgfmathsetmacro{\xStart}{0}
        \pgfmathsetmacro{\xDiagAL}{\xStart + \xd}
        \pgfmathsetmacro{\xDiagAR}{\xDiagAL + \xd}
        \pgfmathsetmacro{\xAL}{\xDiagAR + \xd}
        \pgfmathsetmacro{\xAR}{\xAL + \xw}
        \pgfmathsetmacro{\xDiagL}{\xAR + \xd}
        \pgfmathsetmacro{\xDiagR}{\xDiagL + \xd}
        \pgfmathsetmacro{\xBL}{\xDiagR + \xd}
        \pgfmathsetmacro{\xBR}{\xBL + \xw}
        \pgfmathsetmacro{\xDiagBL}{\xBR + \xd}
        \pgfmathsetmacro{\xDiagBR}{\xDiagBL + \xd}
        \pgfmathsetmacro{\xEnd}{\xDiagBR + \xd}
        % diag A
        \draw (\xStart,    0) node[left] {$X$} -- (\xDiagAL,    0) -- (\xDiagAR,  -0.5);
        \draw (\xStart, -0.5) node[left] {$Z$} -- (\xDiagAL, -0.5) -- (\xDiagAR,     0);
        % b
        \draw (\xDiagAR,    0) -- (\xAR,    0);
        \draw (\xDiagAR, -0.5) -- (\xAR, -0.5);
        \draw[comp] (\xAL, -0.3) rectangle node {$b$} (\xAR, 0.3);
        % diag
        \draw (\xAR,    0) -- (\xDiagL,    0) -- (\xDiagR,  -0.5);
        \draw (\xAR, -0.5) -- (\xDiagL, -0.5) -- (\xDiagR,    0);
        % a
        \draw (\xDiagR,    0) -- (\xBR,    0);
        \draw (\xDiagR, -0.5) -- (\xBR, -0.5);
        \draw[comp] (\xBL, -0.3) rectangle node {$a$} (\xBR, 0.3);
        % diag B
        \draw (\xBR,    0) -- (\xDiagBL,    0) -- (\xDiagBR, -0.5) -- (\xEnd, -0.5) node[right] {$W$};
        \draw (\xBR, -0.5) -- (\xDiagBL, -0.5) -- (\xDiagBR,    0) -- (\xEnd,    0) node[right] {$Y$};
    \end{tikzpicture}
    \hspace{1em}
    \vline
    \hspace{1em}
    \begin{tikzpicture}[baseline=(current bounding box.center), scale=0.8, transform shape]
        \pgfmathsetmacro{\xd}{0.3}
        \pgfmathsetmacro{\xw}{0.6}
        \pgfmathsetmacro{\xStart}{0}
        \pgfmathsetmacro{\xAL}{\xStart + \xd}
        \pgfmathsetmacro{\xAR}{\xAL + \xw}
        \pgfmathsetmacro{\xDiagL}{\xAR + \xd}
        \pgfmathsetmacro{\xDiagR}{\xDiagL + \xd}
        \pgfmathsetmacro{\xBL}{\xDiagR + \xd}
        \pgfmathsetmacro{\xBR}{\xBL + \xw}
        \pgfmathsetmacro{\xEnd}{\xBR + \xd}
        % a
        \draw (\xStart,    0) node[left] {$X$} -- (\xAR,    0);
        \draw (\xStart, -0.5) node[left] {$Z$} -- (\xAR, -0.5);
        \draw[comp] (\xAL, -0.3) rectangle node {$a$} (\xAR, 0.3);
        % diag
        \draw (\xAR,    0) -- (\xDiagL,    0) -- (\xDiagR,  -0.5);
        \draw (\xAR, -0.5) -- (\xDiagL, -0.5) -- (\xDiagR,    0);
        % f
        \draw (\xDiagR,    0) -- (\xEnd,    0) node[right] {$Y$};
        \draw (\xDiagR, -0.5) -- (\xEnd, -0.5) node[right] {$W$};
        \draw[purecomp] (\xBL, -0.3) rectangle node {$f$} (\xBR, 0.3);
    \end{tikzpicture}
    =
    \begin{tikzpicture}[baseline=(current bounding box.center), scale=0.8, transform shape]
        \pgfmathsetmacro{\xd}{0.3}
        \pgfmathsetmacro{\xw}{0.6}
        \pgfmathsetmacro{\xStart}{0}
        \pgfmathsetmacro{\xDiagAL}{\xStart + \xd}
        \pgfmathsetmacro{\xDiagAR}{\xDiagAL + \xd}
        \pgfmathsetmacro{\xAL}{\xDiagAR + \xd}
        \pgfmathsetmacro{\xAR}{\xAL + \xw}
        \pgfmathsetmacro{\xDiagL}{\xAR + \xd}
        \pgfmathsetmacro{\xDiagR}{\xDiagL + \xd}
        \pgfmathsetmacro{\xBL}{\xDiagR + \xd}
        \pgfmathsetmacro{\xBR}{\xBL + \xw}
        \pgfmathsetmacro{\xDiagBL}{\xBR + \xd}
        \pgfmathsetmacro{\xDiagBR}{\xDiagBL + \xd}
        \pgfmathsetmacro{\xEnd}{\xDiagBR + \xd}
        % diag A
        \draw (\xStart,    0) node[left] {$X$} -- (\xDiagAL,    0) -- (\xDiagAR,  -0.5);
        \draw (\xStart, -0.5) node[left] {$Z$} -- (\xDiagAL, -0.5) -- (\xDiagAR,     0);
        % f
        \draw (\xDiagAR,    0) -- (\xAR,    0);
        \draw (\xDiagAR, -0.5) -- (\xAR, -0.5);
        \draw[purecomp] (\xAL, -0.3) rectangle node {$f$} (\xAR, 0.3);
        % diag
        \draw (\xAR,    0) -- (\xDiagL,    0) -- (\xDiagR,  -0.5);
        \draw (\xAR, -0.5) -- (\xDiagL, -0.5) -- (\xDiagR,    0);
        % a
        \draw (\xDiagR,    0) -- (\xBR,    0);
        \draw (\xDiagR, -0.5) -- (\xBR, -0.5);
        \draw[comp] (\xBL, -0.3) rectangle node {$a$} (\xBR, 0.3);
        % diag B
        \draw (\xBR,    0) -- (\xDiagBL,    0) -- (\xDiagBR, -0.5) -- (\xEnd, -0.5) node[right] {$W$};
        \draw (\xBR, -0.5) -- (\xDiagBL, -0.5) -- (\xDiagBR,    0) -- (\xEnd,    0) node[right] {$Y$};
    \end{tikzpicture}
\end{array}
$}
\]
\caption{Rewriting contrast for strong promonads.}% The top row shows a non-pure configuration that cannot be interchanged in general; the bottom row shows the corresponding interchange law when one component is pure.}
\label{fig:promonad-rewriting-contrast}
\Description{A contrast between non-pure and pure rewriting behavior in the string diagrams of strong promonads.}
\end{figure}

Following the paradigm of algebraic effects, we treat effects on computations as \emph{algebras} for the promonad.
Note that any presheaf $G\colon \catc^{\opposite}\to\Sets$ can be regarded as a profunctor $G\colon\catc\profto\mathbf{1}$ to the terminal category.
\begin{definition}[algebras and homomorphisms, \cite{Sanada2024}]
    \label{def:algebra}
    Let $\promonad$ be a promonad on a category $\catc$.
    An \emph{$\promonad$-algebra} is a pair $(G, \alpha)$ of a presheaf $G$ and a natural transformation
    $\alpha = \{\alpha_{X,Y} \colon \promonad(X, Y) \times G(Y) \to G(X) \}_{X,Y}\colon\promonad\odot G\natto G$ satisfying
    $\alpha(\eta_X, k) = k$ for $X \in \catc$ and $k \in G(X)$ and
    $\alpha(a \musemicolon b, k) = \alpha(a, \alpha(b, k))$ for $a \in \promonad(X, Y)$, $b \in \promonad(Y, Z)$ and $k \in G(Z)$.
    A \emph{homomorphism} between $\promonad$-algebras $(G, \alpha)$ and $(G', \alpha')$ is a natural transformation $h \colon G \natto G'$ satisfying
    $\alpha'(a, h_Y(k)) = h_X(\alpha(a, k))$
    for $a \in \promonad(X, Y)$ and $k \in G(Y)$.
\end{definition}
Intuitively, the set $G(X)$ is thought of as a set of ``continuations'' of the shape determined by $X$ in a sense.
%This aligns with the idea of \emph{contexts} in \cite{CurienHerbelin00}.
On the other hand, an algebra structure describes how a computation $R \in \promonad(Y,X)$ is executed as a continuation transformation $\alpha_{Y,X}(R,-)\colon G(X)\to G(Y)$,
where this execution realizes the effects occurring in $R$.

For each $Y \in \obj(\catc)$, $(\promonad(\blank, Y), \mu)$ is an $\promonad$-algebra.
This algebra is \emph{free} in the following sense. This will later be used to package the semantics of reverse handlers as homomorphisms from free algebras.
\begin{lemma}[{\cite[Theorem~3.2]{Sanada2024}}]\label{lem:free-algebra}
    Let $Y$ be an object in $\catc$,
    $(G, \alpha)$ an $\promonad$-algebra,
    and
    let $p\in G(Y)$.
    Then, there is a unique homomorphism $h \colon (\promonad(\blank, Y), \mu) \to (G, \alpha)$ such that the following diagram commutes for any object $X$ in $\catc$:
    \[
    \begin{tikzcd}[row sep=small]
        \catc(X, Y) \arrow[r, "\eta_{X,Y}"] \arrow[dr, "{\phi_X \defeq \lambda f . G(f)(p)}"']
        & \promonad(X, Y) \arrow[d, "h_X"]
        \\
        & G(X)
    \end{tikzcd}
    \]
\end{lemma}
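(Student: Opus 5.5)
The plan is a Yoneda-style argument, with existence and uniqueness handled separately. For existence, I would define
\[
h_X(a) \defeq \alpha_{X,Y}(a, p) \in G(X) \qquad \text{for } a \in \promonad(X,Y).
\]
Three properties then need checking.

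The first is naturality of $h$ in $X$. A morphism $g \colon X' \to X$ acts on $\promonad(\blank,Y)$ by precomposition, and the action on $\promonad(X,Y)$ is $a \mapsto \promonad(g,Y)(a)$. Naturality of $\alpha$ in its first argument gives
\[
\alpha_{X',Y}(\promonad(g,Y)(a), p) = G(g)(\alpha_{X,Y}(a,p)),
\]
which is exactly $h_{X'}(\promonad(g,Y)(a)) = G(g)(h_X(a))$.

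The second is the homomorphism condition. For $a \in \promonad(X,Z)$ and $b \in \promonad(Z,Y)$ we need
\[
h_X(a \musemicolon b) = \alpha(a, h_Z(b)).
\]
This is $\alpha(a \musemicolon b, p) = \alpha(a, \alpha(b,p))$, the associativity axiom of Definition~\ref{def:algebra}.

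The third is commutation of the triangle. For $f \in \catc(X,Y)$ we have $\eta_{X,Y}(f) = \promonad(f,Y)(\eta_Y)$. Naturality of $\alpha$ and the unit axiom give
\[
h_X(\eta_{X,Y}(f)) = \alpha(\promonad(f,Y)(\eta_Y), p) = G(f)(\alpha(\eta_Y,p)) = G(f)(p) = \phi_X(f).
\]

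For uniqueness, suppose $h'$ is any homomorphism making the triangle commute. The triangle at $X = Y$ and $f = \idmor_Y$ forces $h'_Y(\eta_Y) = G(\idmor_Y)(p) = p$. For any $a \in \promonad(X,Y)$ we have $a = a \musemicolon \eta_Y$ by the unit law of the promonad. Applying the homomorphism property gives
\[
h'_X(a) = h'_X(a \musemicolon \eta_Y) = \alpha(a, h'_Y(\eta_Y)) = \alpha(a,p) = h_X(a),
\]
so $h' = h$.

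The only delicate point is bookkeeping with variance. Naturality of $\alpha$ in $X$ is exactly the naturality statement needed, since $\alpha$ is a 2-cell $\promonad\odot G\natto G$. Identifying $\eta_{X,Y}(f)$ with $\promonad(f,Y)(\eta_Y)$ is the Yoneda description of the unit recalled in the text, so I expect no real obstacle here.
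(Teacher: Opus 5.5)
Your proof is correct. It is the standard Yoneda-style free-algebra argument that the paper takes from \cite[Theorem~3.2]{Sanada2024} rather than reproving: define $h_X(a) \defeq \alpha_{X,Y}(a,p)$, get naturality from naturality of $\alpha$ in its first argument, the homomorphism law from associativity of $\alpha$, the triangle from $\eta_{X,Y}(f)=\promonad(f,Y)(\eta_Y)$ together with the unit axiom, and uniqueness from $a = a \musemicolon \eta_Y$.
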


\subsection{Reverse Differential Restriction Categories}
\label{subsec:RDRC}
Reverse differential restriction categories (RDRCs) \cite{Cruttwell+2021} are categorical models of reverse-mode automatic differentiation for partially defined smooth functions.
They generalize Cartesian reverse differential categories \cite{Cockett+2020} to the setting of restriction categories \cite{CocketLack2002},
and were developed to model Abadi and Plotkin's \emph{simple differentiable programming language} \cite{AbadiPlotkin2019}.
In many examples in this paper, we employ CRDCs as models; therefore, if you are not interested in the partiality of functions, you can safely ignore the term ``restriction'' in the following definitions and examples.

Here we briefly recall the definition of RDRCs and give some examples. For details, see \cite{Cruttwell+2021}.
\begin{definition}
    A \emph{Cartesian restriction category} \cite{CLIII} is a category $\catx$ equipped with a restriction structure
    $(f:A\to B)\mapsto (\restr{f}:A\to A)$ satisfying various axioms (see \cite{CocketLack2002}),
    and a monoidal structure $(\times,1)$ called \emph{restriction product} and \emph{restriction terminal object} satisfying various axioms.
\end{definition}
Intuitively, the endomorphism $\restr{f}$ represents the domain of a partial function $f$.
Given two parallel morphisms $f,g\colon A\to B$, we write $f\leq g$ if $\restr{f}\semicolon g = f$.
A morphism $f\colon A\to B$ in a Cartesian restriction category $\catx$ is called \emph{total} if $\restr{f} = \idmor_A$.
A restriction product and a restriction terminal object come equipped with canonical diagonal morphisms, projections, and a unique total morphism to $1$.
When working with restriction categories, we use Notation~\ref{notation:general} for these restriction versions of the usual categorical products and terminal object.
By a \emph{total point} of an object $X$, we mean a total morphism $1 \to X$.
\begin{definition}
    A \emph{Cartesian left additive restriction category} is a Cartesian restriction category $\catx$
    equipped with a structure of commutative monoid on each hom-set $\catx(X,Y)$ satisfying $x\semicolon(f+g)=(x\semicolon f)+(x\semicolon g)$
    and $x\semicolon0_{X,Y}\leq 0_{X',Y}$, and projections fully preserves the addition.
\end{definition}
\begin{definition}
    \label{def:RDRC}
    A \emph{reverse differential restriction category} (RDRC) is a Cartesian left additive restriction category $\catx$,
    which has an operation on maps:
    $\RD[-]:\catx(A,B)\to\catx(A\times B,A)
        % f:A\to B
        % \mapsto
        % \RD\lbrack f\rbrack:A\times B\to A
    $
    satisfying nine axioms \axiomRD{1}--\axiomRD{9} as in \cite[Definition~3.1]{Cruttwell+2021}.
    We write $\RD'[f] \colon Y \times X \to X$ for the morphism defined by $\RD'[f] = \sym_{Y,X} \semicolon \RD[f]$.
\end{definition}
Here, we only present \axiomRD{1} and \axiomRD{5}.
The axiom \axiomRD{1} states that the reverse derivative combinator preserves the monoid structure of hom-sets:
\[
    \RD[f + g] = \RD[f] + \RD[g],
    \quad
    \RD[0_{X,Y}] = 0_{X \times Y, X}.
\]
The axiom \axiomRD{5} formalizes the chain rule of reverse derivatives:
\[
    \RD[f \semicolon g]
%    = \left( X \times Z \xrightarrow{\langle \pi_1 , \langle \pi_1 \semicolon f, \pi_2 \rangle \rangle} X \times (Y \times Z)
%        \xrightarrow{\idmor_X \times \RD[g]} X \times Y
    = \bigl( X \times Z
        \xrightarrow{\Delta_X\times \idmor_Z}
        (X \times X) \times Z
        \cong
        X \times (X \times Z)
        \xrightarrow{\idmor_X\times((f\times\idmor_Z)\semicolon\RD[g])}
        X \times Y
        \xrightarrow{\RD[f]}
    X \bigr).
\]
The difference between RDRCs and CRDCs is the universality of products: in RDRCs, the product $X\times Y$ is not a categorical product.
%However, RDRCs are equipped with a diagonal morphism and projections that do not satisfy the usual equalities; for example, the projections do not form a natural transformation.
%If you use only diagonals, the product behaves just like an ordinary product; this is why we describe \axiomRD{5} using the diagonal morphism $\Delta_X$ instead of the projections $\pi_1$ and $\pi_2$, as in \cite{Cockett+2020,Cruttwell+2021}.
This point is clarified by the characterization of Cartesian restriction categories as \emph{$p$-categories}, observed in \cite{CLIII}.
\begin{example}
    \label{ex:PartSmooth-is-RDRC}
    Define the category $\PartSmooth$ so that it has the natural numbers $n \in \Nat$ as objects
    and a morphism $f\colon n\to m$ is a partial function $f\colon \Real^n\to\Real^m$ that is smooth on an open subset of $\Real^n$.
    This category $\PartSmooth$ is an RDRC.
    For the Cartesian (restriction) structure, the product $n \times m$ in $\PartSmooth$ is defined as $n +_{\Nat} m$, and the terminal object $1$ in $\PartSmooth$ is $0_{\Nat}$.

    The reverse differential combinator $\RD$ is defined by
    \[
        \RD[f](x_1, \dots, x_n, y_1, \dots, y_m)
        = \Bigl( \sum_{j=1}^m y_j \frac{\partial f_j}{\partial x_i}(x_1, \dots, x_n) \Bigr)_{i=1}^{n}
        = \left( \Jacobian_x f \right)^{\transpose} y
    \]
    where $(\Jacobian_x f)^{\transpose}$ is the transpose of the Jacobian matrix $\Jacobian_x f$ of $f$ at $x$.
    Alternatively, a morphism $f\colon n\to m$ is defined as a continuous partial function such that the above $\RD[f]$ is defined on
    $\mathrm{Dom}(f)\times\Real^m$ and is continuous there, where $\mathrm{Dom}(f)$ is the domain of $f$
    (see the definition of \emph{continuously differentiable} functions in \cite[\S{}4.1]{AbadiPlotkin2019}).
\end{example}
\begin{example}
    \label{ex:CRDC-is-RDRC}
    Any \emph{Cartesian reverse differential category} (CRDC) is an RDRC.
    In particular, the wide subcategory $\Smooth\subseteq\PartSmooth$ of total smooth functions is a CRDC, hence an RDRC.
\end{example}

\section{Categorical Semantics}
\label{sec:denotational-semantics}
%\inred{{TODO1: describe the difference from CGP. We need not interpret boolean terms, so that we need not assume the existence of joins in the RDRC.}}
%\inred{{TODO2: we need the semantics to be ``good'': in the sense that the section 5 of CGP.}}
%\inred{{TODO3: we need to note that all the equations are ``Kleene's'': that is, the left and right hand sides are defined at the same time.}}
%\inred{{TODO4: we need to change some examples remarking about non-smoothness, according to the semantics and the new version of the App E.}}
Let $\catx$ be a small RDRC.
The aim of this section is to give interpretation $\itp{P}$ for commands $\Gamma \asemicolon \Delta \vdash P : A$.
We write $\prod_{i=1}^nX_i$ for the (restriction) product of the list of objects $X_1, \dots, X_n$ in $\catx$.

\begin{definition}[interpretation of types]
    Given a map $\itp{\blank}_{\BType} \colon \BType \to \obj(\catx)$,
    we extend the map $\itp{\blank}_{\BType}$ to a map $\itp{\blank} \colon \Type \to \obj(\catx)$ by defining
    $\itp{\beta} = \itp{\beta}_{\BType}$ for $beta \in \BType$ and
    $\itp{\typprod_{i=1}^n A_i} = \prod_{i=1}^n \itp{A_i}$.
    For a typing environment $\Gamma = x_1 : A_1, \dots, x_n : A_n$, we define $\itp{\Gamma} = \prod_{i=1}^n \itp{A_i}$.
\end{definition}

\begin{definition}[interpretation of function symbols]
    Let $\signatfunc$ be a reverse derivative closed signature of function symbols.
    A family $\{ \itp{\func} \in \catx(\itp{A}, \itp{B}) \}_{\func : A \to B \in \signatfunc}$ is an \emph{interpretation} of $\signatfunc$ in $\catx$ if
    $\RD[\itp{\func}] = \itp{\rd[\func]}$ holds for each $\func : A \to B \in \signatfunc$.
\end{definition}

\begin{definition}[interpretation of operation symbols]
    Let $\signatopr$ be a signature of operation symbols and
    $\promonad$ be a promonad on a monoidal category $\catc$.
    An interpretation of $\signatopr$ in $\catc$ is a family
    $\{ \itp{\opr} \in \promonad(\itp{A}, \itp{B})\}_{\opr : A \opto B \in \signatopr}$.
\end{definition}

We define models, in which our language without reverse handlers is interpreted.
\begin{definition}[model]
    A \emph{model} of arrow calculus with signature $\signatfunc$ and $\signatopr$ consists of
    the following data:
    \begin{itemize}
        \item a small RDRC $\catx$,
        \item an interpretation of base types $\itp{\blank}_{\BType} \colon \BType \to \catx$,
        \item an interpretation $\{\itp{\func} \colon \itp{A} \to \itp{B} \}_{\func : A \to B \in \signatfunc}$ of $\signatfunc$,
        \item a strong promonad $\promonad \colon \catx \profto \catx$,
        \item an interpretation $\{ \itp{\opr} \in \promonad(\itp{A}, \itp{B}) \}_{\opr : A \opto B \in \signatopr}$ of $\signatopr$.
    \end{itemize}
\end{definition}

Note that, since our syntax does not involve predicates (unlike \cite{Cruttwell+2021}), we do not need to assume the existence of \emph{joins} in the reverse differential restriction category $\catx$.

\subsection{A Construction of the Free Model from Signatures}
To interpret the syntax of our language (with reverse handlers), we need to take in particular free models.
Let $\catc$ be a small monoidal category and $\signatopr$ be a signature of operation symbols,
and suppose that an interpretation of base types $\itp{\blank}_{\BType} \colon \BType \to \obj(\catc)$ is given.
Here we give a construction of a strong promonad $\promonad_{\signatopr}$ on $\catc$ from $\signatopr$,
the \emph{free model}, where our denotational semantics work.
\begin{definition}[\cite{Sanada2024}]
	\label{def:promonad-from-signature}
    We define a map $\ATerm_{\signatopr} \colon \obj(\catc^{\opposite}) \times \obj(\catc) \to \obj(\Sets)$ as the smallest set satisfying the rules in Fig.~\ref{fig:formation-rules-ATerm}.
    \begin{figure}
        \footnotesize
        \begin{gather*}
            \infer{
                \apure(f) \in \ATerm_{\signatopr}(X, Y)
            }{
                f \in \catc(X, Y)
            }
            \qquad
            \infer{
                \opr \in \ATerm_{\signatopr}(\itp{A}, \itp{B})
            }{
                \opr : A \opto B \in \signatopr
            }
            \quad
            \infer{
                a \acomp b \in \ATerm_{\signatopr}(X, Z)
            }{
                \begin{aligned}
                    a & \in \ATerm_{\signatopr}(X, Y) \\
                    b & \in \ATerm_{\signatopr}(Y, Z)
                \end{aligned}
            }
            \qquad
            \infer{
                \astr_Z(a) \in \ATerm_{\signatopr}(X \otimes Z, Y \otimes Z).
            }{
                a \in \ATerm_{\signatopr}(X, Y)
            }
        \end{gather*}
        \caption{Definition of $\ATerm_{\signatopr}(X, Y)$}\label{fig:formation-rules-ATerm}
        \Description{Definition of the class $\ATerm_{\signatopr}(X, Y)$ of terms.}
    \end{figure}
    We define a congruence relation $(\sim_{X,Y})$ on $\ATerm_{\signatopr}(X, Y)$ as the smallest congruence relation generated by the axioms in Fig.~\ref{fig:axioms-congruence-relation}.
    \begin{figure}
        \footnotesize
        \begin{gather*}
            (a \acomp b) \acomp c \sim a \acomp (b \acomp c),
            \quad
            \apure(f \semicolon g) \sim \apure(f) \acomp \apure(g),
            \quad
            \apure(\idmor_X) \acomp a \sim a,
            \quad
            a \acomp \apure(\idmor_Y) \sim a,
            \\
            \astr(a) \acomp \apure(\idmor \times f) \sim \apure(\idmor \times f) \acomp \astr(a),
            \qquad
            \astr(a) \acomp \apure(\pi_1) \sim \apure(\pi_1) \acomp a,
            \\
            \astr(a) \acomp \apure(\assoc) \sim \apure(\assoc) \acomp \astr(\astr(a)),
            \\
            \astr(\apure(f)) \sim \apure(f \times \idmor),
            \qquad
            \astr(a \acomp b) \sim \astr(a) \acomp \astr(b).
        \end{gather*}
        \caption{Axioms of the congruence relation $\sim_{X,Y}$ on $\ATerm_{\signatopr}(X,Y)$}\label{fig:axioms-congruence-relation}
        \Description{Axioms of the congruence relation $\sim_{X,Y}$ on $\ATerm_{\signatopr}(X, Y)$.}
    \end{figure}
    A map $\promonad_{\signatopr} \colon \obj(\catc^{\opposite}) \times \obj(\catc) \to \obj(\Sets)$ is defined by
    $\promonad_{\signatopr}(X, Y) = \ATerm_{\signatopr}(X, Y) / {\sim_{X,Y}}$.
\end{definition}
\begin{lemma}[{\cite[Proposition~5.6]{Sanada2024}}]\label{lem:promonad-from-signature}
    The map $\promonad_{\signatopr} \obj(\catc^{\opposite}) \times \obj(\catc) \to \obj(\Sets)$ defined in Definition~\ref{def:promonad-from-signature} can be extended to a strong promonad $\promonad_{\signatopr} \colon \catc \profto \catc$, that is, a functor $\promonad_{\signatopr} \colon \catc^{\opposite} \times \catc \to \Sets$ with a unit, a multiplication and a strength.
\end{lemma}
%\begin{appendixproof}[Proof of Lemma~\ref{lem:promonad-from-signature}]
%    The unit $\eta \colon \idprof_{\catc} \to \promonad_{\signatopr}$ is defined by
%    $\eta_{X,Y}(f) = [\apure(f)]$ for $f \in \catc(X, Y)$.
%    The multiplication $\mu \colon \promonad_{\signatopr} \circ \promonad_{\signatopr} \to \promonad_{\signatopr}$ is defined by
%    $\mu_{X,Y}([a], [b]) = [a \acomp b]$ for $[a] \in \promonad_{\signatopr}(X, Y)$ and $[b] \in \promonad_{\signatopr}(Y, Z)$.
%    The strength $\strength_{Z} \colon \promonad_{\signatopr}(X, Y) \to \promonad_{\signatopr}(X \otimes Z, Y \otimes Z)$ is defined by
%    $\strength_{Z}([a]) = [\astr_Z(a)]$ for $[a] \in \promonad_{\signatopr}(X, Y)$.
%    It is straightforward to check that the above definitions satisfy the axioms of strong promonads.
%\end{appendixproof}

The above definition indicates that the strong promonad $\promonad_{\signatopr}$ is \emph{freely generated} from the signature $\signatopr$, by the unit, multiplication, strength, and pure computations coming from $\catc$.

\subsection{Reverse Algebras}
\label{subsec:reverse-algebras}
Let $\promonad \colon \catx \profto \catx$ be a strong promonad on $\catx$.
For each $X \in \catx$, we introduce a new $\promonad$-algebra structure on $\promonad(X,X)$, the set of computations from $X$ to $X$.
Recall from the remark after \ref{def:algebra} that
one sees the set $\promonad(X,X)$ as the set of ``endomorphic continuations'' at $X$, and
such an algebra structure describes how a computation $R \in \promonad(Y,X)$ is executed as a continuation transformation $\promonad(X,X)\to\promonad(Y,Y)$.
However, even the execution of \emph{pure} computations, that is, computations coming from $\catx$, is non-trivial. This is exactly where we need the \emph{reverse derivative} combinator of the RDRC $\catx$:
a pure computation $Y\to X$ is executed as a continuation transformation $\promonad(X,X)\to\promonad(Y,Y)$ defined by its \emph{reverse-mode automatic differentiation}.
\begin{definition}
    \label{def:reverse-algebra}
    We define a function $\ralg{\promonad} \colon \obj(\catx^{\opposite}) \to \obj(\Sets)$ by
    $\ralg{\promonad}(X) = \promonad(X, X)$.
    For a morphism $f \colon X \to Y$ in $\catx$,
    we define a function $\ralg{\promonad}(f) \colon \ralg{\promonad}(Y) \to \ralg{\promonad}(X)$ by
    \[
        \ralg{\promonad}(f)(k)
        =
        \eta(\diag_X \semicolon (f \times \idmor_X))
        \musemicolon
        \strength_{X}(k)
        \musemicolon
        \eta(\RD'[f])
        =
        \begin{tikzpicture}[baseline=(current bounding box.center), scale=0.8, transform shape]
            \pgfmathsetmacro{\ytop}{0.2}
            \pgfmathsetmacro{\ybot}{-0.2}
            \pgfmathsetmacro{\yd}{0.2}
            \pgfmathsetmacro{\xd}{0.3}
            \pgfmathsetmacro{\xw}{0.6}
            \pgfmathsetmacro{\xwDiag}{0.2}
            \pgfmathsetmacro{\xStart}{0}
            \pgfmathsetmacro{\xDiagL}{\xStart + \xd}
            \pgfmathsetmacro{\xDiagR}{\xDiagL + \xwDiag}
            \pgfmathsetmacro{\xFL}{\xDiagR + \xd}
            \pgfmathsetmacro{\xFR}{\xFL + \xw}
            \pgfmathsetmacro{\xKL}{\xFR + \xd}
            \pgfmathsetmacro{\xKR}{\xKL + \xw}
            \pgfmathsetmacro{\xRfL}{\xKR + \xd}
            \pgfmathsetmacro{\xRfR}{\xRfL + 1.6 * \xw}
            \pgfmathsetmacro{\xEnd}{\xRfR + \xd}
            % Start
            \draw (\xStart, 0) node[left] {$X$} -- (\xDiagL, 0);
            % Diagonal
            \draw (\xDiagL, 0) -- (\xDiagR, \ytop);
            \draw (\xDiagL, 0) -- (\xDiagR, \ybot);
            % f times id
            \draw (\xDiagR, \ytop) -- (\xFR, \ytop);
            \draw (\xDiagR, \ybot) -- (\xFR, \ybot);
            \draw[purecomp] (\xFL, \ytop + \yd) rectangle node {$f$} (\xFR, \ytop - \yd);
            % st(k)
            \draw (\xFR, \ytop) -- (\xKR, \ytop);
            \draw (\xFR, \ybot) -- (\xKR, \ybot);
            \draw[comp] (\xKL, \ytop + \yd) rectangle node {$k$} (\xKR, \ytop - \yd);
            % RD'[f]
            \draw (\xKR, \ytop) -- (\xRfR, \ytop);
            \draw (\xKR, \ybot) -- (\xRfR, \ybot);
            \draw (\xRfR, 0) -- (\xEnd, 0) node[right] {$X$};
            \draw[purecomp] (\xRfL, \ytop + \yd) rectangle node {$\RD'[f]$} (\xRfR, \ybot - \yd);
        \end{tikzpicture}
    \]
    where $\RD'[f] = \sym_{Y,X} \semicolon \RD[f]$ (Definition~\ref{def:RDRC}).
\end{definition}
\begin{lemma}\label{lem:ralg-A-is-functor}
    The map $\ralg{\promonad}$ is a functor $\catx^{\opposite} \to \Sets$.
\end{lemma}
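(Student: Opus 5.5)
We must check two equations: $\ralg{\promonad}(\idmor_X)(k) = k$ and $\ralg{\promonad}(f \semicolon g) = \ralg{\promonad}(f) \circ \ralg{\promonad}(g)$ for $f\colon X\to Y$, $g\colon Y\to Z$ (contravariance: first apply $g$'s action, then $f$'s). The plan is to push all pure structure into single pure morphisms using the promonad and strength laws, and then invoke the RDRC axioms on those pure morphisms. Throughout I will use that $\eta$ is a functor $\catx \to$ (Kleisli-like category of $\promonad$), so composites of $\eta(\cdot)$ fuse. I will also use that $\strength$ is natural in its arguments, so pure morphisms acting only on the passive wire commute past $\strength(k)$. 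Finally, $\strength$ is functorial in the extra wire, i.e.\ $\strength_{Z\times W}$ relates to $\strength_Z\strength_W$ via $\assoc$; these are exactly the string-diagram rewrites of Figure~\ref{fig:promonad-rewriting-contrast}.

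\textbf{Identity.} $\ralg{\promonad}(\idmor_X)(k) = \eta(\diag_X)\musemicolon\strength_X(k)\musemicolon\eta(\RD'[\idmor_X])$. By \axiomRD{3}-type axioms of \cite{Cruttwell+2021} we have $\RD[\idmor_X]=\pi_2$, so $\RD'[\idmor_X]=\pi_1$. Then $\strength_X(k)\musemicolon\eta(\pi_1) = \eta(\pi_1)\musemicolon k$ by the strength law (the axiom $\astr(a)\acomp\apure(\pi_1)\sim\apure(\pi_1)\acomp a$). Moreover $\diag_X\semicolon\pi_1=\idmor_X$, since the diagonal is total, so the whole expression is $k$ by the unit law.

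\textbf{Composition.} Expand $\ralg{\promonad}(f)(\ralg{\promonad}(g)(k))$. This is
\[\eta(\diag_X\semicolon(f\times\idmor))\musemicolon\strength_X\bigl(\eta(\diag_Y\semicolon(g\times\idmor))\musemicolon\strength_Y(k)\musemicolon\eta(\RD'[g])\bigr)\musemicolon\eta(\RD'[f]).\]
Use $\strength(a\musemicolon b)=\strength(a)\musemicolon\strength(b)$ and $\strength(\eta(h))=\eta(h\times\idmor)$. Use also the associativity law to rewrite $\strength_X(\strength_Y(k))$ as $\eta(\assoc^{-1})\musemicolon\strength_{Y\times X}(k)\musemicolon\eta(\assoc)$. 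The expression then becomes $\eta(u)\musemicolon\strength_{Y\times X}(k)\musemicolon\eta(v)$ for explicit pure morphisms $u\colon X\to Z\times(Y\times X)$ and $v\colon Z\times(Y\times X)\to X$. The other side is $\eta(\diag_X\semicolon((f\semicolon g)\times\idmor))\musemicolon\strength_X(k)\musemicolon\eta(\RD'[f\semicolon g])$. Now $u = \diag_X\semicolon((f\semicolon g)\times\idmor)$ followed by $\idmor_Z\times\langle \pi\text{-stuff}\rangle$, i.e.\ $u=\diag\semicolon((f\semicolon g)\times\langle f,\idmor\rangle)$ up to restriction. The extra component $\langle f,\idmor\rangle$ on the passive wire can be moved through $\strength(k)$ by naturality of strength in the extra object (extranaturality). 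It can thus be absorbed into $v$. It remains to show the pure equation $(\idmor_Z\times\langle f,\idmor\rangle)$-precomposed $v$ equals $\RD'[f\semicolon g]$. This is precisely the chain rule \axiomRD{5}, read with $\sym$'s inserted.

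The main obstacle is the restriction setting. Because $\times$ is only a restriction product, $\diag_X\semicolon(f\times f)\ne \diag\semicolon(\ldots)$ naively and duplicating $f$ introduces restriction idempotents $\restr{f}$. So I would do the pure manipulations carefully with only diagonals, symmetries and $\assoc$ (which behave as in a genuine Cartesian category). I would phrase \axiomRD{5} with $\Delta_X$ exactly as in Definition~\ref{def:RDRC}, and avoid projections except where they are composed after total maps. Where a duplicated $f$ appears (from both $\ralg{\promonad}(f)$'s forward copy and its use inside $\RD[f]$), I would check that both sides carry the same restriction $\restr{f}$. This works because $\RD[f]$ already has domain $\restr{f}\times\idmor$ by the RDRC axioms, so the equality holds on the nose rather than only up to $\leq$.
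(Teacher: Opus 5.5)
Your proposal is correct and takes essentially the same route as the paper. The identity law comes from \axiomRD{3} ($\RD'[\idmor_X]=\pi_1$) together with the projection law for strength, and the composition law comes from the chain rule \axiomRD{5} combined with the strength laws (functoriality, associativity, extranaturality in the passive wire); the only cosmetic difference is that the paper starts from $\ralg{\promonad}(f\semicolon g)(k)$ and expands $\RD'[f\semicolon g]$ first, while you start from the nested side. Your restriction worry is unnecessary: in a Cartesian restriction category pairing is strictly natural under precomposition, so duplication via $\diag$ is harmless and $u=\langle f\semicolon g,\langle f,\idmor_X\rangle\rangle$ holds on the nose. The remaining pure equation is then exactly \axiomRD{5} rearranged by structural maps, with no appeal to the axiom on $\restr{\RD[f]}$.
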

\begin{appendixproof}[Proof of Lemma~\ref{lem:ralg-A-is-functor}]
    For an identity morphism $\idmor_X \colon X \to X$,
    we have
    \[
    \begin{aligned}
        \ralg{\promonad}(\idmor_X)(k)
        & =
        \eta(\diag_X \semicolon (\idmor_X \times \idmor_X)) 
        \musemicolon
        \strength_{X}(k)
        \musemicolon
        \eta(\RD'[\idmor_X])
        \\
        & =
        \eta(\diag_X)
        \musemicolon
        \strength_{X}(k)
        \musemicolon
        \pi_1
        &
        \text{by \axiomRD{3}}
        \\
        & =
        \eta(\diag_X)
        \musemicolon
        \strength_{X}(k)
        \musemicolon
        \pi_1
        \\
        & =
        \eta(\diag_X)
        \musemicolon
        \pi_1
        \musemicolon
        k
        \\
        & =
        \eta(\diag_X)
        \musemicolon
        \pi_1
        \musemicolon
        k
        \\
        & =
        \idmor_X
        \musemicolon
        k
        \\
        & =
        k
    \end{aligned}
    \]
    for every $k \in \ralg{\promonad}(X)$.
    Hence, $\ralg{\promonad}(\idmor_X) = \idmor_{\ralg{\promonad}(X)}$.
    For morphisms $f \colon X \to Y$ and $g \colon Y \to Z$ in $\catx$,
    we have
    \[
    \begin{aligned}
        & \ralg{\promonad}(f \semicolon g)(k)
        \\
        & =
        \eta(\diag_X \semicolon ((f \semicolon g) \times \idmor_X))
        \musemicolon
        \strength_{X}(k)
        \musemicolon
        \eta(\RD'[f \semicolon g])
        \\
        & =
        \eta(\diag_X \semicolon (f \times \idmor_X) \semicolon (g \times \idmor_X))
        \musemicolon
        \strength_{X}(k)
        \musemicolon
        \eta((\idmor_Z \times (\diag_X \semicolon (f \times \idmor_X)) ) \semicolon (\RD'[g] \times \idmor_X) \semicolon \RD'[f])
        & \text{by \axiomRD{5}}
        \\
        & =
        \eta(\delta_X \semicolon (f \times \idmor_X))
        \musemicolon
        \strength_{X}(
            \eta(\diag_Y \semicolon (g \times \idmor_Y))
            \musemicolon
            \strength_{Y}(k)
        )
        \musemicolon
        \eta(\RD'[g])
        \musemicolon
        \eta(\RD'[f])
        \\
        & =
        \eta(\delta_X \semicolon (f \times \idmor_X))
        \musemicolon
        \strength_{X}(
            \ralg{\promonad}(g)(k)
        )
        \musemicolon
        \eta(\RD'[f])
        \\
        & =
        \ralg{\promonad}(f)(\ralg{\promonad}(g)(k))
    \end{aligned}
    \]
    for every $k \in \ralg{\promonad}(Z)$.
    Hence, $\ralg{\promonad}(f \semicolon g) = \ralg{\promonad}(g) \semicolon \ralg{\promonad}(f)$.
    Therefore, $\ralg{\promonad}$ is a functor $\catx^{\opposite} \to \Sets$.
\end{appendixproof}

We next describe how general computations in $\promonad(Y,X)$ are executed as continuation transformations $\promonad(X,X)\to\promonad(Y,Y)$ when $\promonad$ is freely generated in the above sense. This is the core of the semantics of reverse handlers.
Intuitively, we can determine how a computation is executed just by determining how the operations in $\signatopr$ execute, because every computation is generated from those operations.
\begin{definition}[{cf.~\cite[p28]{Sanada2024}}]
    \label{def:alpha-from-q-opr-Y}
    Let $\signatopr$ be a signature.
    Given a family $\{ q_{\opr} \colon \ralg{\promonad}(\itp{B} \times \blank) \to \ralg{\promonad}(\itp{A} \times \blank)\}_{\opr : A \opto B \in \signatopr}$ of natural transformations, we define a family
    $\alpha = \{ \alpha_{X,Y} \colon \ATerm_{\signatopr}(X,Y) \times \ralg{\promonad}(Y) \to \ralg{\promonad}(X) \}_{X,Y}$ of functions by
    \begin{align*}
        & \alpha(\apure(f), k)
        =
        \ralg{\promonad}(f)(k)
%        \eta(\diag_X \semicolon (f \times \idmor_X))
%        \musemicolon
%        \strength_X(k)
%        \musemicolon
%        \eta(\RD'[f]),
        & &
        \alpha(a \acomp b, k)
        =
        \alpha(a, \alpha(b, k)),
        \\
        & \alpha(\astr_Z(a \acomp b), k)
        =
        \alpha(\astr_Z(a) , \alpha(\astr_Z(b) , k)),
        & &
        \alpha(\astr_Z(f), k)
        =
        \alpha(f \times \idmor_Z , k),
        \\
        & \alpha(\astr_Z(\opr) , k)
        =
        q_{\opr,Z}(k),
        & &
        \alpha(\opr, k)
        =
        q_{\opr}(k),
        \\
        & \alpha(\astr_Z(\astr_{Z'}(f)) , k)
        =
        \alpha(\astr_{Z \otimes Z'}(f) , k).
    \end{align*}
\end{definition}
%\inred{check (maybe abrupt to some extent):}
It is straight forward to check the family $\alpha$ induces a natural transformation
of type $\promonad_{\signatopr} \odot \ralg{\promonad} \natto \ralg{\promonad}$,
and moreover, it becomes an algebra structure:
\begin{lemma}\label{lem:ralg-is-promonad-algebra}
    The natural transformation
    $\alpha \colon \promonad_{\signatopr}\odot\ralg{\promonad} \natto \ralg{\promonad}$
    determines an $\promonad_{\signatopr}$-algebra structure on $\ralg{\promonad}$.
\end{lemma}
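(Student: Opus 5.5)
The plan is to reduce Lemma~\ref{lem:ralg-is-promonad-algebra} to three facts about the family $\alpha$ of Definition~\ref{def:alpha-from-q-opr-Y}. First, $\alpha$ is well defined on arrow terms, with the clauses read as a structural recursion. Second, $\alpha$ respects the congruence $\sim$, so it descends to $\promonad_{\signatopr}(X,Y)\times\ralg{\promonad}(Y)\to\ralg{\promonad}(X)$. Third, the descended map is natural in $X$ and extranatural in $Y$ and satisfies the two algebra laws of Definition~\ref{def:algebra}. The algebra laws themselves are almost immediate once the first two facts are in place. The unit of $\promonad_{\signatopr}$ is $[\apure(\idmor_X)]$, and $\alpha(\apure(\idmor_X),k)=\ralg{\promonad}(\idmor_X)(k)=k$ by Lemma~\ref{lem:ralg-A-is-functor}. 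Multiplication is $[a]\musemicolon[b]=[a\acomp b]$, and $\alpha(a\acomp b,k)=\alpha(a,\alpha(b,k))$ holds by definition.

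For the first fact, I would push every $\astr$ inward using the clauses for $\astr_Z(a\acomp b)$, $\astr_Z(\apure(f))$ and $\astr_Z(\astr_{Z'}(-))$. This shows that every term has a well-defined $\alpha$-value, by recursion on term size with the number of nested $\astr$'s as a secondary measure. The base cases are then $\apure(f)$ and $\astr_Z(\opr)$, the latter being interpreted by the component $q_{\opr,Z}$.

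Naturality in $X$ and extranaturality in $Y$ reduce to the functoriality of $\ralg{\promonad}$ (Lemma~\ref{lem:ralg-A-is-functor}). The reason is that precomposition or postcomposition with $f$ is $\apure(f)\acomp-$ or $-\acomp\apure(f)$, which is already forced by the $\acomp$-clause once $\sim$ is respected.

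For the second fact, I would check each generating axiom in Fig.~\ref{fig:axioms-congruence-relation}; congruence closure is then automatic because $\alpha$ is defined compositionally. The axioms for associativity, identities, $\apure(f\semicolon g)$, $\astr(\apure(f))$ and $\astr(a\acomp b)$ follow directly from the defining clauses together with Lemma~\ref{lem:ralg-A-is-functor}.

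The remaining three axioms are the ones involving strength. For $\astr(a)\acomp\apure(\idmor\times f)\sim\apure(\idmor\times f)\acomp\astr(a)$ and the $\assoc$ axiom, I would prove an auxiliary statement by induction on $a$: for any $g\colon Z'\to Z$,
\[\alpha(\astr_Z(a),\ralg{\promonad}(\idmor\times g)(k))=\ralg{\promonad}(\idmor\times g)(\alpha(\astr_{Z'}(a),k)).\]
In the case $a=\opr$ this is exactly the naturality of $q_{\opr}$. In the case $a=\apure(f)$ it is functoriality of $\ralg{\promonad}$ applied to $(f\times\idmor)\semicolon(\idmor\times g)=(\idmor\times g)\semicolon(f\times\idmor)$. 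In the case of $\acomp$ it follows from the induction hypothesis. The $\assoc$ case is handled the same way, using naturality of $q_{\opr}$ at $\assoc$ together with the clause for $\astr_Z\astr_{Z'}$.

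The main obstacle is the axiom $\astr(a)\acomp\apure(\pi_1)\sim\apure(\pi_1)\acomp a$, because $\pi_1$ is not of the form $\idmor\times g$. I would first compute $\ralg{\promonad}(\pi_1)(k)$ using \axiomRD{3}, which gives $\RD'[\pi_1](x,(x',z))=(x,0)$. This should yield
\[\ralg{\promonad}(\pi_1)(k)=\strength_Z(k)\musemicolon\eta(\idmor\times 0).\]
I would then prove by induction on $a$ that
\[\alpha(\astr_Z(a),\strength_Z(k)\musemicolon\eta(\idmor\times0))=\strength_Z(\alpha(a,k))\musemicolon\eta(\idmor\times0).\]
Here $0$ can be viewed as $\terminal\semicolon 0$, so it factors through $1$. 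This should let me use naturality of $q_{\opr}$ along $\terminal_Z\colon Z\to 1$ (with $\itp{B}\times1\cong\itp{B}$), and use \axiomRD{1}/\axiomRD{3} in the pure case. If this induction does not close directly, the fallback is to strengthen the hypothesis to arbitrary $Z$-components of the form $\eta(\idmor\times h)$ with $h$ factoring through $1$.
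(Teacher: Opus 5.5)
Your treatment of the two algebra laws is the paper's proof. The paper checks $\alpha(\eta_X,k)=k$ via $\RD'[\idmor_X]=\pi_1$ (\axiomRD{3}) and the strength law for $\pi_1$, which is the identity case of Lemma~\ref{lem:ralg-A-is-functor} that you cite. It reads $\alpha(a\musemicolon b,k)=\alpha(a,\alpha(b,k))$ directly off the defining clause. The paper goes no further. That $\alpha$ descends to $\ATerm_{\signatopr}/{\sim}$ and is (extra)natural is called straightforward in the sentence before the lemma, and the statement presupposes it. So your first two facts are extra work beyond the paper. Your organising idea is the right one: naturality of $\alpha(\astr_{-}(a),-)$ in the strength parameter, resting on naturality of $q_{\opr}$. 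However, two steps fail as written.

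First, congruence closure is automatic only for $\acomp$-contexts. The value $\alpha(\astr_W(t),-)$ is computed by pushing $\astr_W$ into $t$, not from $\alpha(t,-)$. So each generating axiom must also be checked under one outer $\astr_W$; one suffices thanks to the $\astr\astr$ clause. For the $\pi_1$ axiom this means $\alpha$ must agree on $\astr_W(\astr_Z(a))\acomp\apure(\pi_1\times\idmor_W)$ and on $\apure(\pi_1\times\idmor_W)\acomp\astr_W(a)$.

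Second, your dedicated induction for the $\pi_1$ axiom stalls at $a=\astr_W(a')$, or, if you induct on normal forms, at $a=\astr_W(\opr)$. After collapsing $\astr_Z\astr_W$ into $\astr_{W\times Z}$, the argument becomes (up to associativity) $\ralg{\promonad}(\idmor\times(\idmor_W\times\terminal_Z))(k)$, where $k$ also acts on the $W$-wire. That is not of the form your hypothesis covers. Your fallback, components factoring through $1$, does not include $\idmor_W\times\terminal_Z$ either.

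Both holes are closed by your first auxiliary statement, once its indices are swapped so that it typechecks. For $g\colon Z'\to Z$ it should read $\alpha(\astr_{Z'}(a),\ralg{\promonad}(\idmor\times g)(k))=\ralg{\promonad}(\idmor\times g)(\alpha(\astr_{Z}(a),k))$. Prove it by induction on $a$ for all $g$ simultaneously; the $\astr_W(a')$ case uses $\idmor_W\times g$.

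Then write $\pi_1=(\idmor\times\terminal_Z)\semicolon\rho$ with $\rho\colon Y\times 1\cong Y$. The $\pi_1$ axiom follows from the instance $g=\terminal_Z$ together with the $Z=1$ case relating $\astr_1(a)$ to $a$. That $Z=1$ case is where $q_{\opr}$ must be read as $q_{\opr,1}$ transported along $\rho$. The $\astr_W$-wrapped axioms follow in the same way.
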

\begin{appendixproof}[Proof of Lemma~\ref{lem:ralg-is-promonad-algebra}]
    We have
    \[
    \begin{aligned}
        \alpha(\eta_X, k)
        & =
        \eta(\diag_X \semicolon (\idmor_X \times \idmor_X))
        \musemicolon
        \strength_X(k)
        \musemicolon
        \eta(\RD'[\idmor_X])
        \\
        & =
        \eta(\diag_X)
        \musemicolon
        \strength_X(k)
        \musemicolon
        \eta(\pi_1)
        \\
        & =
        \eta(\diag_X ; \pi_1)
        \musemicolon
        \strength_X(k)
        \\
        & =
        k
    \end{aligned}
    \]
    and
    $\alpha((a \musemicolon b) , k)
    =
    \alpha(a , \alpha(b , k))$.
    This completes the proof.
\end{appendixproof}
In fact, we can replace the presheaf $\ralg{\promonad}$ with an arbitrary presheaf $G$. In particular, if we take $G=\promonad(-,D)$ for some $D\in\catx$, then we recover the construction of an algebra given in \cite[p28]{Sanada2024}.

This construction of $\alpha$ illustrates how continuation transformations for arbitrary computations are determined
from a smaller data $\{q_{\opr,Y}\colon \ralg{\promonad}(\itp{B} \times Y) \to \ralg{\promonad}(\itp{A} \times Y)\}_{\opr : A \opto B \in \signatopr,Y\in\catc}$,
which does not merely describe how basic operations are executed as continuation transformations; it is additionally parameterized by objects $Y \in \catc$.

Finally, to show that the programmer can obtain such an algebra structure via the above construction, we explain how our reverse handlers (see $\tyhandler$ in Figure~\ref{fig:typing-commands-handlers}) determine such parameterized continuation transformations $\{q_{\opr,Y}\}_{\opr : A \opto B \in \signatopr,\,Y\in\catc}$. This explains our understanding of backpropagation and how it is \emph{programmed} via reverse handlers.
\begin{definition}
    \label{def:q-opr-Y-from-reverse-handler}
    For families
    $\{ q^{\fwd}_{\opr} \in \promonad(\itp{A}, \itp{B} \times X_{\opr}) \}_{\opr : A \opto B \in \signatopr}$ and
    $\{ q^{\bwd}_{\opr} \in \promonad(\itp{B} \times X_{\opr}, \itp{A}) \}_{\opr : A \opto B \in \signatopr}$,
    we define a family
    $\{ q_{\opr} \colon \ralg{\promonad}(\itp{B} \times \blank) \to \ralg{\promonad}(\itp{A} \times \blank)\}_{\opr : A \opto B \in \signatopr}$
    of natural transformations by
    \[
    \begin{aligned}
        q_{\opr,Y}(k)
        & =
        \astr_{Y}(q^{\fwd}_{\opr})
        \musemicolon
        \eta(\idmor_{\itp{B}} \times \sym_{X_{\opr}, Y})
        \musemicolon
        \strength_{X_{\opr}}(k)
        \musemicolon
        \eta(\idmor_{\itp{B}} \times \sym_{Y, X_{\opr}})
        \musemicolon
        \astr_{Y}(q^{\bwd}_{\opr})
        \\
        & =
        \begin{tikzpicture}[baseline=(current bounding box.center), scale=0.8, transform shape]
            \pgfmathsetmacro{\ytop}{0.4}
            \pgfmathsetmacro{\ymid}{0}
            \pgfmathsetmacro{\ybot}{-0.4}
            \pgfmathsetmacro{\yd}{0.2}
            \pgfmathsetmacro{\xd}{0.3}
            \pgfmathsetmacro{\xw}{1}
            \pgfmathsetmacro{\xwDiag}{0.2}
            \pgfmathsetmacro{\xwSym}{0.2}
            \pgfmathsetmacro{\xStart}{0}
            \pgfmathsetmacro{\xQfL}{\xStart + \xd}
            \pgfmathsetmacro{\xQfR}{\xQfL + \xw}
            \pgfmathsetmacro{\xSymlL}{\xQfR + \xd}
            \pgfmathsetmacro{\xSymlR}{\xSymlL + \xwSym}
            \pgfmathsetmacro{\xStKL}{\xSymlR + \xd}
            \pgfmathsetmacro{\xStKR}{\xStKL + \xw}
            \pgfmathsetmacro{\xSymrL}{\xStKR + \xd}
            \pgfmathsetmacro{\xSymrR}{\xSymrL + \xwSym}
            \pgfmathsetmacro{\xQbL}{\xSymrR + \xd}
            \pgfmathsetmacro{\xQbR}{\xQbL + \xw}
            \pgfmathsetmacro{\xEnd}{\xQbR + \xd}
            % q^fwd
            \draw (\xStart, \ytop) node[left] {$\itp{A}$} -- (\xQfR, \ytop);
            \draw (\xStart, \ybot) node[left] {$Y$}       -- (\xQfR, \ybot);
            \draw[comp] (\xQfL, \ytop + \yd) rectangle node {$q^{\fwd}_{\opr}$} (\xQfR, \ymid - \yd);
            % id x sym
            \draw (\xQfR, \ytop) -- (\xSymlR, \ytop);
            \draw (\xQfR, \ymid) -- (\xSymlL, \ymid) -- (\xSymlR, \ybot);
            \draw (\xQfR, \ybot) -- (\xSymlL, \ybot) -- (\xSymlR, \ymid);
            % st(k)
            \draw (\xSymlR, \ytop) -- (\xStKR, \ytop);
            \draw (\xSymlR, \ymid) -- (\xStKR, \ymid);
            \draw (\xSymlR, \ybot) -- (\xStKR, \ybot);
            \draw[comp] (\xStKL, \ytop + \yd) rectangle node {$k$} (\xStKR, \ymid - \yd);
            % id x sym
            \draw (\xStKR, \ytop) -- (\xSymrR, \ytop);
            \draw (\xStKR, \ymid) -- (\xSymrL, \ymid) -- (\xSymrR, \ybot);
            \draw (\xStKR, \ybot) -- (\xSymrL, \ybot) -- (\xSymrR, \ymid);
            % q^bwd
            \draw (\xSymrR, \ytop) -- (\xEnd, \ytop) node[right] {$\itp{A}$};
            \draw (\xSymrR, \ymid) -- (\xQbR, \ymid);
            \draw (\xSymrR, \ybot) -- (\xEnd, \ybot) node[right] {$Y$};
            \draw[comp] (\xQbL, \ytop + \yd) rectangle node {$q^{\bwd}_{\opr}$} (\xQbR, \ymid - \yd);
        \end{tikzpicture}.
    \end{aligned}
    \]
\end{definition}

\subsection{Interpretation}
Let $(\catx, \promonad_{\signatopr}, \itp{\blank})$ be a model of arrow calculus with reverse handlers.
For a well-typed term $\Gamma, \Delta \vdash M : A$ and a well-typed command $\Gamma \asemicolon \Delta \vdash P : A$, we define their interpretation $\itp{M}^{\Gamma}_{\Delta} \colon \catx(1, \itp{\Gamma}) \to \catx(\itp{\Delta}, \itp{A})$ and $\itp{P} \colon \catx(1, \itp{\Gamma}) \to \promonad_{\signatopr}(\itp{\Delta}, \itp{A})$.

\begin{definition}
    For well-typed terms $\Gamma , \Delta \vdash M : A$, the interpretation $\itp{M}^{\Gamma}_{\Delta} \colon \catx(1, \itp{\Gamma}) \to \catx(\itp{\Delta}, \itp{A})$ is defined recursively as Figure~\ref{fig:interpretation-of-terms}.
    \begin{figure}[t]
        \footnotesize
        \begin{align*}
            \itp{\Gamma , \Delta \vdash x : A}^{\Gamma}_{\Delta}(c)
            & =
            \terminal_{\itp{\Delta}} \semicolon c \semicolon \pi_{x}
            \quad
            \text{if $x : A \in \Gamma$},
            \qquad
            \itp{\Gamma , \Delta \vdash y : B}^{\Gamma}_{\Delta}(c)
            =
            \pi_{y}
            \quad
            \text{if $y : B \in \Delta$}
            \\
            \itp{\Gamma, \Delta \vdash \const : \beta}^{\Gamma}_{\Delta}(c)
            & =
            \terminal_{\itp{\Delta}} \semicolon \itp{\const}
            \\
            \itp{\Gamma , \Delta \vdash \func(M) : B}^{\Gamma}_{\Delta}(c)
            & =
            \itp{M}^{\Gamma}_{\Delta}(c) \semicolon \itp{\func}
            \\
            \itp{\Gamma , \Delta \vdash \tlet \; x \tbind M \; \tin \; N : B}^{\Gamma}_{\Delta}(c)
            & =
            \langle\itp{M}^{\Gamma}_{\Delta}(c), \idmor_{\itp{\Delta}} \rangle \semicolon \itp{N}^{\Gamma}_{\Delta, x : A}(c)
            \\
            \itp{\Gamma , \Delta \vdash \tuple{M_i}_{i=1}^n : \typprod_{i=1}^n A_i}^{\Gamma}_{\Delta}(c)
            & =
            \langle \itp{M_i}^{\Gamma}_{\Delta}(c) \rangle_{i=1}^n
            \\
            \itp{\Gamma , \Delta \vdash \tproj_i(M) : A_i}^{\Gamma}_{\Delta}(c)
            & =
            \itp{M}^{\Gamma}_{\Delta}(c) \semicolon \pi_i
            \\
            \itp{\Gamma , \Delta \vdash M_1 + M_2 : A}^{\Gamma}_{\Delta}(c)
            & = \itp{M_1}^{\Gamma}_{\Delta}(c) + \itp{M_2}^{\Gamma}_{\Delta}(c)
            \\
            \itp{\Gamma, \Delta \vdash M.\trd(x.N)(L) : A}^{\Gamma}_{\Delta}(c)
            & =
            \langle \idmor_{\itp{\Delta}}, \itp{L}^{\Gamma}_{\Delta}(c), \itp{M}^{\Gamma}_{\Delta}(c)\rangle \semicolon \RD'[\itp{N}^{\Gamma}_{\Delta, x : A}(c)] \semicolon \pi_{x}
        \end{align*}
        \caption{Recursive definition of the interpretation of terms}
        \Description{Recursive equations defining the interpretation of variables, constants, function application, let, tuples, projections, addition, and reverse derivative terms.}
        \label{fig:interpretation-of-terms}
    \end{figure}
\end{definition}
The last is the interpretation of $\trd$ using reverse derivative operation, which is defined so that realize the intuition
explained after Definition~\ref{def:syntax:terms-commands}.
\begin{definition}
    For a well-typed command $\Gamma \asemicolon \Delta \vdash R : A$ and a well-typed handler $\revhandlerjudgment H : C$ where $H = \{ x \mapsto P \} \cup \{ (x \mapsto Q^{\fwd}_{\opr}, y,z \mapsto Q^{\bwd}_{\opr})_{\opr : A \opto B \in \signatopr} \}$,
    we define their interpretation
    $\itp{R} \colon \catx(1, \itp{\Gamma}) \to \promonad_{\signatopr}(\itp{\Delta}, \itp{A})$
    and
    $\itp{H} \colon \promonad_{\signatopr}(-, \itp{C}) \to \ralg{\promonad_{\signatopr}}$
    by the following recursion.
    \begin{align*}
        & \itp{\Gamma \asemicolon \Delta \vdash \pret{M} : A}(c)
        =
        \eta(\itp{M}^{\Gamma}_{\Delta}(c))
        =
        \begin{tikzpicture}[baseline=(current bounding box.center), scale=0.8, transform shape]
            \pgfmathsetmacro{\yd}{0.3}
            \pgfmathsetmacro{\xd}{0.2}
            \pgfmathsetmacro{\xStart}{0}
            \pgfmathsetmacro{\xML}{\xStart + \xd}
            \pgfmathsetmacro{\xMR}{\xML + 1.4}
            \pgfmathsetmacro{\xEnd}{\xMR + \xd}
            \draw (\xStart, 0) node[left] {$\itp{\Delta}$} -- (\xEnd, 0) node[right] {$\itp{A}$};
            \draw[purecomp] (\xML, \yd) rectangle node {$\itp{M}^{\Gamma}_{\Delta}(c)$} (\xMR, -\yd);
        \end{tikzpicture}
        \\
        & \itp{\Gamma \asemicolon \Delta \vdash \opr(M) : B}(c)
        =
        \eta(\itp{M}^{\Gamma}_{\Delta}(c)) \musemicolon \opr
        =
        \begin{tikzpicture}[baseline=(current bounding box.center), scale=0.8, transform shape]
            \pgfmathsetmacro{\yd}{0.3}
            \pgfmathsetmacro{\xd}{0.2}
            \pgfmathsetmacro{\xStart}{0}
            \pgfmathsetmacro{\xML}{\xStart + \xd}
            \pgfmathsetmacro{\xMR}{\xML + 1.4}
            \pgfmathsetmacro{\xOpL}{\xMR + \xd}
            \pgfmathsetmacro{\xOpR}{\xOpL + 0.6}
            \pgfmathsetmacro{\xEnd}{\xOpR + \xd}
            \draw (\xStart, 0) node[left] {$\itp{\Delta}$} -- (\xEnd, 0) node[right] {$\itp{B}$};
            \draw[purecomp] (\xML, \yd) rectangle node {$\itp{M}^{\Gamma}_{\Delta}(c)$} (\xMR, -\yd);
            \draw[comp] (\xOpL, \yd) rectangle node {$\opr$} (\xOpR, -\yd);
        \end{tikzpicture}
        \\
        & \itp{\Gamma \asemicolon \Delta \vdash \plet \; x \pbind P \; \pin \; Q}(c)
        =
        \eta(\diag_{\itp{\Delta}}) \musemicolon \strength_{\itp{\Delta}}(\itp{P}(c)) \musemicolon \itp{Q}(c)
        =
        \begin{tikzpicture}[baseline=(current bounding box.center), scale=0.8, transform shape]
            \pgfmathsetmacro{\yd}{0.3}
            \pgfmathsetmacro{\xd}{0.2}
            \pgfmathsetmacro{\xStart}{0}
            \pgfmathsetmacro{\xDiagL}{\xStart + \xd}
            \pgfmathsetmacro{\xDiagR}{\xDiagL + 0.2}
            \pgfmathsetmacro{\xPL}{\xDiagR + \xd}
            \pgfmathsetmacro{\xPR}{\xPL + 1.1}
            \pgfmathsetmacro{\xQL}{\xPR + \xd}
            \pgfmathsetmacro{\xQR}{\xQL + 1.1}
            \pgfmathsetmacro{\xEnd}{\xQR + \xd}
            % Diagonal
            \draw (\xStart, 0) node[left] {$\itp{\Delta}$} -- (\xDiagL, 0);
            \draw (\xDiagL, 0) -- (\xDiagR,  \yd);
            \draw (\xDiagL, 0) -- (\xDiagR, -\yd);
            % st(P)
            \draw (\xDiagR,  \yd) -- (\xPR,  \yd);
            \draw (\xDiagR, -\yd) -- (\xPR, -\yd);
            \draw[comp] (\xPL, \yd + 0.3) rectangle node {$\itp{P}(c)$} (\xPR, \yd - 0.3);
            % Q
            \draw (\xPR,  \yd) -- (\xQR,  \yd);
            \draw (\xPR, -\yd) -- (\xQR, -\yd);
            \draw (\xQR, 0) -- (\xEnd, 0) node[right] {$\itp{A}$};
            \draw[comp] (\xQL, \yd + 0.2) rectangle node {$\itp{Q}(c)$} (\xQR, -\yd - 0.2);
        \end{tikzpicture}
        \\
        & \itp{\Gamma \asemicolon \Delta \vdash \prevhandle(M)\tuple{x_i}.R'\; \pwith \; H}(c)
        =
        \eta(\itp{M}^{\Gamma}_{\Delta}(c)) \musemicolon \itp{H}_{\itp{\Delta'}}(\itp{R'}(c))
    \end{align*}
    
    Let
    $q^{\fwd}_{\opr} = \itp{Q^{\fwd}_{\opr}} \in \promonad_{\signatopr}(\itp{A}, \itp{B} \times \itp{D_{\opr}})$ and
    $q^{\bwd}_{\opr} = \itp{Q^{\bwd}_{\opr}} \in \promonad_{\signatopr}(\itp{B} \times \itp{D_{\opr}}, \itp{A})$.
    By the constructions~\ref{def:alpha-from-q-opr-Y}, \ref{lem:ralg-is-promonad-algebra}, and \ref{def:q-opr-Y-from-reverse-handler},
    we have an $\promonad_{\signatopr}$-algebra structure $\alpha$ on $\ralg{\promonad_{\signatopr}}$.
    The interpretation $\itp{H}$ is given as the homomorphism $(\promonad_{\signatopr}(-,\itp{C}), \mu) \to (\ralg{\promonad_{\signatopr}}, \alpha)$ induced by the freeness of the domain (Lemma~\ref{lem:free-algebra}), applied to
    $Y=\itp{C}$, $(G,\alpha)=(\ralg{\promonad_{\signatopr}},\alpha))$, and $p = \itp{P} \in \promonad_{\signatopr}(\itp{C}, \itp{C})$.
%    \[
%    \begin{tikzcd}[row sep=small]
%        {\catx(X, \itp{C})}
%        \ar[r, "\eta"]
%        \ar[rd, "\phi"']
%        &
%        {\promonad_{\signatopr}(X, \itp{C})}
%        \ar[d, dashed, "\itp{H}_X"]
%        \\
%        &
%        {\ralg{\promonad_{\signatopr}}(X)}
%    \end{tikzcd}
%    \quad
%    \begin{tikzcd}[row sep=small]
%        {\promonad_{\signatopr}(X,Y) \times \promonad_{\signatopr}(Y, \itp{C})}
%        \ar[d, dashed, "{\idmor \times \itp{H}_Y}"']
%        \ar[r, "\mu"]
%        &
%        {\promonad_{\signatopr}(X, \itp{C})}
%        \ar[d, dashed, "\itp{H}_X"]
%        \\
%        {\promonad_{\signatopr}(X,Y) \times \ralg{\promonad_{\signatopr}}(Y)}
%        \ar[r, "\alpha"]
%        &
%        {\ralg{\promonad_{\signatopr}}(X)}
%    \end{tikzcd}
%    \]
%    where $\phi = \lambda f . \eta(\diag_X \semicolon (f \times \idmor_X)) \musemicolon \strength_X(p) \musemicolon \eta(\RD'[f])$.
%    where $\phi = \lambda f . \ralg{\promonad_{\signatopr}}(f)(p)$.
%    \inred{Rephrasing the \ref{lem:free-algebra} via the Yoneda lemma, we might omit those diagrams.}
\end{definition}
In the terminology of \S\ref{subsec:promonads-and-algebras} and \S\ref{subsec:reverse-algebras}, where effects are understood as continuation transformations, the interpretation of the last command involving $\prevhandle$ can be described as follows:
\begin{itemize}
    \item %
        first, the computation $\itp{R'}(c)$ is executed as a continuation transformation
        $\promonad_{\signatopr}(\itp{C},\itp{C})\to\promonad_{\signatopr}(\itp{\Delta'},\itp{\Delta'})$,
        determined by $\itp{Q^{\fwd}_{\opr}}$ and $\itp{Q^{\bwd}_{\opr}}$ via Definitions~\ref{def:q-opr-Y-from-reverse-handler} and \ref{def:alpha-from-q-opr-Y};
    \item %
        next, this is followed by the left action
        $\promonad_{\signatopr}(\itp{\Delta'},\itp{\Delta'})\to\promonad_{\signatopr}(\itp{\Delta},\itp{\Delta'})$
        by $\itp{M}^{\Gamma}_{\Delta}(c)\colon\itp{\Delta}$ $\to\itp{\Delta'}$;
    \item %
        finally, the resulting continuation transformation is evaluated at
        $p = \itp{P}\in \promonad_{\signatopr}(\itp{C},\itp{C})$.
\end{itemize}

\subsection{Examples of Interpretation}
We show interpretation of some examples in \S\ref{sec:examples-neural-network}.
In Example~\ref{ex:mlp-denotation}, \ref{ex:resnet-denotation} and \ref{ex:autoencoder-denotation} below, we use $\Smooth$ as an RDRC $\catx$ (see Example~\ref{ex:CRDC-is-RDRC}).
In Example~\ref{ex:U-net-denotation}, we use $\PartSmooth$ (see Example~\ref{ex:PartSmooth-is-RDRC}), mainly because the map $\max(x,y)$ is not differentiable at the line $y = x$.

\begin{example}[MLP, continued from Example~\ref{ex:mlp-backpropagation}]
    \label{ex:mlp-denotation}
    Our framework can be understood as providing a rigorous language and a clear categorical semantics for the architectural diagrams that appear in the neural network literature.
    The interpretation $\itp{R_{\MLP}} \in \promonad_{\signatopr}(\Real^{n_{\inp}}, \Real^{n_{\out}})$ is as follows:
    \[
    \begin{tikzpicture}[scale=0.8, transform shape]
        \pgfmathsetmacro{\yw}{0.3}
        \pgfmathsetmacro{\xd}{0.3}
        \pgfmathsetmacro{\xw}{2.9}
        \pgfmathsetmacro{\xwDiag}{0.2}
        \pgfmathsetmacro{\xwSym}{0.2}
        \pgfmathsetmacro{\xStart}{0}
        \pgfmathsetmacro{\xLinearAL}{\xStart + \xd}
        \pgfmathsetmacro{\xLinearAR}{\xLinearAL + \xw}
        \pgfmathsetmacro{\xSwishL}{\xLinearAR + \xd}
        \pgfmathsetmacro{\xSwishR}{\xSwishL + 0.8 * \xw}
        \pgfmathsetmacro{\xLinearBL}{\xSwishR + \xd}
        \pgfmathsetmacro{\xLinearBR}{\xLinearBL + \xw}
        \pgfmathsetmacro{\xEnd}{\xLinearBR + \xd}
        % Linear A
        \draw (\xStart, 0) node[left] {$\Real^{n_{\inp}}$} -- (\xLinearAR, 0);
        \draw[comp] (\xLinearAL, \yw) rectangle node {$\opFC{\loctyp{\ell_0}{ (n_{\inp}, n_{\hid})}}$} (\xLinearAR, -\yw);
        % Swish
        \draw (\xLinearAR, 0) -- (\xSwishR, 0);
        \draw[purecomp] (\xSwishL, \yw) rectangle node {$\itp{\fnSwish_{n_{\hid}}}$} (\xSwishR, -\yw);
        % Linear B
        \draw (\xSwishR, 0) -- (\xEnd, 0) node[right] {$\Real^{n_{\out}}$};
        \draw[comp] (\xLinearBL, \yw) rectangle node {$\opFC{\loctyp{\ell_1}{( n_{\hid}, n_{\out})}}$} (\xLinearBR, -\yw);
    \end{tikzpicture}.
    \]
    The above string diagram coincides with the typical architectural diagram of MLPs.
    The interpretation $\itp{\prevhandle(\fromvec{v}) \tuple{x_{\inp}}. R_{\MLP} \; \pwith \; H_{\MLP}} \in \promonad_{\signatopr}(1, \Real^{n_{\inp}})$ is as follows:
    \[
    \begin{tikzpicture}[scale=0.8, transform shape]
        \pgfmathsetmacro{\yw}{0.5}
        \pgfmathsetmacro{\yd}{0.4}
        \pgfmathsetmacro{\xd}{0.2}
        \pgfmathsetmacro{\xw}{1.6}
        \pgfmathsetmacro{\xwsmall}{0.4}
        \pgfmathsetmacro{\xwDiag}{0.15}
        \pgfmathsetmacro{\xwSym}{0.2}
        \pgfmathsetmacro{\xStart}{0}
        \pgfmathsetmacro{\xVL}{\xStart}
        \pgfmathsetmacro{\xVR}{\xVL + \xwsmall}
        \pgfmathsetmacro{\xQfAL}{\xVR + \xd}
        \pgfmathsetmacro{\xQfAR}{\xQfAL + \xw}
        \pgfmathsetmacro{\xDiagL}{\xQfAR + \xd}
        \pgfmathsetmacro{\xDiagR}{\xDiagL + \xwDiag}
        \pgfmathsetmacro{\xSwishL}{\xDiagR + \xd}
        \pgfmathsetmacro{\xSwishR}{\xSwishL + 0.8 * \xw}
        \pgfmathsetmacro{\xQfBL}{\xSwishR + \xd}
        \pgfmathsetmacro{\xQfBR}{\xQfBL + \xw}
        \pgfmathsetmacro{\xPL}{\xQfBR + \xd}
        \pgfmathsetmacro{\xPR}{\xPL + 0.5 * \xw}
        \pgfmathsetmacro{\xQbBL}{\xPR + \xd}
        \pgfmathsetmacro{\xQbBR}{\xQbBL + \xw}
        \pgfmathsetmacro{\xRSwishL}{\xQbBR + \xd}
        \pgfmathsetmacro{\xRSwishR}{\xRSwishL + 1.2 * \xw}
        \pgfmathsetmacro{\xQbAL}{\xRSwishR + \xd}
        \pgfmathsetmacro{\xQbAR}{\xQbAL + \xw}
        \pgfmathsetmacro{\xEnd}{\xQbAR + \xd}
        % v
        \draw[purecomp] (\xVL, \yw) rectangle node {$v$} (\xVR, -\yw);
        % Q^fwd A
        \draw (\xVR, 0) -- (\xQfAR, 0);
        \draw[comp] (\xQfAL, \yw) rectangle node {$\itp{Q^{\fwd}_{\opFC{\ell_0}}}$} (\xQfAR, -\yw);
        % diag x id
        \draw (\xQfAR,   \yd) -- (\xDiagL,  \yd) -- (\xDiagR, 2 * \yd);
        \draw (\xDiagL,  \yd) -- (\xDiagR,  0);
        \draw (\xQfAR,  -\yd) -- (\xDiagR, -\yd);
        % Swish x id
        \draw (\xDiagR, 2 * \yd) -- (\xSwishR, 2 * \yd);
        \draw (\xDiagR, 0) -- (\xSwishR, 0);
        \draw (\xDiagR, -\yd) -- (\xSwishR, -\yd);
        \draw[purecomp] (\xSwishL, 2 * \yd + 0.6 * \yw) rectangle node {$\itp{\fnSwish}$} (\xSwishR, 2 * \yd - 0.6 * \yw);
        % Q^fwd B
        \draw (\xSwishR, 2 * \yd) -- (\xQfBR, 2 * \yd);
        \draw (\xSwishR,       0) -- (\xQfBR,       0);
        \draw (\xSwishR,    -\yd) -- (\xQfBR,    -\yd);
        \draw[comp] (\xQfBL, 2 * \yd + \yw) rectangle node {$\itp{Q^{\fwd}_{\opFC{\ell_1}}}$} (\xQfBR, 2 * \yd - \yw);
        % P
        \draw (\xQfBR, 2 * \yd + 0.6 * \yd) -- (\xPR, 2 * \yd + 0.6 * \yd);
        \draw (\xQfBR, 2 * \yd - 0.6 * \yd) -- (\xPR, 2 * \yd - 0.6 * \yd);
        \draw (\xQfBR,       0) -- (\xPR,       0);
        \draw (\xQfBR,    -\yd) -- (\xPR,    -\yd);
        \draw[comp] (\xPL, 2 * \yd + 0.6 * \yd + 0.6 * \yw) rectangle node {$\itp{P}$} (\xPR, 2 * \yd + 0.6 * \yd - 0.6 * \yw);
        % Q^bwd B
        \draw (\xPR, 2 * \yd + 0.6 * \yd) -- (\xQbBR, 2 * \yd + 0.6 * \yd);
        \draw (\xPR, 2 * \yd - 0.6 * \yd) -- (\xQbBR, 2 * \yd - 0.6 * \yd);
        \draw (\xPR,       0) -- (\xQbBR,       0);
        \draw (\xPR,    -\yd) -- (\xQbBR,    -\yd);
        \draw[comp] (\xQbBL, 2 * \yd + \yw) rectangle node {$\itp{Q^{\bwd}_{\opFC{\ell_1}}}$} (\xQbBR, 2 * \yd - \yw);
        % R[Swish]
        \draw (\xQbBR, 2 * \yd) -- (\xRSwishR, 2 * \yd);
        \draw (\xQbBR, 0) -- (\xRSwishR, 0);
        \draw (\xQbBR, -\yd) -- (\xRSwishR, -\yd);
        \draw[purecomp] (\xRSwishL, 2 * \yd + 0.5 * \yw) rectangle node {$\RD'[\itp{\fnSwish}]$} (\xRSwishR, 0 - 0.5 * \yw);
        % Q^bwd A
        \draw (\xRSwishR,  \yd) -- (\xQbAR,  \yd);
        \draw (\xRSwishR, -\yd) -- (\xQbAR, -\yd);
        \draw (\xQbAR, 0) -- (\xEnd, 0) node[right] {$\Real^{n_{\inp}}$};
        \draw[comp] (\xQbAL, \yw) rectangle node {$\itp{Q^{\bwd}_{\opFC{\ell_0}}}$} (\xQbAR, -\yw);
    \end{tikzpicture}
    \]
    where $\itp{P} \in \promonad_{\signatopr}(\Real^{n_{\out}})$,
    $\itp{Q^{\fwd}_{\opFC{\loctyp{\ell}{(n,m)}}}} \in \promonad_{\signatopr}(\Real^{n}, \Real^{m} \times \Real^{nm} \times \Real^{n})$ and
    $\itp{Q^{\bwd}_{\opFC{\loctyp{\ell}{(n,m)}}}} \in \promonad_{\signatopr}(\Real^{m} \times \Real^{nm} \times \Real^{n}, \Real^{n})$
    are given as follows:
    \begin{align*}
        \itp{P}
        & =
        \begin{tikzpicture}[baseline=(current bounding box.center), scale=0.75, transform shape]
            \pgfmathsetmacro{\yw}{0.4}
            \pgfmathsetmacro{\xd}{0.2}
            \pgfmathsetmacro{\xw}{2}
            \pgfmathsetmacro{\xStart}{0}
            \pgfmathsetmacro{\xPL}{\xStart + \xd}
            \pgfmathsetmacro{\xPR}{\xPL + \xw}
            \pgfmathsetmacro{\xEnd}{\xPR + \xd}
            % P
            \draw (\xStart, 0) node[left] {$\Real^{n_{\out}}$} -- (\xEnd, 0) node[right] {$\Real^{n_{\out}}$};
            \draw[purecomp] (\xPL, \yw) rectangle node {$\lambda x . \learningrate (x - t)$} (\xPR, -\yw);
        \end{tikzpicture}
        \\
        \itp{Q^{\fwd}_{\opFC{\ell}}}
        & =
        \begin{tikzpicture}[baseline=(current bounding box.center), scale=0.7, transform shape]
            \pgfmathsetmacro{\yd}{0.45}
            \pgfmathsetmacro{\yw}{0.3}
            \pgfmathsetmacro{\xd}{0.2}
            \pgfmathsetmacro{\xw}{1.5}
            \pgfmathsetmacro{\xStart}{0}
            \pgfmathsetmacro{\xGetL}{\xStart + \xd}
            \pgfmathsetmacro{\xGetR}{\xGetL + 0.8 * \xw}
            \pgfmathsetmacro{\xDiagL}{\xGetR + \xd}
            \pgfmathsetmacro{\xDiagR}{\xDiagL + 0.2}
            \pgfmathsetmacro{\xMatL}{\xDiagR + \xd}
            \pgfmathsetmacro{\xMatR}{\xMatL + 1.2 * \xw}
            \pgfmathsetmacro{\xEnd}{\xMatR + \xd}
            % Get
            \draw (\xStart, 0) node[left] {$\Real^{n}$} -- (\xGetR, 0);
            \draw[comp] (\xGetL, \yd + \yw) rectangle node {$\opGet{\ell}$} (\xGetR, \yd - \yw);
            % diag
            \draw (\xGetR, \yd) -- (\xDiagL, \yd);
            \draw (\xGetR,   0) -- (\xDiagL,   0);
            \draw (\xDiagL, \yd) -- (\xDiagR,  2 * \yd);
            \draw (\xDiagL, \yd) -- (\xDiagR,  0 * \yd);
            \draw (\xDiagL,   0) -- (\xDiagR,  1 * \yd);
            \draw (\xDiagL,   0) -- (\xDiagR, -1 * \yd);
            % Mat
            \draw (\xDiagR,  2 * \yd) -- (\xMatR, 2 * \yd);
            \draw (\xDiagR,  1 * \yd) -- (\xMatR, 1 * \yd);
            \draw (\xMatR, 1.5 * \yd) -- (\xEnd, 1.5 * \yd) node[right] {$\Real^{m}$};
            \draw (\xDiagR,  0 * \yd) -- (\xEnd,   0 * \yd) node[right] {$\Real^{nm}$};
            \draw (\xDiagR, -1 * \yd) -- (\xEnd,  -1 * \yd) node[right] {$\Real^{n}$};
            \draw[purecomp] (\xMatL, 2 * \yd + 0.8 * \yw) rectangle node {$\lambda (w, x) . wx$} (\xMatR, 1 * \yd - 0.8 * \yw);
        \end{tikzpicture}
        &
        \itp{Q^{\bwd}_{\opFC{\ell}}}
        & =
        \begin{tikzpicture}[baseline=(current bounding box.center), scale=0.7, transform shape]
            \pgfmathsetmacro{\yd}{0.4}
            \pgfmathsetmacro{\yw}{0.2}
            \pgfmathsetmacro{\xd}{0.2}
            \pgfmathsetmacro{\xw}{3}
            \pgfmathsetmacro{\xStart}{0}
            \pgfmathsetmacro{\xDiagL}{\xStart + \xd}
            \pgfmathsetmacro{\xDiagR}{\xDiagL + 0.2}
            \pgfmathsetmacro{\xMatL}{\xDiagR + \xd}
            \pgfmathsetmacro{\xMatR}{\xMatL + 0.9 * \xw}
            \pgfmathsetmacro{\xPutL}{\xMatR + \xd}
            \pgfmathsetmacro{\xPutR}{\xPutL + 0.45 * \xw}
            \pgfmathsetmacro{\xMYL}{\xPutR + \xd}
            \pgfmathsetmacro{\xMYR}{\xMYL + 0.7 * \xw}
            \pgfmathsetmacro{\xEnd}{\xMYR + \xd}
            % diag
            \draw (\xStart, 3 * \yd) node[left] {$\Real^{m}$} -- (\xDiagL, 3 * \yd);
            \draw (\xStart, 2 * \yd) node[left] {$\Real^{nm}$} -- (\xDiagL, 2 * \yd);
            \draw (\xStart, 1 * \yd) node[left] {$\Real^{n}$} -- (\xDiagL, 1 * \yd);
            \draw (\xDiagL, 3 * \yd) -- (\xDiagR, 4 * \yd);
            \draw (\xDiagL, 3 * \yd) -- (\xDiagR, 1 * \yd);
            \draw (\xDiagL, 2 * \yd) -- (\xDiagR, 3 * \yd);
            \draw (\xDiagL, 2 * \yd) -- (\xDiagR, 0 * \yd);
            \draw (\xDiagL, 1 * \yd) -- (\xDiagR, 2 * \yd);
            % Mat
            \draw (\xDiagR, 4 * \yd) -- (\xMatR, 4 * \yd);
            \draw (\xDiagR, 3 * \yd) -- (\xMatR, 3 * \yd);
            \draw (\xDiagR, 2 * \yd) -- (\xMatR, 2 * \yd);
            \draw (\xDiagR, 1 * \yd) -- (\xMatR, 1 * \yd);
            \draw (\xDiagR, 0 * \yd) -- (\xMatR, 0 * \yd);
            \draw[purecomp] (\xMatL, 4 * \yd + \yw) rectangle node {$\lambda(y, w, x) . w - yx^{\transpose}$} (\xMatR, 2 * \yd - \yw);
            % Put
            \draw (\xMatR, 3 * \yd) -- (\xPutR, 3 * \yd);
            \draw (\xMatR, 1 * \yd) -- (\xPutR, 1 * \yd);
            \draw (\xMatR, 0 * \yd) -- (\xPutR, 0 * \yd);
            \draw[comp] (\xPutL, 3 * \yd + 1.2 * \yw) rectangle node {$\opPut{\ell}$} (\xPutR, 3 * \yd - 1.2 * \yw);
            % m^T * y
            \draw (\xPutR, 1 * \yd) -- (\xMYL, 1 * \yd);
            \draw (\xPutR, 0 * \yd) -- (\xMYL, 0 * \yd);
            \draw (\xMYL, 0.5 * \yd) -- (\xEnd, 0.5 * \yd) node[right] {$\Real^{n}$};
            \draw[purecomp] (\xMYL, 1 * \yd + 0.6 * \yw) rectangle node {$\lambda(y, w) . w^{\transpose} y$} (\xMYR, 0 * \yd - 0.6 * \yw);
        \end{tikzpicture}.
    \end{align*}
\end{example}
\begin{example}[residual neural network, continued from Example~\ref{ex:residual-neural-network}]\label{ex:resnet-denotation}
    The interpretation $\itp{R_{\ResNet}} \in \promonad_{\signatopr}(\Real^{n_{\inp}}, \Real^{n_{\out}})$ is as follows:
    \[
    \begin{tikzpicture}[scale=0.8, transform shape]
        \pgfmathsetmacro{\yw}{0.3}
        \pgfmathsetmacro{\yd}{0.7}
        \pgfmathsetmacro{\xd}{0.3}
        \pgfmathsetmacro{\xw}{2.3}
        \pgfmathsetmacro{\xwDiag}{0.2}
        \pgfmathsetmacro{\xwSym}{0.2}
        \pgfmathsetmacro{\xStart}{0}
        \pgfmathsetmacro{\xDiagL}{\xStart + \xd}
        \pgfmathsetmacro{\xDiagR}{\xDiagL + \xwDiag}
        \pgfmathsetmacro{\xLinearAL}{\xDiagR + \xd}
        \pgfmathsetmacro{\xLinearAR}{\xLinearAL + \xw}
        \pgfmathsetmacro{\xSwishL}{\xLinearAR + \xd}
        \pgfmathsetmacro{\xSwishR}{\xSwishL + 0.8 * \xw}
        \pgfmathsetmacro{\xLinearBL}{\xSwishR + \xd}
        \pgfmathsetmacro{\xLinearBR}{\xLinearBL + \xw}
        \pgfmathsetmacro{\xPlusL}{\xLinearBR + \xd}
        \pgfmathsetmacro{\xPlusR}{\xPlusL + \xwDiag}
        \pgfmathsetmacro{\xEnd}{\xPlusR + \xd}
        % diag
        \draw (\xStart, -0.5 * \yd) node[left] {$\Real^{n_{\inp}}$} -- (\xDiagL, -0.5 * \yd);
        \draw (\xDiagL, -0.5 * \yd) -- (\xDiagR,  0 * \yd);
        \draw (\xDiagL, -0.5 * \yd) -- (\xDiagR, -1 * \yd);
        % Linear A x id
        \draw (\xDiagR, 0) -- (\xLinearAR, 0);
        \draw (\xDiagR, -1 * \yd) -- (\xLinearAR, -1 * \yd);
        \draw[comp] (\xLinearAL, \yw) rectangle node {$\opFC{\ell_0}$} (\xLinearAR, -\yw);
        % Swish x id
        \draw (\xLinearAR, 0) -- (\xSwishR, 0);
        \draw (\xLinearAR, -1 * \yd) -- (\xSwishR, -1 * \yd);
        \draw[purecomp] (\xSwishL, \yw) rectangle node {$\itp{\fnSwish_{n_{\hid}}}$} (\xSwishR, -\yw);
        % Linear B x id
        \draw (\xSwishR, 0) -- (\xLinearBR, 0);
        \draw (\xSwishR, -1 * \yd) -- (\xLinearBR, -1 * \yd);
        \draw[comp] (\xLinearBL, \yw) rectangle node {$\opFC{\ell_1}$} (\xLinearBR, -\yw);
        % plus
        \draw (\xLinearBR, 0) -- (\xPlusL, 0);
        \draw (\xLinearBR, -1 * \yd) -- (\xPlusL, -1 * \yd);
        \draw (\xPlusL,        0) -- (\xPlusR, -0.5 * \yd);
        \draw (\xPlusL, -1 * \yd) -- (\xPlusR, -0.5 * \yd);
        \draw (\xPlusR, -0.5 * \yd) -- (\xEnd, -0.5 * \yd) node[right] {$\Real^{n_{\out}}$};
        \draw[purecomp] (\xPlusR, -0.5 * \yd) circle [x radius=0.1cm, y radius=0.1cm] node {$+$};
    \end{tikzpicture}
    \]
    where
    \begin{tikzpicture}[baseline=(current bounding box.center)]
        \pgfmathsetmacro{\yw}{0.3}
        \pgfmathsetmacro{\yd}{0.3}
        \pgfmathsetmacro{\xd}{0.3}
        \pgfmathsetmacro{\xw}{2.3}
        \pgfmathsetmacro{\xwDiag}{0.2}
        \pgfmathsetmacro{\xStart}{0}
        \pgfmathsetmacro{\xPlusL}{\xStart + \xd}
        \pgfmathsetmacro{\xPlusR}{\xPlusL + \xwDiag}
        \pgfmathsetmacro{\xEnd}{\xPlusR + \xd}
        \draw (\xStart, 0) -- (\xPlusL, 0);
        \draw (\xStart, -1 * \yd) -- (\xPlusL, -1 * \yd);
        \draw (\xPlusL,        0) -- (\xPlusR, -0.5 * \yd);
        \draw (\xPlusL, -1 * \yd) -- (\xPlusR, -0.5 * \yd);
        \draw (\xPlusR, -0.5 * \yd) -- (\xEnd, -0.5 * \yd);
        \draw[purecomp] (\xPlusR, -0.5 * \yd) circle [x radius=0.1cm, y radius=0.1cm] node {$+$};
    \end{tikzpicture}
    is $\eta(\pi_1 + \pi_2) \in \promonad(X \times X, X)$.
    The diagram is essentially the same as the typical illustrations of residual networks, for example, those in \cite[Figure~2]{He+2016}.
\end{example}

\begin{example}[Autoencoder, continued from Example~\ref{ex:autoencoder}]
    \label{ex:autoencoder-denotation}
    The interpretation
    of the command $\itp{\prevhandle(x)\tuple{x_{\inp}}.R_{\autoencoder}\; \pwith \; H_{\autoencoder}}$
    % \in \promonad_{\signatopr}(\Real^{n_{\inp}}, \Real^{n_{\out}})
    is the left diagram below, and
    the interpretation of deeper autoencoder $\itp{\prevhandle(x) \tuple{x_{\inp}}. R'_{\autoencoder} \; \pwith \; H_{\autoencoder}}$ is the right diagram below:
    \[
        \begin{tikzpicture}[scale=0.8, transform shape]
            \pgfmathsetmacro{\yw}{0.3}
            \pgfmathsetmacro{\yd}{0.2}
            \pgfmathsetmacro{\xd}{0.3}
            \pgfmathsetmacro{\xw}{1.6}
            \pgfmathsetmacro{\xwDiag}{0.2}
            \pgfmathsetmacro{\xwSym}{0.2}
            \pgfmathsetmacro{\xStart}{0}
            \pgfmathsetmacro{\xLinearAL}{\xStart + \xd}
            \pgfmathsetmacro{\xLinearAR}{\xLinearAL + \xw}
            \pgfmathsetmacro{\xLinearBL}{\xLinearAR + \xd}
            \pgfmathsetmacro{\xLinearBR}{\xLinearBL + \xw}
            \pgfmathsetmacro{\xEnd}{\xLinearBR + \xd}
            % Linear A
            \draw (\xStart, 0) node[left] {$\Real^{n_{\inp}}$} -- (\xLinearAR, 0);
            \draw[comp] (\xLinearAL, \yw) rectangle node {$\opFC{\ell_0}$} (\xLinearAR, -\yw);
            \draw decorate[decoration={brace, amplitude=2mm}] { (\xLinearAL-0.1,\yw+\yd) to node[above=1mm] {encoding} (\xLinearAR+0.1,\yw+\yd) };
            % Linear B
            \draw (\xLinearAR, 0) -- (\xEnd, 0) node[right] {$\Real^{n_{\inp}}$};
            \draw[comp] (\xLinearBL, \yw) rectangle node {$\opFC{\ell_1}$} (\xLinearBR, -\yw);
            \draw decorate[decoration={brace, amplitude=2mm}] { (\xLinearBL-0.1,\yw+\yd) to node[above=1mm] {decoding} (\xLinearBR+0.1,\yw+\yd) };
        \end{tikzpicture},
    \begin{tikzpicture}[scale=0.8, transform shape]
        \pgfmathsetmacro{\yw}{0.3}
        \pgfmathsetmacro{\yd}{0.2}
        \pgfmathsetmacro{\xd}{0.3}
        \pgfmathsetmacro{\xw}{1.6}
        \pgfmathsetmacro{\xwDiag}{0.2}
        \pgfmathsetmacro{\xwSym}{0.2}
        \pgfmathsetmacro{\xStart}{0}
        \pgfmathsetmacro{\xLinearAL}{\xStart + \xd}
        \pgfmathsetmacro{\xLinearAR}{\xLinearAL + \xw}
        \pgfmathsetmacro{\xLinearBL}{\xLinearAR + \xd}
        \pgfmathsetmacro{\xLinearBR}{\xLinearBL + \xw}
        \pgfmathsetmacro{\xLinearCL}{\xLinearBR + \xd}
        \pgfmathsetmacro{\xLinearCR}{\xLinearCL + \xw}
        \pgfmathsetmacro{\xLinearDL}{\xLinearCR + \xd}
        \pgfmathsetmacro{\xLinearDR}{\xLinearDL + \xw}
        \pgfmathsetmacro{\xEnd}{\xLinearDR + \xd}
        % Linear A
        \draw (\xStart, 0) node[left] {$\Real^{n_{\inp}}$} -- (\xLinearAR, 0);
        \draw[comp] (\xLinearAL, \yw) rectangle node {$\opFC{\ell_0}$} (\xLinearAR, -\yw);
        % Linear B
        \draw (\xLinearAR, 0) -- (\xLinearBR, 0);
        \draw[comp] (\xLinearBL, \yw) rectangle node {$\opFC{\ell_1}$} (\xLinearBR, -\yw);
        % Linear C
        \draw (\xLinearBR, 0) -- (\xLinearCR, 0);
        \draw[comp] (\xLinearCL, \yw) rectangle node {$\opFC{\ell_2}$} (\xLinearCR, -\yw);
        % Linear D
        \draw (\xLinearCR, 0) -- (\xEnd, 0) node[right] {$\Real^{n_{\inp}}$};
        \draw[comp] (\xLinearDL, \yw) rectangle node {$\opFC{\ell_3}$} (\xLinearDR, -\yw);
        \draw decorate[decoration={brace, amplitude=2mm}] { (\xLinearAL-0.1,\yw+\yd) to node[above=1mm] {encoding} (\xLinearBR+0.1,\yw+\yd) };
        \draw decorate[decoration={brace, amplitude=2mm}] { (\xLinearCL-0.1,\yw+\yd) to node[above=1mm] {decoding} (\xLinearDR+0.1,\yw+\yd) };
    \end{tikzpicture}.
    \]
    The interpretation
    $\itp{\prevhandle(\fromvec{v}) \tuple{x_{\inp}}.
    \left(\prevhandle(x)\tuple{x_{\inp}}. R_{\autoencoder} \; \pwith \; H_{\autoencoder} \right)
    \; \pwith \; H_{\MLP}}$
    describes training of the autoencoder $R_{\autoencoder}$.
\end{example}

In Example~\ref{ex:convolutional-neural-network},
the operation $\oprpool$ is implemented by $\fnpool$, intended to be interpreted as the usual max-pooling function (see Definition~\ref{def:interpretation-cnn-fnconv-fnpool}).
Such a function is not available in $\Smooth$; therefore, in interpreting the CNN examples (Examples~\ref{ex:convolutional-neural-network} and~\ref{ex:U-net}),
we use $\PartSmooth$ as the model.
See Appendix~\ref{sec:details-examples} for details.
For the handler $H_{\Unet}$, we use the following interpretation:
\begin{example}[U-net, continued from Example~\ref{ex:U-net}]
    \label{ex:U-net-denotation}
    The string diagram of the interpretation
    is:
    \[
        \itp{\prevhandle(x_{\inp}) \tuple{x_{\inp}}. R_{\Unet} \; \pwith \; H_{\Unet}}
        =
        \begin{tikzpicture}[baseline=(current bounding box.center), scale=0.75, transform shape]
            \pgfmathsetmacro{\yw}{0.1}
            \pgfmathsetmacro{\yd}{0.4}
            \pgfmathsetmacro{\xd}{0.2}
            \pgfmathsetmacro{\xw}{1}
            \pgfmathsetmacro{\xwDiag}{0.2}
            \pgfmathsetmacro{\xStart}{0}
            \pgfmathsetmacro{\xQfAL}{\xStart + \xd}
            \pgfmathsetmacro{\xQfAR}{\xQfAL + \xw}
            \pgfmathsetmacro{\xQfBL}{\xQfAR + \xd}
            \pgfmathsetmacro{\xQfBR}{\xQfBL + \xw}
            \pgfmathsetmacro{\xPL}{\xQfBR + \xd}
            \pgfmathsetmacro{\xPR}{\xPL + \xw}
            \pgfmathsetmacro{\xQbBL}{\xPR + \xd}
            \pgfmathsetmacro{\xQbBR}{\xQbBL + \xw}
            \pgfmathsetmacro{\xQbAL}{\xQbBR + \xd}
            \pgfmathsetmacro{\xQbAR}{\xQbAL + \xw}
            \pgfmathsetmacro{\xEnd}{\xQbAR + \xd}
            % Q^fwd A
            \draw (\xStart, 0) node[left] {$\Real^{n_{\inp}}$} -- (\xQfAR, 0);
            \draw[comp] (\xQfAL, \yd+\yw) rectangle node {$\itp{Q^{\fwd}_{\opru}}$} (\xQfAR, -\yw);
            % Q^fwd B
            \draw (\xQfAR, \yd) -- (\xQfBR, \yd);
            \draw (\xQfAR,   0) -- (\xQfBR,   0);
            \draw[comp] (\xQfBL, 2 * \yd + \yw) rectangle node {$\itp{Q^{\fwd}_{\opru}}$} (\xQfBR, \yd - \yw);
            % P
            \draw (\xQfBR, 2 * \yd) -- (\xPR, 2 * \yd);
            \draw (\xQfBR,     \yd) -- (\xPR,     \yd);
            \draw (\xQfBR,       0) -- (\xPR,       0);
            \draw[comp] (\xPL, 3 * \yd + \yw) rectangle node {$\itp{P}$} (\xPR, 2 * \yd - \yw);
            % Q^bwd B
            \draw (\xPR, 2 * \yd) -- (\xQbBR, 2 * \yd);
            \draw (\xPR,     \yd) -- (\xQbBR,     \yd);
            \draw (\xPR,       0) -- (\xQbBR,       0);
            \draw[comp] (\xQbBL, 2 * \yd + \yw) rectangle node {$\itp{Q^{\bwd}_{\opru}}$} (\xQbBR, \yd - \yw);
            % Q^bwd A
            \draw (\xQbBR,     \yd) -- (\xQbAR,     \yd);
            \draw (\xQbBR,       0) -- (\xEnd, 0) node[right] {$\Real^{n_{\inp}}$};
            \draw[comp] (\xQbAL, \yd+\yw) rectangle node {$\itp{Q^{\bwd}_{\opru}}$} (\xQbAR, -\yw);
        \end{tikzpicture}.
    \]
    The name ``U-net'' comes from the fact that a diagram of its architecture has the shape of the letter ``U.'' One can see that the string diagram above indeed has the shape of a ``U,'' albeit upside down.
\end{example}

Finally, for the STE example (Example~\ref{ex:ste}), we interpret $\fnround$ by the usual round function; therefore, the interpretation is also taken in $\PartSmooth$.

%In the STE example (Example~\ref{ex:ste}), the round function is also not differentiable at $n + 1/2$ for each integer $n$.
%If we interpret $\fnround$ by a smooth approximation of the round function, for example, $\itp{\fnround}(x) = x - \sin(2\pi x) / (2\pi)$, then we can give interpretation of the STE example.
%
\section{Soundness and Adequacy}\label{sec:soundness-adequacy}
We prove the soundness and adequacy of the denotational semantics defined in \S\ref{sec:denotational-semantics} with respect to the operational semantics defined in \S\ref{sec:operational-semantics}.

Let $(\catx, \promonad_{\signatopr})$ be a model.
We assume that the evaluation function $\Eval_A \colon \signatfunc(A, B) \times \Val_A \to \Val_B$ and the commutative monoid structure $(\signatfunc(1, \beta), +_{\beta}, 0_{\beta})$ are compatible with the semantic structure in the following sense. Such system is called \emph{differentially denotational interpretation structure} in \cite[\S{}5]{Cruttwell+2021}.
\begin{enumerate}
    \item
        For each closed value $V$ of type $A$, the interpretation $\itp{V}:1\to\itp{A}$ is total.
    \item
        For $\func : A \to B$ and $V \in \Val_A$,
        $\Eval(\func,V)$ is defined if and only if $\itp{V}\semicolon\itp{\func}$ is total, and if $\Eval(\func,V)$ is defined, then
        $\itp{\Eval(\func, V)} = \itp{V} \semicolon \itp{\func}$ holds.
    \item
        $\itp{\const_1 +_{\beta} \const_2} = \itp{\const_1} +_{\catx(1, \itp{\beta})} \itp{\const_2}$
        holds for each constant symbols $\const_1, \const_2 \in \signatfunc(1, \beta)$.
\end{enumerate}
The soundness theorem means the invariance of the semantics under the small-step reduction:
\begin{theorem}[Soundness]\label{thm:soundness}
    The following hold:
    \begin{enumerate}
        \item\label{item:soundness-term} For any well-typed term $\emptyenv \vdash M : A$, if $M \redto M'$, then $\itp{M} = \itp{M'}$ holds.
        \item\label{item:soundness-command} For any well-typed command $\emptyenv \asemicolon \emptyenv \vdash P : A$, if $P \redto P'$, then $\itp{P} = \itp{P'}$ holds.
    \end{enumerate}
\end{theorem}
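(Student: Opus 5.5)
Both parts will be proved by reducing to the redex case and then treating each rule separately. Interpretation is compositional with respect to the evaluation contexts $\ctxe$, $\ctxfc$ and $\ctxft$: the meaning of $\ctxe[M]$ depends on $M$ only through $\itp{M}$, and similarly for commands. So it suffices to show $\itp{M} = \itp{M'}$ (resp.\ $\itp{P}=\itp{P'}$) for the basic redexes. This uses a \emph{substitution lemma} for terms and commands, $\itp{N[V/x]}^{\Gamma}_{\Delta}(c) = \langle \itp{V}, \idmor\rangle \semicolon \itp{N}^{\Gamma}_{\Delta,x:A}(c)$, together with its analogue for commands (precomposition with $\eta(\langle\itp{V},\idmor\rangle)$). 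It is proved by routine induction, using that $\itp{V}$ is total, which is condition~(1), so the restriction products behave like genuine products on values.

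For part~\itemref{item:soundness-term}, I handle the term redexes one at a time. The $\func(V)$ case is condition~(2) of the differentially denotational interpretation structure, and $\const_1+\const_2$ is condition~(3). Projection of a tuple and $\tlet$ follow from Cartesian restriction structure and the substitution lemma, again using totality of values. The $\trd$ rule $W.\trd(x.N)(V)\redto\rewriterd{W}{V}(x.N)$ needs an auxiliary lemma:
\[\itp{\rewriterd{W}{V}(x.N)} = \langle \itp{V},\itp{W}\rangle\semicolon\RD[\itp{N}_{x}].\]
I prove it by induction on $N$ following Fig.~\ref{fig:rewriting-rule-term}, with each clause justified by an RDRC axiom:
\begin{itemize}
\item variables and constants use \axiomRD{3} and the projection axioms;
\item $\func(M)$ and $\tlet$ use the chain rule \axiomRD{5} together with $\RD[\itp{\func}]=\itp{\rd[\func]}$;
\item tuples and the sum in the $\tlet$ clause use \axiomRD{1} and the additivity axioms;
\item nested $\trd$ reduces to the previous cases by the induction hypothesis.
\end{itemize}
This is essentially the argument of Cruttwell et al.\ for Abadi--Plotkin's language, simplified because there are no conditionals.

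For part~\itemref{item:soundness-command}, the $\ctxft[M]$ rule follows from part~\itemref{item:soundness-term}, and the $\plet$-of-$\pret{V}$ rule follows from the substitution lemma plus the unit and strength laws. The remaining and main case is the handler rule $\prevhandle(V)\tuple{x_i}.R \;\pwith\; H \redto \rewriterevh{V}{H}(\tuple{x_i}.R)$. I will prove a key lemma:
\[\itp{\rewriterevh{V}{H}(\tuple{y_i}_{i=1}^m.R)} = \eta(\itp{V})\musemicolon\itp{H}_{\itp{\Delta}}(\itp{R}).\]
The proof is by induction on $(\max\{\rhdepth{R},\rhdepth{H}\}, \dsize{R})$, which is the same measure used to define $\rewriterevh{}{}$, and it splits into the cases of Lemma~\ref{lem:form-of-commands}.
\begin{itemize}
\item For $\pret{y_k}$, the homomorphism $\itp{H}$ sends $\eta(\pi_k)$ to $\ralg{\promonad}(\pi_k)(\itp{P})$ by Lemma~\ref{lem:free-algebra}. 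Unfolding Definition~\ref{def:reverse-algebra} and computing $\RD'[\pi_k]$ (the injection into the $k$-th slot with zeros elsewhere, by \axiomRD{3} and \axiomRD{1}) gives the tuple $\tuple{W_i}$.
\item The $\plet$-of-variable case is the substitution lemma.
\item The nested-handler case uses the induction hypothesis for the inner handler, which has strictly smaller depth by Lemma~\ref{lem:RH-depth}.
\item For $\ctxfc[\opr(y_i)]$, I use the homomorphism property $\itp{H}(a\musemicolon b)=\alpha(a,\itp{H}(b))$ and write $\itp{\ctxfc[\opr(y_i)]}$ as $\eta(\diag)\musemicolon\strength(\eta(\pi_i)\musemicolon\opr)\musemicolon(\text{rest})$. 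Definitions~\ref{def:alpha-from-q-opr-Y} and~\ref{def:q-opr-Y-from-reverse-handler} then give exactly the sandwich $q^{\fwd}$–continuation–$q^{\bwd}$. The diagonal and projections contribute, via $\ralg{\promonad}(\diag)$ and $\RD'[\diag]=+$, the sum $y'_i+y''_i$ in the last line of the rewrite.
\item The $\ctxft[M]$ case works the same way, with a pure $\eta(\itp{M})$ in place of $\opr$. Here $\ralg{\promonad}(\langle \itp{M},\idmor\rangle)$ produces $\RD'$ of the term, which the term-level lemma above identifies with the $\trd$ terms appearing in the rewrite.
\end{itemize}

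I expect the main obstacle to be the $\opr$ and $\ctxft[M]$ cases of the key lemma. They require commuting $\ralg{\promonad}(\diag_X)$ and the strength through $q_{\opr,Y}$, so naturality of $q_{\opr}$ in $Y$ and the strength laws must be checked carefully, with string diagrams keeping track of the wire permutations. I would first verify the general identity
\[\alpha\bigl(\eta(\diag)\musemicolon\strength(a)\musemicolon b,\ k\bigr)=\ralg{\promonad}(\diag)\bigl(\alpha(\strength(a),\alpha(b,k))\bigr),\]
then expand $\RD'[\diag]$ by \axiomRD{1} to produce the addition, and finally apply the induction hypothesis to the continuation $\ctxfc[\pret{y}]$.
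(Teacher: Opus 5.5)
Your proposal is correct and follows the paper's proof essentially step for step. The paper likewise delegates the term part to Cruttwell et al.\ (your lemma on $\rewriterd{W}{V}(x.N)$ is what that citation supplies). It reduces the command part to the identity $\eta(\itp{V})\musemicolon\itp{H}_{\itp{\Delta}}(\itp{R}) = \itp{\rewriterevh{V}{H}(\tuple{y_i}_{i=1}^m.R)}$, proved by the same lexicographic induction through the cases of Lemma~\ref{lem:form-of-commands} using the homomorphism property of $\itp{H}$.

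One bookkeeping point you inherit from the paper concerns the nested-handler case. There the outer call $\rewriterevh{V}{H}(\tuple{y_i}_{i=1}^m.\ctxfc[\rewriterevh{W}{H'}(\tuple{z_j}_{j=1}^l.R')])$ need not decrease $(\max\{\rhdepth{R},\rhdepth{H}\},\dsize{R})$ when $\rhdepth{H}\ge\rhdepth{R}$, because the rewrite can enlarge the command. The induction is therefore better run on a measure that ignores $H$, e.g.\ the multiset of depths of the $\prevhandle$-subcommands of $R$ followed by size.
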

\begin{appendixproof}[Proof of Theorem~\ref{thm:soundness}]
    \itemref{item:soundness-term}
    This statement is essentially a simplified version of the soundness of the categorical semantics of a simple differentiable programming language; see \cite{Cruttwell+2021}.

    \itemref{item:soundness-command}
    First, we show $\eta(\itp{V}^{\emptyenv}_{\Delta'}) \musemicolon \itp{H}_{\itp{\Delta}}(\itp{R}) = \itp{\rewriterevh{V}{H} \tuple{y_j}_{j=1}^m . R}$ for
    $\emptyenv \asemicolon \Delta \vdash R : C$,
    $\Delta' \vdash V : \typprod_{j=1}^m B_j$ and
    $\revhandlerjudgment H : C$
    by induction on
    $(\max(\rhdepth{R},\rhdepth{H}), \dsize{\emptyenv \asemicolon \Delta \vdash R : C} + \dsize{\revhandlerjudgment H : C}) \in (\Nat^2, \le_{\Nat^2})$.
    \begin{itemize}
        \item $R = \pret{y_j}$. We have
            \[
            \begin{aligned}
                & \eta(\itp{V}^{\emptyenv}_{\Delta'}) \musemicolon \itp{H}_{\Delta}(\itp{\pret{y_j}})
                \\
                & = \eta(\langle\itp{V_j}^{\emptyenv}_{\Delta'}\rangle_{j=1}^m) \musemicolon \itp{H}_{\Delta}(\eta(\pi_j))
                \\
                & = \eta(\langle\itp{V_j}^{\emptyenv}_{\Delta'}\rangle_{j=1}^m) \musemicolon \eta(\delta_{\itp{\Delta}} \semicolon (\pi_j \times \idmor_{\itp{\Delta}})) \musemicolon \strength_{\itp{\Delta}}(\itp{P}) \musemicolon \eta(\RD'[\pi_j])
                \\
                & = \eta(\langle\itp{V_j}^{\emptyenv}_{\Delta'}\rangle_{j=1}^m) \musemicolon \eta(\delta_{\itp{\Delta}} \semicolon (\pi_j \times \idmor_{\itp{\Delta}})) \musemicolon \strength_{\itp{\Delta}}(\itp{P}) \musemicolon \eta(\lambda(y, \_). (w_{j'})_{j' = 1}^m)
                \\
                & \qquad \text{where $w_j = y$ and $w_{j'} = 0$ for $j' \ne j$}
                \\
                & = \eta(\itp{V_j}^{\emptyenv}_{\Delta}) \musemicolon \itp{P} \musemicolon \eta(\lambda y . (w_{j'})_{j'=1}^m)
                \\
                & = \itp{\plet \; y \pbind P[V_k / x] \; \pin \; \pret{\tuple{W_{j'}}_{j'=1}^m}}
                \\
                & \qquad \text{where $W_j = y$ and $W_{j'} = \fromvec{0}$ for $j' \ne j$.}
                \\
                & =
                \itp{\rewriterevh{V}{H} \tuple{y_j}_{j=1}^m . \pret{y_j}}.
            \end{aligned}
            \]
        \item $R = \ctxfc[\plet \; x \pbind \pret{y_j} \; \pin \; R']$.
            We have $\itp{\plet \; x \pbind \pret{y_j} \; \pin \; R'} = \itp{R'[y_j/x]}$.
            Hence, by the induction hypothesis, we have
            \[
            \begin{aligned}
                \eta(\itp{V}^{\emptyenv}_{\Delta'}) \musemicolon \itp{H}_{\Delta}(\itp{\ctxfc[\plet \; x \pbind \pret{y_j} \; \pin \; R']})
                & =
                \eta(\itp{V}^{\emptyenv}_{\Delta'}) \musemicolon \itp{H}_{\Delta}(\itp{\ctxfc[R'[y_j / x]]})
                \\
                & =
                \itp{\rewriterevh{V}{H} \tuple{y_j}_{j=1}^m . \ctxfc[R'[y_j / x]]}.
            \end{aligned}
            \]
        \item $R = \ctxfc[\opr(y_j)]$. By the induction hypothesis, we have
            \[
            \begin{aligned}
                & \eta(\itp{V}^{\emptyenv}_{\Delta'}) \musemicolon \itp{H}_{\itp{\Delta}}(\itp{\ctxfc[\opr(y_j)]})
                \\
                & = \eta(\itp{V}^{\emptyenv}_{\Delta'}) \musemicolon \itp{H}_{\itp{\Delta}}\left(\eta(\delta_{\itp{\Delta}}) \musemicolon \strength_{\itp{\Delta}}(\eta(\pi_j) \musemicolon \opr) \musemicolon \itp{\ctxfc[y]}\right)
                \\
                & = \eta(\itp{V}^{\emptyenv}_{\Delta'}) \musemicolon \itp{H}_{\itp{\Delta}}\left(
                    \begin{tikzpicture}[baseline=(current bounding box.center)]
                        \pgfmathsetmacro{\yd}{0.3}
                        \pgfmathsetmacro{\xd}{0.2}
                        \pgfmathsetmacro{\xStart}{0}
                        \pgfmathsetmacro{\xDiagL}{\xStart + \xd}
                        \pgfmathsetmacro{\xDiagR}{\xDiagL + 0.2}
                        \pgfmathsetmacro{\xProjL}{\xDiagR + \xd}
                        \pgfmathsetmacro{\xProjR}{\xProjL + 0.5}
                        \pgfmathsetmacro{\xOpL}{\xProjR + \xd}
                        \pgfmathsetmacro{\xOpR}{\xOpL + 0.8}
                        \pgfmathsetmacro{\xRL}{\xOpR + \xd}
                        \pgfmathsetmacro{\xRR}{\xRL + 2}
                        \pgfmathsetmacro{\xEnd}{\xRR + \xd}
                        % Diagonal
                        \draw (\xStart, 0) node[left] {$\itp{\Delta}$} -- (\xDiagL, 0);
                        \draw (\xDiagL, 0) -- (\xDiagR,  \yd);
                        \draw (\xDiagL, 0) -- (\xDiagR, -\yd);
                        % st(\pi_j)
                        \draw (\xDiagR,  \yd) -- (\xProjR,  \yd);
                        \draw (\xDiagR, -\yd) -- (\xProjR, -\yd);
                        \draw[purecomp] (\xProjL, \yd + 0.3) rectangle node {$\pi_j$} (\xProjR, \yd - 0.3);
                        % st(Op)
                        \draw (\xProjR,  \yd) -- (\xOpR,  \yd);
                        \draw (\xProjR, -\yd) -- (\xOpR, -\yd);
                        \draw[comp] (\xOpL, \yd + 0.3) rectangle node {$\opr$} (\xOpR, \yd - 0.3);
                        % R'
                        \draw (\xOpR,  \yd) -- (\xRR,  \yd);
                        \draw (\xOpR, -\yd) -- (\xRR, -\yd);
                        \draw (\xRR, 0) -- (\xEnd, 0) node[right] {$\itp{C}$};
                        \draw[comp] (\xRL, \yd + 0.2) rectangle node {$\itp{\ctxfc[\pret{y}]}$} (\xRR, -\yd - 0.2);
                    \end{tikzpicture}
                \right)
                \\
                & =
                \begin{tikzpicture}[baseline=(current bounding box.center)]
                    \pgfmathsetmacro{\ytop}{0.4}
                    \pgfmathsetmacro{\ymid}{0}
                    \pgfmathsetmacro{\ybot}{-0.4}
                    \pgfmathsetmacro{\yd}{0.2}
                    \pgfmathsetmacro{\xd}{0.2}
                    \pgfmathsetmacro{\xw}{1.1}
                    \pgfmathsetmacro{\xwDiag}{0.2}
                    \pgfmathsetmacro{\xwSym}{0.2}
                    \pgfmathsetmacro{\xStart}{0}
                    \pgfmathsetmacro{\xVL}{\xStart + \xd}
                    \pgfmathsetmacro{\xVR}{\xVL + 1}
                    \pgfmathsetmacro{\xDiagL}{\xVR + \xd}
                    \pgfmathsetmacro{\xDiagR}{\xDiagL + 0.2}
                    \pgfmathsetmacro{\xProjL}{\xDiagR + \xd}
                    \pgfmathsetmacro{\xProjR}{\xProjL + 0.5}
                    \pgfmathsetmacro{\xQfL}{\xProjR + \xd}
                    \pgfmathsetmacro{\xQfR}{\xQfL + \xw}
                    \pgfmathsetmacro{\xSymlL}{\xQfR + \xd}
                    \pgfmathsetmacro{\xSymlR}{\xSymlL + \xwSym}
                    \pgfmathsetmacro{\xStKL}{\xSymlR + \xd}
                    \pgfmathsetmacro{\xStKR}{\xStKL + 2.7 * \xw}
                    \pgfmathsetmacro{\xSymrL}{\xStKR + \xd}
                    \pgfmathsetmacro{\xSymrR}{\xSymrL + \xwSym}
                    \pgfmathsetmacro{\xQbL}{\xSymrR + \xd}
                    \pgfmathsetmacro{\xQbR}{\xQbL + \xw}
                    \pgfmathsetmacro{\xTupleL}{\xQbR + \xd}
                    \pgfmathsetmacro{\xTupleR}{\xTupleL + 2.2 * \xw}
                    \pgfmathsetmacro{\xEnd}{\xTupleR + \xd}
                    % V
                    \draw (\xStart, 0) node[left] {$\itp{\Delta'}$} -- (\xVL, 0);
                    \draw[purecomp] (\xVL, \ymid + 0.3) rectangle node {$\itp{V}^{\emptyenv}_{\Delta'}$} (\xVR, \ymid - 0.3);
                    % Diagonal
                    \draw (\xVR, 0) -- (\xDiagL, 0);
                    \draw (\xDiagL, 0) -- (\xDiagR, \ytop);
                    \draw (\xDiagL, 0) -- (\xDiagR, \ybot);
                    % st(\pi_j)
                    \draw (\xDiagR, \ytop) -- (\xProjR, \ytop);
                    \draw (\xDiagR, \ybot) -- (\xProjR, \ybot);
                    \draw[purecomp] (\xProjL, \ytop + 0.3) rectangle node {$\pi_j$} (\xProjR, \ytop - 0.3);
                    % Q^fwd
                    \draw (\xProjR, \ytop) -- (\xQfR, \ytop);
                    \draw (\xProjR, \ybot) -- (\xQfR, \ybot);
                    \draw[comp] (\xQfL, \ytop + \yd) rectangle node {$\itp{Q^{\fwd}_{\opr}}$} (\xQfR, \ymid - \yd);
                    % id x sym
                    \draw (\xQfR, \ytop) -- (\xSymlR, \ytop);
                    \draw (\xQfR, \ymid) -- (\xSymlL, \ymid) -- (\xSymlR, \ybot);
                    \draw (\xQfR, \ybot) -- (\xSymlL, \ybot) -- (\xSymlR, \ymid);
                    % st(k)
                    \draw (\xSymlR, \ytop) -- (\xStKR, \ytop);
                    \draw (\xSymlR, \ymid) -- (\xStKR, \ymid);
                    \draw (\xSymlR, \ybot) -- (\xStKR, \ybot);
                    \draw[comp] (\xStKL, \ytop + \yd) rectangle node {$\itp{H}_{\itp{\Delta}}(\itp{\ctxfc[\pret{y}]})$} (\xStKR, \ymid - \yd);
                    % id x sym
                    \draw (\xStKR, \ytop) -- (\xSymrR, \ytop);
                    \draw (\xStKR, \ymid) -- (\xSymrL, \ymid) -- (\xSymrR, \ybot);
                    \draw (\xStKR, \ybot) -- (\xSymrL, \ybot) -- (\xSymrR, \ymid);
                    % Q^bwd
                    \draw (\xSymrR, \ytop) -- (\xQbR, \ytop);
                    \draw (\xSymrR, \ymid) -- (\xQbR, \ymid);
                    \draw (\xSymrR, \ybot) -- (\xQbR, \ybot);
                    \draw[comp] (\xQbL, \ytop + \yd) rectangle node {$\itp{Q^{\bwd}_{\opr}}$} (\xQbR, \ymid - \yd);
                    % tuple
                    \draw (\xQbR, \ytop) -- (\xTupleL, \ytop);
                    \draw (\xQbR, \ybot) -- (\xTupleL, \ybot);
                    \draw[purecomp] (\xTupleL, \ytop + \yd) rectangle node {$\begin{aligned} & \lambda(y''_j, (y'_{j'})_{j'=1}^m). \\ &(z_{j'})_{j'=1}^m \end{aligned}$} (\xTupleR, \ybot - \yd);
                    \draw (\xTupleR, \ymid) -- (\xEnd, \ymid) node[right] {$\itp{\Delta}$};
                \end{tikzpicture}
                \\
                & \qquad \text{where $z_j = y'_j + y''_j$ and $z_{j'} = y'_{j'}$ for $j' \ne j$}
                \\
                & =
                \begin{tikzpicture}[baseline=(current bounding box.center)]
                    \pgfmathsetmacro{\ytop}{0.4}
                    \pgfmathsetmacro{\ymid}{0}
                    \pgfmathsetmacro{\ybot}{-0.4}
                    \pgfmathsetmacro{\yd}{0.2}
                    \pgfmathsetmacro{\xd}{0.15}
                    \pgfmathsetmacro{\xw}{1.1}
                    \pgfmathsetmacro{\xwDiag}{0.2}
                    \pgfmathsetmacro{\xwSym}{0.2}
                    \pgfmathsetmacro{\xStart}{0}
                    \pgfmathsetmacro{\xDiagL}{\xStart + \xd}
                    \pgfmathsetmacro{\xDiagR}{\xDiagL + \xwDiag}
                    \pgfmathsetmacro{\xVjL}{\xDiagR + \xd}
                    \pgfmathsetmacro{\xVjR}{\xVjL + 1}
                    \pgfmathsetmacro{\xQfL}{\xVjR + \xd}
                    \pgfmathsetmacro{\xQfR}{\xQfL + \xw}
                    \pgfmathsetmacro{\xSymlL}{\xQfR + \xd}
                    \pgfmathsetmacro{\xSymlR}{\xSymlL + \xwSym}
                    \pgfmathsetmacro{\xVL}{\xSymlR + \xd}
                    \pgfmathsetmacro{\xVR}{\xVL + \xw}
                    \pgfmathsetmacro{\xStKL}{\xVR + \xd}
                    \pgfmathsetmacro{\xStKR}{\xStKL + 2.7 * \xw}
                    \pgfmathsetmacro{\xSymrL}{\xStKR + \xd}
                    \pgfmathsetmacro{\xSymrR}{\xSymrL + \xwSym}
                    \pgfmathsetmacro{\xQbL}{\xSymrR + \xd}
                    \pgfmathsetmacro{\xQbR}{\xQbL + \xw}
                    \pgfmathsetmacro{\xTupleL}{\xQbR + \xd}
                    \pgfmathsetmacro{\xTupleR}{\xTupleL + 2.2 * \xw}
                    \pgfmathsetmacro{\xEnd}{\xTupleR + \xd}
                    % Diagonal
                    \draw (\xStart, 0) node[left] {$\itp{\Delta'}$} -- (\xDiagL, 0);
                    \draw (\xDiagL, 0) -- (\xDiagR, \ytop);
                    \draw (\xDiagL, 0) -- (\xDiagR, \ybot);
                    % st(Vj)
                    \draw (\xDiagR, \ytop) -- (\xVjR, \ytop);
                    \draw (\xDiagR, \ybot) -- (\xVjR, \ybot);
                    \draw[purecomp] (\xVjL, \ytop + 0.3) rectangle node {$\itp{V_j}^{\emptyenv}_{\Delta'}$} (\xVjR, \ytop - 0.3);
                    % Q^fwd
                    \draw (\xVjR, \ytop) -- (\xQfR, \ytop);
                    \draw (\xVjR, \ybot) -- (\xQfR, \ybot);
                    \draw[comp] (\xQfL, \ytop + \yd) rectangle node {$\itp{Q^{\fwd}_{\opr}}$} (\xQfR, \ymid - \yd);
                    % id x sym
                    \draw (\xQfR, \ytop) -- (\xSymlR, \ytop);
                    \draw (\xQfR, \ymid) -- (\xSymlL, \ymid) -- (\xSymlR, \ybot);
                    \draw (\xQfR, \ybot) -- (\xSymlL, \ybot) -- (\xSymlR, \ymid);
                    % V
                    \draw (\xSymlR, \ytop) -- (\xVR, \ytop);
                    \draw (\xSymlR, \ymid) -- (\xVR, \ymid);
                    \draw (\xSymlR, \ybot) -- (\xVR, \ybot);
                    \draw[purecomp] (\xVL, \ymid + 1.2 * \yd) rectangle node {$\itp{V}^{\emptyenv}_{\Delta'}$} (\xVR, \ymid - 1.2 * \yd);
                    % st(k)
                    \draw (\xVR, \ytop) -- (\xStKR, \ytop);
                    \draw (\xVR, \ymid) -- (\xStKR, \ymid);
                    \draw (\xVR, \ybot) -- (\xStKR, \ybot);
                    \draw[comp] (\xStKL, \ytop + \yd) rectangle node {$\itp{H}_{\itp{\Delta}}(\itp{\ctxfc[\pret{y}]})$} (\xStKR, \ymid - \yd);
                    % id x sym
                    \draw (\xStKR, \ytop) -- (\xSymrR, \ytop);
                    \draw (\xStKR, \ymid) -- (\xSymrL, \ymid) -- (\xSymrR, \ybot);
                    \draw (\xStKR, \ybot) -- (\xSymrL, \ybot) -- (\xSymrR, \ymid);
                    % Q^bwd
                    \draw (\xSymrR, \ytop) -- (\xQbR, \ytop);
                    \draw (\xSymrR, \ymid) -- (\xQbR, \ymid);
                    \draw (\xSymrR, \ybot) -- (\xQbR, \ybot);
                    \draw[comp] (\xQbL, \ytop + \yd) rectangle node {$\itp{Q^{\bwd}_{\opr}}$} (\xQbR, \ymid - \yd);
                    % tuple
                    \draw (\xQbR, \ytop) -- (\xTupleL, \ytop);
                    \draw (\xQbR, \ybot) -- (\xTupleL, \ybot);
                    \draw[purecomp] (\xTupleL, \ytop + \yd) rectangle node {$\begin{aligned} & \lambda(y''_j, (y'_{j'})_{j'=1}^m). \\ &(z_{j'})_{j'=1}^m \end{aligned}$} (\xTupleR, \ybot - \yd);
                    \draw (\xTupleR, \ymid) -- (\xEnd, \ymid) node[right] {$\itp{\Delta}$};
                \end{tikzpicture}
                \\
                & =
                \bigitp{
                    \begin{aligned}
                        & \plet \; \tuple{y,z'} \pbind Q^{\fwd}_{\opr}[V_i/x] \; \pin \\
                        & \plet \; \tuple{y', y'_1, \dots, y'_m} \pbind \rewriterevh{\tuple{y, V_1, \dots, V_m}}{H}(\tuple{y, y_1, \dots, y_m}.\ctxfc[\pret{y}]) \; \pin\\
                        & \plet \; y''_i \pbind Q^{\bwd}_{\opr}[y'/y, z'/z] \; \pin \\
                        & \pret{\langle y'_1, \dots, y'_{i-1}, y'_i + y''_i, y'_{i+1}, \dots, x'_m \rangle}
                    \end{aligned}
                }
                \\
                & =
                \itp{\rewriterevh{V}{H}(\tuple{y_j}_{j=1}^m. \ctxfc[\opr(y_j)])}.
            \end{aligned}
            \]
        \item $R = \ctxfc[\prevhandle(W) \tuple{z_k}_{k=1}^l. R' \; \pwith \; H']$.
            By the induction hypothesis, we have
            \[
            \begin{aligned}
                \itp{\prevhandle(W) \tuple{z_j}_{j=1}^l. R' \; \pwith \; H'}
                & =
                \eta(\itp{W}^{\emptyenv}_{\Delta}) \musemicolon \itp{H'}_{\Delta''}(\itp{R'})
                \\
                & =
                \itp{\rewriterevh{W}{H'}(\tuple{z_k}_{k=1}^l. R')}.
            \end{aligned}
            \]
            Hence, by the induction hypothesis, we have
            \[
            \begin{aligned}
                & \eta(\itp{V}^{\emptyenv}_{\Delta'}) \musemicolon \itp{H}_{\Delta}(\itp{\ctxfc[\prevhandle(W) \tuple{z_k}_{k=1}^l. R' \; \pwith \; H']})
                \\
                & =
                \eta(\itp{V}^{\emptyenv}_{\Delta'}) \musemicolon \itp{H}_{\Delta}(\itp{\rewriterevh{W}{H'}(\tuple{z_k}_{k=1}^l. R')})
                \\
                & =
                \rewriterevh{V}{H}(\tuple{y_k}_{k=1}^m. \ctxfc[\rewriterevh{W}{H'}(\tuple{z_j}_{j=1}^l. R'])).
            \end{aligned}
            \]
        \item $R = \ctxft[M]$ where $M$ is not a variable.
            By the induction hypothesis, we have
            \[
            \begin{aligned}
                & \eta(\itp{V}^{\emptyenv}_{\Delta'}) \musemicolon \itp{H}_{\Delta}(\itp{\ctxft[M]})
                \\
                & = 
                \eta(\itp{V}^{\emptyenv}_{\Delta'}) \musemicolon \itp{H}_{\Delta}\left(
                    \begin{tikzpicture}[baseline=(current bounding box.center)]
                        \pgfmathsetmacro{\yd}{0.3}
                        \pgfmathsetmacro{\xd}{0.2}
                        \pgfmathsetmacro{\xStart}{0}
                        \pgfmathsetmacro{\xDiagL}{\xStart + \xd}
                        \pgfmathsetmacro{\xDiagR}{\xDiagL + 0.2}
                        \pgfmathsetmacro{\xML}{\xDiagR + \xd}
                        \pgfmathsetmacro{\xMR}{\xML + 0.9}
                        \pgfmathsetmacro{\xRL}{\xMR + \xd}
                        \pgfmathsetmacro{\xRR}{\xRL + 2}
                        \pgfmathsetmacro{\xEnd}{\xRR + \xd}
                        % Diagonal
                        \draw (\xStart, 0) node[left] {$\itp{\Delta}$} -- (\xDiagL, 0);
                        \draw (\xDiagL, 0) -- (\xDiagR,  \yd);
                        \draw (\xDiagL, 0) -- (\xDiagR, -\yd);
                        % st(M)
                        \draw (\xDiagR,  \yd) -- (\xMR,  \yd);
                        \draw (\xDiagR, -\yd) -- (\xMR, -\yd);
                        \draw[purecomp] (\xML, \yd + 0.3) rectangle node {$\itp{M}^{\emptyenv}_{\Delta}$} (\xMR, \yd - 0.3);
                        % F[y]
                        \draw (\xMR,  \yd) -- (\xRR,  \yd);
                        \draw (\xMR, -\yd) -- (\xRR, -\yd);
                        \draw (\xRR, 0) -- (\xEnd, 0) node[right] {$\itp{C}$};
                        \draw[comp] (\xRL, \yd + 0.2) rectangle node {$\itp{\ctxfc[\pret{y}]}$} (\xRR, -\yd - 0.2);
                    \end{tikzpicture}
                \right)
                \\
                & =
                \begin{tikzpicture}[baseline=(current bounding box.center)]
                    \pgfmathsetmacro{\yd}{0.3}
                    \pgfmathsetmacro{\xd}{0.15}
                    \pgfmathsetmacro{\xStart}{0}
                    \pgfmathsetmacro{\xVL}{\xStart + \xd}
                    \pgfmathsetmacro{\xVR}{\xVL + 1}
                    \pgfmathsetmacro{\xDiagAL}{\xVR + \xd}
                    \pgfmathsetmacro{\xDiagAR}{\xDiagAL + 0.2}
                    \pgfmathsetmacro{\xDiagBL}{\xDiagAR + \xd}
                    \pgfmathsetmacro{\xDiagBR}{\xDiagBL + 0.2}
                    \pgfmathsetmacro{\xML}{\xDiagBR + \xd}
                    \pgfmathsetmacro{\xMR}{\xML + 0.9}
                    \pgfmathsetmacro{\xsymlL}{\xMR + \xd}
                    \pgfmathsetmacro{\xsymlR}{\xsymlL + 0.2}
                    \pgfmathsetmacro{\xRL}{\xsymlR + \xd}
                    \pgfmathsetmacro{\xRR}{\xRL + 2.8}
                    \pgfmathsetmacro{\xsymrL}{\xRR + \xd}
                    \pgfmathsetmacro{\xsymrR}{\xsymrL + 0.2}
                    \pgfmathsetmacro{\xRevML}{\xsymrR + \xd}
                    \pgfmathsetmacro{\xRevMR}{\xRevML + 1.7}
                    \pgfmathsetmacro{\xPlusL}{\xRevMR + \xd}
                    \pgfmathsetmacro{\xPlusR}{\xPlusL + 0.2}
                    \pgfmathsetmacro{\xEnd}{\xPlusR + 2 * \xd}
                    % V
                    \draw (\xStart, 0) node[left] {$\itp{\Delta'}$} -- (\xVR, 0);
                    \draw[purecomp] (\xVL, 0.3) rectangle node {$\itp{V}^{\emptyenv}_{\Delta'}$} (\xVR, -0.3);
                    % Diagonal A
                    \draw (\xVR, 0) -- (\xDiagAL, 0);
                    \draw (\xDiagAL, 0) -- (\xDiagAR,  \yd);
                    \draw (\xDiagAL, 0) -- (\xDiagAR, -\yd);
                    % Diagonal B
                    \draw (\xDiagAR, \yd) -- (\xDiagBL, \yd);
                    \draw (\xDiagBL, \yd) -- (\xDiagBR, 2 * \yd);
                    \draw (\xDiagBL, \yd) -- (\xDiagBR, 0);
                    \draw (\xDiagAR, -\yd) -- (\xDiagBR, -\yd);
                    % st(M)
                    \draw (\xDiagBR, 2 * \yd) -- (\xMR, 2 * \yd);
                    \draw (\xDiagBR,       0) -- (\xMR,       0);
                    \draw (\xDiagBR,    -\yd) -- (\xMR,    -\yd);
                    \draw[purecomp] (\xML, 2 * \yd + 0.3) rectangle node {$\itp{M}^{\emptyenv}_{\Delta}$} (\xMR, 2 * \yd - 0.3);
                    % sym l
                    \draw (\xMR, 2 * \yd) -- (\xsymlR, 2 * \yd);
                    \draw (\xMR,       0) -- (\xsymlL,    0) -- (\xsymlR, -\yd);
                    \draw (\xMR,    -\yd) -- (\xsymlL, -\yd) -- (\xsymlR,    0);
                    % F[y]
                    \draw (\xsymlR, 2 * \yd) -- (\xRR, 2 * \yd);
                    \draw (\xsymlR,       0) -- (\xRR,       0);
                    \draw (\xsymlR,    -\yd) -- (\xRR,    -\yd);
                    \draw[comp] (\xRL, 2 * \yd + 0.1) rectangle node {$\itp{H}_{\itp{\Delta}}(\ctxfc[\pret{y}])$} (\xRR, 0 - 0.1);
                    % sym r
                    \draw (\xRR, 2 * \yd) -- (\xsymrR, 2 * \yd);
                    \draw (\xRR,       0) -- (\xsymrL,    0) -- (\xsymrR, -\yd);
                    \draw (\xRR,    -\yd) -- (\xsymrL, -\yd) -- (\xsymrR,    0);
                    % R'[M]
                    \draw (\xsymrR, 2 * \yd) -- (\xRevMR, 2 * \yd);
                    \draw (\xsymrR,       0) -- (\xRevMR,       0);
                    \draw (\xsymrR,    -\yd) -- (\xRevMR,    -\yd);
                    \draw[purecomp] (\xRevML, 2 * \yd + 0.1) rectangle node{$\RD'[\itp{M}^{\emptyenv}_{\Delta}]$} (\xRevMR, 0 - 0.1);
                    % Plus
                    \draw (\xRevMR,      \yd) -- (\xPlusL,      \yd) -- (\xPlusR, 0);
                    \draw (\xRevMR, -1 * \yd) -- (\xPlusL, -1 * \yd) -- (\xPlusR, 0);
                    \draw (\xPlusR,        0) -- (\xEnd,          0) node[right] {$\itp{\Delta}$};
                    \draw[purecomp] (\xPlusR, 0) circle [x radius=0.1cm, y radius=0.1cm] node {$+$};
                \end{tikzpicture}
                \\
                & =
                \begin{tikzpicture}[baseline=(current bounding box.center)]
                    \pgfmathsetmacro{\yd}{0.3}
                    \pgfmathsetmacro{\xd}{0.15}
                    \pgfmathsetmacro{\xStart}{0}
                    \pgfmathsetmacro{\xDiagAL}{\xStart + \xd}
                    \pgfmathsetmacro{\xDiagAR}{\xDiagAL + 0.2}
                    \pgfmathsetmacro{\xDiagBL}{\xDiagAR + \xd}
                    \pgfmathsetmacro{\xDiagBR}{\xDiagBL + 0.2}
                    \pgfmathsetmacro{\xVaL}{\xDiagBR + \xd}
                    \pgfmathsetmacro{\xVaR}{\xVaL + 1}
                    \pgfmathsetmacro{\xML}{\xVaR + \xd}
                    \pgfmathsetmacro{\xMR}{\xML + 0.9}
                    \pgfmathsetmacro{\xVbL}{\xMR + \xd}
                    \pgfmathsetmacro{\xVbR}{\xVbL + 1}
                    \pgfmathsetmacro{\xRL}{\xVbR + \xd}
                    \pgfmathsetmacro{\xRR}{\xRL + 2.8}
                    \pgfmathsetmacro{\xsymrL}{\xRR + \xd}
                    \pgfmathsetmacro{\xsymrR}{\xsymrL + 0.2}
                    \pgfmathsetmacro{\xVcL}{\xsymrR + \xd}
                    \pgfmathsetmacro{\xVcR}{\xVcL + 1}
                    \pgfmathsetmacro{\xRevML}{\xVcR + \xd}
                    \pgfmathsetmacro{\xRevMR}{\xRevML + 1.7}
                    \pgfmathsetmacro{\xPlusL}{\xRevMR + \xd}
                    \pgfmathsetmacro{\xPlusR}{\xPlusL + 0.2}
                    \pgfmathsetmacro{\xEnd}{\xPlusR + 2 * \xd}
                    % Diagonal A
                    \draw (\xStart,  0) node[left] {$\itp{\Delta'}$} -- (\xDiagAL, 0);
                    \draw (\xDiagAL, 0) -- (\xDiagAR,  \yd);
                    \draw (\xDiagAL, 0) -- (\xDiagAR, -\yd);
                    % Diagonal B
                    \draw (\xDiagAR, \yd) -- (\xDiagBL, \yd);
                    \draw (\xDiagBL, \yd) -- (\xDiagBR, 2 * \yd);
                    \draw (\xDiagBL, \yd) -- (\xDiagBR, 0);
                    \draw (\xDiagAR, -\yd) -- (\xDiagBR, -\yd);
                    % Va
                    \draw (\xDiagBR, 2 * \yd) -- (\xVaR, 2 * \yd);
                    \draw (\xDiagBR,       0) -- (\xVaR,       0);
                    \draw (\xDiagBR,    -\yd) -- (\xVaR,    -\yd);
                    \draw[purecomp] (\xVaL, 2 * \yd + 0.3) rectangle node {$\itp{V}^{\emptyenv}_{\Delta'}$} (\xVaR, 2 * \yd - 0.3);
                    % st(M)
                    \draw (\xVaR, 2 * \yd) -- (\xMR, 2 * \yd);
                    \draw (\xVaR,       0) -- (\xMR,       0);
                    \draw (\xVaR,    -\yd) -- (\xMR,    -\yd);
                    \draw[purecomp] (\xML, 2 * \yd + 0.3) rectangle node {$\itp{M}^{\emptyenv}_{\Delta}$} (\xMR, 2 * \yd - 0.3);
                    % Vb
                    \draw (\xMR, 2 * \yd) -- (\xVbR, 2 * \yd);
                    \draw (\xMR,       0) -- (\xVbR,       0);
                    \draw (\xMR,    -\yd) -- (\xVbR,    -\yd);
                    \draw[purecomp] (\xVbL, 0 + 0.4) rectangle node {$\itp{V}^{\emptyenv}_{\Delta'}$} (\xVbR, 0 - 0.2);
                    % F[y]
                    \draw (\xVbR, 2 * \yd) -- (\xRR, 2 * \yd);
                    \draw (\xVbR,       0) -- (\xRR,       0);
                    \draw (\xVbR,    -\yd) -- (\xRR,    -\yd);
                    \draw[comp] (\xRL, 2 * \yd + 0.1) rectangle node {$\itp{H}_{\itp{\Delta}}(\ctxfc[\pret{y}])$} (\xRR, 0 - 0.1);
                    % sym r
                    \draw (\xRR, 2 * \yd) -- (\xsymrR, 2 * \yd);
                    \draw (\xRR,       0) -- (\xsymrL,    0) -- (\xsymrR, -\yd);
                    \draw (\xRR,    -\yd) -- (\xsymrL, -\yd) -- (\xsymrR,    0);
                    % Vb
                    \draw (\xsymrR, 2 * \yd) -- (\xVcR, 2 * \yd);
                    \draw (\xsymrR,       0) -- (\xVcR,       0);
                    \draw (\xsymrR,    -\yd) -- (\xVcR,    -\yd);
                    \draw[purecomp] (\xVcL, 0 + 0.4) rectangle node {$\itp{V}^{\emptyenv}_{\Delta'}$} (\xVcR, 0 - 0.2);
                    % R'[M]
                    \draw (\xVcR, 2 * \yd) -- (\xRevMR, 2 * \yd);
                    \draw (\xVcR,       0) -- (\xRevMR,       0);
                    \draw (\xVcR,    -\yd) -- (\xRevMR,    -\yd);
                    \draw[purecomp] (\xRevML, 2 * \yd + 0.1) rectangle node{$\RD'[\itp{M}^{\emptyenv}_{\Delta}]$} (\xRevMR, 0 - 0.1);
                    % Plus
                    \draw (\xRevMR,      \yd) -- (\xPlusL,      \yd) -- (\xPlusR, 0);
                    \draw (\xRevMR, -1 * \yd) -- (\xPlusL, -1 * \yd) -- (\xPlusR, 0);
                    \draw (\xPlusR,        0) -- (\xEnd,          0) node[right] {$\itp{\Delta}$};
                    \draw[purecomp] (\xPlusR, 0) circle [x radius=0.1cm, y radius=0.1cm] node {$+$};
                \end{tikzpicture}
                \\
                & =
                \bigitp{
                \begin{aligned}
                    & \plet \; y \pbind \pret{M[V_1 / y_1, \dots, V_n / y_n]} \; \pin \\
                    & \plet \; (z, y'_1, \dots, y'_n) \pbind \rewriterevh{\tuple{y, V_1 \dots, V_n}}{H}(\tuple{y, y_1, \dots, y_n}.\ctxft[y]) \; \pin \\
                    & \pret{\langle y'_i + z.\trd(y_i.M[V_1/y_1, \dots, V_{i-1}/y_{i-1}, V_{i+1}/y_{i+1}, \dots, V_m/y_m])(V_i) \rangle_{i = 1}^m}
                \end{aligned}
                }
                \\
                & =
                \itp{
                    \rewriterevh{V}{H}(\tuple{y_j}_{j=1}^m. \ctxft[M])
                }.
            \end{aligned}
            \]
    \end{itemize}

    Next, we show $\itp{P} = \itp{P'}$ if $P \to P'$.
    It suffices to show the following claims:
    \begin{itemize}
        \item $\itp{M} = \itp{M'}$ for $M$ and $M'$ satisfying $M \to M'$,
        \item $\itp{\plet \; x \pbind \pret{V} \; \pin \; R} = \itp{R[V/x]}$, and
        \item $\itp{\prevhandle(V)\tuple{x_i}_{i=1}^n.R \; \pwith \; H} = \itp{\rewriterevh{V}{H}(\tuple{x_i}_{i=1}^n.R)}$.
    \end{itemize}
    The first claim is nothing other than the soundness \itemref{item:soundness-term} for terms.
    The second claim is immediately proven by induction on the structure of $R$.
    The third claim is proven above.
\end{appendixproof}
%\subsection{Adequacy}
To prove the adequacy (Corollary~\ref{cor:adequacy}), we define logical relations between semantic values and syntactic terms/commands.
The definition is similar to that of \cite[Definition~5.14]{Sanada2024}.
\begin{definition}[logical relations, {cf.\ \cite[Definition~5.14]{Sanada2024}}]\label{def:logical-relations}
    We define relations
    $(\logrelt_A) \subseteq \catx(1,\itp{A}) \times \{ M \mid \emptyenv \vdash M : A \}$
    and
    $(\logrelc_A) \subseteq \promonad_{\signatopr}(1, \itp{A}) \times \{ P \mid \emptyenv \asemicolon \emptyenv \vdash P : A \}$
    for each type $A$ as follows:
    \begin{align*}
        v \logrelt_{\beta} M
        & \iff
        M \redto^* \const \text{ and } \itp{\const} = v
        \\
        (v_1, \dots ,v_n) \logrelt_{A_1 \times \dots \times A_n} M
        & \iff
        M \redto^* \tuple{V_1, \dots, V_n}
        \text{ and }
        v_i \logrelt_{A_i} V_i
        \text{ for every $i = 1, \dots, n$}
    \end{align*}
    and $a \logrelc_C P$ if
    \begin{itemize}
        \item $P \redto^* \pret{V}$, $a = [\apure(v)]$ and $v \logrelt_{C} V$ for some $v$ and $V$, or
        \item $P \redto^* \ctxfc[\opr(V)]$,
            $a = [\apure(v) \acomp \opr \acomp b]$
            for a value $V$, operation $\opr : A \opto B \in \signatopr$, $v \in \itp{A}$ and $b$
            satisfying
            $v \logrelt_{A} V$ and
            $[\apure(w) \acomp b] \logrelc_{C} \ctxfc[\pret{W}]$ for any $w \logrelt_{B} W$.
    \end{itemize}
    %Furthermore, we define a relation $(\logrelh_{C}) \subseteq (\lambda D. \promonad_{\signatopr}(\itp{D}, \itp{C}) \expto \ralg{\promonad_{\signatopr}}(\itp{D})) \times \{ H \mid \revhandlerjudgment H : C\}$.
    %It holds that $h \logrelh_{C} H$ if for any $v_i \logrelt V_i$ ($i = 1, \dots, n$),
    %$\emptyenv \asemicolon x_1 : A_1, \dots, x_n : A_n \vdash R : C$ and $a \in \promonad_{\signatopr}(\prod_{i=1}^n \itp{A_i}, \itp{C})$ satisfying $(\apure((v_i)_{i=1}^n) \acomp a) \logrelc_{C} R[V_i / x_i]$,
    %it holds that
    %\[ \apure((v_i)_{i=1}^n) \acomp h_{\prod_{i=1}^n A_i}(a) \logrelc_{C} \prevhandle(\tuple{V_i}_{i=1}^n).R \; \pwith \; H. \]
\end{definition}
Intuitively, in the above definition of $a \logrelc_C P$, the element $b$ is the semantic counterpart of the continuation $\ctxfc[\blank]$.

\begin{toappendix}
\begin{lemma}\label{lem:logrel-backward}
    The following hold for $\emptyenv \vdash M : A$ and $\emptyenv \asemicolon \emptyenv \vdash P : A$:
    \begin{enumerate}
        \item\label{item:logrel-backward:term} If $M \redto^* M'$ and $v \logrelt_A M'$, then $v \logrelt_A M$.
        \item\label{item:logrel-backward:command} If $P \redto^* P'$ and $a \logrelc_A P'$, then $a \logrelc_A P$.
    \end{enumerate}
\end{lemma}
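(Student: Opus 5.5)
We prove both items by unfolding the definitions of $\logrelt_A$ and $\logrelc_A$ in Definition~\ref{def:logical-relations}. The key observation is that every clause of these definitions is stated in terms of \emph{reachability} under $\redto^*$, either to a value or to a command in a specified shape. Since $\redto^*$ is transitive, prefixing a reduction sequence $M \redto^* M'$ (or $P \redto^* P'$) preserves every such clause. No induction on types is needed beyond a case split on the outermost shape of the type.

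For item~\itemref{item:logrel-backward:term}, we split on $A$.
\begin{itemize}
\item If $A = \beta$ is a base type, then $v \logrelt_\beta M'$ gives $M' \redto^* \const$ with $\itp{\const} = v$. Composing with $M \redto^* M'$ yields $M \redto^* \const$, hence $v \logrelt_\beta M$.
\item If $A = \typprod_{i=1}^n A_i$, then $v \logrelt_A M'$ gives $v = (v_1,\dots,v_n)$ and $M' \redto^* \tuple{V_1,\dots,V_n}$ with $v_i \logrelt_{A_i} V_i$. The conditions on the components $V_i$ do not mention $M'$ at all, so they transfer unchanged. Transitivity again gives $M \redto^* \tuple{V_1,\dots,V_n}$.
\end{itemize}

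For item~\itemref{item:logrel-backward:command}, we split on which clause witnesses $a \logrelc_A P'$.
\begin{itemize}
\item In the first clause, $P' \redto^* \pret{V}$ with $a = [\apure(v)]$ and $v \logrelt_A V$. Then $P \redto^* P' \redto^* \pret{V}$, and the side conditions, which refer only to $V$ and $v$, are untouched.
\item In the second clause, $P' \redto^* \ctxfc[\opr(V)]$ with $a = [\apure(v)\acomp\opr\acomp b]$, $v \logrelt_A V$, and $[\apure(w)\acomp b] \logrelc_C \ctxfc[\pret{W}]$ for all $w \logrelt_B W$. These side conditions refer only to $V$, $b$ and the context $\ctxfc$, not to $P'$. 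So the same witnesses $V$, $\opr$, $v$, $b$ and $\ctxfc$ serve for $P$, since $P \redto^* \ctxfc[\opr(V)]$ by transitivity.
\end{itemize}

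The main point to be careful about is whether $\logrelc$ is genuinely well-founded as a definition, because it refers to itself in the continuation clause. If the relation is read as a least fixed point (inductively generated), the argument above is literally a one-step rule preservation and is fine. Under a coinductive or measure-indexed reading, the same argument still works, because we never need the recursive hypothesis on $\ctxfc[\pret{W}]$; we only reuse it verbatim. The other thing to check is that $\redto^*$ denotes the reflexive-transitive closure of the single-step relations of Definition~\ref{def:small-step-reduction-commands}, which makes composition of reduction sequences immediate. With these two points noted, the lemma follows directly.
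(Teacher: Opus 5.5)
Your proposal is correct and follows the paper's proof: a case split on the type for $\logrelt$ and on the two defining clauses for $\logrelc$, each case closed by transitivity of $\redto^*$ while reusing the same witnesses. Your extra remark is sound and is not made explicit in the paper: the argument never uses the recursive continuation clause of $\logrelc$, so it does not depend on how that definition is grounded.
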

\begin{proof}[Proof of Lemma~\ref{lem:logrel-backward}]
    \itemref{item:logrel-backward:term}
    We prove by case analysis on the structure of the type $A$.

    \textbf{Case} $A = \beta$.
    If $M \redto^* M'$ and $v \logrelt_{\beta} M'$,
    then $M' \redto^* \const$ holds for some $\const$ satisfying $\itp{\const} = v$ by the definition of $\logrelt_A$.
    Hence, we have $M \redto^* M' \redto^* V$.
    Therefore, $v \logrelt_\beta M$ holds.

    \textbf{Case} $A = A_1 \times \cdots \times A_n$.
    If $M \redto^* M'$ and $v \logrelt_{A_1 \times \cdots \times A_n} M'$,
    then $M' \redto^* \tuple{V_1, \dots, V_n}$ holds for some $V_i$ satisfying $v = (v_1, \dots, v_n)$ and $v_i \logrelt V_i$ ($1 \le i \le n$).
    Hence, we have $M \redto^* M' \redto^* \tuple{V_1, \dots, V_n}$.
    Therefore, we have $v \logrelt_A M$.
    
    \itemref{item:logrel-backward:command}
    If $P \redto^* P'$ and $a \logrelc_A P'$,
    then there are the two cases.
    \begin{enumerate}
        \item $P' \redto^* \pret{V}$, $a = \apure(v)$ and $v \logrelt_A V$ for some $v$ and $V$.
            We have $P \redto^* P' \redto^* \pret{V}$.
            Therefore, we have $a \logrelc_A P$ by the definition of $\logrelc_A$.
        \item $P' \redto^* \ctxfc[\opr(V)]$,
            $a = [\apure(v) \acomp \opr \acomp b]$
            for a value $V$, operation $\opr : A \opto B \in \signatopr$, $v \in \itp{A}$ and $b$ satisfying
            $v \logrelt_{A} V$ and
            $[\apure(w) \acomp b] \logrelc_{C} \ctxfc[\pret{W}]$ for any $w \logrelt_{B} W$.
            In this case, we have $P \redto^* P' \redto^* \ctxfc[\opr(V)]$.
            Therefore, we have $a \logrelc_A P$.
    \end{enumerate}
\end{proof}

\begin{lemma}\label{lem:logrel-let-in}
    Let $\Delta = x_1 : A_1, \dots, x_n : A_n$ and let
    $\emptyenv \asemicolon \Delta \vdash P : A$ and $\emptyenv \asemicolon x : A, \Delta \vdash Q : C$ be well-typed commands,
    $v_i \logrelt_{A_i} V_i$ for $1 \leq i \leq n$ and $w \logrelt_A W$.
    If $\apure(v_1, \dots, v_n) \acomp a \logrelc_A P[V_1/x_1, \dots, V_n/x_n]$ and $\apure(w, v_1, \dots, v_n) \acomp b \logrelc_C Q[W/x, V_1/x_1, \dots, V_n/x_n]$ hold, then it holds that
    \[
        \apure((v_i)_{i=1}^n) \acomp \apure(\diag_{\prod_{i=1}^n \itp{A_i}}) \acomp \astr_{\prod_{i=1}^n \itp{A_i}}(a) \acomp b
        %=
        %\begin{tikzpicture}[baseline=(current bounding box.center)]
        %    \pgfmathsetmacro{\dx}{0.3}
        %    \pgfmathsetmacro{\dy}{0.2}
        %    \pgfmathsetmacro{\wy}{0.7}
        %    \pgfmathsetmacro{\wx}{0.8}
        %    \pgfmathsetmacro{\xstart}{0}
        %    \pgfmathsetmacro{\xvl}{\xstart}
        %    \pgfmathsetmacro{\xvr}{\xvl + 1.3 * \wx}
        %    \pgfmathsetmacro{\xdiag}{\xvr + \dx}
        %    \pgfmathsetmacro{\xar}{\xdiag + \dx}
        %    \pgfmathsetmacro{\xal}{\xar + \wx}
        %    \pgfmathsetmacro{\xbl}{\xal + \dx}
        %    \pgfmathsetmacro{\xbr}{\xbl + \wx}
        %    \pgfmathsetmacro{\xend}{\xbr + \dx}
        %    \draw (0,0) -- (\xdiag-0.1, 0) -- (\xdiag+0.1, \wy/2) -- (\xbl, \wy/2);
        %    \draw (\xdiag-0.1, 0) -- (\xdiag+0.1, -\wy/2) -- (\xbl, -\wy/2);
        %    \draw (\xbr, 0) -- (\xend, 0);
        %    \draw[purecomp] (\xvl, \wy/2) rectangle node {$(v_i)_{i=1}^n$} (\xvr, -\wy/2);
        %    \draw[fill=white] (\xal, \wy) rectangle node {$a$} (\xar, 0);
        %    \draw[fill=white] (\xbl, \wy) rectangle node {$b$} (\xbr, -\wy);
        %\end{tikzpicture}
        \logrelc_C
        (\plet \; x \pbind P \; \pin \; Q)[V_1/x_1, \dots, V_n/x_n].
    \]
\end{lemma}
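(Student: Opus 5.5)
The proof goes by induction on the derivation of $\apure((v_i)_{i=1}^n) \acomp a \logrelc_A P[V_1/x_1, \dots, V_n/x_n]$. The clauses of Definition~\ref{def:logical-relations} for $\logrelc$ are read as an inductive (least) definition, so the continuation clause is a strictly smaller premise. I write $\vec v = (v_i)_{i=1}^n$ and $\sigma = [V_1/x_1, \dots, V_n/x_n]$.

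Two preliminary facts are needed.
\begin{enumerate}
\item Since $(\plet\; x \pbind \blank \;\pin\; Q\sigma)$ is an evaluation context $\ctxfc$, any reduction $P\sigma \redto^* P'$ lifts to $(\plet\; x \pbind P \;\pin\; Q)\sigma \redto^* \plet\; x \pbind P' \;\pin\; Q\sigma$.
\item A semantic commutation law in $\promonad_{\signatopr}$, derived from the axioms of Fig.~\ref{fig:axioms-congruence-relation}. For a total point $\vec v \colon 1 \to \itp{\Delta}$ we have $\langle \vec v, \vec v\rangle = \langle \idmor_1, \vec v\rangle \semicolon (\vec v \times \idmor)$. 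Hence
\[
\apure(\vec v) \acomp \apure(\diag) \acomp \astr(a) \sim \apure(\langle \idmor_1, \vec v\rangle) \acomp \astr(\apure(\vec v) \acomp a),
\]
using $\astr(\apure(f)) \sim \apure(f \times \idmor)$ and $\astr(a \acomp b) \sim \astr(a) \acomp \astr(b)$. So the environment $\vec v$ is carried as an inert second component. Totality of closed values, assumption (1) before Theorem~\ref{thm:soundness}, is what makes $\diag$ behave like a genuine diagonal in the restriction setting.
\end{enumerate}

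\textbf{Case 1: $P\sigma \redto^* \pret{U}$.} Here $[\apure(\vec v) \acomp a] = [\apure(u)]$ with $u \logrelt_A U$. By fact (1) and the $\plet$-$\pret{}$ rule,
\[
(\plet\; x \pbind P \;\pin\; Q)\sigma \redto^* Q[U/x]\sigma .
\]
By the commutation law, the semantic side becomes
\[
\apure(\langle \idmor_1, \vec v\rangle) \acomp \astr(\apure(u)) \acomp b \sim \apure((u, \vec v)) \acomp b .
\]
The hypothesis on $Q$, instantiated with $w = u$ and $W = U$, gives $\apure(u, \vec v) \acomp b \logrelc_C Q[U/x]\sigma$. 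Lemma~\ref{lem:logrel-backward}\itemref{item:logrel-backward:command} then transfers this back along the reduction.

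\textbf{Case 2: $P\sigma \redto^* \ctxfc[\opr(U)]$.} Here $[\apure(\vec v) \acomp a] = [\apure(u) \acomp \opr \acomp c]$ with $u \logrelt U$, and for every $w' \logrelt_B W'$ we have $[\apure(w') \acomp c] \logrelc_A \ctxfc[\pret{W'}]$. The new evaluation context is $\ctxfc' = \plet\; x \pbind \ctxfc \;\pin\; Q\sigma$. By the commutation law and the strength laws,
\[
\apure(\vec v) \acomp \apure(\diag) \acomp \astr(a) \acomp b \sim \apure(u) \acomp \opr \acomp c',
\qquad
c' = \apure(\langle \idmor, \terminal \semicolon \vec v\rangle) \acomp \astr(c) \acomp b .
\]
Deriving this uses the following axioms and structure:
\begin{itemize}
\item $\astr(\opr) \acomp \apure(\pi_1) \sim \apure(\pi_1) \acomp \opr$;
\item naturality of $\astr$ in the pure component;
\item the universal property of the pairing $\langle \idmor_1, \vec v\rangle$ against the terminal object.
\end{itemize}
I expect this rearrangement to be the main obstacle: one must push the pure point $\vec v$ across the effectful $\astr(\opr)$, which is only allowed through the listed axioms. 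I would first prove it as a separate sublemma for a general $\opr$ and a general continuation.

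For the continuation clause, take $w' \logrelt_B W'$. Applying the same law in reverse gives
\[
\apure(w') \acomp c' \sim \apure(\vec v) \acomp \apure(\diag) \acomp \astr(a') \acomp b,
\qquad
a' = \apure(\terminal) \acomp \apure(w') \acomp c .
\]
Here $\apure(\vec v) \acomp a' \sim \apure(w') \acomp c$, and $\apure(w') \acomp c$ is related to $\ctxfc[\pret{W'}]$ by a strictly smaller derivation. The induction hypothesis therefore yields $\apure(w') \acomp c' \logrelc_C \ctxfc'[\pret{W'}]$. Together with $\apure(u) \logrelt U$ and the reduction from fact (1), this establishes the second clause of $\logrelc_C$.
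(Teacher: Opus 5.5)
Your proposal is correct and follows essentially the same proof as the paper: both split on the two clauses of $\logrelc_A$ for $P[\sigma]$, close the $\pret{U}$ case by the hypothesis on $Q$ plus Lemma~\ref{lem:logrel-backward}, and build the new context $\plet\; x \pbind \ctxfc \;\pin\; Q[\sigma]$ in the operation case. The only differences are that you induct on the derivation of $\logrelc$ where the paper inducts on $\opnum{a}$ (equivalent, since the continuation clause strictly lowers the operation count), and that you spell out two steps the paper leaves implicit: commuting the total point $\vec v$ past $\astr(\opr)$, and rewriting $\apure(w') \acomp c'$ in the lemma's shape so the induction hypothesis applies.
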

\begin{proof}[Proof of Lemma~\ref{lem:logrel-let-in}]
    Let $\sigma = V_1/x_1, \dots, V_n/x_n$.
    We prove by induction on the number $\opnum{a}$ of operations contained in $a$.

    \textbf{Base case}: $\opnum{a} = 0$.
    By the definition of $\apure((v_i)_{i=1}^n) \acomp a \logrelc_A P[\sigma]$ and $\opnum{a} = 0$, we have
    $P[\sigma] \redto^* \pret{V}$, $\apure(v_1, \dots, v_n) \acomp a = \apure(v)$ and $v \logrelt_{A} V$ for some $v$ and $V$.
    We have
    \[
    \begin{aligned}
        (\plet \; x \pbind P \; \pin \; Q)[\sigma]
        & = \plet \; x \pbind P[\sigma] \; \pin \; Q[\sigma] \\
        & \redto^* \plet \; x \pbind \pret{V} \; \pin \; Q[\sigma] \\
        & \redto Q[V/x, \sigma]
    \end{aligned}
    \]
    and $\apure(v_1, \dots, v_n) \acomp \apure(\diag_{\prod_{i=1}^n \itp{A_i}}) \acomp \astr_{\prod_{i=1}^n \itp{A_i}}(a) \acomp b = \apure(v, v_1, \dots, v_n) \acomp b$.
    We have $\apure(v, v_1, \dots, v_n) \acomp b \logrelc_{C} Q[V/x, \sigma]$ by the assumption.
    Therefore, we have
    \[ \apure(v_1, \dots, v_n) \acomp \apure(\diag_{\prod_{i=1}^n \itp{A_i}}) \acomp \astr_{\prod_{i=1}^n \itp{A_i}}(a) \acomp b
    \logrelc_{C} (\plet \; x \pbind P \; \pin \; Q)[\sigma]\]
    by Lemma~\ref{lem:logrel-backward}.

    \textbf{Induction step}: $\opnum{a} > 0$.
    By the definition of $\apure((v_i)_{i=1}^n) \acomp a \logrelc_A P[\sigma]$ and $\opnum{a} > 0$, we have
    $P[\sigma] \redto^* \ctxfc[\opr(V)]$ and
    $\apure((v_i)_{i=1}^n) \acomp a = \apure(v) \acomp \opr \acomp a'$
    for a value $V$, operation $\opr : B \opto D  \in \signatopr$, $v \in \itp{B}$ and $a'$
    satisfying
    $v \logrelt_{B} V$ and
    $\apure(w) \acomp a' \logrelc_{A} \ctxfc[\pret{W}]$ for any $w \logrelt_{D} W$.
    We have
    \[
    \begin{aligned}
        (\plet \; x \pbind P \; \pin \; Q)[\sigma]
        & =
        \plet \; x \pbind P[\sigma] \; \pin \; Q[\sigma]
        \\
        & \redto^*
        \plet \; x \pbind \ctxfc[\opr(V)] \; \pin \; Q[\sigma]
    \end{aligned}
    \]
    and
    \[
    \begin{aligned}
        & \apure((v_i)_{i=1}^n) \acomp \apure(\diag_{\prod_{i=1}^n \itp{A_i}}) \acomp \astr_{\prod_{i=1}^n \itp{A_i}}(a) \acomp b
        \\
        & =
        \apure(v, v_1, \dots, v_n) \acomp \astr_{\prod_{i=1}^n \itp{A_i}}(\opr \acomp a') \acomp b
        \\
        & =
        \apure(v) \acomp \opr \acomp a' \acomp \apure(\lambda x . (x, v_1, \dots, v_n)) \acomp b
    \end{aligned}
    \]
    Given any $w \logrelt_{\delta} W$,
    we have $\opnum{\apure(w) \acomp a'} < \opnum{\apure(v) \acomp \opr \acomp a'} = \opnum{\apure((v_i)_{i=1}^n) \acomp a}$.
    Hence, by the induction hypothesis, we have
    \[
        (\apure(w) \acomp a' \acomp \apure(\lambda x . (x, v_1, \dots, v_n)) \acomp b)
        \logrelc_{C}
        \plet \; x \pbind \ctxfc[\pret{W}] \; \pin \; Q[\sigma].
    \]
    Therefore, we have
    \[
        \apure((v_i)_{i=1}^n) \acomp \apure(\diag_{\prod_{i=1}^n \itp{A_i}}) \acomp \astr_{\prod_{i=1}^n \itp{A_i}}(a) \acomp b
        \logrelc_{C}
        (\plet \; x \pbind P \; \pin \; Q)[\sigma]
    \]
    by Lemma~\ref{lem:logrel-backward} and the definition of $\logrelc_{C}$.
\end{proof}
\end{toappendix}

The logical relations are used to prove the following adequacy theorem, yielding Corollary~\ref{cor:adequacy}.
\begin{theoremapxproof}\label{thm:adequacy}
    For $\Gamma = x_1 : A_1, \dots, x_n : A_n$ and $\Delta = y_1 : B_1, \dots, y_m : B_m$, the following hold:
    \begin{enumerate}
        \item\label{thm:adequacy:item:term} For any well-typed term $\Gamma, \Delta \vdash M : A$, $v_i \logrelt_{A_i} V_i$ ($1 \le i \le n$) and $w_j \logrelt_{B_j} W_j$ ($1 \le j \le m$), it holds that
        \[ d \semicolon \itp{M}^{\Gamma}_{\Delta}(c) \logrelt_A M[\sigma, \tau] \]
        where
        $c : 1 \to \prod_{i=1}^n \itp{A_i}$ is defined by $c = \langle v_1, \dots, v_n \rangle$,
        $d : 1 \to \prod_{j=1}^m \itp{B_j}$ is defined by $d = \langle w_1, \dots, w_m \rangle$,
        $\sigma = V_1/x_1,
        \dots, V_n/x_n$ and $\tau = W_1/y_1, \dots, W_m/y_m$.
        \item\label{thm:adequacy:item:command} For any well-typed command $\Gamma \asemicolon \Delta \vdash P : C$ and $v_i \logrelt_{A_i} V_i$ ($1 \le i \le n$) and $w_j \logrelt_{B_j} W_j$ ($1 \le j \le m$), it holds that
        \[ \apure(d) \acomp \itp{P}(c) \logrelc_C P[\sigma, \tau]. \]
        $c : 1 \to \prod_{i=1}^n \itp{A_i}$ is defined by $c = \langle v_1, \dots, v_n \rangle$,
        $d : 1 \to \prod_{j=1}^m \itp{B_j}$ is defined by $d = \langle w_1, \dots, w_m \rangle$,
        $\sigma = V_1/x_1,
        \dots, V_n/x_n$ and $\tau = W_1/y_1, \dots, W_m/y_m$.
    \end{enumerate}
\end{theoremapxproof}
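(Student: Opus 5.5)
Item~\itemref{thm:adequacy:item:term} is proved by induction on the derivation of $\Gamma, \Delta \vdash M : A$, following the adequacy proof for the simple differentiable programming language \cite{AbadiPlotkin2019,Cruttwell+2021}. Variables are immediate from the hypotheses $v_i \logrelt V_i$ and $w_j \logrelt W_j$. Constants follow from totality of $\itp{\const}$. For $\func(M)$ we use the induction hypothesis to reduce $M[\sigma,\tau]$ to a value $V$, and then condition~(2) of the differentially denotational interpretation structure: totality of $\itp{V}\semicolon\itp{\func}$ gives that $\Eval(\func,V)$ is defined with the right denotation. Tuples, projections, $+$ (via condition~(3)) and $\tlet$ go through by closure of $\logrelt$ under backward reduction (Lemma~\ref{lem:logrel-backward}) together with the substitution lemma $\itp{N[V/x]} = \langle \itp{V},\idmor\rangle\semicolon\itp{N}$. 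For $M.\trd(x.N)(L)$ we first reduce $M$ and $L$ to values $W,V$, then step to $\rewriterd{W}{V}(x.N)$. By soundness (Theorem~\ref{thm:soundness}) the denotation is unchanged. Since $\rewriterd{W}{V}(x.N)$ is built from $\rd[\func]$, $\tlet$, tuples and $+$ and contains no further $\trd$ redexes at top level, we recurse, ordering by the nesting depth of $\trd$ and then by size.

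For item~\itemref{thm:adequacy:item:command} we induct on $(\rhdepth{P}, \dsize{\Gamma\asemicolon\Delta\vdash P : C})$ in the lexicographic order. Write $\rho = [\sigma,\tau]$.

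For $\pret{M}$, $\apure(d)\acomp\eta(\itp{M}(c)) = [\apure(d\semicolon\itp{M}(c))]$, and item~(1) gives $M\rho \redto^* \const$ or a tuple value, so the first clause of $\logrelc$ applies.

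For $\opr(M)$, reduce $M\rho$ to $V$ by item~(1) and take $b = \apure(\idmor)$. The continuation condition $[\apure(w)] \logrelc \pret{W}$ is then trivial.

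For $\plet\;x\pbind P\;\pin\;Q$, unfold $\itp{-}$ as $\eta(\diag)\musemicolon\strength(\itp{P}(c))\musemicolon\itp{Q}(c)$ and apply Lemma~\ref{lem:logrel-let-in} with the induction hypotheses for $P$ and $Q$. For $Q$, the extra variable $x$ is instantiated by the hypothesis $w \logrelt W$.

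For $\prevhandle(M)\tuple{x_i}.R\;\pwith\;H$, reduce $M\rho$ to a value $V$ using item~(1). Then
\[
\prevhandle(M\rho)\tuple{x_i}.R\rho\;\pwith\;H \redto^* \prevhandle(V)\tuple{x_i}.R\rho\;\pwith\;H \redto \rewriterevh{V}{H}(\tuple{x_i}.R\rho).
\]
The handler is closed, so $H$ is unaffected by $\rho$. The key identity shown in the proof of Theorem~\ref{thm:soundness} is $\eta(\itp{V})\musemicolon\itp{H}(\itp{R}) = \itp{\rewriterevh{V}{H}(\tuple{x_i}.R)}$. By Lemma~\ref{lem:RH-depth} the depth of $\rewriterevh{V}{H}(\ldots)$ is strictly smaller, so the induction hypothesis applies to it with empty context. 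Lemma~\ref{lem:logrel-backward} then transfers the relation back to the original command.

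The main obstacle is that item~(2) is stated for open commands, but the induction hypothesis must be applied to the rewritten command $\rewriterevh{V}{H}(\ldots)$, which is closed yet not a subterm of $P$. This is exactly why the measure is lexicographic with depth first. I must check that $\rewriterevh{V}{H}(\ldots)$, which is typed under $\Delta'$ (Lemma~\ref{lem:type-preservation-RH}), fits the statement with $\Gamma$ empty, and that its interpretation under $d$ agrees with $\apure(d)\acomp\eta(\itp{M}(c))\musemicolon\itp{H}(\itp{R}(c))$. This requires a substitution lemma for commands, $\itp{P[\rho]} = \apure(d)\acomp\itp{P}(c)$ up to $\sim$, which I would prove first by routine induction.

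A secondary concern is whether the term-level argument is well founded in the $\trd$ case. If recursing on $\rewriterd{W}{V}(x.N)$ is not structurally decreasing, I would instead define $\logrelt$ as in \cite{Cruttwell+2021} and import their term adequacy result directly.
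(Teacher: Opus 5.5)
Your plan for item~(2) is the paper's proof. It uses the same lexicographic induction on $(\rhdepth{P},\dsize{\cdot})$, Lemma~\ref{lem:logrel-let-in} for $\plet$, and for $\prevhandle$ the identity $\eta(\itp{V})\musemicolon\itp{H}(\itp{R})=\itp{\rewriterevh{V}{H}(\tuple{x_i}.R)}$ from the soundness proof together with Lemma~\ref{lem:RH-depth} and Lemma~\ref{lem:logrel-backward}. The command substitution lemma you isolate is used there silently by the paper. For item~(1) the paper only says ``by induction on the derivation''. Your measure (nesting depth of $\trd$, then size) is a sound way to fill in the $\trd$ case, whose reduct is not a subterm. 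Each clause of Fig.~\ref{fig:rewriting-rule-term} only copies subterms and values, and the $\trd$-clause removes one level, so $\rewriterd{W}{V}(x.N)$ has $\trd$-depth at most that of $N$. Your fallback is therefore unnecessary.

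One caveat applies equally to the paper and is really a defect of the statement rather than of your route. In the $\func(M)$ case you take totality of $\itp{V}\semicolon\itp{\func}$ for granted, and nothing supplies it when $\catx$ is a genuine RDRC. For instance, in $\PartSmooth$ the closed term $\fnpool_{2,2,1}(\fromvec{(1,1)})$ is stuck, because $\itp{\fnpool}$ is undefined at $(1,1)$ and hence so is $\Eval$. Its denotation is the empty partial map $1\to\itp{\typreal(1)}$, which equals no $\itp{\const}$, so item~(1) fails there. Your argument, like the paper's, is correct for total models such as $\Smooth$. In general the theorem needs a totality guard of the kind used in Corollary~\ref{cor:adequacy}.
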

\begin{appendixproof}[Proof of Theorem~\ref{thm:adequacy}]
    Let $\sigma = V_1/x_1, \dots, V_n/x_n$ and $\tau = W_1/y_1, \dots, W_m/y_m$.
    
    \itemref{thm:adequacy:item:term}
    The proof is done by induction on the derivation of $\Gamma \vdash M : A$.
    
    \itemref{thm:adequacy:item:command}
    The proof is done by induction on $(\rhdepth{P}, \dsize{\Gamma \asemicolon \Delta \vdash P : C}) \in (\Nat^2, \le_{\Nat^2})$.

    \textbf{Base case}: $(\rhdepth{P}, \dsize{\Gamma \asemicolon \Delta \vdash P : C}) = (0, 1)$.
    There are three cases: $P = \pret{x_i}$, $P = \pret{y_j}$ and $P = \pret{\const}$ for some $\const : \beta \in \signatfunc$.
    
    \textbf{Case} $P = \pret{x_i}$.
    We have $P[\sigma, \tau] = \pret{V_i}$ and
    \[
        \apure(d) \acomp \itp{\pret{x_i}}(c)
        = \apure(d) \acomp \apure(\terminal_{\itp{\Delta}} \semicolon c \semicolon \pi_i)
        = \apure(v_i).
    \]
    Hence by definition of $\logrelc_C$, we have $\apure(v_i) \logrelc_C \pret{V_i}$ as desired.

    \textbf{Case} $P = \pret{y_j}$.
    We have $P[\sigma, \tau] = \pret{W_j}$ and
    \[
        \apure(d) \acomp \itp{\pret{y_j}}(c)
        = \apure(d) \acomp \pi_j
        = \apure(w_j).
    \]
    Hence by definition of $\logrelc_C$, we have $\apure(w_j) \logrelc_C \pret{W_j}$ as desired.

    \textbf{Case} $P = \pret{\const}$ for some $\const : \beta \in \signatfunc$.
    We have $P[\sigma, \tau] = \pret{\const}$ and 
    \[
        \apure(d) \acomp \itp{\pret{\const}}(c)
        = \apure(d) \acomp \apure(\terminal_{\itp{\Delta}} \semicolon \itp{\const})
        = \apure(\itp{\const}).
    \]
    Hence by definition of $\logrelc_C$, we have $\apure(\itp{\const}) \logrelc_C \pret{\const}$ as desired.

    \textbf{Inductive step}: $(\rhdepth{P}, \dsize{\Gamma \asemicolon \Delta \vdash P : C}) > (0, 1)$.
    We proceed by case analysis on the last rule used in the derivation of $\Gamma \asemicolon \Delta \vdash P : C$.

    \textbf{Case} \tycpure.
    The root of the derivation is
    \[
        \infer[\tycpure]{
            \Gamma \asemicolon \Delta \vdash \pret{M} : A
        }{
            \Gamma, \Delta \vdash M : A
        }
    \]
    By \itemref{thm:adequacy:item:term}, we have
    $d \semicolon \itp{M}^{\Gamma}_{\Delta}(c) \logrelt_A M[\sigma, \tau]$.
    By definition of $\logrelt_A$, we have $M[\sigma, \tau] \redto^* U$ for some value $U$ and $\itp{U} = d ; \itp{M}^{\Gamma}_{\Delta}(c)$.
    Hence, we have
    $\pret{M}[\sigma, \tau]
    = \pret{M[\sigma, \tau]}
    \redto^* \pret{U}$ and $\apure(U) \logrelc_A \pret{U}$.
    We have
    \[
        \apure(d) \acomp \itp{\pret{M}}(c)
        = \apure(d \semicolon \itp{M}^{\Gamma}_{\Delta}(c))
        = \apure(\itp{U}).
    \]
    By Lemma~\ref{lem:logrel-backward}, we have
    $\apure(d \semicolon \itp{\pret{M}}(c))
    \logrelc_A
    \pret{M}[\sigma, \tau]$.

    \textbf{Case} \tycop.
    The root of the derivation is
    \[
        \infer[\tycop]{
            \Gamma \asemicolon \Delta \vdash \opr(M) : C
        }{
            \Gamma, \Delta \vdash M : A
            &
            \opr : A \opto C \in \signatopr
        }
    \]
    By \itemref{thm:adequacy:item:term}, we have
    $d \semicolon \itp{M}^{\Gamma}_{\Delta}(c) \logrelt_A M[\sigma, \tau]$.
    By definition of $\logrelt_A$, we have $M[\sigma, \tau] \redto^* U$ for some value $U$ and $\itp{U} = d \semicolon \itp{M}^{\Gamma}_{\Delta}(c)$.
    Hence, we have
    $\opr(M)[\sigma, \tau]
    = \opr(M[\sigma, \tau])
    \redto^* \opr(U)$
    and
    \[
        \apure(\itp{U}) \acomp \opr
        =
        \apure(d \semicolon \itp{M}^{\Gamma}_{\Delta}(c)) \acomp \opr
        =
        \apure(d) \acomp \itp{\opr(M)}(c).
    \]
    Therefore, by the definition of $\logrelc_C$ and Lemma~\ref{lem:logrel-backward}, we have
    $\apure(d) \acomp \itp{\opr(M)}(c)
    \logrelc_C
    \opr(M)[\sigma, \tau]$.

    \textbf{Case} \tyclet.
    The root of the derivation is
    \[
        \infer[\tyclet]{
            \Gamma \asemicolon \Delta \vdash \plet \; y \pbind Q \; \pin \; R : C
        }{
            \Gamma \asemicolon \Delta \vdash Q : B
            &
            \Gamma \asemicolon y : B, \Delta \vdash R : C
        }
    \]
    By the induction hypothesis and Lemma~\ref{lem:logrel-let-in}, we have
    \[
        \apure(d) \acomp \itp{\plet \; y \pbind Q \; \pin \; R}(c)
        \logrelc_C
        (\plet \; y \pbind Q \; \pin \; R)[\sigma, \tau].
    \]

    \textbf{Case} \tychandle.
    The root of the derivation of $\Gamma \asemicolon \Delta \vdash P : \typprod_{k=1}^l B'_k$ is
    \[
        \infer[\tychandle]{
            \Gamma \asemicolon \Delta \vdash \prevhandle(M) \tuple{y'_1, \dots, y'_l}.R \; \pwith \; H : \typprod_{k=1}^l B'_k
        }{
            \Gamma, \Delta \vdash M : \typprod_{k=1}^l B'_k
            &
            \Gamma \asemicolon \Delta' \vdash R : C
            &
            \revhandlerjudgment H : C
        }
    \]
    where
    $\Delta' = y'_1 : B'_1, \dots, y'_l : B'_{l}$ and
    $H = \{ x \mapsto P_H \} \cup \{ (x \mapsto Q_{\opr}^{\fwd} \mid y, z \mapsto Q_{\opr}^{\bwd}) \}_{\opr \in \signatopr}$.
    By \itemref{thm:adequacy:item:term}, we have
    $d \semicolon \itp{M}^{\Gamma}_{\Delta}(c) \logrelt_{\typprod_{k=1}^l B'_k} M[\sigma, \tau]$. Hence, by definition of $\logrelt$, we have
    $M[\sigma, \tau] \redto^* \tuple{W'_1, \dots, W'_l}$
    for some values $W'_1, \dots, W'_l$
    with $\langle \itp{W'_1}, \dots, \itp{W'_l} \rangle = d \semicolon \itp{M}^{\Gamma}_{\Delta}(c)$.
    We have
    \[
    \begin{aligned}
        P[\sigma, \tau]
        & =  (\prevhandle(M) \tuple{y'_1, \dots, y'_l}.R \; \pwith \; H)[\sigma, \tau]
        \\
        & =
        \prevhandle(M[\sigma, \tau]) \tuple{y'_1, \dots, y'_l}.R[\sigma] \; \pwith \; H
        \\
        & \redto^*
        \prevhandle(\tuple{W'_1, \dots, W'_l}) \tuple{y'_1, \dots, y'_l}.R[\sigma] \; \pwith \; H
        \\
        & \redto
        \rewriterevh{\tuple{W_j}_{j=1}^l}{H}(R[\sigma]).
    \end{aligned}
    \]
    We have 
    $\rhdepth{\rewriterevh{\tuple{W_j}_{j=1}^l}{H}(R[\sigma])} < \rhdepth{P}$.
    Hence, by the induction hypothesis and Theorem~\ref{thm:soundness}, we have
    \[
        \apure(d) \acomp \itp{P}(c)
        = \itp{\rewriterevh{\tuple{W_j}_{j=1}^l}{H}(R[\sigma])}
        \logrelc_C
        \rewriterevh{\tuple{W_j}_{j=1}^l}{H}(R[\sigma]).
    \]
    By Lemma~\ref{lem:logrel-backward}, we have
    $\apure(d) \acomp \itp{P}(c)
    \logrelc_C
    P[\sigma, \tau]$.
\end{appendixproof}
\begin{corollary}[Adequacy]\label{cor:adequacy}
    For any well-typed closed command $\emptyenv \asemicolon \emptyenv \vdash P : A$,
    If $\itp{P} = \apure(v)$ for some total point $v: 1\to\itp{A}$, then $P \redto^* \pret{V}$ for some value $V$ satisfying $v \logrelt_A V$.
\end{corollary}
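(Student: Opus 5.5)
The plan is to instantiate Theorem~\ref{thm:adequacy}\itemref{thm:adequacy:item:command} with $\Gamma = \Delta = \emptyenv$ and then unfold the definition of $\logrelc_A$. With both contexts empty, $c$ and $d$ are the unique (total) points of $1$, the substitutions $\sigma, \tau$ are empty, and $\apure(d)$ is, up to $\sim$, the identity $\apure(\idmor_1)$. The theorem then gives $\itp{P} \logrelc_A P$, where $\itp{P}$ means $\itp{P}(\idmor_1) \in \promonad_{\signatopr}(1, \itp{A})$.

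Next I unfold the definition of $\logrelc_A$ (Definition~\ref{def:logical-relations}). Either (i) $P \redto^* \pret{V}$ with $\itp{P} = [\apure(v')]$ and $v' \logrelt_A V$, or (ii) $P \redto^* \ctxfc[\opr(V)]$ with $\itp{P} = [\apure(u) \acomp \opr \acomp b]$. In case (i), the hypothesis $\itp{P} = \apure(v)$ gives $[\apure(v)] = [\apure(v')]$, and I conclude $v = v'$, so $v \logrelt_A V$ as required.

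The main obstacle is excluding case (ii) and justifying $v = v'$ in case (i). Both need the free promonad $\promonad_{\signatopr}$ to be faithful on pure terms and to distinguish pure terms from terms containing operations. I would attack this with an invariant on $\ATerm_{\signatopr}$: the multiset of operation symbols occurring in a term (the $\opnum{\blank}$ count already used in Lemma~\ref{lem:logrel-let-in}) is preserved by every generating axiom of $\sim$ in Fig.~\ref{fig:axioms-congruence-relation}. Each axiom has the same operation occurrences on both sides, and $\astr$ neither creates nor erases them. Hence $\apure(u) \acomp \opr \acomp b \not\sim \apure(v)$, since the left side has count at least $1$ and the right side has count $0$, which rules out case (ii).

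For faithfulness, I would define a map from operation-free terms to $\catx$ by $\apure(f)\mapsto f$, $a\acomp b \mapsto a\semicolon b$, $\astr_Z(a)\mapsto a\times\idmor_Z$, and check that it respects each axiom. This holds because $\catx$ is Cartesian restriction, so $\times$ is a functor and the projection, associativity and naturality laws hold for total maps. The delicate one is $\astr(a)\acomp\apure(\pi_1)\sim\apure(\pi_1)\acomp a$, which in a restriction category holds only when the relevant maps are total. If needed, I would restrict the invariant to the total-point situation of the corollary and use that closed values are total (assumption (1) of the interpretation structure). Since the congruence is generated, the map is well defined on classes, so $[\apure(v)] = [\apure(v')]$ forces $v = v'$.
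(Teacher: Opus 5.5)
Your proposal is correct and takes the paper's route: instantiate Theorem~\ref{thm:adequacy} at empty contexts to get $\itp{P} \logrelc_A P$, then conclude ``by the definition of $\logrelc_A$''. The paper leaves implicit the two facts you justify, namely that $[\apure(v)]$ cannot be of the form $[\apure(u)\acomp\opr\acomp b]$ and that $[\apure(v)] = [\apure(v')]$ forces $v = v'$; your operation-count invariant and your interpretation of operation-free terms into $\catx$ settle both. Your worry about the $\pi_1$ axiom is unnecessary, since $(a\times\idmor_Z)\semicolon\pi_1 = \pi_1\semicolon a$ holds in any Cartesian restriction category because $\idmor_Z$ is total.
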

\begin{appendixproof}[Proof of Corollary~\ref{cor:adequacy}]
    By Theorem~\ref{thm:adequacy}, we have
    $\apure(v) = \itp{P} \logrelc_A P$.
    By the definition of $\logrelc_A$, we have $P \redto^* \pret{V}$ for some value $V$ satisfying $v \logrelt_A V$.
\end{appendixproof}

\section{Related Work}
Arrows \cite{Hughes2000} provide a wider class of models of computational effects than monads.
While monads capture the type of the result of computations with effects \cite{Moggi1991}, arrows capture both the input and output types of computations with effects.
The theoretical foundation of arrows has been studied in terms of strong promonads \cite{HeunenJacobs06,JacobsHeunenHasuo2009,Asada2010}.
The arrow calculus \cite{LindleyWadlerYallop2010} is an arrow version of the monadic metalanguage \cite{Moggi1991}.
Arrows are suitable to describe sequential computations \cite{Lindley2014,Sanada2024}.

Algebraic effects \cite{PlotkinPower2001} and effect handlers \cite{PlotkinPretnar2013} provide powerful abstractions and implementation techniques for computational effects.
There are various extensions and variations of algebraic effects and handlers.
The arrow calculus with operations and handlers \cite{Sanada2024} is an extension of the arrow calculus and provides an arrow version of algebraic effects and handlers.

The most important and interesting functionality for training of neural networks is backpropagation \cite{RumelhartHintonWilliams1986,Linnainmaa1976}, which is a special case of reverse-mode automatic differentiation (AD), which is a family of algorithms to compute derivatives of functions (e.g.\ \cite{Baydin+2017,PearlmutterSiskind2008}).
CRDCs \cite{Cockett+2020} and RDRCs \cite{Cruttwell+2021} provide a categorical foundation of reverse-mode automatic differentiation.

There are some languages, libraries and implementations for AD. 
%\begin{itemize}
    %\item
    PyTorch \cite{Paszke+2019} and TensorFlow \cite{Abadi+2016} are popular libraries in Python for deep learning that support AD via computational graphs.
    Programmers write neural networks imperatively using these libraries.
    These libraries do not have safe type systems nor formal semantics.
    In contrast, our language has a safe type system and rigorous categorical semantics.

    %\item
    A simple differentiable programming language by Abadi and Plotkin \cite{AbadiPlotkin2019} is a pure functional language with reverse-mode AD.
    The language has formal operational semantics and denotational semantics based on RDRCs \cite{Cruttwell+2021}.
    We added effectful computation and handlers to a simplified version of their language.

    %\item
    Sigal \cite{Sigal2021,Sigal2024} described implementation techniques for forward- and reverse-mode automatic differentiation using effect handlers.
    His reverse-mode AD technique is based on monads and mutable state.
    In his approach, handlers compute derivatives of pure functions, and mutable state is essential for the computation.
    In contrast, our approach automatically computes derivatives of pure functions while allowing programmers to customize derivatives of algebraic operations (e.g.\ STE).
    In our MLP example, mutable state is used only to store parameter matrices.

    %\item
    CHAD \cite{VakarSmeding2022} is a categorical framework for forward- and reverse-mode AD.
    The framework provides a transformation from terms in a source language to terms in a target language that computes derivatives.
    In our language, derivatives are computed within one language.
%\end{itemize}

Our framework provides a practical language for design and implementation of neural networks with formal semantics.
Compared with the existing approaches for AD via functional programming languages (e.g.\ \cite{AbadiPlotkin2019,Sigal2021,VakarSmeding2022}),
the main advantage of our framework is separation between design and behavior of neural networks.
This separation enables us to associate layers with non-standard derivatives such as STE.

There are several approaches to model neural networks using category theory:
for example, lens-based approaches \cite{FongJohnson2019,FongSpivakTuyeras2021,Cruttwell+2022}, string diagrammatic representations of neural networks \cite{XuMaruyama2021}, and CRDCs and RDRCs \cite{Cockett+2020,Cruttwell+2021,Cruttwell+2022}.
Our approach is closely related to these: if $\promonad$ is the identity promonad on $\catx$, the presheaf $\ralg{\promonad}$ of Definition~\ref{def:reverse-algebra} is exactly the one on $\mathrm{Lens}(\catx)$ represented by the terminal object, restricted along the lens construction $\catx \to \mathrm{Lens}(\catx)$ \cite[Prop.~2.7]{Cruttwell+2022}.
We expect that a suitable generalization of the lens construction to strong promonads on RDRCs will explain Definition~\ref{def:reverse-algebra} and Lemma~\ref{lem:ralg-A-is-functor} for general $\promonad$.

What distinguishes our framework is not merely a separation between abstract architecture and learning behavior, which already appears to some extent in existing categorical approaches, but the fact that the implementation mechanism itself is described by promonad algebras and reverse handlers.
In \cite{FongSpivakTuyeras2021}, compositional architectures are described in a category such as $\mathbf{NNet}$, while backpropagation is supplied by external functorial constructions.
In \cite{Cruttwell+2022}, this implementation story is brought further inside the categorical framework through CRDC structure and parametric lenses, but it is still organized as semantic machinery assembled from given components.
By contrast, our framework makes the implementation side itself programmable within the language.
More broadly, modelling neural networks by promonads on RDRCs connects a categorical account of neural-network design and implementation with a programming-language account, based on the effect-handler paradigm, together with formal operational and denotational semantics.

\section{Conclusion and Future Work}
We introduced a differentiable arrow calculus with operations and reverse handlers.
We defined its operational and denotational semantics, and proved the soundness (Theorem~\ref{thm:soundness}) and adequacy (Corollary~\ref{cor:adequacy}) of the denotational semantics with respect to the operational semantics.
There are many examples of neural network architectures that can be expressed naturally in our language (Table~\ref{tab:examples-summary}).
%including
%MLPs (Example~\ref{ex:mlp-backpropagation}, \ref{ex:mlp-reduction}, and \ref{ex:mlp-denotation}), 
%residual neural networks (Example~\ref{ex:residual-neural-network} and \ref{ex:resnet-denotation}),
%autoencoders (Example~\ref{ex:autoencoder}),
%CNNs (Example~\ref{ex:convolutional-neural-network} and \ref{ex:convolutional-neural-network-H-CNN}),
%and U-nets (Example~\ref{ex:U-net} and \ref{ex:U-net-denotation}).
%QAT via STE can also be implemented using reverse handlers (Example~\ref{ex:ste}).
The string diagrams of their denotations clearly illustrate their architectures.
Hence, our framework provides not only a programming language for neural networks, but also a mathematically rigorous graphical language for describing neural network architectures.

We used reverse handlers to implement pairs of an encoder and a decoder in Example~\ref{ex:autoencoder} and \ref{ex:U-net}.
This is not because reverse handlers have unnecessarily high expressive power, but because there is a fundamental relation between CRDCs and lenses \cite{Cruttwell+2022}, which are a model of bidirectional computation.
The precise description of this relation in our setting is left as future work.

As a possible extension toward greater practicality, we plan to enrich our language with additional features.
One challenge is to incorporate delayed traces \cite{SprungerKatsumata2021} to support recurrent neural networks.

\begin{acks}
    The first author was supported by JSPS KAKENHI Grant Number JP24K23867.
    The second author was supported by JSPS KAKENHI Grant Number JP23KJ1365.
    The third author is supported by JST SPRING, Grant Number JPMJSP2110.
\end{acks}

\begin{toappendix}
\section{Details of Examples}
\label{sec:details-examples}

In this section, we provide details of the examples that appeared in \S\ref{sec:examples-neural-network} and
\ref{subsec:examples-operational-semantics}.
%In particular, in the latter subsection we give a more detailed explanation of
%the examples involving non-smooth functions, where we need either to extend our
%framework or to rely on smooth approximation techniques.

\begin{figure}
    \centering
    \scalebox{0.7}{$
    \begin{aligned}
        & ([\ell_0 \mapsto \fromvec{m_0}, \ell_1 \mapsto \fromvec{m_1}], \prevhandle(\fromvec{v})\tuple{x_{\inp}}. R_{\MLP} \; \pwith \; H)
        \\
        & \redto
        ([\ell_0 \mapsto \fromvec{m_0}, \ell_1 \mapsto \fromvec{m_1}], \rewriterevh{\fromvec{v}}{H} (x_{\inp}.R_{\MLP}))
        \\
        & =
        \left(
            \left[
                \begin{aligned}
                    \ell_0 & \mapsto \fromvec{m_0},
                    \\
                    \ell_1 & \mapsto \fromvec{m_1}
                \end{aligned}
            \right],
            \left(
                \begin{aligned}
                    & \plet \; \tuple{y, z'} \pbind Q^{\fwd}_{\opFC{\ell_0}}[\fromvec{v}/x] \; \pin
                    \\
                    & \plet \; \tuple{y', x'} \pbind \rewriterevh{\tuple{y, \fromvec{v}}}{H} \left(
                        \tuple{y, x_{\inp}}.
                        \begin{aligned}
                            & \plet \; z_{\hid} \pbind \pret{y} \; \pin \;
                            \plet \; x_{\hid} \pbind \pret{\fnSwish_{n_{\hid}}(z_{\hid})} \; \pin \\
                            & \plet \; z_{\out} \pbind \opFC{\ell_1}(x_{\hid}) \; \pin \; \pret{z_{\out}}
                        \end{aligned}
                    \right) \; \pin
                    \\
                    & \plet \; x'' \pbind Q^{\bwd}_{\opFC{\ell_0}}[y'/y, z'/z] \; \pin \; \pret{x' + x''}
                \end{aligned}
            \right)
        \right)
        \\
        & \redto^*
        \left(
            \left[
                \begin{aligned}
                    \ell_0 & \mapsto \fromvec{m_0},
                    \\
                    \ell_1 & \mapsto \fromvec{m_1}
                \end{aligned}
            \right],
            \left(
                \begin{aligned}
                    & \plet \; \tuple{y, z'} \pbind \tuple{\fromvec{m_0 v}, \tuple{\fromvec{m_0}, \fromvec{v}}} \; \pin
                    \\
                    & \plet \; \tuple{y', x'} \pbind \rewriterevh{\tuple{y, \fromvec{v}}}{H} \left(
                        \tuple{y, x_{\inp}}.
                        \begin{aligned}
                            & \plet \; z_{\hid} \pbind \pret{y} \; \pin \;
                            \plet \; x_{\hid} \pbind \pret{\fnSwish_{n_{\hid}}(z_{\hid})} \; \pin \\
                            & \plet \; z_{\out} \pbind \opFC{\ell_1}(x_{\hid}) \; \pin \; \pret{z_{\out}}
                        \end{aligned}
                    \right) \; \pin
                    \\
                    & \plet \; x'' \pbind Q^{\bwd}_{\opFC{\ell_0}}[y'/y, z'/z] \; \pin \; \pret{x' + x''}
                \end{aligned}
            \right)
        \right)
        \\
        & \redto^*
        \left(
            \left[
                \begin{aligned}
                    \ell_0 & \mapsto \fromvec{m_0},
                    \\
                    \ell_1 & \mapsto \fromvec{m_1}
                \end{aligned}
            \right],
            \left(
                \begin{aligned}
                    & \plet \; \tuple{y', x'} \pbind \rewriterevh{\tuple{\fromvec{m_0 v}, \fromvec{v}}}{H} \left(
                        \tuple{y, x_{\inp}}.
                        \plet \; x_{\hid} \pbind \pret{\fnSwish_{n_{\hid}}(y)} \; \pin \;
                        \plet \; z_{\out} \pbind \opFC{\ell_1}(x_{\hid}) \; \pin \; \pret{z_{\out}}
                    \right) \; \pin
                    \\
                    & \plet \; x'' \pbind Q^{\bwd}_{\opFC{\ell_0}}[y'/y, \tuple{\fromvec{m_0}, \fromvec{v}}/z] \; \pin \; \pret{x' + x''}
            \end{aligned}
            \right)
        \right)
        \\
        & \redto
        \left(
            \left[
                \begin{aligned}
                    \ell_0 & \mapsto \fromvec{m_0},
                    \\
                    \ell_1 & \mapsto \fromvec{m_1}
                \end{aligned}
            \right],
            \left(
                \begin{aligned}
                    & \plet \; \tuple{y', x'} \pbind \\
                    & \quad
                        \begin{aligned}
                            & \plet \; z \pbind \pret{\fnSwish(\fromvec{m_0 v})} \; \pin \\
                            & \plet \; (z', y', x') \pbind \rewriterevh{\tuple{z, \fromvec{m_0 v}, \fromvec{v}}}{H}\left(\tuple{z, y, x_{\inp}}. \plet \; x_{\hid} \pbind \pret{z} \; \pin \; \plet \; z_{\out} \pbind \opFC{\ell_1}(x_{\hid}) \; \pin \; \pret{z_{\out}}\right) \; \pin \\
                            & \pret{\langle y' + z'.\trd(y.\fnSwish(y))(\fromvec{m_0 v}), x' + z'.\trd(x_{\inp}. \fnSwish(\fromvec{m_0 v}))(\fromvec{v}) \rangle}
                        \end{aligned}
                    \\
                    & \pin \; \plet \; x'' \pbind Q^{\bwd}_{\opFC{\ell_0}}[y'/y, \tuple{\fromvec{m_0}, \fromvec{v}}/z] \; \pin \; \pret{x' + x''}
            \end{aligned}
            \right)
        \right)
        \\
        & \redto^*
        \left(
            \left[
                \begin{aligned}
                    \ell_0 & \mapsto \fromvec{m_0},
                    \\
                    \ell_1 & \mapsto \fromvec{m_1}
                \end{aligned}
            \right],
            \left(
                \begin{aligned}
                    & \plet \; \tuple{y'_0, x'_0} \pbind \\
                    & \quad
                        \begin{aligned}
                            & \plet \; \tuple{z', y', x'} \pbind
                            \rewriterevh{\tuple{\fromvec{v_1}, \fromvec{m_0 v}, \fromvec{v}}}{H}\left(\tuple{z, y, x_{\inp}}.\plet \; z_{\out} \pbind \opFC{\ell_1}(z) \; \pin \; \pret{z_{\out}}
                            \right) \\
                            & \pin \; \pret{\langle y' + z'.\trd(y.\fnSwish(y))(\fromvec{m_0 v}), x' + z'.\trd(x_{\inp}. \fnSwish(\fromvec{m_0 v}))(\fromvec{v}) \rangle}
                        \end{aligned}
                    \\
                    & \pin \; \plet \; x'' \pbind Q^{\bwd}_{\opFC{\ell_0}}[y'_0/y, \tuple{\fromvec{m_0}, \fromvec{v}}/z] \; \pin \;
                    \pret{x'_0 + x''}
            \end{aligned}
            \right)
        \right)
        \\
        & \quad \text{where $\fromvec{v_1} = \Eval(\fnSwish, \fromvec{m_0 v})$}
        \\
        & \redto
        \left(
            \left[
                \begin{aligned}
                    \ell_0 & \mapsto \fromvec{m_0},
                    \\
                    \ell_1 & \mapsto \fromvec{m_1}
                \end{aligned}
            \right],
            \left(
                \begin{aligned}
                    & \plet \; \tuple{y'_0, x'_0} \pbind \\
                    & \quad
                        \begin{aligned}
                            & \plet \; \tuple{z', y', x'} \pbind \\
                            & \quad \begin{aligned}
                                & \plet \; \tuple{y', x'} \pbind Q^{\fwd}_{\opFC{\ell_1}}[\fromvec{v_1}/x] \; \pin
                                \\
                                & \plet \; \tuple{y'_1, z', y'_0, x'_0} \pbind \rewriterevh{\tuple{y', \fromvec{v_1}, \fromvec{m_0 v}, \fromvec{v}}}{H}(\tuple{y'_1, z', y'_0, x'_0}. \plet \; z_{\out} \pbind \pret{y'_1} \; \pin \; \pret{z_{\out}}) \; \pin
                                \\
                                & \plet \; z'' \pbind Q^{\bwd}_{\opFC{\ell_1}}[y'_1/y', x'/z] \; \pin \;
                                \pret{\langle z' + z'', y'_0, x'_0 \rangle}
                            \end{aligned}\\
                            & \pin \; \pret{\langle y' + z'.\trd(y.\fnSwish(y))(\fromvec{m_0 v}), x' + z'.\trd(x_{\inp}. \fnSwish(\fromvec{m_0 v}))(\fromvec{v}) \rangle}
                        \end{aligned}
                    \\
                    & \pin \; \plet \; x'' \pbind Q^{\bwd}_{\opFC{\ell_0}}[y'_0/y, \tuple{\fromvec{m_0}, \fromvec{v}}/z] \; \pin \;
                    \pret{x'_0 + x''}
            \end{aligned}
            \right)
        \right)
        \\
        & \redto^*
        \left(
            \left[
                \begin{aligned}
                    \ell_0 & \mapsto \fromvec{m_0},
                    \\
                    \ell_1 & \mapsto \fromvec{m_1}
                \end{aligned}
            \right],
            \left(
                \begin{aligned}
                    & \plet \; \tuple{y'_0, x'_0} \pbind \\
                    & \quad
                        \begin{aligned}
                            & \plet \; \tuple{z', y', x'} \pbind \\
                            & \quad \begin{aligned}
                                & \plet \; \tuple{y'_1, z', y'_0, x'_0} \pbind \rewriterevh{\tuple{\fromvec{m_1 v_1}, \fromvec{v_1}, \fromvec{m_0 v}, \fromvec{v}}}{H}(\tuple{y'_1, z', y'_0, x'_0}.\pret{y'_1}) \; \pin
                                \\
                                & \plet \; z'' \pbind Q^{\bwd}_{\opFC{\ell_1}}[y'_1/y', \tuple{\fromvec{m_1}, \fromvec{v_1}}/z] \; \pin \;
                                \pret{\langle z' + z'', y'_0, x'_0 \rangle}
                            \end{aligned}\\
                            & \pin \; \pret{\langle y' + z'.\trd(y.\fnSwish(y))(\fromvec{m_0 v}), x' + z'.\trd(x_{\inp}. \fnSwish(\fromvec{m_0 v}))(\fromvec{v}) \rangle}
                        \end{aligned}
                    \\
                    & \pin \; \plet \; x'' \pbind Q^{\bwd}_{\opFC{\ell_0}}[y'_0/y, \tuple{\fromvec{m_0}, \fromvec{v}}/z] \; \pin \;
                    \pret{x'_0 + x''}
            \end{aligned}
            \right)
        \right)
        \\
        & \redto^*
        \left(
            \left[
                \begin{aligned}
                    \ell_0 & \mapsto \fromvec{m_0},
                    \\
                    \ell_1 & \mapsto \fromvec{m_1}
                \end{aligned}
            \right],
            \left(
                \begin{aligned}
                    & \plet \; \tuple{y'_0, x'_0} \pbind \\
                    & \quad
                        \begin{aligned}
                            & \plet \; \tuple{z', y', x'} \pbind
                            \plet \; z'' \pbind Q^{\bwd}_{\opFC{\ell_1}}[\fromvec{\learningrate(m_1 v_1 - t)}/y', \tuple{\fromvec{m_1}, \fromvec{v_1}}/z] \; \pin \;
                            \pret{\langle \fromvec{0} + z'', \fromvec{0}, \fromvec{0} \rangle}
                            \\
                            & \pin \; \pret{\langle y' + z'.\trd(y.\fnSwish(y))(\fromvec{m_0 v}), x' + z'.\trd(x_{\inp}. \fnSwish(\fromvec{m_0 v}))(\fromvec{v}) \rangle}
                        \end{aligned}
                    \\
                    & \pin \; \plet \; x'' \pbind Q^{\bwd}_{\opFC{\ell_0}}[y'_0/y, \tuple{\fromvec{m_0}, \fromvec{v}}/z] \; \pin \;
                    \pret{x'_0 + x''}
            \end{aligned}
            \right)
        \right)
        \\
        & \redto^*
        \left(
            \left[
                \begin{aligned}
                    \ell_0 & \mapsto \fromvec{m_0},
                    \\
                    \ell_1 & \mapsto \fromvec{m_1 - \learningrate(m_1 v_1 - t) (v_1^{\transpose}) }
                \end{aligned}
            \right],
            \left(
                \begin{aligned}
                    & \plet \; \tuple{y'_0, x'_0} \pbind \pret{\langle \fromvec{0} + \fromvec{u_1}.\trd(y.\fnSwish(y))(\fromvec{m_0 v}), \fromvec{0} + \fromvec{u_1}.\trd(x_{\inp}. \fnSwish(\fromvec{m_0 v}))(\fromvec{v}) \rangle} \; \pin \\
                    & \plet \; x'' \pbind Q^{\bwd}_{\opFC{\ell_0}}[y'_0/y, \tuple{\fromvec{m_0}, \fromvec{v}}/z] \; \pin \;
                    \pret{x'_0 + x''}
            \end{aligned}
            \right)
        \right)
        \\
        & \quad \text{where $u_1 = \learningrate m^{\transpose}_1 (m_1 v_1 - t)$}
        \\
        & \redto^*
        \left(
            \left[
                \begin{aligned}
                    \ell_0 & \mapsto \fromvec{m_0},
                    \\
                    \ell_1 & \mapsto \fromvec{m_1 - \learningrate(m_1 v_1 - t) (v_1^{\transpose}) }
                \end{aligned}
            \right],
            \plet \; x'' \pbind Q^{\bwd}_{\opFC{\ell_0}}[\fromvec{u'_1}/y, \tuple{\fromvec{m_0}, \fromvec{v}}/z] \; \pin \; \pret{\fromvec{0} + x''}
        \right)
        \\
        & \quad \text{where $\fromvec{u'_1} = \Eval(\rd[\fnSwish], \tuple{\fromvec{u_1}, \fromvec{m_0 v}})$}
        \\
        & \redto^*
        \left(
            \left[
                \ell_0 \mapsto \fromvec{m_0 - u'_1 (v^{\transpose}) },
                \ell_1 \mapsto \fromvec{m_1 - \learningrate(m_1 v_1 - t) (v_1^{\transpose}) }
            \right],
            \pret{\fromvec{m_0^{\transpose} u'_1 }}
        \right)
    \end{aligned}
    $}
    \caption{Reduction sequence of $\prevhandle(\fromvec{v})\tuple{x_{\inp}}. R_{\MLP} \; \pwith \; H_{\MLP}$}\label{fig:mlp-reduction}
    \Description{Reduction sequence of backpropagation of the multilayer perceptron $R_{\MLP}$.}
\end{figure}

\begin{example}[gradient in MLP, continued from Example~\ref{ex:mlp-reduction}]
\label{ex:mlp-gradient-calc}
    For $m_0 \in \Real^{n_{\inp} n_{\hid}}$ and $m_1 \in \Real^{n_{\hid} n_{\out}}$, the entire reduction sequence of the following reduction shown in Fig.~\ref{fig:mlp-reduction}.
    \[
    \begin{aligned}
        & ([\ell_0 \mapsto m_0, \ell_1 \mapsto m_1], \prevhandle(\fromvec{v})\tuple{x_{\inp}}. R_{\MLP} \; \pwith \; H_{\MLP}) \\
        & \redto^*
        \left(
            \left[ \ell_0 \mapsto \fromvec{m_0 - u'_1 (v^{\transpose}) }, \ell_1 \mapsto \fromvec{m_1 - \learningrate(m_1 v_1 - t) (v_1^{\transpose}) } \right],
            \pret{V}
        \right)
        \quad\text{where $V$ is a value.}
    \end{aligned}
    \]
    
    We show that the resulting heap
    $[\ell_0 \mapsto \fromvec{m_0 - u'_1(v^{\transpose})}, \ell_1 \mapsto \fromvec{m_1 - \learningrate (m_1 v_1 - t)(v_1^{\transpose})}]$
    after the reduction sequence in Fig.~\ref{fig:mlp-reduction} coincides with the matrices updated by usual gradient descent of the MLP $R_{\MLP}$ with the learning rate $\learningrate$.
    We write $\itp{\fnSwish}$ for the Swish activation function defined by $\itp{\fnSwish}(x) = x \cdot \sigmoid(x)$ and assume that the following equations hold:
    \[
        \Eval(\fnSwish_n, \fromvec{x}) = \fromvec{\itp{\fnSwish}(x_k)_{k=1}^{n}},
        \quad
        \Eval(\rd[\fnSwish_n], \tuple{\fromvec{x}, \fromvec{y}}) = \fromvec{\left(x_k \frac{\partial}{\partial y_k}\itp{\fnSwish}(y_k)\right)_{k=1}^{n}}.
    \]
    Recall that we defined $u'_1$ and $v_1$ by
    \[
    \fromvec{u'_1} = \Eval(\rd[\fnSwish], \tuple{\fromvec{\learningrate m_1^{\transpose}(m_1 v_1 - t)}, \fromvec{m_0 v}}),
    \quad
    \fromvec{v_1} = \Eval(\fnSwish, \fromvec{m_0 v}) = \fromvec{( \itp{\fnSwish}((m_0 v)_k) )_{k=1}^{n_{\hid}}}.
    \]
    The loss $\Loss(v ; m_0, m_1)$ is the squared error defined by
    \[ \Loss(v ; m_0, m_1) = \frac{1}{2} \sum_{k=1}^{n_{\out}} ((m_1 \itp{\fnSwish}(m_0 v))_k - t_k)^2. \]

    First, we calculate the gradient of the loss with respect to $m_{1ij}$:
    \[
    \begin{aligned}
        \frac{\partial \Loss}{\partial m_{1ij}} (v ; m_0, m_1)
        & =
        \sum_{k=1}^{n_{\out}} ((m_1 \itp{\fnSwish}(m_0 v))_k - t_k) \frac{\partial}{\partial m_{1ij}} (m_1 \itp{\fnSwish}(m_0 v))_k
        \\
        & =
        \sum_{k=1}^{n_{\out}} ((m_1 v_1)_k - t_k) \frac{\partial}{\partial m_{1ij}} \sum_{j'} m_{1 k j'} v_{1j'}
        \\
        & =
        ((m_1 v_1)_i - t_i) v_{1j}.
    \end{aligned}
    \]
    Hence, we have $\frac{\partial \Loss}{\partial m_1} (v ; m_0, m_1) = ((m_1 v_1) - t) (v_1^{\transpose})$ and the update of $m_1$ by gradient descent with the learning rate $\learningrate$ is given by
    \[
        m_1 - \learningrate \frac{\partial \Loss}{\partial m_1} (v ; m_0, m_1)
        =
        m_1 - \learningrate ( (m_1 v_1) - t ) (v_1^{\transpose}).
    \]

    Next, we calculate the gradient of the loss with respect to $m_{0ij}$:
    \[
    \begin{aligned}
        \frac{\partial \Loss}{\partial m_{0ij}} (v ; m_0, m_1)
        & =
        \sum_{l}
        \frac{\partial \Loss}{\partial v_{1l}} (v ; m_0, m_1)
        \frac{\partial v_{1l}}{\partial m_{0ij}} (v ; m_0)
        \\
        & =
        \sum_{l}
        \left(
            \sum_{k}
            ((m_1 v_1)_k - t_k) \frac{\partial}{\partial v_{1l}} (m_1 v_1)_k
        \right)
        \frac{\partial v_{1l}}{\partial m_{0ij}} (v ; m_0)
        \\
        & =
        \sum_{l}
        \left(
            \sum_{k}
            ((m_1 v_1)_k - t_k) \frac{\partial}{\partial v_{1l}} \sum_{j'} m_{1 k j'} v_{1 j'}
        \right)
        \frac{\partial v_{1l}}{\partial m_{0ij}} (v ; m_0)
        \\
        & =
        \sum_{l}
        \left(
            \sum_{k}
            m_{1kl}
            ((m_1 v_1)_k - t_k)
        \right)
        \frac{\partial v_{1l}}{\partial m_{0ij}} (v ; m_0)
        \\
        & =
        \sum_{l}
        (m_{1}^{\transpose} (m_1 v_1 - t))_{l}
        \frac{\partial v_{1l}}{\partial m_{0ij}} (v ; m_0)
        \\
        & =
        \sum_{l}
        (m_{1}^{\transpose} (m_1 v_1 - t))_{l}
        \frac{\partial}{\partial m_{0ij}} \itp{\fnSwish}((m_0 v)_l)
        \\
        & =
        \sum_{l}
        (m_{1}^{\transpose} (m_1 v_1 - t))_{l}
        \frac{\dif \itp{\fnSwish}}{\dif y}((m_0 v)_l) \frac{\partial}{\partial m_{0ij}} (m_0 v)_l
        \\
        & =
        \sum_{l}
        (m_{1}^{\transpose} (m_1 v_1 - t))_{l}
        \frac{\dif \itp{\fnSwish}}{\dif y}((m_0 v)_l)
        \left(\sum_{j'} \frac{\partial}{\partial m_{0ij}} m_{0lj'} v_{j'}\right)
        \\
        & =
        \sum_{l}
        (m_{1}^{\transpose} (m_1 v_1 - t))_{l}
        \frac{\dif \itp{\fnSwish}}{\dif y}((m_0 v)_l)
        \left(\sum_{j'} \frac{\partial}{\partial m_{0ij}} m_{0lj'} v_{j'}\right)
        \\
        & =
        (m_{1}^{\transpose} (m_1 v_1 - t))_{i}
        \frac{\dif \itp{\fnSwish}}{\dif y}((m_0 v)_i) v_{j}.
    \end{aligned}
    \]
    Hence, we have
    \[
        \frac{\partial \Loss}{\partial m_0} (v ; m_0, m_1)
        =
        \left( (m_1^{\transpose}(m_1 v_1 - t))_i \frac{\dif \itp{\fnSwish}}{\dif y}((m_0 v)_i)\right)_{i=1}^{n_\hid} (v^{\transpose})
    \]
    and the update of $m_0$ by gradient descent with the learning rate $\learningrate$ is given by
    \[
    \begin{aligned}
        m_0 - \learningrate \frac{\partial \Loss}{\partial m_0} (v ; m_0, m_1)
        & =
        m_0 - \left( \learningrate (m_1^{\transpose}(m_1 v_1 - t))_i \frac{\dif \itp{\fnSwish}}{\dif y}((m_0 v)_i)\right)_{i=1}^{n_\hid} (v^{\transpose})
        \\
        & =
        m_0 - u'_1 (v^{\transpose}).
    \end{aligned}
    \]
\end{example}

\begin{definition}[interpretations of function symbols]
    Let $\catx = \Smooth$, $\BType = \{ \typreal(n) \mid n \in \Nat \}$.
    Let $\itp{\blank}_{\BType} \colon \BType \to \obj(\catx)$ be an interpretation defined by $\itp{\typreal(n)}_{\BType} = \Real^n$.
    We define interpretations of the function symbols used in the examples as follows.
    \[
    \begin{aligned}
        & \itp{\fnSwish_n} \colon \Real^n \to \Real^n
        &
        & \itp{\fnSwish_n} (x) = (x_i \sigmoid(x_i))_{i=1}^n
        \\
        & \itp{\fnscalarmul_n} \colon \Real \times \Real^n \to \Real^n
        &
        & \itp{\fnscalarmul_n}(a, x) = ax = (a x_i)_{i=1}^n
        \\
        & \itp{\fnminus_n} \colon \Real^n \times \Real^n \to \Real^n
        &
        & \itp{\fnminus_n}(x,y) = x - y = (x_i - y_i)_{i=1}^n
        \\
        & \itp{\fnmatmul_{n,m}} \colon \Real^{nm} \times \Real^n \to \Real^m
        &
        & \itp{\fnmatmul_{n,m}}(m, x) = mx = \left(\sum_{j=1}^n m_{i,j}x_j\right)_{i=1}^m
        \\
        & \itp{(\blank)^{\fntranspose}_{n,m}} \colon \Real^{nm} \to \Real^{mn}
        &
        & \itp{(\blank)^{\fntranspose}_{n,m}}(m) = m^{\transpose} = (m_{i,j})_{j,i}
        \\
        & \itp{\fromvec(v)} \colon \Real^0 \to \Real^n
        &
        & \itp{\fromvec{v}}(0) = v \quad \text{where $v \in \Real^n$}
    \end{aligned}
    \]
\end{definition}

%\subsection{Interpretation of CNNs and STE: problems with non-smooth functions}
%\label{subsec:RDRC}
%
%In this subsection, we discuss the interpretation of the CNN examples (Example~\ref{ex:convolutional-neural-network} and Example~\ref{ex:U-net}) and the STE example (Example~\ref{ex:ste}).
%The latter uses the symbol $\fnround$; on the other hand, as we explain in the next example, the handler $H_{\CNN}$ involves a function symbol $\fnpool$.
%These function symbols are (at least \textit{morally}) interpreted as non-smooth functions, and hence cannot be accommodated in our current framework using a CRDC such as $\Smooth$.
%
%The aim of this subsection is to explain why this difficulty occurs and to outline some possible ways to address it.
%To make the discussion more precise, we begin by giving the definition of the handler $H_{\CNN}$.
In the remainder of this section, we discuss the interpretation of the CNN examples (Example~\ref{ex:convolutional-neural-network} and Example~\ref{ex:U-net}).
We begin by giving the precise definition of the handler $H_{\CNN}$.
\begin{example}[a handler for backpropagation of CNN, continued from Example~\ref{ex:convolutional-neural-network}]
    \label{ex:convolutional-neural-network-H-CNN}
    In the situation of Example~\ref{ex:convolutional-neural-network},
    we add the following function symbols to the signature $\signatfunc$:
    \begin{align*}
        & \fnconv_{n,m,c,c'} : \typreal(cn) \times \typreal(cc'm) \to \typreal(c'(n - m + 1)),
        \\
        & \rd[\fnconv_{n,m,c,c'}] : \typreal(c'(n - m + 1)) \times (\typreal(cn) \times \typreal(cc'm)) \to \typreal(cn) \times \typreal(cc'm),
        \\
        & \fnpool_{n,m,c} : \typreal(cn) \to \typreal(c \lceil n/m \rceil),
        \\
        & \rd[\fnpool_{n,m,c}] : \typreal(c \lceil n/m \rceil) \times \typreal(cn) \to \typreal(cn).
    \end{align*}
    Intuitively, $\fnconv_{n,m,c,c'}(\tuple{\fromvec{x}, \fromvec{w}})$ is the one-dimensional convolution of a one-dimensional input $x$ with $c$ channels and a set of $c'$ filters $w$ each of which has $c$ channels and size $m$:
    \[
    \itp{\fnconv_{n,m,c,c'}}(x, w)
    =
    \left(\sum_{k=1}^{c} \sum_{j=1}^{m}
    x_{k, i + j} w_{k, l, j}\right)_{l \in [c'], i \in [n - m + 1]}.
    \]
    The reverse differentiation $\rd[\fnconv_{n,m,c,c'}]$ computes the gradients of the loss with respect to the input and the filters from the gradient with respect to the output:
    \[
    \itp{\rd[\fnconv_{n,m,c,c'}]}(u, (x, w))
    =
    \left(
        \left(
            \sum_{l=1}^{c'} \sum_{j=1}^{m} u_{l, i - j} w_{k, l, j}
        \right)_{k \in [c], i \in [n]},
        \left(
            \sum_{i=1}^{n - m + 1} u_{l, i} x_{k, i + j}
        \right)_{k \in [c], l \in [c'], j \in [m]}
    \right).
    \]
    The intuition behind the pooling function $\fnpool$ is also the usual max pooling (see Definition~\ref{def:interpretation-cnn-fnconv-fnpool}).

    We define the following commands for convolution and pooling operations:
    \[
    \begin{aligned}
        Q^{\fwd}_{\opConv{n, \loctyp{\ell}{(m, c, c')}}}
        & =
        \plet \; w \pbind \opGet{\ell}() \; \pin \; \pret{\tuple{\fnconv_{n,m,c,c'}(x, w), \tuple{w, x}}}
        \\
        Q^{\bwd}_{\opConv{n, \loctyp{\ell}{(m, c, c')}}}
        & =
        \begin{aligned}
            & \plet \; \tuple{w, x} \pbind z \; \pin \;
            \plet \; \tuple{x', w'} \pbind \rd[\fnconv_{n,m,c,c'}](y, \tuple{x, w}) \; \pin
            \\
            & \plet \; \_ \pbind \opPut{\ell}(w - w') \; \pin \; \pret{x'}
        \end{aligned}
    \end{aligned}
    \]
    \[
    \begin{aligned}
        Q^{\fwd}_{\opPool{n, m, c}}
        & =
        \pret{\tuple{\fnpool_{n,m,c}(x), x}}
        \\
        Q^{\bwd}_{\opPool{n, m, c}}
        & =
        \pret{\rd[\fnpool](y, z)}
    \end{aligned}
    \]
    The handler $H_{\CNN}$ for backpropagation of CNNs is defined by adding these commands to the handler $H_{\MLP}$ defined in Example~\ref{ex:mlp-backpropagation}.
\end{example}

%To give a denotational semantics for $H_{\CNN}$, we need to interpret the function symbols $\fnconv$ and $\fnpool$.
%In particular, we would like (morally) to interpret $\fnpool$ as the usual max-pooling operation, which is not smooth.
%The same issue arises in Example~\ref{ex:ste}: we would like to interpret the function symbol $\fnround$ as the usual rounding function, which is also not smooth.
%
%One way to address this is to move to a more general category $\PartSmooth$ of partial smooth functions. This is no longer a CRDC, but a \emph{reverse differential restriction category} (RDRC) \cite{Cruttwell+2021}.
%This strategy is in line with the work of Abadi and Plotkin \cite{AbadiPlotkin2019}.
%Like CRDCs, RDRCs come equipped with a reverse differential combinator $\RD$ with the same typing as in the CRDC setting.
%Therefore, once we fix an RDRC together with interpretations of base types and function symbols, we can interpret the syntax introduced in \S\ref{sec:syntax}.
%In the RDRC $\PartSmooth$, we have (for instance) piecewise smooth functions such as the usual max-pooling and rounding functions, which enables us to give the intended interpretation of $H_{\CNN}$ and STE in a precise way:
Now we are ready to give the interpretation of the CNN examples, by giving the interpretations of the function symbols $\fnconv$ and $\fnpool$.
\begin{definition}
    \label{def:interpretation-cnn-fnconv-fnpool}
    Let $\catx = \PartSmooth$, $\BType = \{ \typreal(n) \mid n \in \Nat \}$.
    Let $\itp{\blank}_{\BType} \colon \BType \to \obj(\catx)$ be an interpretation defined by $\itp{\typreal(n)}_{\BType} = \Real^n$.
    We define interpretations of the function symbols used in the CNN examples as follows.
    \[
    \begin{aligned}
        & \itp{\fnconv_{n,m,c,c'}} \colon \Real^{cn} \times \Real^{cc'm} \partto \Real^{c'(n - m + 1)}
        \\
        & \itp{\fnconv_{n,m,c,c'}}(x, w) = \left(
            \sum_{k=1}^{c} \sum_{j=1}^{m}
            x_{k, i + j} w_{k, k', j}
        \right)_{k' \in \{ 1, \dots, c' \}, i \in \{ 1, \dots, n - m + 1\} }
        \\
        & \itp{\fnpool_{n,m,c}} \colon \Real^{cn} \partto \Real^{c \lceil n/m \rceil}
        \\
        & \itp{\fnpool_{n,m,c}}(x) = \begin{cases}
            \text{undefined} & \exists k, i, j, j'.\ x_{k,m(i-1)+j} = x_{k,m(i-1)+j'}
            \\
            \displaystyle
            \left(
                \max_{j \in \{1, \dots, m\}} x_{k, m(i-1) + j}
            \right)_{\substack{k \in \{ 1, \dots, c \}\\ i \in \{ 1, \dots, \lceil n/m \rceil\}} }
            & \text{otherwise}
        \end{cases}
%        \label{eq:interpretation-fnpool}
    \end{aligned}
    \]
\end{definition}
%However, the product in an RDRC is not a categorical product but a \emph{restriction product}, which does not satisfy the usual universal property of products.
%This may cause difficulties in establishing soundness and adequacy of the semantics: we would need to compare this interpretation with our definitions and proofs of soundness and adequacy, which might in turn require modifying the operational semantics in \S\ref{sec:operational-semantics} to make
%it compatible.
%This requires more careful analysis, and we leave it for future work.
%
%Another possible approach is to interpret $\fnpool$ using a smooth approximation of the max-pooling function, based on smooth approximations of the $\max$ function (often called \emph{smooth maximum} functions) like log-sum-exp \cite{BoydVandenberghe2004}.
%This fits entirely within our current CRDC setting, yielding a ``soft'' interpretation of $H_{\CNN}$ and STE.
%However, this may depart from the intended meaning of $H_{\CNN}$ and STE, and it is not clear which smooth approximation is most appropriate for the intended applications;
%addressing this would require another sort of careful analysis.
%You may think of this approach faces the same difficulty as the ordinary approach using RDRCs, but this framework yields more flexibility: we can control the \emph{depth} of the approximation by adding appropriately adding formal symbols in $\signatopr$ and setting handlers for them.
\end{toappendix}

\bibliographystyle{ACM-Reference-Format}
\bibliography{reverse-handle}
\end{document}